\numberwithin{equation}{section}
\newtheorem{theorem}{Theorem}[section]
\newtheorem{define}{Definition}[section]
\newtheorem{corollary}{Corollary}[section]
\newtheorem{principle}{Principle}[section]
\newtheorem{axioms}{Axioms}[section]
\newtheorem{axiom}{Axiom}[section]
\pgfplotsset{colormap/viridis}
\tikzset{use path/.code=\tikz@addmode{\pgfsyssoftpath@setcurrentpath#1}}
\pgfplotsset{compat=newest}
\DeclareFontFamily{U}{MnSymbolC}{}
\DeclareSymbolFont{MnSyC}{U}{MnSymbolC}{m}{n}
\DeclareMathSymbol{\diamondplus}{\mathbin}{MnSyC}{"7C}
\DeclareMathSymbol{\diamonddot}{\mathbin}{MnSyC}{"7E}
\DeclareFontShape{U}{MnSymbolC}{m}{n}{
	<-6>  MnSymbolC5
	<6-7>  MnSymbolC6
	<7-8>  MnSymbolC7
	<8-9>  MnSymbolC8
	<9-10> MnSymbolC9
	<10-12> MnSymbolC10
	<12->   MnSymbolC12}{}
\newcommand{\tru}{\mathbf{T}}
\newcommand{\fals}{\mathbf{F}}
\newcommand{\notleft}{\mathrel{\ooalign{$\Leftarrow$\cr\hidewidth$/$\hidewidth}}}
\newcommand{\notright}{\mathrel{\ooalign{$\Rightarrow$\cr\hidewidth$/$\hidewidth}}}
\providecommand*{\dashv}{%
	\mathrel{%
		\mathpalette\@dashv\vdash
	}%
}
\newcommand*{\@dashv}[2]{%
	\reflectbox{$\m@th#1#2$}%
}
\newtheorem{goal}{Design Goal}
\newtheorem{criteria}{Design Criteria}
\tikzset{
	treenode/.style = {align=center, inner sep=0pt, text centered,
		font=\sffamily},
	arn_n/.style = {treenode, circle, white, font=\sffamily\bfseries, draw=black,
		fill=black, text width=1.5em},
	arn_r/.style = {treenode, circle, red, draw=red, 
		text width=1.5em, very thick},
	arn_x/.style = {treenode, rectangle, draw=black,
		minimum width=0.5em, minimum height=0.5em},
	arn_w/.style = {treenode, circle, black, draw=black,
		text width=1.5em, very thick}
}
\DeclareMathSymbol{\diamondplus}{\mathbin}{MnSyC}{"7C}
\DeclareMathSymbol{\diamonddot}{\mathbin}{MnSyC}{"7E}
\newcommand{\sm}{\mathbf{P}}
\begin{document}

	\date{
		\today
	}
	
	\title{The Inference Framework}
	
	\author{Nicholas Carrara\thanks{nmcarrara@ucdavis.edu}\\Dept. of Physics\\University at Albany\\Albany, NY 12244}
	
	\maketitle
	
\begin{abstract}
    The following three sections and appendices are taken from my thesis ``The Foundations of Inference and its Application to Fundamental Physics'' from 2021, in which I construct a theory of entropic inference from first principles.  The majority of these chapters are not original, but are a collection of various sources through the history of the subject.  The first section deals with deductive reasoning, which is inference in the presence of \textit{complete} information.  The second section expands on the deductive system by constructing a theory of inductive inference, a theory of probabilities, which is inference in the presence of \textit{incomplete} information.  Finally, section three develops a means of updating these probabilities in the presence of new information that comes in the form of constraints. 
\end{abstract}
\section{Deductive inference: A prelude to induction}\label{chapter2}
\epigraph{Logic issues in tautologies, mathematics in identities, philosophy in definitions; all trivial, but all part of the vital work of clarifying and organising our thought.}{\textit{Frank Plumpton Ramsey}, 1929}

In this Chapter we will develop a theory of \textit{deductive inference} which will allow us to analyze various statements, or \textit{propositions}, using formal logic and determine whether or not they are \textit{valid}.  We will begin by briefly discussing the basic construction of a formal system\footnote{For the interested reader, the details of propositional and first order predicate logic are given in the \hyperref[logicappendix]{Appendix}.}.  Then, we will discuss the various issues with modern \textit{conditional logic}, that will help to motivate the construction of an \textit{extended proposition space} and the inductive framework of \hyperref[chapter3]{Chapter three}.

Deductive systems are constructed from a more basic structure called a \textit{formal language}.  A formal language, which we denote by $\mathcal{L}$, is defined with respect to a particular \textit{alphabet}, usually labeled $\Sigma$, which contains a collection of objects that act as the basis for the entire language\footnote{For example, the alphabet for the English language $\mathcal{L}_{\mathrm{eng.}}$ would consist of the set of latin letters together with punctuation marks, $\Sigma_{\mathrm{eng.}} = \left\{a,b,c,\dots,z,.,?,\dots\right\}$, whereas the alphabet for a language which consists of equations for adding natural numbers would be $\Sigma = \left\{0,1,2,3,4,5,6,7,9,+,=\right\}$}.  Elements of the alphabet $\Sigma$ are called \textit{letters}, and by joining them through juxtaposition one can form \textit{words}, or \textit{strings}.  The collection of all words associated to an alphabet $\Sigma$ is usually denoted $\Sigma^*$, which is called the \textit{Kleene star}\footnote{Considering the set $\Sigma_0 = \{\varepsilon\}$ containing the empty string and $\Sigma_1 \stackrel{\mathrm{def}}{=}\Sigma$ as the alphabet, the Kleene star is defined as,
	\begin{equation}
	\Sigma^* = \bigcup_{i\geq 0} \Sigma_i = \Sigma_0\cup\Sigma_1\cup\Sigma_2\cup\dots,
	\end{equation}
	where,
	\begin{equation}\Sigma_{i+1} = \left\{\omega\sigma\middle|\, \omega \in \Sigma_{i},\sigma \in \Sigma\right\}.\end{equation}} \cite{Kleene}.  Not all of the possible words contained in $\Sigma^*$ will necessarily be of interest, or have any well defined meaning, in a formal language\footnote{Using the two examples from before, a valid string from $\Sigma_{\mathrm{eng.}}$ could be ``aoheijf.?ej'' or ``abbbbb.''  Likewise for $\Sigma_+$ we could write something like ``234=4==234+9'' or ``=2425.''  Neither of these examples however have a well defined meaning in either language.}.  Only those words which are \textit{well-formed formulas} in $\Sigma^*$ are considered to be part of the formal language.  The rules that decide which elements of $\Sigma^*$ are considered well-formed formulas are called grammatical rules, or the \textit{syntax} \cite{Chomsky} of the language.  

In the formal language of logic we add an additional feature called a \textit{semantics} \cite{Davidson}, which appends an interpretation to the symbols of well-formed formulas.  In particular, it adds an interpretation of ``truth'' \cite{Tarski} to the elements of the language, which we then call \textit{propositions}\footnote{There is the possibility of identifying words in $\Sigma^*$ which obey the syntax of formal logic, but which do not have an interpretation (i.e. do not have a well defined truth value) and thus cannot be identified as propositions.  This situation will become important when we consider \hyperref[conditionallogic]{conditional logic}.}.

From a formal language we can construct a formal system (which is given the generic symbol $\mathcal{FS}$) or \textit{deductive system}, which consists of the language, $\mathcal{L}$, together with a set of \textit{axioms}, $\mathcal{I}$, and \textit{rules of inference}, $\mathcal{Z}$, that allow us to state and prove theorems through logical argument.  Theorems are well-formed formulas which can be deduced from the rules and axioms of the system.  A proof is a sequence of well formed formulas which are either axioms, or follow from the rules of inference, and conclude with the intended theorem.

Deductive inference is mainly concerned with determining the \textit{validity} and \textit{soundness} of logical arguments \cite{Tomassi}.  Any logical argument is composed of \textit{premises}, which may or may not be true, and a \textit{conclusion}, whose ``truth'' we wish to determine.  A valid argument is one in which a conclusion, or theorem, follows from the premises, i.e. if the premises are true then the conclusion has to be true.  The premises however need not be ``actually true'' for an argument to be valid, so long as they are \textit{well-formed formulas} of the system.  Only when the premises are ``actually true'' is the argument said to be \textit{sound}. 

The foundations of logical inference are propositional calculus and first-order \hyperref[predicatelogic]{predicate calculus}, both of which we will develop in this chapter.  One of the key results to keep in mind is that deductive inference is only capable of assigning ``truth'' to valid arguments, i.e. arguments which are invalid do not exist in the deductive system.  An argument can be invalid for several reasons, but the one that will be of importance, is when the premises contain \textit{incomplete information}.  This is to say that the conclusion cannot be completely determined from the premises.  How then can we reason about the ``truth'' of such conclusions?  This is of course the experience we encounter in everyday life, in science, in medicine, in the stock market, etc.  In order to deal with this situation one needs to develop an \textit{inductive inference}, which will be the topic of the next chapter.

Propositional and first-order predicate logic serve as part of the foundations for all of mathematics \cite{Barwise}.  They are the necessary ingredients that one needs in order to develop an axiomatic, or formal, system of logic \cite{Tomassi}, which then allows one to state and prove mathematical theorems.  To motivate the construction of the standard propositional calculus, which we label $\mathcal{D}$, we will begin by discussing the three laws of thought \cite{Aristotle}.

\subsection{The laws of logic}\label{lawsoflogic}
\epigraph{It depends on what the meaning of the word ‘is’ is.}{\textit{President William J. Clinton}, 1998}
\epigraph{Nothing unreal exists.}{\textit{Spock}, 1984}
As physicists, and in all manner of being pragmatic, we take the classical approach and accept the classical laws of logic as defining the basis of our model of inference.  These are the Aristotelian laws of thought: \textit{identity, non-contradition} and \textit{excluded middle} \cite{Hamilton,Aristotle}.  The three logical laws act as a foundation for how we should approach deductive inference.  They are stated as laws because they cannot be proved, however they appear natural, or obvious, and so we accept them as being always ``true.''  To quote Russell \cite{Russell}, the first law is perhaps the most obvious, yet the most subtle,
\begin{quotation}\label{identity}
	\textbf{Law of identity} - ``Whatever is, is.''
\end{quotation}
The second law is that of non-contradiction, which again quoting Russell, can be stated as
\begin{quotation}\label{noncontradiction}
	\textbf{Law of non-contradiction} - ``Nothing can both be and not be.''
\end{quotation}
This law is incredibly subtle since it assumes something about the nature of time.  To be more precise, the law should read -- ``Nothing can both be and not be at the same time.''  While there are no logical systems which attempt to relax the first law, there are systems which attempt to relax the law of non-contradiction.  These systems adopt the view of \textit{dialetheism}, a belief which proposes that there exist some true contradictions \cite{Priest,GarfieldPriest,Priest2}, which can lead to the \textit{principle of explosion} (See \hyperref[materialimplication]{Section 5}).  A system in which true contradictions are allowed to exist, but the principle of explosion is rejected, is called a \textit{paraconsistent logic} \cite{paraconsistent}.

The third law, which is perhaps the weakest, is the law of excluded middle, which is, again quoting Russell
\begin{quotation}\label{excludedmiddle}
	\textbf{Law of excluded middle} - ``Everything must either be or not be.''
\end{quotation}
The law of excluded middle states that propositions must either be true or false, but not neither and not both\footnote{There is no shortage of paradoxes that can be invented which seem to violate the law of excluded middle.  Take for example the statement, ``the King of France is bald.''  Since France no longer has Kings, it would seem logical to assign a truth value of false to this statement, however the law of excluded middle then suggests that the counter statement ``the King of France is not bald'' is true, but this suffers from the same problem therefore throwing into question the validity of the law of excluded middle.  Russell had his own ideas for how to deal with such statements \cite{Russell2}, but one could always conclude that such statements are invalid and do not qualify as propositions.}.  This axiom can be dropped in systems such as \textit{fuzzy logic} \cite{Zadeh}, \textit{autoepistemology} \cite{Marek} or in \textit{intuitionism} \cite{Brouwer,Heyting,Kleene}.  Fuzzy logic and autoepistemology are attempts at extending bivalent logic to a continuum where true and false are the extreme positions.  The autoepistemists call this a ``degree of knowledge.''    

The three laws of thought provide a basis for our formal system.  All other statements which are a priori assumed true will be added as axioms, from which theorems can be proven.  The laws are themselves \textit{tautologies}, meaning that they are always true independent of the propositions that define them.

Statements which follow logically from a formal system are \textit{consequences} of the system.  There are two notions of logical consequence that are employed.  The weaker of the two is called syntactic consequence which introduces the symbol $\vdash$ and is often called a \textit{turnstile}.  It is meant to represent the words ``I know.''  Thus, something like, $\vdash a$, means, ``I know $a$ to be true.''  Given a sequence of well-formed formulas, $\Gamma \Leftrightarrow \{a_i\}_{i=1}^n$, from the formal system $\mathcal{D}$, we say that another well-formed formula $b$ is a \textit{syntactic consequence} of $\Gamma$ if there is a formal proof of $b$ within $\Gamma$.  This is written with the \textit{sequent} notation
\begin{equation}
\Gamma \vdash b,\label{syntacticconsequence}
\end{equation}
which means ``from $\Gamma$ I know $a$'' or, ``$\Gamma$ proves $a$.''  This is also often called \textit{entailment}, i.e. ``$\Gamma$ entails $a$.''   

A stronger version of entailment is called \textit{semantic consequence}, which states in addition to (\ref{syntacticconsequence}) that there are no interpretations in which each well-formed formula in $\Gamma$ is true and $b$ is not true, i.e. $\Gamma$ necessarily entails $b$.  This is written as,
\begin{equation}
\Gamma \vDash b.
\end{equation}

\paragraph{Why Deductive Inference Is Incomplete ---}
With the formal system developed in the \hyperref[logicappendix]{Appendix} we can evaluate a large range of possible statements from the general set of statements $\Sigma^*$.  These include all mathematical theorems, propositional sentences, counterfactual propositions, modal statements, matter-of-fact conditionals, subjunctive conditionals and others.  While this list is perhaps suitable for most formal situations, it does not include one important situation which occurs most often in everyday life, as well as in science.  This is the situation in which one is trying to access the truth of some proposition, in the presence of incomplete information.  

To illustrate the idea of incomplete information, consider a proof $\Gamma \Leftrightarrow a_1\wedge a_2\wedge\dots\wedge a_n$ for a theorem $b$ which is both valid and sound so that, $\Gamma \vdash b$.  We can identify $\Gamma \vdash b$ as the conditional statement $b |\Gamma$, which is necessarily true since $\Gamma$ is a proof for $b$.  But now consider that we remove one of the steps, $a_i$, from $\Gamma$,
\begin{equation}
\Gamma\rightarrow \Gamma' \Leftrightarrow a_1\wedge \dots a_{i-1}\wedge a_{i+1}\wedge\dots \wedge a_n.
\end{equation}
While $\Gamma'$ is a well defined proposition, it is no longer a valid argument for $b$ since it is \textit{incomplete}.  Thus, the conditional $b|\Gamma'$ does not have a well defined truth value according to the standard rules of propositional logic.  It is still reasonable however to ask whether one believes the proposition $b$ given the information in $\Gamma'$ even though it is incomplete.  This is a question that deductive reasoning cannot answer, in part because it is too rigid.  

Another situation logic is ill-equipped to handle is when $\Gamma$ contains irrelevant information.  If in $a|\Gamma$, $\Gamma$ tells us nothing about the validity of $a$, then $\Gamma$ is irrelevant.  In this case, our beliefs about $a|\Gamma$ should be equivalent to our beliefs about $a$, i.e. $a|\Gamma \rightarrow a$, however, deductive logic has no way to handle this situation, since it cannot evaluate the statement $a|\Gamma$, when $a$ is independent of $\Gamma$.  An argument of this sort is simply invalid according to the laws of deductive reasoning.  


In order to evaluate statements about which we have incomplete information, we will need to develop an inductive inference, which is the subject of Chapter two.  Before that however, we will enlarge our universe of discourse to contain statements such as $a|\Gamma'$ when $\Gamma'$ is incomplete, irrelevant or does not reduce to some element generated by $\mathcal{D}_c$.

\subsection{The extended proposition space}\label{section26}
With the idea of incomplete information introduced, we can define an extension to the proposition space $\mathcal{A}$ in which one can construct statements such as ``$b$ given $a$'' when $a$ is \textit{incomplete}.  Such a space will be labeled $\tilde{\mathcal{A}}$.
\begin{define}
	We define the extended proposition space $\tilde{\mathcal{A}}$ of some underlying proposition space $\mathcal{A}$ as the collection of all statements of the form, $\Delta|\Gamma$, where $\Delta,\Gamma \in \mathcal{A}$, i.e.,
	\begin{equation}
	\tilde{\mathcal{A}} \stackrel{\mathrm{def}}{=}\left\{\frac{}{}\Delta|\Gamma\,\middle|\, \Delta,\Gamma \in \mathcal{A},\Gamma\not\Leftrightarrow\fals\right\},
	\end{equation}
	with the restriction that $\Gamma \not\Leftrightarrow \mathbf{F}$\footnote{We do not allow the second argument to contain contradictions such as $a \wedge \neg a$.}.
\end{define}
Elements $\Delta|\Gamma\in\tilde{\mathcal{A}}$ will generally be referred to as ``statements.''  Some statements are themselves propositions, and thus have a well defined truth value, while others do not.  The size of $\tilde{\mathcal{A}}$ depends on how many statements in $\mathcal{A}$ are equivalent with $\fals$.  Call the subspace of statements in $\mathcal{A}$ which are equivalent with false, $\mathcal{A}_{\fals} = \{a \in \mathcal{A}\, |\, a \Leftrightarrow \fals\}$.  Then, the size of the set $|\mathcal{A}\backslash \mathcal{A}_{\fals}|$ determines the number of possible arguments for $\Gamma$.  Since $\Delta$ can be any element of $\mathcal{A}$, the size of $\tilde{\mathcal{A}}$ is,
\begin{equation}
|\tilde{\mathcal{A}}| = |\mathcal{A}|\cdot|\mathcal{A}\backslash\mathcal{A}_{\fals}|.
\end{equation}
We take the convention that when $\Gamma = \tru$, the statement $\Delta|\Gamma$ reduces to the proposition $\Delta$.  From here, the subspace of propositions $\mathcal{A}\subset\tilde{\mathcal{A}}$ is defined as the \textit{truth} conditioned subspace, i.e.
\begin{equation}
\mathcal{A} \stackrel{\mathrm{def}}{=} \left\{\Delta|\Phi \in \tilde{\mathcal{A}}\,\middle|\,\Phi = \tru\right\}
\end{equation}

There are various special cases of statements in $\tilde{\mathcal{A}}$, such as when $\Delta|\Gamma$ represents a \textit{syntactic} or \textit{semantic} consequence.  In either of those cases we identify,
\begin{equation}
\Delta |\Gamma \Leftrightarrow \Gamma \vdash \Delta \quad \mathrm{or} \quad \Delta|\Gamma \Leftrightarrow \Gamma \vDash \Gamma.
\end{equation}   
\paragraph{Deductive conditionals ---}
We can also represent logical arguments, such as those in \hyperref[inferencerules]{Section 6}, as conditionals.  Since any valid logical argument can be written as a truth-functional tautology, and by assuming that the argument is sound, we can construct a conditional proposition.  Consider for example, a syllogism $\Gamma$ which consists of $n$ propositional steps,
\begin{equation}
\Gamma \Leftrightarrow a_1\wedge a_2 \wedge \dots \wedge a_n.
\end{equation}
A theorem $b \in \mathcal{A}$ which is proved by $\Gamma$ can be written in semantic notation as,
\begin{equation}
a_1,a_2,\dots,a_n \vdash b.
\end{equation}
As a truth-functional tautology we can write the above as,
\begin{equation}
(a_1\wedge a_2\wedge\dots\wedge a_n) \Rightarrow b.
\end{equation}
Whenever we assume that the premises are true, we can write the logical argument as a special case of a conditional,
\begin{equation}
b|\Gamma \Leftrightarrow \Gamma \vdash b.
\end{equation}
Logical arguments are then represented as a subset $\mathcal{O}\subset\tilde{\mathcal{A}}$ of the extended proposition space.

There are many ways of splitting the space $\tilde{\mathcal{A}}$ into subsets.  Perhaps the simplest splitting is into the two disjoint subsets $\tilde{\mathcal{A}}_{\nu}$ and $\tilde{\mathcal{A}}_{\bar{\nu}}$, which are the set of statements in $\tilde{\mathcal{A}}$ which have a well-defined truth value and those which do not respectively.  Obviously we have that, $\tilde{\mathcal{A}}_{\nu}\cap\tilde{\mathcal{A}}_{\bar{\nu}} = \emptyset$ and $\tilde{\mathcal{A}}_{\nu}\cup\tilde{\mathcal{A}}_{\bar{\nu}} = \tilde{\mathcal{A}}$.  We also have that the original proposition space $\mathcal{A} \subset \tilde{\mathcal{A}}_{\nu}$.

There are other special subsets of $\tilde{\mathcal{A}}$ that will become important later.  The first type is called a $\Gamma$-contextual subset, $\tilde{\mathcal{A}}_{\Gamma} \subset\tilde{\mathcal{A}}$, which contains all statements whose second argument is $\Gamma$,
\begin{equation}
\tilde{\mathcal{A}}_{\Gamma} \stackrel{\mathrm{def}}{=} \left\{\Delta|\Phi \in \tilde{\mathcal{A}}\,\middle|\, \Phi = \Gamma\right\}.
\end{equation}
The $\Gamma$-contextual subspaces will have important properties that we will discuss in the rest of this section.  The number of possible $\Gamma$-contextual subsets is equal to the size of $|\mathcal{A}\backslash\mathcal{A}_{\fals}|$, and each of them are almost disjoint,
\begin{equation}
\forall i\neq j : \tilde{\mathcal{A}}_{\Gamma_i}\cap\tilde{\mathcal{A}}_{\Gamma_j} = \{\tru,\fals\} \quad \mathrm{and} \quad \bigcup_{i=1}^{|\mathcal{A}\backslash\mathcal{A}_{\fals}|}\tilde{\mathcal{A}}_{\Gamma_i} = \tilde{\mathcal{A}}.
\end{equation}

\paragraph{The Boolean algebra over $\tilde{\mathcal{A}}_{\Gamma}$ ---}
We will inherit the Boolean algebra of the connectives $\{\wedge,\vee,\neg\}$ over particular subspaces of $\tilde{\mathcal{A}}$.  The simplest construction concerns the $\Gamma$-contextual subspaces $\tilde{\mathcal{A}}_{\Gamma}$, for which we define the following axioms,
\begin{axioms}[Contextual Boolean algebra]
	Let $\tilde{\mathcal{A}}_{\Gamma}$ be a $\Gamma$-contextual subspace of $\tilde{\mathcal{A}}$.  We define the context dependent Boolean operations on $\tilde{\mathcal{A}}_{\Gamma}$ as the following,
	\begin{enumerate}
		\item \textbf{Negation} - Context dependent negation has the following property,
		\begin{equation}
		\forall a|\Gamma \in \tilde{\mathcal{A}}_{\Gamma} : \neg(a|\Gamma) = \neg a|\Gamma.
		\end{equation}
		\item \textbf{Conjunction} - Context dependent conjunction is defined as,
		\begin{equation}
		\forall a|\Gamma, b|\Gamma \in \tilde{\mathcal{A}}_{\Gamma} : a|\Gamma \wedge b|\Gamma = [a \wedge b]|\Gamma.
		\end{equation}
		\item \textbf{Disjunction} - Likewise, context dependent disjunction is,
		\begin{equation}
		\forall a|\Gamma, b|\Gamma \in \tilde{\mathcal{A}}_{\Gamma} : a|\Gamma \vee b|\Gamma = [a \vee b]|\Gamma.
		\end{equation}
	\end{enumerate}
	The collection $(\tilde{\mathcal{A}}_{\Gamma},\wedge,\vee,\neg)$ forms a Boolean algebra over $\tilde{\mathcal{A}}_{\Gamma}$.  
\end{axioms}
Each of these operators preserves context, i.e. $\neg:\tilde{\mathcal{A}}_{\Gamma}\rightarrow \tilde{\mathcal{A}}_{\Gamma}$ and $\{\wedge,\vee\}:\tilde{\mathcal{A}}_{\Gamma}\times\tilde{\mathcal{A}}_{\Gamma}\rightarrow\tilde{\mathcal{A}}_{\Gamma}$.  The conjunction and disjunction between two different context dependent subspaces is not well defined, e.g. statements such as $a|\Gamma \wedge b|\Phi \notin \tilde{\mathcal{A}}$.  While some elements, $a|\Gamma, b|\Gamma \in \tilde{\mathcal{A}}$, may not have a well defined truth value, their conjunction or disjunction could.  For example, if $a|\Gamma$ and $b|\Gamma$ are mutually exclusive, then their conjunction is false independent of $\Gamma$,
\begin{equation}
a|\Gamma \wedge b|\Gamma = [a\wedge b]|\Gamma = \fals|\Gamma \Leftrightarrow \fals.
\end{equation}
So, even in the situation where $a|\Gamma, b|\Gamma \in \tilde{\mathcal{A}}_{\bar{\nu}}$, their conjunction is in $\tilde{\mathcal{A}}_{\nu}$.  If they also happen to be exhaustive then their disjunction is true,
\begin{equation}
a|\Gamma \vee b|\Gamma = [a \vee b]|\Gamma = \tru|\Gamma \Leftrightarrow \tru.
\end{equation}
Thus, context dependent conjunction and disjunction can take two elements from $\tilde{\mathcal{A}}_{\Gamma}\cap\tilde{\mathcal{A}}_{\bar{\nu}}$ and send them to $\tilde{\mathcal{A}}_{\Gamma}\cap\tilde{\mathcal{A}}_{\nu}$.    

\paragraph{Contextualization ---}
The conditioning, or \textit{contextualizing} of two elements in $\tilde{\mathcal{A}}$ has important properties.  First, for generic atomic propositions in $\mathcal{A}$, the conditional operator ``$|$'' sends, $\mathcal{A}\times\mathcal{A}\rightarrow \tilde{\mathcal{A}}_{\pi_2}$ where $\pi_2$ is the projection onto the second argument, i.e.
\begin{equation}
\forall a,b\in \mathcal{A} : a|b \in \tilde{\mathcal{A}}_{b}, \quad \mathrm{and} \quad b|a \in \tilde{\mathcal{A}}_{a}.
\end{equation}
In general we have that,
\begin{axiom}[Contextualization is distributive]
	Let $\tilde{\mathcal{A}}_{\Gamma}$ be a $\Gamma$-contextual subspace of $\tilde{\mathcal{A}}$.  Given a generic element $a|\Gamma \in \tilde{\mathcal{A}}_{\Gamma}$ and an element $\Delta \in \mathcal{A}$ which is not equivalent to $\fals$, the conditioning of $a|\Gamma$ on $\Delta$ is defined as $|:\tilde{\mathcal{A}}_{\Gamma}\times\mathcal{A}\rightarrow\tilde{\mathcal{A}}_{\Gamma\times \pi_2}$,
	\begin{equation}
	\forall a|\Gamma \in \tilde{\mathcal{A}}_{\Gamma} : \forall \Delta \not\Leftrightarrow \fals \in \mathcal{A} : [a|\Gamma]|\Delta = a|\Gamma\wedge \Delta \in \tilde{\mathcal{A}}_{\Gamma\wedge \Delta},
	\end{equation}
\end{axiom}
This is similar in behavior to the distributive property of material implication \cite{LewisLangford}.  

\subsection{Partial order and disjunction}\label{partialordersection}
The disjunction connection has the special property of generating a \textit{partial order} on subsets of $\tilde{\mathcal{A}}$ which contain mutually exclusive statements.  A mutually exclusive subset $\tilde{\mathcal{A}}_{\vee} \subseteq \tilde{\mathcal{A}}$ is such that,
\begin{equation}
\tilde{\mathcal{A}}_{\vee} = \left\{a|\Gamma\in \tilde{\mathcal{A}}_{\Gamma}\,\middle|\,\forall b|\Gamma \in \tilde{\mathcal{A}}_{\Gamma} : a|\Gamma\wedge b|\Gamma \Leftrightarrow \fals\right\}.
\end{equation}
If the subset $\tilde{\mathcal{A}}_{\vee}$ is \textit{exhaustive} then, 
\begin{equation}
\bigvee_{a|\Gamma\in\tilde{\mathcal{A}}_{\vee}}\tilde{\mathcal{A}}_{\vee} \Leftrightarrow \tru.\label{exhaustive}
\end{equation}
A partial order on $\tilde{\mathcal{A}}_{\vee}$ is defined as,
\begin{define}\label{partialorder}
	Let $\tilde{\mathcal{A}}_{\vee} \subseteq \tilde{\mathcal{A}}$ be a mutually exclusive subset of $\tilde{\mathcal{A}}$.  Then, the collection $(\tilde{\mathcal{A}}_{\vee},\vee,\preceq)$ is a partially ordered set where the binary relation $\preceq$ has the following properties,
	\begin{enumerate}
		\item \textbf{Reflexive} - $\forall a|\Gamma \in \tilde{\mathcal{A}}_{\vee} : a|\Gamma \preceq a|\Gamma$.
		\item \textbf{Antisymmetry} - $\forall a|\Gamma,b|\Gamma \in \tilde{\mathcal{A}}_{\vee} : [a|\Gamma \preceq b|\Gamma]\wedge [b|\Gamma \preceq a|\Gamma] \Rightarrow [a|\Gamma =_{\preceq} b|\Gamma]$.
		\item \textbf{Transitive} - $\forall a|\Gamma,b|\Gamma,c|\Gamma \in \tilde{\mathcal{A}}_{\vee} : [a|\Gamma \preceq b|\Gamma]\wedge[b|\Gamma \preceq c|\Gamma] \Rightarrow [a|\Gamma \preceq c|\Gamma]$.
	\end{enumerate}
\end{define}
The disjunction $\vee$ on $\tilde{\mathcal{A}}_{\vee}$ establishes a hierarchy such that,
\begin{equation}
\forall a|\Gamma,b|\Gamma \in \tilde{\mathcal{A}}_{\vee} : a|\Gamma \preceq [a|\Gamma \vee b|\Gamma].
\end{equation}
If the propositions $a\preceq b$ and $b\preceq a$ are both false, then we say that $a$ and $b$ are \textit{incomparable}.  An upper bound $\top_{\vee} \in \tilde{\mathcal{A}}_{\vee}$ is an element,
\begin{equation}
\forall a|\Gamma : [\exists c|\Gamma : a|\Gamma \preceq c|\Gamma]\Rightarrow c|\Gamma = \top_{\vee} 
\end{equation}
A lower bound $\bot_{\vee} \in \tilde{\mathcal{A}}_{\vee}$ is analogously defined,
\begin{equation}
\forall a|\Gamma : [\exists d|\Gamma : d|\Gamma \preceq a|\Gamma]\Rightarrow d|\Gamma = \bot_{\vee}
\end{equation}
If the elements of $\tilde{\mathcal{A}}_{\vee}$ are exhaustive (\ref{exhaustive}), then every element $a|\Gamma \in \tilde{\mathcal{A}}_{\vee}$ has the same \textit{upper bound}, which is $\top_{\vee} = \tru$,
\begin{equation}
\forall a|\Gamma \in \tilde{\mathcal{A}}_{\vee} : a|\Gamma \preceq \tru.\label{upperbound}
\end{equation}
Likewise, there is a \textit{lower bound}, which is $\bot_{\vee} = \fals$,
\begin{equation}
\forall a|\Gamma \in \tilde{\mathcal{A}}_{\vee} : \fals \preceq a|\Gamma.\label{lowerbound}
\end{equation}   
Whenever these two elements (\ref{upperbound}) and(\ref{lowerbound}) exist, then the partially ordered set $\tilde{\mathcal{A}}_{\vee}$ forms a \textit{lattice}, $\mathfrak{L} = (\tilde{\mathcal{A}}_{\vee},\vee_{\mathfrak{L}},\wedge_{\mathfrak{L}},\preceq)$.  The conjunction and disjunction connectives are identified as the \textit{meet} and \textit{join} relations in the lattice respectively.  The \textit{meet}, $\vee_{\mathfrak{L}}$, of any two elements $a|\Gamma$ and $b|\Gamma$ of $\tilde{\mathcal{A}}_{\vee}$ is their \textit{greatest lower bound}, i.e. the element $c|\Gamma$,
\begin{align}
\forall a|\Gamma, b|\Gamma \in \tilde{\mathcal{A}}_{\vee} &: \forall d|\Gamma \in \tilde{\mathcal{A}}_{\vee} : d|\Gamma \preceq a|\Gamma\nonumber\\
&: \forall e|\Gamma \in \tilde{\mathcal{A}}_{\vee} : e|\Gamma \preceq b|\Gamma\nonumber\\
&: \exists c|\Gamma \in \tilde{\mathcal{A}}_{\vee}\nonumber\\
&: [d|\Gamma \preceq c|\Gamma]\wedge[e|\Gamma\preceq c|\Gamma] \Rightarrow a|\Gamma\wedge_{\mathfrak{L}} b|\Gamma = c|\Gamma.
\end{align}
Likewise, the join, $\vee_{\mathfrak{L}}$, of any two elements $a|\Gamma$ and $b|\Gamma$ is their \textit{least upper bound},
\begin{align}
\forall a|\Gamma, b|\Gamma \in \tilde{\mathcal{A}}_{\vee} &: \forall d|\Gamma \in \tilde{\mathcal{A}}_{\vee} : a|\Gamma \preceq d|\Gamma\nonumber\\
&: \forall e|\Gamma \in \tilde{\mathcal{A}}_{\vee} : b|\Gamma \preceq d|\Gamma\nonumber\\
&: \exists c|\Gamma \in \tilde{\mathcal{A}}_{\vee}\nonumber\\
&: [c|\Gamma \preceq d|\Gamma]\wedge[c|\Gamma\preceq e|\Gamma] \Rightarrow a|\Gamma\vee_{\mathfrak{L}} b|\Gamma = c|\Gamma.
\end{align}
Given that the elements of $\tilde{\mathcal{A}}_{\vee}$ are all within the same context, $\tilde{\mathcal{A}}_{\vee} \subseteq \tilde{\mathcal{A}}_{\Gamma}$, then the meet and join are simply identified as,
\begin{align}
\forall a|\Gamma,b|\Gamma \in \tilde{\mathcal{A}}_{\vee} : a|\Gamma \wedge_{\mathfrak{L}} b|\Gamma &\Leftrightarrow a \wedge b|\Gamma\\
\forall a|\Gamma,b|\Gamma \in \tilde{\mathcal{A}}_{\vee} : a|\Gamma \vee_{\mathfrak{L}} b|\Gamma &\Leftrightarrow a \vee b|\Gamma
\end{align}
Whenever the distributive properties of $\wedge_{\mathfrak{L}}$ and $\vee_{\mathfrak{L}}$ hold, the lattice $\mathfrak{L}$ is called a \textit{distributive lattice}.  If for every element $a|\Gamma \in \tilde{\mathcal{A}}_{\vee}$ we also have its negation, $\neg a|\Gamma \in \tilde{\mathcal{A}}_{\vee}$, such that,
\begin{equation}
\forall a|\Gamma \in \tilde{\mathcal{A}}_{\vee} : \exists \neg a|\Gamma \in \tilde{\mathcal{A}}_{\vee} : [a|\Gamma \vee_{\mathfrak{L}} \neg a|\Gamma = \tru] \wedge [a|\Gamma \wedge_{\mathfrak{L}} \neg a|\Gamma = \fals],
\end{equation}
then the lattice $\mathfrak{L}$ is called a \textit{complemented lattice}.  Thus, we can identify the complete Boolean algebra over $\tilde{\mathcal{A}}_{\vee}$, when the atomic elements of $\tilde{\mathcal{A}}_{\vee}$ are mutually exclusive and exhaustive, as a \textit{complemented distributive lattice}, $(\tilde{\mathcal{A}}_{\vee},\wedge,\vee,\neg) \rightarrow (\tilde{\mathcal{A}}_{\vee},\wedge_{\mathfrak{L}},\vee_{\mathfrak{L}},\preceq)$.  The lattice $\mathfrak{L}$ has the additional structure of the $\preceq$ relation, which is identical to the material implication, which allows one to connect $\wedge$ and $\vee$ with $\neg$. 

The disjunction of two statements in $\tilde{\mathcal{A}}_{\vee}$ is always at higher order than the statements themselves.  Why exactly is this useful and what does it mean?  If one examines the properties in (\ref{partialorder}) they will realize that the binary relation $\preceq$ is equivalent to the material implication $\Rightarrow$.  This allows us to prove the useful result that disjunction in $\tilde{\mathcal{A}}_{\vee}$ preserves order.
\begin{theorem}
	Let $(\tilde{\mathcal{A}}_{\vee},\vee,\preceq)$ be a partially ordered set with respect to $\vee$.  For any $a|\Gamma,b|\Gamma$ and $c|\Gamma$ in $\tilde{\mathcal{A}}_{\vee}$, the disjunction preserves order,
	\begin{equation}
	\forall a|\Gamma, b|\Gamma, c|\Gamma \in \tilde{\mathcal{A}}_{\vee} : [a|\Gamma \preceq b|\Gamma] \Rightarrow [a|\Gamma \vee c|\Gamma] \preceq [b|\Gamma \vee c|\Gamma].\label{disprec}
	\end{equation}
\end{theorem}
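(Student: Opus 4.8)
The plan is to prove this as the standard monotonicity of the join, working entirely inside the single context $\tilde{\mathcal{A}}_{\vee} \subseteq \tilde{\mathcal{A}}_{\Gamma}$, so that the contextual Boolean operations and the identification $\preceq \Leftrightarrow \Rightarrow$ are available throughout. Since every element shares the context $\Gamma$, I would freely use $a|\Gamma \vee b|\Gamma = [a\vee b]|\Gamma$ and the commutativity of the underlying Boolean $\vee$, together with the reflexive, antisymmetric and transitive properties of Definition~\ref{partialorder} and the hierarchy relation $x|\Gamma \preceq [x|\Gamma \vee y|\Gamma]$ stated immediately after it.

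First I would record the two instances of the hierarchy relation needed as upper-bound facts: applying it directly gives $b|\Gamma \preceq [b|\Gamma \vee c|\Gamma]$, and applying it with the arguments exchanged (using commutativity of $\vee$) gives $c|\Gamma \preceq [c|\Gamma \vee b|\Gamma] = [b|\Gamma \vee c|\Gamma]$. Next I would chain the hypothesis with the first of these: from $a|\Gamma \preceq b|\Gamma$ and $b|\Gamma \preceq [b|\Gamma \vee c|\Gamma]$, transitivity yields $a|\Gamma \preceq [b|\Gamma \vee c|\Gamma]$. At this stage $[b|\Gamma \vee c|\Gamma]$ has been exhibited as a common upper bound of both $a|\Gamma$ and $c|\Gamma$.

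The final step is to invoke the least-upper-bound characterization of the join $\vee_{\mathfrak{L}}$: since $[a|\Gamma \vee c|\Gamma]$ is by definition the least upper bound of $a|\Gamma$ and $c|\Gamma$, and since $[b|\Gamma \vee c|\Gamma]$ is an upper bound of both, the join must satisfy $[a|\Gamma \vee c|\Gamma] \preceq [b|\Gamma \vee c|\Gamma]$, which is exactly (\ref{disprec}).

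I expect the only real obstacle to be justifying that $[a|\Gamma \vee c|\Gamma]$ genuinely functions as a \emph{least} upper bound rather than merely an upper bound, since that property requires $\tilde{\mathcal{A}}_{\vee}$ to carry the full lattice structure $\mathfrak{L}$, which in turn rests on the mutually exclusive (and, for $\tru/\fals$, exhaustive) hypotheses. If one prefers to avoid that machinery, an equally direct alternative is the truth-functional route: since $\preceq \Leftrightarrow \Rightarrow$, assume $[a\vee c]|\Gamma$ holds and split on whether $a|\Gamma$ or $c|\Gamma$ is the true disjunct --- in the former case $a|\Gamma \preceq b|\Gamma$ forces $b|\Gamma$, and in either case $[b\vee c]|\Gamma$ follows, establishing $[a\vee c]|\Gamma \Rightarrow [b\vee c]|\Gamma$. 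This bypasses the lattice apparatus and relies only on the contextual Boolean axioms and the material-implication reading of $\preceq$.
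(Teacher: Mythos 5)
Your proposal is correct, but it does not follow the paper's route. The paper's proof works entirely truth-functionally: after invoking the identification $\preceq\,\Leftrightarrow\,\Rightarrow$, it algebraically rewrites the consequent, showing $\left([a\vee c]\Rightarrow[b\vee c]\right)\Leftrightarrow [a\Rightarrow b]\vee c$ via De Morgan and distributivity, so that the whole claim collapses to the single inference rule of disjunction introduction, $[a\Rightarrow b]\Rightarrow\left([a\Rightarrow b]\vee c\right)$. (The paper states mutual exclusivity up front, but the vanishing term in its computation is $c\wedge\neg c\wedge b$, so the key identity is in fact a general tautology.) Your second, truth-functional route is the closest in spirit: it proves the same entailment by disjunction elimination (case split on $a$ versus $c$) rather than by rewriting to disjunction introduction; both are elementary and essentially equivalent in strength. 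Your primary, order-theoretic route is genuinely different: exhibiting $[b|\Gamma\vee c|\Gamma]$ as a common upper bound of $a|\Gamma$ and $c|\Gamma$ and appealing to the least-upper-bound property of $\vee_{\mathfrak{L}}$ is the standard proof of join-monotonicity and would generalize to any join-semilattice, but, as you rightly flag, it leans on the full lattice structure $\mathfrak{L}$ --- in particular on $[a\vee c]|\Gamma$ and $[b\vee c]|\Gamma$ being elements of the ordered set and on the join genuinely being a \emph{least} upper bound --- which the paper only assembles for mutually exclusive and exhaustive families; the paper's own argument deliberately avoids that machinery. No gap in either of your arguments, provided the lattice hypotheses are granted for the first.
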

\begin{proof}
	Assuming that $a|\Gamma, b|\Gamma$ and $c|\Gamma$ are mutually exclusive, then it is true that,
	\begin{align}
	\forall a|\Gamma, b|\Gamma \in \tilde{\mathcal{A}}_{\vee} : a|\Gamma \wedge b|\Gamma &\Leftrightarrow \fals\nonumber\\
	\neg a|\Gamma \vee \neg b|\Gamma &\Leftrightarrow \tru.
	\end{align}
	Let $a|\Gamma \rightarrow a, b|\Gamma \rightarrow b, c|\Gamma\rightarrow c$ for ease of notation.  The theorem (\ref{disprec}) can be written using material implication as,
	\begin{equation}
	\forall a, b, c \in \tilde{\mathcal{A}}_{\vee} : [a \Rightarrow b] \Rightarrow \left([a \vee c] \Rightarrow [b \vee c]\right).
	\end{equation}
	Or semantically as,
	\begin{equation}
	(a\Rightarrow b),\neg(a\wedge b),\neg(a\wedge c),\neg(b\wedge c) \vdash (a\vee c)\Rightarrow (b\vee c).
	\end{equation}
	The second term can be written,
	\begin{align}
	\left([a \vee c] \Rightarrow [b \vee c]\right) &\Leftrightarrow \neg \left([a \vee c] \wedge \neg[b \vee c]\right)\nonumber\\
	&\Leftrightarrow \neg\left([a\vee c]\wedge [\neg b \wedge \neg c]\right)\nonumber\\
	&\Leftrightarrow \neg\left([\neg b\wedge \neg c \wedge a]\vee[c \wedge \neg c \wedge b]\right)\nonumber\\
	&\Leftrightarrow \neg\left([a\wedge \neg b]\wedge \neg c\right)\nonumber\\
	&\Leftrightarrow [a\Rightarrow b]\vee c.
	\end{align}
	Therefore, the theorem can be written,
	\begin{equation}
	\forall a,b,c\in\tilde{\mathcal{A}}_{\vee} : [a\Rightarrow b] \Rightarrow \left([a\Rightarrow b]\vee c\right),
	\end{equation}
	which is just disjunction introduction, which is always a valid rule of inference.  Using the partial order notation, the theorem can also be written as,
	\begin{equation}
	\forall a|\Gamma,b|\Gamma,c|\Gamma\in\tilde{\mathcal{A}}_{\vee} : [a|\Gamma\preceq b|\Gamma] \Rightarrow \left([a|\Gamma\preceq b|\Gamma]\vee c|\Gamma\right),
	\end{equation}
\end{proof}
One way of visualizing partially ordered sets is with a \textit{Hasse diagram} \cite{Hasse}.  The diagram represents the partial order as a graph which is oriented vertically.  As one moves upwards along the graph, one moves upward along the partial order.  Considering three mutually exclusive statements $a, b$ and $c$, the corresponding Hasse diagram with respect to disjunction is,
\begin{figure}[H]
	\centering
	\begin{tikzpicture}[scale=.7]
	\node (true) at (-5,4) {$a\vee_{\mathfrak{L}} b\vee_{\mathfrak{L}} c$};
	\node (ab) at (-7,2) {$a\vee_{\mathfrak{L}} b$};
	\node (ac) at (-5,2) {$a\vee_{\mathfrak{L}} c$};
	\node (bc) at (-3,2) {$b \vee_{\mathfrak{L}} c$};
	\node (a) at (-7,0) {$a$};
	\node (b) at (-5,0) {$b$};
	\node (c) at (-3,0) {$c$};
	\node (false) at (-5,-2) {$\bot$};
	\draw (false) -- (a);
	\draw (false) -- (b);
	\draw (false) -- (c);
	\draw (a) -- (ab);
	\draw (a) -- (ac);
	\draw (b) -- (ab);
	\draw (b) -- (bc);
	\draw (c) -- (ac);
	\draw (c) -- (bc);
	\draw (ac) -- (true);
	\draw (ab) -- (true);
	\draw (bc) -- (true);
	\end{tikzpicture}
	\caption{Hasse diagram for the distributive lattice $\mathfrak{L}_{\vee}$ of $\vee$ over a set of three mutually exclusive and exhaustive propositions $\{a,b,c\}\in\tilde{\mathcal{A}}_{\vee}$.}
\end{figure}
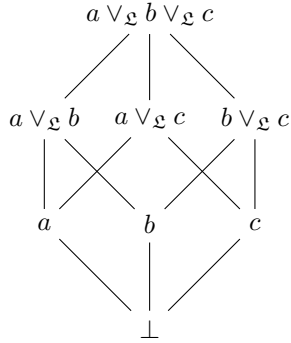
The Hasse diagram for any set of $n$-mutually exclusive statements forms an $n$-hypercube with respect to disjunction.  Interestingly enough, the Hasse diagram for two statements $a$ and $b$ forms a tesseract with respect to the binary connectives,
\begin{figure}[H]
	\centering
	\begin{tikzpicture}[scale=.7]
	\node (true) at (0,6) {$\top$};
	\node (nor) at (-4.5,-3) {$a \downarrow b$};
	\node (notimp) at (-1.5,-3) {$a \not\Rightarrow b$};
	\node (notimp2) at (1.5,-3) {$b\not\Rightarrow a$};
	\node (and) at (4.5,-3) {$a\wedge b$};
	\node (nota) at (-6.25,0) {$\neg b$};
	\node (notb) at (-3.75,0) {$\neg a$};
	\node (xor) at (-1.25,0) {$a \dot{\vee} b$};
	\node (bi) at (1.25,0) {$a \Leftrightarrow b$};
	\node (a) at (3.75,0) {$a$};
	\node (b) at (6.25,0) {$b$};
	\node (nand) at (-4.5,3) {$a \uparrow b$};
	\node (impb) at (-1.5,3) {$b\Rightarrow a$};
	\node (impa) at (1.5,3) {$a\Rightarrow b$};
	\node (or) at (4.5,3) {$a\vee b$};
	\node (false) at (0,-6) {$\bot$};
	\draw (false) -- (nor);
	\draw (false) -- (notimp);
	\draw (false) -- (notimp2);
	\draw (false) -- (and);
	\draw (nor) -- (nota);
	\draw (nor) -- (notb);
	\draw[green] (nor) -- (bi);
	\draw (and) -- (a);
	\draw (and) -- (b);
	\draw[green] (and) -- (bi);
	\draw (notimp) -- (nota);
	\draw[green] (notimp) -- (xor);
	\draw (notimp) -- (a);
	\draw (notimp2) -- (notb);
	\draw (notimp2) -- (b);
	\draw[green] (notimp2) -- (xor);
	\draw (nota) -- (nand);
	\draw (nota) -- (impb);
	\draw (notb) -- (nand);
	\draw (notb) -- (impa);
	\draw[green] (xor) -- (nand);
	\draw[green] (xor) -- (or);
	\draw[green] (bi) -- (impa);
	\draw[green] (bi) -- (impb);
	\draw (a) -- (impb);
	\draw (a) -- (or);
	\draw (b) -- (impa);
	\draw (b) -- (or);
	\draw (nand) -- (true);
	\draw (impb) -- (true);
	\draw (impa) -- (true);
	\draw (or) -- (true);
	\end{tikzpicture}
	\caption{Hasse diagram for the partially ordered set of binary connectives on two elements $a, b \in \mathcal{A}$.  The diagram corresponds to the graph of a tesseract.  Each level in the ordering corresponds to the number of possible values of truth there are in the image of each map, e.g. $\neg b$ is true for two out of four possible inputs, while $a\uparrow b$ is true for three our of four.}
\end{figure}
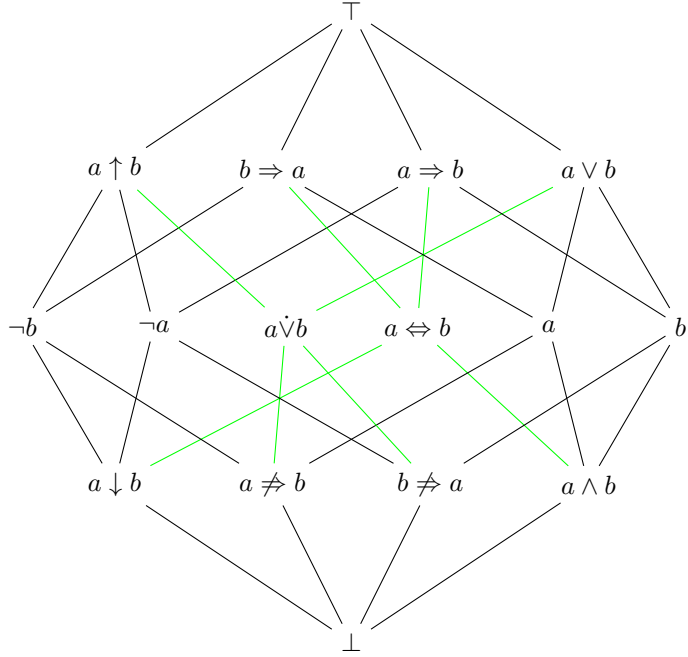
Now that we have extended Boolean algebra to include generic conditional statements we will use this as our universe of discourse for developing an inductive inference.
\subsection{Summary}

In this chapter we reviewed the basic structure of propositional and predicate logic, as well as the common extensions of modal and conditional logic.  Despite numerous efforts by Stalnaker \cite{Stalnaker}, Chellas \cite{Chellas} and Lewis \cite{Lewis} we saw that these standard logics are incapable of addressing situations in which one has incomplete information, which requires additional structure that comes from inductive inference.  As an intermediate step to the inductive framework, we constructed a Boolean logic over the \textit{extended proposition space}, which allows the introduction of conditional statements which do not contain complete information.  In the next section we will construct a proper theory of inductive inference which will allow us to address questions in which there is some uncertainty in the given propositions.

\section{Inductive inference}\label{chapter3}
\epigraph{What is there then that can be taken as true? Perhaps only this one thing, that nothing at all is certain.}{\textit{Ren\'{e} Descartes}}
\epigraph{Probability theory is nothing but common sense reduced to calculation.}{\textit{Pierre-Simon Laplace}, 1819}
Our goal in this chapter is to develop an inductive inference for reasoning when one has incomplete information.  Having a lack of complete information about our premises means that we are uncertain about our conclusions, and thus, inductive inference is sometimes developed as a means to deal with this uncertainty.  In this sense, inductive inference is often called \textit{uncertainty quantification}, or \textit{uncertain reasoning} \cite{Paris}.  To deal with these uncertainties, we will develop an inductive system, $\mathcal{P}$ (not to be confused with the space of predicates), in parallel to the deductive system $\mathcal{D}_c$ of the previous chapter.  Since the system $\mathcal{D}_c$ is so successful in its domain of applicability, it is instructive to use as a guide for the construction of $\mathcal{P}$.  The system $\mathcal{P}$ will have several analogous structures to that in $\mathcal{D}_c$ and we will attempt to preserve as much of the structure of $\mathcal{D}_c$ as possible.

The first object we must specify is the universe of discourse, or the subject matter of $\mathcal{P}$.  Since the truth values of elements $a|\Gamma \in \tilde{\mathcal{A}}$ are not known, or necessarily well defined, we instead seek to quantify \textit{beliefs} about their possible truth value.  A belief is a \textit{judgement} that one assigns to a particular statement, which qualifies how strongly one accepts that the statement is true.  Our first goal will be to quantify these beliefs.  

To quantify beliefs, we will assign a \textit{degree of belief} \cite{CatichaBook} to arguments or statements in $\tilde{\mathcal{A}}$.  These degrees of belief can be represented by a space $\Xi$ which will somehow be connected to the statements $\tilde{\mathcal{A}}$\footnote{As we will see the space $\Xi$ is simply a space of functions over $\tilde{\mathcal{A}}$.}.  The quantitative analysis to be developed for $\Xi$ will depend on criteria that will be imposed on the framework $\mathcal{P}$ through various constraints.  These criteria define a \textit{rationale}, or system of reasoning, which we label $\mathcal{R}$.  The rationale is the inductive analogy to the axioms $\mathcal{I}$ from $\mathcal{D}_c$.  In the very least, the rationale should determine how the space $\Xi$ is related to statements in $\tilde{\mathcal{A}}$ which are built through the connectives $\Omega$, and as a requirement, these relationships must be internally consistent.   

There are other approaches to inductive inference which suggest assigning various \textit{degrees} other than \textit{beliefs} to some underlying universe of discourse.  These include a \textit{degree of plausibility} \cite{Jaynes}, a \textit{degree of truth} \cite{Zadeh}\footnote{The use of the word \textit{truth} is perhaps a bit too suggestive in that our inductive framework is a metatheory which assigns truth, and so this terminology should be avoided.} or a \textit{degree of reasonable expectation} \cite{Cox}.  Some others are a \textit{degree of implication} \cite{KnuthSkilling}, or a \textit{degree of propensity} \cite{Popper}.  What all of these suggestions have in common is that one should assign a \textit{degree} of ``something'' to statements about which the \textit{truth} is either unknown, or undefined.  While the language seems to indicate a difference in interpretation among all these suggestions -- as we will see -- any reasonable assignment of a \textit{degree} to a statement of partial knowledge must be manipulated according to the rules of probability theory \cite{CatichaBook}.

A first question one might ask is, ``what precisely do we mean by \textit{rational}?''  As discussed in \cite{CatichaBook}, the precise meaning of the word \textit{rational} is difficult to pin down.  We take the pragmatic suggestion in \cite{CatichaBook} and define \textit{rational} as a compliment to the modes of reasoning which seem to lead to reliable conclusions.  If our inference framework is working out there in the real world, then we claim it must be rational.

The rationale $\mathcal{R}$ can be developed by imposing a set of constraints.  For example, in order to be as general as possible, the rationale must be able to be used in any situation of inductive reasoning.  We would not want to adopt a rationale that only applies to the stock market, or to medical trials, or to quantum mechanics.  Thus, the rationale must be independent of the particular subject matter within $\tilde{\mathcal{A}}$, so that it adheres to a principle of universal applicability, 
\begin{principle}[Universal Applicability]\label{universal}
	The inference framework should have wide appeal and universal applicability.
\end{principle}

According to the deductive system $\mathcal{D}_c$, any universe of discourse $\tilde{\mathcal{A}}$ necessarily forms a collection of interconnected statements which cannot all behave independently.  For example, the truth value of the proposition $a \vee b \in \mathcal{A}$ necessarily depends on the truth of $a$ and $b$, and cannot be assigned arbitrarily.  The same must be true for any beliefs we assign to statements in $\tilde{\mathcal{A}}$.  Thus, the relationship between $\Xi$ and $\tilde{\mathcal{A}}$ can be understood as an \textit{interconnected web of beliefs}, which we will label $\Xi(\tilde{\mathcal{A}})$.

\begin{figure}[H]
	\centering
	\begin{tikzpicture}[scale=.70]
	\node[style={draw,shape=circle,fill=black,scale=.4}] (0) at (-0.24642748898702138,4.903397528516398) {};
	\node[style={draw,shape=circle,fill=black,scale=.4}] (1) at (-8.171180732664602,2.5093670533012284) {};
	\node[style={draw,shape=circle,fill=black,scale=.4}] (2) at (-5.353194174725346,-2.8076283463238143) {};
	\node[style={draw,shape=circle,fill=black,scale=.4}] (3) at (-0.6086619503266277,4.363160038862827) {};
	\node[style={draw,shape=circle,fill=black,scale=.4}] (4) at (4.731072287184425,-0.19375862872988436) {};
	\node[style={draw,shape=circle,fill=black,scale=.4}] (5) at (0.9084807818566407,3.4792404301187334) {};
	\node[style={draw,shape=circle,fill=black,scale=.4}] (6) at (1.3900183154146837,6.772476205851982) {};
	\node[style={draw,shape=circle,fill=black,scale=.4}] (7) at (-0.6870722816541687,2.026552927220406) {};
	\node[style={draw,shape=circle,fill=black,scale=.4}] (8) at (-2.2923914969829813,-5.805322957691579) {};
	\node[style={draw,shape=circle,fill=black,scale=.4}] (9) at (0.9871942849760509,-4.166153886799797) {};
	\node[style={draw,shape=circle,fill=black,scale=.4}] (10) at (0.5874950040975324,-0.33304194964253925) {};
	\node[style={draw,shape=circle,fill=black,scale=.4}] (11) at (-0.8808804394247889,0.37594074080131734) {};
	\node[style={draw,shape=circle,fill=black,scale=.4}] (12) at (-3.684130547652636,-4.526860127123821) {};
	\node[style={draw,shape=circle,fill=black,scale=.4}] (13) at (-1.5096550551325638,-0.83816516812518) {};
	\node[style={draw,shape=circle,fill=black,scale=.4}] (14) at (-5.630447572400615,-0.9395857037195199) {};
	\node[style={draw,shape=circle,fill=black,scale=.4}] (15) at (3.962194344156046,0.3092399997754269) {};
	\node[style={draw,shape=circle,fill=black,scale=.4}] (16) at (-0.17840629652140746,2.0901424740776697) {};
	\node[style={draw,shape=circle,fill=black,scale=.4}] (17) at (-0.07082686917595846,3.983254305885911) {};
	\node[style={draw,shape=circle,fill=black,scale=.4}] (18) at (2.725974422812101,-0.9068692317132425) {};
	\node[style={draw,shape=circle,fill=black,scale=.4}] (19) at (0.014756285628729673,-1.2523744424117127) {};
	\node[style={draw,shape=circle,fill=black,scale=.4}] (20) at (-2.2652344530308475,-2.018828759865789) {};
	\node[style={draw,shape=circle,fill=black,scale=.4}] (21) at (3.2854318081267078,1.070137363373548) {};
	\node[style={draw,shape=circle,fill=black,scale=.4}] (22) at (-1.0401532648112524,-2.6287201108212432) {};
	\node[style={draw,shape=circle,fill=black,scale=.4}] (23) at (1.9158312707643175,-1.3809274189836827) {};
	\node[style={draw,shape=circle,fill=black,scale=.4}] (24) at (-1.4101317374861542,-0.8051841178173667) {};
	\node[style={draw,shape=circle,fill=black,scale=.4}] (25) at (1.0458911632406482,1.0397533472396865) {};
	\node[style={draw,shape=circle,fill=black,scale=.4}] (26) at (-4.4906513599465505,-2.137387201088091) {};
	\node[style={draw,shape=circle,fill=black,scale=.4}] (27) at (-3.848004516112627,6.575770742741382) {};
	\node[style={draw,shape=circle,fill=black,scale=.4}] (28) at (-7.4770335521904085,0.2830583084842103) {};
	\node[style={draw,shape=circle,fill=black,scale=.4}] (29) at (0.39560577111813094,1.942234424909784) {};
	\node[style={draw,shape=circle,fill=black,scale=.4}] (30) at (2.0841158528781536,-5.591104916533054) {};
	\node[style={draw,shape=circle,fill=black,scale=.4}] (31) at (-0.7505820007585167,3.7665352301716117) {};
	\node[style={draw,shape=circle,fill=black,scale=.4}] (32) at (0.2868688300785899,-2.379731983394283) {};
	\node[style={draw,shape=circle,fill=black,scale=.4}] (33) at (-2.575455317395105,-0.9115445114635408) {};
	\node[style={draw,shape=circle,fill=black,scale=.4}] (34) at (3.213891311887558,-2.6523419516075926) {};
	\node[style={draw,shape=circle,fill=black,scale=.4}] (35) at (0.4661000928022127,-1.4053713812376594) {};
	\node[style={draw,shape=circle,fill=black,scale=.4}] (36) at (3.5175898296674255,2.351975701012196) {};
	\node[style={draw,shape=circle,fill=black,scale=.4}] (37) at (0.5364999179142421,1.0237361516865717) {};
	\node[style={draw,shape=circle,fill=black,scale=.4}] (38) at (-1.4471472225645008,-2.2230932580794507) {};
	\node[style={draw,shape=circle,fill=black,scale=.4}] (39) at (0.02140414591845549,0.6398041301603318) {};
	\node[style={draw,shape=circle,fill=black,scale=.4}] (40) at (-2.0544421754171602,-0.0038667309978214306) {};
	\node[style={draw,shape=circle,fill=black,scale=.4}] (41) at (2.2356293174919197,-3.7620447436495423) {};
	\node[style={draw,shape=circle,fill=black,scale=.4}] (42) at (3.2200739056799055,0.40473759284692856) {};
	\node[style={draw,shape=circle,fill=black,scale=.4}] (43) at (-1.5402675040259226,1.9566680368892406) {};
	\node[style={draw,shape=circle,fill=black,scale=.4}] (44) at (-1.0526482559224162,-0.31744613848674863) {};
	\node[style={draw,shape=circle,fill=black,scale=.4}] (45) at (0.43325673994630864,1.4510793347610764) {};
	\node[style={draw,shape=circle,fill=black,scale=.4}] (46) at (-0.1528101879195447,1.3149389529849163) {};
	\node[style={draw,shape=circle,fill=black,scale=.4}] (47) at (0.6037343178034267,2.0648722406458258) {};
	\node[style={draw,shape=circle,fill=black,scale=.4}] (48) at (0.5941288221464623,-4.2169794674192005) {};
	\node[style={draw,shape=circle,fill=black,scale=.4}] (49) at (5.425212010296713,-1.7361870182538193) {};
	\node[style={draw,shape=circle,fill=black,scale=.4}] (50) at (-0.2967204515242212,1.0737856214859534) {};
	\node[style={draw,shape=circle,fill=black,scale=.4}] (51) at (-2.5221121316607316,1.1952783516520245) {};
	\node[style={draw,shape=circle,fill=black,scale=.4}] (52) at (-1.733940597650447,2.2567904870649658) {};
	\node[style={draw,shape=circle,fill=black,scale=.4}] (53) at (4.077673838335878,2.1628750596657937) {};
	\node[style={draw,shape=circle,fill=black,scale=.4}] (54) at (-1.2959060795654098,1.698319185231084) {};
	\node[style={draw,shape=circle,fill=black,scale=.4}] (55) at (1.3574992376391868,-1.191767254160161) {};
	\node[style={draw,shape=circle,fill=black,scale=.4}] (56) at (0.5557832115006083,0.38003182493510285) {};
	\node[style={draw,shape=circle,fill=black,scale=.4}] (57) at (-1.6705149296321729,1.5386336207889344) {};
	\node[style={draw,shape=circle,fill=black,scale=.4}] (58) at (-2.766592415235434,0.26032227501809957) {};
	\node[style={draw,shape=circle,fill=black,scale=.4}] (59) at (0.7965376730819496,0.9795720658574023) {};
	\node[style={draw,shape=circle,fill=black,scale=.4}] (60) at (-3.3132662916270803,-7.122524025223802) {};
	\node[style={draw,shape=circle,fill=black,scale=.4}] (61) at (-2.3739199375037066,-4.755434518758451) {};
	\node[style={draw,shape=circle,fill=black,scale=.4}] (62) at (-3.537790472939938,0.6001426422008841) {};
	\node[style={draw,shape=circle,fill=black,scale=.4}] (63) at (0.15545284386970692,-0.5239250432319298) {};
	\node[style={draw,shape=circle,fill=black,scale=.4}] (64) at (0.19845099210555922,-3.3820719866650233) {};
	\node[style={draw,shape=circle,fill=black,scale=.4}] (65) at (3.1622415113708455,0.41960941823062514) {};
	\node[style={draw,shape=circle,fill=black,scale=.4}] (66) at (0.5947344304963367,1.1598187577119874) {};
	\node[style={draw,shape=circle,fill=black,scale=.4}] (67) at (-0.8040513518612743,2.358606253334604) {};
	\node[style={draw,shape=circle,fill=black,scale=.4}] (68) at (-3.0409378940712495,4.384835658451392) {};
	\node[style={draw,shape=circle,fill=black,scale=.4}] (69) at (1.342584753149798,-2.189471079143325) {};
	\node[style={draw,shape=circle,fill=black,scale=.4}] (70) at (6.885244705020319,-0.6099417650092436) {};
	\node[style={draw,shape=circle,fill=black,scale=.4}] (71) at (-4.101403049155358,1.142661384377401) {};
	\node[style={draw,shape=circle,fill=black,scale=.4}] (72) at (3.4579318520951676,-3.163084917686362) {};
	\node[style={draw,shape=circle,fill=black,scale=.4}] (73) at (0.9474381200467105,0.02551245416659136) {};
	\node[style={draw,shape=circle,fill=black,scale=.4}] (74) at (-3.125036500914456,1.4717231610413013) {};
	\draw[densely dotted] (0) -- (1);
	\draw[densely dotted] (0) -- (10);
	\draw[densely dotted] (0) -- (21);
	\draw[densely dotted] (0) -- (22);
	\draw[densely dotted] (0) -- (34);
	\draw[densely dotted] (0) -- (55);
	\draw[densely dotted] (0) -- (59);
	\draw[densely dotted] (0) -- (61);
	\draw[densely dotted] (0) -- (68);
	\draw[densely dotted] (1) -- (64);
	\draw[densely dotted] (2) -- (14);
	\draw[densely dotted] (2) -- (49);
	\draw[densely dotted] (2) -- (66);
	\draw[densely dotted] (3) -- (10);
	\draw[densely dotted] (3) -- (35);
	\draw[densely dotted] (4) -- (32);
	\draw[densely dotted] (4) -- (34);
	\draw[densely dotted] (4) -- (66);
	\draw[densely dotted] (5) -- (38);
	\draw[densely dotted] (6) -- (3);
	\draw[densely dotted] (6) -- (32);
	\draw[densely dotted] (6) -- (35);
	\draw[densely dotted] (7) -- (2);
	\draw[densely dotted] (7) -- (3);
	\draw[densely dotted] (8) -- (63);
	\draw[densely dotted] (9) -- (66);
	\draw[densely dotted] (10) -- (43);
	\draw[densely dotted] (10) -- (72);
	\draw[densely dotted] (11) -- (4);
	\draw[densely dotted] (11) -- (37);
	\draw[densely dotted] (11) -- (48);
	\draw[densely dotted] (11) -- (56);
	\draw[densely dotted] (11) -- (60);
	\draw[densely dotted] (12) -- (0);
	\draw[densely dotted] (12) -- (11);
	\draw[densely dotted] (12) -- (20);
	\draw[densely dotted] (13) -- (0);
	\draw[densely dotted] (13) -- (4);
	\draw[densely dotted] (13) -- (9);
	\draw[densely dotted] (13) -- (65);
	\draw[densely dotted] (13) -- (66);
	\draw[densely dotted] (14) -- (22);
	\draw[densely dotted] (14) -- (39);
	\draw[densely dotted] (14) -- (55);
	\draw[densely dotted] (15) -- (5);
	\draw[densely dotted] (15) -- (39);
	\draw[densely dotted] (15) -- (47);
	\draw[densely dotted] (15) -- (49);
	\draw[densely dotted] (15) -- (61);
	\draw[densely dotted] (16) -- (27);
	\draw[densely dotted] (16) -- (32);
	\draw[densely dotted] (17) -- (15);
	\draw[densely dotted] (17) -- (73);
	\draw[densely dotted] (18) -- (38);
	\draw[densely dotted] (18) -- (39);
	\draw[densely dotted] (18) -- (60);
	\draw[densely dotted] (18) -- (64);
	\draw[densely dotted] (19) -- (12);
	\draw[densely dotted] (19) -- (24);
	\draw[densely dotted] (19) -- (25);
	\draw[densely dotted] (20) -- (5);
	\draw[densely dotted] (20) -- (12);
	\draw[densely dotted] (20) -- (25);
	\draw[densely dotted] (20) -- (51);
	\draw[densely dotted] (20) -- (69);
	\draw[densely dotted] (20) -- (71);
	\draw[densely dotted] (21) -- (4);
	\draw[densely dotted] (21) -- (42);
	\draw[densely dotted] (21) -- (64);
	\draw[densely dotted] (21) -- (68);
	\draw[densely dotted] (22) -- (29);
	\draw[densely dotted] (22) -- (33);
	\draw[densely dotted] (23) -- (15);
	\draw[densely dotted] (23) -- (18);
	\draw[densely dotted] (23) -- (31);
	\draw[densely dotted] (23) -- (41);
	\draw[densely dotted] (23) -- (42);
	\draw[densely dotted] (23) -- (45);
	\draw[densely dotted] (23) -- (47);
	\draw[densely dotted] (24) -- (7);
	\draw[densely dotted] (24) -- (13);
	\draw[densely dotted] (24) -- (21);
	\draw[densely dotted] (24) -- (41);
	\draw[densely dotted] (24) -- (59);
	\draw[densely dotted] (24) -- (63);
	\draw[densely dotted] (25) -- (0);
	\draw[densely dotted] (25) -- (18);
	\draw[densely dotted] (25) -- (55);
	\draw[densely dotted] (25) -- (65);
	\draw[densely dotted] (26) -- (0);
	\draw[densely dotted] (26) -- (31);
	\draw[densely dotted] (26) -- (34);
	\draw[densely dotted] (26) -- (39);
	\draw[densely dotted] (26) -- (47);
	\draw[densely dotted] (26) -- (65);
	\draw[densely dotted] (26) -- (73);
	\draw[densely dotted] (27) -- (5);
	\draw[densely dotted] (27) -- (24);
	\draw[densely dotted] (28) -- (53);
	\draw[densely dotted] (28) -- (63);
	\draw[densely dotted] (29) -- (19);
	\draw[densely dotted] (29) -- (67);
	\draw[densely dotted] (30) -- (0);
	\draw[densely dotted] (30) -- (36);
	\draw[densely dotted] (30) -- (61);
	\draw[densely dotted] (30) -- (65);
	\draw[densely dotted] (30) -- (73);
	\draw[densely dotted] (30) -- (74);
	\draw[densely dotted] (31) -- (11);
	\draw[densely dotted] (31) -- (17);
	\draw[densely dotted] (31) -- (37);
	\draw[densely dotted] (31) -- (44);
	\draw[densely dotted] (32) -- (72);
	\draw[densely dotted] (33) -- (2);
	\draw[densely dotted] (33) -- (7);
	\draw[densely dotted] (33) -- (18);
	\draw[densely dotted] (33) -- (48);
	\draw[densely dotted] (33) -- (56);
	\draw[densely dotted] (33) -- (59);
	\draw[densely dotted] (33) -- (71);
	\draw[densely dotted] (34) -- (16);
	\draw[densely dotted] (34) -- (23);
	\draw[densely dotted] (34) -- (36);
	\draw[densely dotted] (34) -- (37);
	\draw[densely dotted] (34) -- (43);
	\draw[densely dotted] (34) -- (52);
	\draw[densely dotted] (34) -- (61);
	\draw[densely dotted] (34) -- (63);
	\draw[densely dotted] (35) -- (11);
	\draw[densely dotted] (35) -- (18);
	\draw[densely dotted] (35) -- (23);
	\draw[densely dotted] (35) -- (25);
	\draw[densely dotted] (35) -- (51);
	\draw[densely dotted] (35) -- (61);
	\draw[densely dotted] (36) -- (22);
	\draw[densely dotted] (36) -- (26);
	\draw[densely dotted] (36) -- (50);
	\draw[densely dotted] (37) -- (15);
	\draw[densely dotted] (37) -- (16);
	\draw[densely dotted] (37) -- (27);
	\draw[densely dotted] (37) -- (40);
	\draw[densely dotted] (38) -- (22);
	\draw[densely dotted] (38) -- (47);
	\draw[densely dotted] (38) -- (49);
	\draw[densely dotted] (38) -- (68);
	\draw[densely dotted] (39) -- (20);
	\draw[densely dotted] (39) -- (46);
	\draw[densely dotted] (39) -- (52);
	\draw[densely dotted] (40) -- (30);
	\draw[densely dotted] (40) -- (33);
	\draw[densely dotted] (40) -- (56);
	\draw[densely dotted] (41) -- (10);
	\draw[densely dotted] (41) -- (32);
	\draw[densely dotted] (41) -- (44);
	\draw[densely dotted] (41) -- (45);
	\draw[densely dotted] (42) -- (15);
	\draw[densely dotted] (42) -- (17);
	\draw[densely dotted] (42) -- (34);
	\draw[densely dotted] (42) -- (35);
	\draw[densely dotted] (42) -- (73);
	\draw[densely dotted] (43) -- (4);
	\draw[densely dotted] (43) -- (22);
	\draw[densely dotted] (43) -- (62);
	\draw[densely dotted] (44) -- (14);
	\draw[densely dotted] (44) -- (28);
	\draw[densely dotted] (44) -- (30);
	\draw[densely dotted] (44) -- (39);
	\draw[densely dotted] (44) -- (64);
	\draw[densely dotted] (45) -- (8);
	\draw[densely dotted] (46) -- (18);
	\draw[densely dotted] (46) -- (24);
	\draw[densely dotted] (47) -- (1);
	\draw[densely dotted] (47) -- (27);
	\draw[densely dotted] (47) -- (48);
	\draw[densely dotted] (47) -- (61);
	\draw[densely dotted] (47) -- (64);
	\draw[densely dotted] (47) -- (65);
	\draw[densely dotted] (47) -- (70);
	\draw[densely dotted] (48) -- (14);
	\draw[densely dotted] (48) -- (35);
	\draw[densely dotted] (49) -- (5);
	\draw[densely dotted] (49) -- (9);
	\draw[densely dotted] (49) -- (18);
	\draw[densely dotted] (49) -- (41);
	\draw[densely dotted] (51) -- (21);
	\draw[densely dotted] (51) -- (44);
	\draw[densely dotted] (51) -- (59);
	\draw[densely dotted] (52) -- (51);
	\draw[densely dotted] (52) -- (64);
	\draw[densely dotted] (53) -- (2);
	\draw[densely dotted] (53) -- (6);
	\draw[densely dotted] (53) -- (50);
	\draw[densely dotted] (54) -- (9);
	\draw[densely dotted] (54) -- (17);
	\draw[densely dotted] (55) -- (8);
	\draw[densely dotted] (55) -- (25);
	\draw[densely dotted] (55) -- (45);\node[style={draw,shape=circle,fill=black,scale=.4}] (0) at (-1.4136070157515375,-1.4437729087961484) {};
	\node[style={draw,shape=circle,fill=black,scale=.4}] (1) at (5.33036347019625,1.0810815913819583) {};
	\node[style={draw,shape=circle,fill=black,scale=.4}] (2) at (-1.5569287312879103,2.0950956603069564) {};
	\node[style={draw,shape=circle,fill=black,scale=.4}] (3) at (-2.3586824825503507,-0.6913324624670901) {};
	\node[style={draw,shape=circle,fill=black,scale=.4}] (4) at (-0.4862000208610338,2.260834104656081) {};
	\node[style={draw,shape=circle,fill=black,scale=.4}] (5) at (-2.5413003230445934,-3.531747503306613) {};
	\node[style={draw,shape=circle,fill=black,scale=.4}] (6) at (5.270978215488773,0.49901129017931634) {};
	\node[style={draw,shape=circle,fill=black,scale=.4}] (7) at (6.526885488556107,-7.445053567293323) {};
	\node[style={draw,shape=circle,fill=black,scale=.4}] (8) at (2.624090641570932,2.8493652674970966) {};
	\node[style={draw,shape=circle,fill=black,scale=.4}] (9) at (-2.3278079322504768,-0.7992619283003786) {};
	\draw[densely dotted] (5) -- (0);
	\draw[densely dotted] (7) -- (3);
	\draw[densely dotted] (9) -- (4);
	\draw[densely dotted] (9) -- (6);
	\draw[densely dotted] (56) -- (9);
	\draw[densely dotted] (56) -- (31);
	\draw[densely dotted] (56) -- (72);
	\draw[densely dotted] (57) -- (13);
	\draw[densely dotted] (57) -- (23);
	\draw[densely dotted] (57) -- (26);
	\draw[densely dotted] (57) -- (29);
	\draw[densely dotted] (57) -- (36);
	\draw[densely dotted] (57) -- (67);
	\draw[densely dotted] (58) -- (7);
	\draw[densely dotted] (58) -- (47);
	\draw[densely dotted] (58) -- (49);
	\draw[densely dotted] (58) -- (60);
	\draw[densely dotted] (59) -- (2);
	\draw[densely dotted] (59) -- (71);
	\draw[densely dotted] (60) -- (3);
	\draw[densely dotted] (60) -- (9);
	\draw[densely dotted] (60) -- (41);
	\draw[densely dotted] (60) -- (62);
	\draw[densely dotted] (60) -- (64);
	\draw[densely dotted] (61) -- (20);
	\draw[densely dotted] (61) -- (31);
	\draw[densely dotted] (61) -- (53);
	\draw[densely dotted] (61) -- (66);
	\draw[densely dotted] (62) -- (1);
	\draw[densely dotted] (62) -- (5);
	\draw[densely dotted] (62) -- (22);
	\draw[densely dotted] (62) -- (23);
	\draw[densely dotted] (62) -- (36);
	\draw[densely dotted] (62) -- (44);
	\draw[densely dotted] (63) -- (16);
	\draw[densely dotted] (63) -- (17);
	\draw[densely dotted] (63) -- (29);
	\draw[densely dotted] (64) -- (4);
	\draw[densely dotted] (64) -- (57);
	\draw[densely dotted] (64) -- (67);
	\draw[densely dotted] (64) -- (73);
	\draw[densely dotted] (65) -- (4);
	\draw[densely dotted] (65) -- (5);
	\draw[densely dotted] (65) -- (7);
	\draw[densely dotted] (65) -- (48);
	\draw[densely dotted] (65) -- (73);
	\draw[densely dotted] (66) -- (39);
	\draw[densely dotted] (66) -- (55);
	\draw[densely dotted] (67) -- (9);
	\draw[densely dotted] (67) -- (53);
	\draw[densely dotted] (67) -- (54);
	\draw[densely dotted] (68) -- (25);
	\draw[densely dotted] (68) -- (48);
	\draw[densely dotted] (70) -- (4);
	\draw[densely dotted] (70) -- (16);
	\draw[densely dotted] (70) -- (24);
	\draw[densely dotted] (70) -- (39);
	\draw[densely dotted] (70) -- (47);
	\draw[densely dotted] (71) -- (3);
	\draw[densely dotted] (71) -- (7);
	\draw[densely dotted] (71) -- (39);
	\draw[densely dotted] (72) -- (5);
	\draw[densely dotted] (72) -- (23);
	\draw[densely dotted] (72) -- (30);
	\draw[densely dotted] (72) -- (38);
	\draw[densely dotted] (73) -- (26);
	\draw[densely dotted] (73) -- (40);
	\draw[densely dotted] (73) -- (47);
	\draw[densely dotted] (73) -- (61);
	\draw[densely dotted] (74) -- (50);
	
	\node[shape=circle] (0) at (-8.9136070157515375,2.7437729087961484) {$\xi(a|\Gamma)$};
	\node[shape=circle] (0) at (7.4136070157515375,-7.5437729087961484) {$\xi(b|\Gamma)$};
	\node[shape=circle] (0) at (2.5136070157515375,7.1437729087961484) {$\xi(a\vee b|\Gamma)$};
	\node[shape=circle] (0) at (-4.9136070157515375,-5.0437729087961484) {$\xi(a|\Gamma\wedge b)$};
	\end{tikzpicture}
	\caption{An interconnected web of beliefs $\Xi(\tilde{\mathcal{A}})$.}
\end{figure}
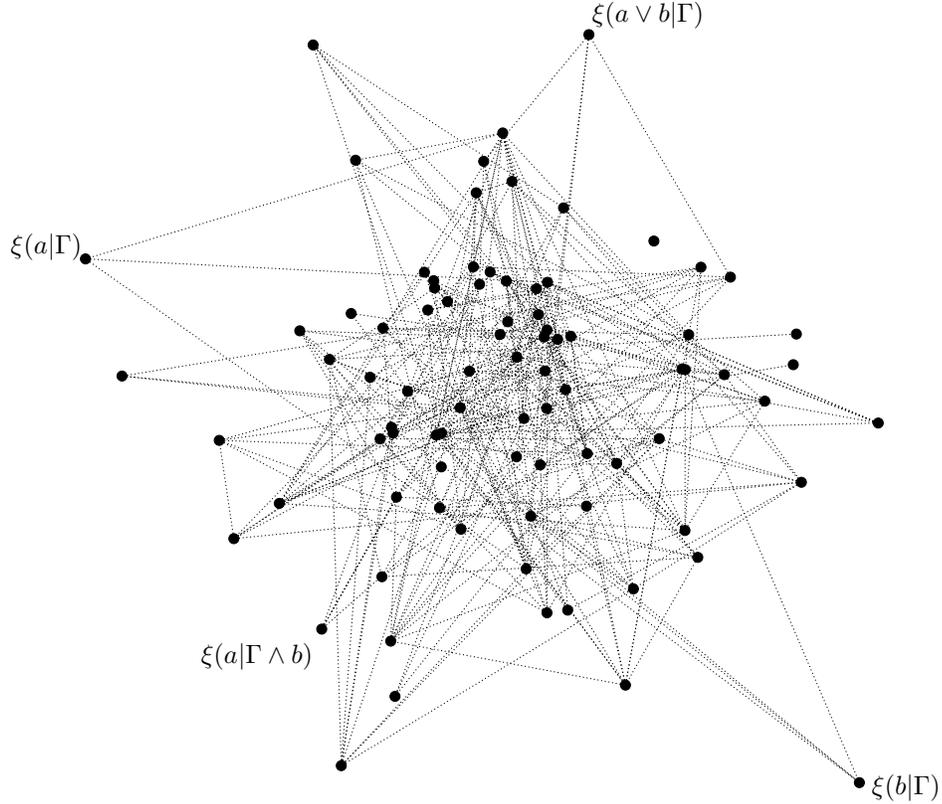
In the same way that the deductive system $\mathcal{D}_c$ is constructed to be internally consistent, we require that the web of beliefs $\Xi(\tilde{\mathcal{A}})$ also be internally consistent\footnote{As mentioned in \cite{Horvitz}, the inter-dependency of the web of beliefs is sometimes referred to as \textit{context dependency}.}.  This constraint can be expressed as another principle,    
\begin{principle}[Consistency]\label{consistent}
	The inference framework should not be self-refuting.
\end{principle}

A few comments about the web of beliefs $\Xi(\tilde{\mathcal{A}})$ are in order.  First, the beliefs $\Xi(\tilde{\mathcal{A}})$ that will be manipulated here are not the personalistic beliefs of any individual, but instead represent the beliefs of an \textit{ideally rational agent}.  An ideally rational agent (who we shall call \textit{Ira} \cite{CatichaBook}) is an agent who strictly adheres to the rationale $\mathcal{R}$, and is not subject to the practical limitations of human beings.  In this way, the rationale $\mathcal{R}$ and the web of beliefs $\Xi(\tilde{\mathcal{A}})$ become explicitly dependent.  As we will see, the beliefs of compound statements in $\Xi(\tilde{\mathcal{A}})$ are necessarily constrained by the rationale in order to impose consistency.  By specifying $\mathcal{R}$ fully, we in turn fully specify the allowed behavior of the web $\Xi(\tilde{\mathcal{A}})$.  In this way, the web represents degrees of \textit{rational} belief (DoRB) \cite{VansletteThesis}.

While the construction of the formal system $\mathcal{P}$ is certainly a philosophically satisfying endeavor, it would be a waste if $\mathcal{P}$ was unable to be used in practice.  Thus, we state the rather obvious criteria,
\begin{principle}[Practicality]\label{practical}
	The inference framework must be useful in practice.  It must allow for quantitative analysis.
\end{principle}
This principle suggests that an inductive inference should be able to be represented and manipulated by standard mathematical tools, such as calculus, geometry and topology, algebra, and etc.  We will eventually make use of all of these tools in subsequent chapters.  Practicality suggests another constraint which is used throughout the entirety of science.  If there are two or more ways of constructing a viable theory, we prefer the simplest.  This is just Occam's criteria, which should also be applied to our inductive framework.  After all, the inductive framework must be useful for science. 
\begin{principle}[Parsimony]\label{parsimony}
	The inference framework should represent the simplest possible system for conducting rational inference.
\end{principle}
In other words, superfluous structure should be avoided in order to reduce confusion and irrationality.  Both the principle of practicality and the principle of parsimony are not necessary in order to specify a consistent abstract inductive framework $\mathcal{P}$, however our ultimately goal is to conduct science, for which the tools of mathematics have been known to operate with great success.  So, why not use them?

Whatever the particular subject matter of interest $\tilde{\mathcal{A}}$ may be, our inductive framework must be able to apply to its entirety, otherwise it will not be useful or of universal applicability.  In this sense we have the following corollary with \hyperref[universal]{principle 1.1} and \hyperref[practical]{1.3},
\begin{corollary}[Completeness]\label{complete}
	Degrees of belief should be assignable to any well defined statement.	
\end{corollary}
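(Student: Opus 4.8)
The plan is to derive this corollary directly from the two principles it invokes, \hyperref[universal]{Universal Applicability} and \hyperref[practical]{Practicality}, by a short argument from contradiction. The statement to be established is that every well-defined statement --- by which I mean every element $\Delta|\Gamma \in \tilde{\mathcal{A}}$ that genuinely carries a meaning within the extended proposition space --- must be assigned a degree of belief by the web $\Xi(\tilde{\mathcal{A}})$. So the target is to rule out any ``gap'' in the web where a meaningful statement receives no quantitative judgement.

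First I would suppose, toward a contradiction, that there exists some well-defined statement $\Delta|\Gamma \in \tilde{\mathcal{A}}$ to which the framework $\mathcal{P}$ assigns no degree of belief. I would then observe that this immediately conflicts with the principle of \hyperref[universal]{Universal Applicability}: the framework is required to apply to the entirety of $\tilde{\mathcal{A}}$, yet the unassigned statement would lie outside its reach, so $\mathcal{P}$ would fail to be universally applicable on its own declared domain. This handles the first of the two cited principles.

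Next I would show that the same supposition also conflicts with \hyperref[practical]{Practicality}. That principle demands that the framework permit quantitative analysis through standard mathematical tools such as calculus and algebra. A statement carrying no assigned degree of belief cannot be manipulated by any of those tools, so its mere existence would undermine the requirement that $\mathcal{P}$ be useful in practice. Having derived a contradiction with each principle separately, I would conclude that no such unassigned statement can exist, which is exactly the content of the corollary.

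The main obstacle here is not mathematical depth --- this is a genuine corollary whose substance is essentially logical --- but rather precision of articulation. The delicate point is making explicit what qualifies as a \emph{well-defined statement}: one must restrict attention to elements of $\tilde{\mathcal{A}}$ that actually possess meaning, as opposed to arbitrary strings of $\Sigma^*$, and must keep the two routes to contradiction clearly separated, since universal applicability and practicality each rule out a missing assignment for a distinct reason. Getting that bookkeeping right, rather than any computation, is where the care is needed.
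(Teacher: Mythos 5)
Your argument is correct and matches the paper's own justification, which simply observes that the framework "must be able to apply to its entirety, otherwise it will not be useful or of universal applicability" --- i.e., the corollary follows from exactly the two principles you invoke, with no further machinery. Your contradiction framing is merely a more explicit rendering of the same one-line appeal to Universal Applicability and Practicality.
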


We will construct the inductive system $\mathcal{P} = \left\{\tilde{\mathcal{A}},\Xi,\mathcal{R},\mathcal{Z}\right\}$ according to the principles just stated, together with any inductive inferences rules $\mathcal{Z}$.  The rationale $\mathcal{R}$ determines how the web of beliefs $\Xi(\tilde{\mathcal{A}})$ \textit{ought} to be connected.  We begin by determining the form of the space of degrees of belief, $\Xi$.

\subsection{Degrees of belief $\Xi$}
\epigraph{By degree of probability, we really mean, or ought to mean, degree of belief.}{\textit{De Morgan 1847}}
In order to be useful, the inference framework must allow us to compare our beliefs among different statements.  For example, if we are given two statements, $a|\Gamma$ and $b|\Gamma$, do we believe one \textit{more} or \textit{less} than the other?  Consider statements $a|\Gamma, b|\Gamma$ and $c|\Gamma$, which all belong to the same universe of discourse.  If we believe $a|\Gamma$ more than $b|\Gamma$, but also believe $b|\Gamma$ more than $c|\Gamma$, then to be rational we must believe $a|\Gamma$ more than $c|\Gamma$.  An obvious way to capture this information is to assign degrees of belief real numbers\footnote{Jaynes makes this assumption in \cite{Jaynes} where he states ``Degrees of plausibility are represented by real numbers.''  In Caticha \cite{CatichaBook} the statement is slightly different, Degrees of rational belief (or, as we shall later call them, probabilities) are represented by real numbers.''  The property that belief functions should be a single real number is often called \textit{scalar continuity} \cite{Horvitz}.},
\begin{axiom}[Degrees of belief]\label{realnumbers}
	The degree of belief, $\xi$\footnote{Throughout the literature it is common for people to use the notation $Bel(a)$ to denote a belief function.}, assigned to any statement $\Delta|\Gamma \in \tilde{\mathcal{A}}$ will be represented by a real number,
	\begin{align}
	\xi : \tilde{\mathcal{A}} &\rightarrow \mathbb{R}\nonumber\\
	\Delta|\Gamma &\mapsto \xi(\Delta|\Gamma).
	\end{align}
\end{axiom}
As pointed out by Norton \cite{Norton}, it is not necessarily obvious that a single real number should be enough to capture our degree of belief about a statement $\Delta|\Gamma \in \tilde{\mathcal{A}}$.  On the contrary, according to our principle of parsimony, it is certainly reasonable to require that whatever representation for degrees of belief one chooses, that it be the simplest possible in order to avoid confusion.  

In order for the inference framework to be useful, we must be able to compare our beliefs in a consistent and rational way.  If one were to take Norton's advice so that, for example, our assignment of degrees of belief were to map to a pair of numbers, $\xi(a|\Gamma) \rightarrow (\xi_1(a|\Gamma),\xi_2(a|\Gamma))$, it would no longer be obvious how one could compare two degrees of belief.  How does one determine if they believe $a|\Gamma$ more or less than $b|\Gamma$?  Does one pick the pair of numbers which has the largest element?  What about the Euclidean distance from the origin?  How about the larger of the two absolute differences $|\xi_2(a|\Gamma)-\xi_1(a|\Gamma)|$ and $|\xi_2(b|\Gamma)-\xi_1(b|\Gamma)|$?  There is no clear preferred choice that accomplishes the goal, and hence the assignment of multiple ``degrees'' leads to compounding assumptions, which is undesirable. 

Norton points out another potential problem with the assignment of real numbers to degrees of belief \cite{Norton}.  This has to do with the consequence of \textit{universal comparability} \cite{Horvitz}, i.e. that two real numbers are always comparable even if the statements they are quantifying are unrelated.  For example, if our degree of belief in the statement $a = $ ``tomorrow the Dow will gain 3 points'' is larger than the statement, $b = $ ``a supersymmetric particle is the best candidate for dark matter'', it is certainly true that $\xi(b) < \xi(a)$, however this comparison is not necessarily meaningful since the two statements have nothing a priori to do with one another.  Thus, the assignment of a single real number to a degree of belief introduces the potential for committing inductive fallacies.  But these fallacies are no more a problem for inductive inference as logical fallacies are for deductive inference.  While the compound statement $a\wedge b = $ ``tomorrow the Dow will gain 3 points \textit{and} a supersymmetric particle is the best candidate for dark matter,'' is a perfectly well defined proposition, the deductive system $\mathcal{D}_c$ does not suggest that it can be used as a step in a valid argument.  Despite this, we would certainly not be inclined to throw away all of deductive inference on the basis that the algebra may lead us to an invalid argument.  If this were an inevitable conclusion, mathematical logic would not exist.  To summarize, the comparison $\xi(a) < \xi(b)$ is between the degree of belief in the truth of $a$ and $b$, however what $a$ and $b$ might mean is not relevant\footnote{e.g. $2a + 3a = 5a$ is true regardless of whether $a$ is ``apples'' or ``oranges.''}.

Like with the truth value of propositions, we do not claim to know how one is \textit{supposed} to assign a degree of belief to a generic atomic statement $a \in \tilde{\mathcal{A}}$.  The inference framework is mostly agnostic with respect to this question, except for some special circumstances which we will discuss in later chapters.  It is up to the rational agent to determine these values.  It is also assumed that the inference framework is agnostic to the assignment of a degree of belief to the conditional statement $a|\Gamma \in \tilde{\mathcal{A}}$, when $a$ is atomic.  In this way, the degrees of belief $\xi(a|\Gamma)$ and $\xi(a)$ serve as the atomic elements of $\Xi(\tilde{\mathcal{A}})$.  It is assumed that all other statements, which can be formed from $n$-ary connectives in $\Omega$, shall be functions of these basic degrees of belief\footnote{Usually the conditional probability $p(a|b)$ is not recognized as a \textit{true} probability, but rather as the ratio of two probabilities via the product rule, $p(a|b) = p(a\wedge b)/p(a)$.  For us however, the probability $p(a|b)$ is just as much a probability as any other, there are no issues with its interpretation since we are not starting a priori from the Kolmogorov picture.}.

Without loss of generality, we will continue with our discussion of degrees of belief of statements in $\tilde{\mathcal{A}}$ by only considering the special case in which the first argument is an atomic proposition, i.e. $\xi(a|\Gamma)$.  The goal is to determine the rationale $\mathcal{R}$ that will allow us to quantify the functional form of $\xi(\Delta|\Gamma)$, when $\Delta$ is an arbitrary statement in $\tilde{\mathcal{A}}$.  To do this, we will first need to determine the functional forms that represent conjunction, disjunction and negation.  Due to $\vee,\wedge$ and $\neg$ forming a set of functionally complete operators, all other subsets of $\tilde{\mathcal{A}}$ can be generated by these and hence we would have found a general form for $\xi(\Delta|\Gamma)$\footnote{By using the definition of the extended proposition space $\tilde{\mathcal{A}}$ as our universe of discourse, we have already incorporated another axiom that is usually imposed when designing belief functions.  This can be called an axiom of \textit{context dependency} \cite{Horvitz,KnuthSkilling} where the context is the quantity that appears on the right hand side of the solidus in $a|\Gamma$.}.

There are several ways to proceed depending on the assumed topological and structural properties for the space of degrees of belief, $\Xi$.  In the very least, we assume that the maps $\xi \in \Xi$ are continuous, i.e. $\Xi \subseteq C^0(\tilde{\mathcal{A}})$\footnote{In another Section we will formulate our inductive system assuming that the maps $\xi \in \Xi$ are smooth, i.e. $\Xi \subset C^{\infty}(\tilde{\mathcal{A}})$.}.  Without any restrictions on the behavior of $\Xi(\tilde{\mathcal{A}})$, the elements $\xi$ form an algebraic field.

\subsection{Algebraic properties of $(\Xi(\mathcal{A}),+,\cdot)$}\label{algebraicxi}
Without restriction on the $\Xi(\tilde{\mathcal{A}})$, such that the image of $\Xi$ is the reals, $\mathrm{im}(\Xi) = \mathbb{R}$, the space of $\xi$'s has the algebraic properties of a field, $(\Xi(\mathcal{A}),+,\cdot)$, i.e. for all $a|\Gamma,b|\Gamma$ and $c|\Gamma \in \tilde{\mathcal{A}}$,  
\begin{enumerate}
	\item \textbf{Commutativity of $+$} - The sum in $\Xi(\mathcal{A})$ is commutative,
	\begin{equation}
	\xi(a|\Gamma) + \xi(b|\Gamma) = \xi(b|\Gamma) + \xi(a|\Gamma).
	\end{equation}
	\item \textbf{Associativity of $+$} - The sum in $\Xi(\mathcal{A})$ is associative,
	\begin{align}
	\xi(a|\Gamma) + (\xi(b|\Gamma) + \xi(c|\Gamma)) &= (\xi(a|\Gamma) + \xi(b|\Gamma)) + \xi(c|\Gamma)\nonumber\\
	&= \xi(a|\Gamma) + \xi(b|\Gamma) + \xi(c|\Gamma).
	\end{align}
	\item \textbf{Neutral element of $+$} - The sum in $\Xi(\mathcal{A})$ contains an identity,
	\begin{equation}
	\exists! 0 \in \Xi(\mathcal{A}) : \xi(a|\Gamma) + 0 = \xi(a|\Gamma).
	\end{equation}
	\item \textbf{Additive inverse} - There exists additive inverses for each element in $\Xi(\mathcal{A})$,
	\begin{equation}
	\forall \xi \in \Xi(\mathcal{A}) : \exists! \xi^{-1}_{+} \in \Xi(\mathcal{A}) : \xi + \xi^{-1}_{+} = 0.
	\end{equation}
	\item \textbf{Commutativity of $\cdot$} - The product in $\Xi(\mathcal{A})$ is commutative,
	\begin{equation}
	\xi(a|\Gamma)\cdot\xi(b|\Gamma) = \xi(b|\Gamma)\cdot\xi(a|\Gamma).
	\end{equation}
	\item \textbf{Associativity of $\cdot$} - The product in $\Xi(\mathcal{A})$ is associative,
	\begin{align}
	\xi(a|\Gamma)\cdot[\xi(b|\Gamma)\cdot\xi(c|\Gamma)] &= [\xi(a|\Gamma)\cdot\xi(b|\Gamma)]\cdot\xi(c|\Gamma)\nonumber\\
	&= \xi(a|\Gamma)\cdot\xi(b|\Gamma)\cdot\xi(c|\Gamma).
	\end{align}
	\item \textbf{Neutral element of $\cdot$} - There exists a neutral element for the product in $\Xi(\mathcal{A})$,
	\begin{equation}
	\exists! 1 \in \Xi(\mathcal{A}) : \forall\xi \in \Xi(\mathcal{A}) : \xi \cdot 1 = 1 \cdot \xi = \xi.
	\end{equation}
	\item \textbf{Multiplicative inverse} - Provided that the function $\xi$ does not have any values which are zero, there exists a unique multiplicative inverse in $\Xi(\mathcal{A})$,
	\begin{equation}
	\forall \xi \in \Xi(\mathcal{A}) : \forall a \in \mathcal{A} : \xi(a) \neq 0 : \exists! \xi^{-1}_{\cdot} \in \Xi(\mathcal{A}) : \xi\cdot\xi^{-1}_{\cdot} = \xi^{-1}_{\cdot}\cdot\xi = 1.
	\end{equation}
\end{enumerate}
For brevity we will simply write the product in $\Xi(\mathcal{A})$ as a juxtaposition, $\xi_1 \cdot \xi_2 \stackrel{\mathrm{def}}{=} \xi_1\xi_2$.

In order to determine a correspondence between the algebra of propositions formed by the $n$-ary connectives $\Omega$ and their associated degrees of belief, we will construct a \textit{representation} for a functionally complete set.  The set of interest will be $\{\wedge,\vee\,\neg\}$, which is not minimal, however the relations $\wedge$ and $\vee$ have several useful properties, such as associativity and commutativity, that will make finding the representation easier. 

Despite the current agnosticism with respect to generic propositions, there are two propositions which should be given a consistent value for their degree of belief by any IRA.  These are the \textit{absolute true} and \textit{absolute false} propositions\footnote{The axiom (\ref{absolutes}) is actually unnecessary once one adopts a transitivity criterion for degrees of belief.  We will comment on the role and necessity of different axioms at the end of this Section.}.

\begin{axiom}[Absolutes]\label{absolutes}
	The degree of belief assigned to the propositions $\tru$ and $\fals$ should be independent of any Ira, i.e. they should be constants,
	\begin{equation}
	\forall \Gamma \in \tilde{\mathcal{A}} : \xi(\tru|\Gamma) = \xi(\tru) \stackrel{\mathrm{def}}{=} \xi_{\tru} \qquad \mathrm{and} \qquad \xi(\fals|\Gamma) = \xi(\fals) \stackrel{\mathrm{def}}{=} \xi_{\fals}.
	\end{equation}
\end{axiom}

An immediate consequence of this axiom is that the value $\xi_{\tru}$ gives an upper bound on the allowed degrees of belief for any statement, since nothing could be believed \textit{more} than absolute truth, i.e.,
\begin{equation}
\forall a|\Gamma \in \tilde{\mathcal{A}} : \xi(a|\Gamma) \leq \xi_{\tru}.
\end{equation}
Likewise, since nothing can be believed \textit{less} than absolute false, there is a unique lower bound,
\begin{equation}
\forall a|\Gamma \in \tilde{\mathcal{A}} : \xi(a|\Gamma) \geq \xi_{\fals}.
\end{equation}
We will find that the appropriate bounds for the map $\xi: \tilde{\mathcal{A}} \rightarrow \langle\xi_{\fals},\xi_{\tru}\rangle$ destroys its \hyperref[algebraicxi]{algebraic field} properties\footnote{The angular brackets $\langle \cdot,\cdot\rangle$ are placeholders for whether or not the left and right sides are open or closed.}.  In particular, it reduces the algebra over $\Xi(\tilde{\mathcal{A}})$ to a unital algebra\footnote{A unital algebra is an algebra that contains a multiplicative identity element, e.g. $1x = x1 = x$.}, which, perhaps to no surprise, is exactly what the \hyperref[algebrapropositions]{algebra of propositions} is.  Thus, by constraining the algebra of probable inference to be compatible with the algebra of propositions, we find that they actually have the same algebraic structure.

Since statements of the form $a|\fals$ are not defined in $\tilde{\mathcal{A}}$, the conditional degrees of belief $\xi(a|\fals)$ are necessarily also undefined.  The statement $a|\tru$ however is defined in $\tilde{\mathcal{A}}$, for which we assign the rule,
\begin{equation}
\forall a \in \mathcal{A} : \xi(a|\tru) = \xi(a).\label{conditionaltrue}
\end{equation}
There are two other peculiarities associated with absolute true and false.  These have to do with applying the opposite absolute in the unital formulas (\ref{unitalor}) and (\ref{unitaland}).  Starting with (\ref{unitalor}) we have the identity,
\begin{equation}
\forall a \in \mathcal{A} : (a \vee \tru) \Leftrightarrow \neg(\neg a \wedge \fals) \Leftrightarrow \neg(\fals) \Leftrightarrow \tru.\label{ortrue}
\end{equation}
Likewise for (\ref{unitaland}),
\begin{equation}
\forall a \in \mathcal{A} : (a \wedge \fals) \Leftrightarrow \neg(\neg a \vee \neg \fals) \Leftrightarrow \neg(\tru) \Leftrightarrow \fals.\label{andfalse}
\end{equation}
Before we determine the upper and lower bounds of $\xi$, we shall state some more axioms which associate binary relations between propositions to their corresponding degrees of belief.  
\subsection{The web of beliefs $\Xi(\tilde{\mathcal{A}})$}\label{alg}
In the following sections we seek to construct a \textit{representation} of the degrees of belief according to conjunction, disjunction and negation.  No one representation is necessarily more correct than any other, since the principle of consistency tells us that if there are multiple ways of conducting an inference, they better agree.  However, by adopting a parsimonious attitude, we can attempt to find a set of \textit{simple} representations.

\subsection{Representations in $\Xi(\tilde{\mathcal{A}})$}\label{section321}
A \textit{representation} for an element $\xi \in \Xi$ is defined as a function $f:\mathscr{P}(\Xi(\tilde{\mathcal{A}})) \rightarrow \Xi(\tilde{\mathcal{A}})$ whose arguments are determined as part of the definition of the representation.  For example, consider a generic composite statement $a \star b$, where $\star:\tilde{\mathcal{A}}\times\tilde{\mathcal{A}}\rightarrow \tilde{\mathcal{A}}$ is some binary operator in $\tilde{\mathcal{A}}$.  A possible representation for $\xi(a \star b)$ could be,
\begin{equation}
\xi^{(1)}(a\star b) = f^{(1)}_{\star}\left(\xi(a),\xi(b)\right),
\end{equation}   
which depends on the atomic degrees of belief $\xi(a)$ and $\xi(b)$.  Another choice of representation could be,
\begin{equation}
\xi^{(2)}(a \star b) = f^{(2)}_{\star}\left(\xi(a),\xi(b|a),\xi(a \vee b)\right).
\end{equation}
In principle, the number of possible representations for any generic element $\xi \in \Xi$ is equal to the cardinality $2^{|\Xi(\tilde{\mathcal{A}})|}$.
\begin{define}[Representations]
	Let $\Xi(\tilde{\mathcal{A}})$ be a web of beliefs.  A representation of a generic element $\xi \in \Xi(\tilde{\mathcal{A}})$ is defined as a function,
	\begin{equation}
	f:\mathscr{P}(\Xi(\tilde{\mathcal{A}}))\rightarrow\Xi(\tilde{\mathcal{A}}),
	\end{equation} 
	whose arguments are a potential subset of $\Xi(\tilde{\mathcal{A}})$.
\end{define}
For any generic element $\xi \in \Xi(\tilde{\mathcal{A}})$, we call $F[\Xi(\tilde{\mathcal{A}})]$ a \textit{maximal} representation, since it includes the entire web.  Throughout the rest of this section we seek to find suitable representations for the conjunction, disjunction and negation.  From the criterion of consistency, we can state the obvious axiom,
\begin{axiom}[Consistency of representations]\label{consistencyconnections}
	Let $\Xi(\tilde{\mathcal{A}})$ be a web of beliefs.  Let $\Omega \in \mathcal{D}_c$ be the set of connectives.  Due to the principle of consistency, we require that any set of representations of the connectives, $\{f_{\Omega}\}$, must be internally consistent with respect to $\Omega$.
\end{axiom}
This axiom perhaps goes without saying.  If we find a representation for each of $f_{\neg}, f_{\wedge}$ and $f_{\vee}$, then because $\neg, \wedge$ and $\vee$ are necessarily connected with one another, the representations must be consistent with respect to those connections.  At this point we can collect axioms \hyperref[realnumbers]{1.1}, \hyperref[absolutes]{1.2} and \hyperref[consistencyconnections]{1.3} into our rationale $\mathcal{R}$.  Our next goal is to specify the representations $f_{\wedge}, f_{\vee}$ and $f_{\neg}$.

\paragraph{The negation connective ---}
The simplest connective one can consider for a statement $a|\Gamma$ is the negation, $\neg a|\Gamma$.  Following Cox \cite{Cox}, we make the following assumption,
\begin{quotation}
	``The probability of an inference on given evidence determines the probability of its contradictory on the same evidence.'' (Cox 1961)
\end{quotation}
This is certainly just an application of common sense, together with the acceptance of the law of excluded middle.  In both \cite{Cox} and \cite{Cox2}, Cox made explicit the functional dependence of a contradictory as being an involution on $\Xi(\tilde{\mathcal{A}})$ which is also at least twice differentiable.  We will not assume differentiability in this part of the proof, and will only state the general behavior of the negation to be used later.  The common sense employed by Cox can be imposed as a necessary condition for $\mathcal{R}$ to adhere to the principle of parsimony.
\begin{axiom}[Negation]\label{negation}
	The degree of belief assigned to any proposition determines the degree of belief of its negation, which is represented by a strictly decreasing function, $f_{\neg}:\Xi\rightarrow\Xi$
	\begin{equation}
	\forall a|\Gamma \in \tilde{\mathcal{A}} : \xi(\neg a|\Gamma) = f_{\neg}(\xi(a|\Gamma)).\label{negationfunction}
	\end{equation}
\end{axiom}
As pointed out in \cite{Cox,CatichaBook,Jaynes}, the axiom\footnote{In \cite{CatichaBook}, Caticha shows that axiom (\ref{negation}) is not needed and can be derived from the sum and product rules.} (\ref{negation}) represents the intuition that the \textit{more} one believes in $a|\Gamma$, the \textit{less} one should believe in $\neg a|\Gamma$, and vice versa.  Given that the negation, $\neg a|\Gamma$, only depends on the proposition $a$ itself, the axiom (\ref{negation}) certainly obeys the principle of parsimony, i.e. it would be irrational to assume that the function $f_{\neg}$ necessarily depended on propositions other than $a|\Gamma$.   

An obvious requirement from the double negation property (i.e. $\neg(\neg a) = a$), is that the degree of belief $\xi(a|\Gamma)$ should be invariant if the negation function $f_{\neg}$ is applied twice,
\begin{equation}
\forall a|\Gamma\in\tilde{\mathcal{A}} : \xi(a|\Gamma) = f_{\neg}(\xi(\neg a|\Gamma)) = f_{\neg}(f_{\neg}(a|\Gamma)) .
\end{equation}
In general, we have the following relationship,
\begin{figure}[H]
	\centering
	\begin{tikzcd}
	\tilde{\mathcal{A}}\arrow[r,"\xi"]\arrow[d,shift left=.5ex]&\Xi\arrow[d,"f_{\neg}",shift left=.5ex]\\
	\tilde{\mathcal{A}}\arrow[r,"\xi"']\arrow[u,"\neg",shift left=.5ex]&\Xi\arrow[u, shift left=.5ex]
	\end{tikzcd}
	\caption{Commuting diagram for the negation connective and its inductive counterpart.  In general we have that, $\xi\circ\neg = f_{\neg}\circ\xi$.}
\end{figure}
We could also argue axiom (\ref{negation}) on the basis of consistency.  Since $\neg$ is an involution in $(\tilde{\mathcal{A}},\wedge,\vee,\neg)$, the principle of parsimony requires $f_{\neg}$ be an involution on $\Xi(\tilde{\mathcal{A}})$.  

\subsection{The disjunction connective}
In order to complete the algebra $\{\wedge,\vee,\neg\}$, we will need to determine degrees of belief for the disjunction and conjunction relations.  It is perhaps less obvious how the representation of conjunction and disjunction should depend on the atomic degrees of belief associated to their constituents.  In the very least, we adopt the view of Caticha in \cite{CatichaBook},
\begin{quotation}
	``In order to be rational our beliefs in $a\vee b$ and $a\wedge b$ must be somehow related to our separate beliefs in $a$ and $b$ (Caticha, 2021 \cite{CatichaBook}).''
\end{quotation}
Since our beliefs form an interconnected web, they cannot all be independent unless all of the individual statements $a|\Gamma \in \tilde{\mathcal{A}}$ are also independent.  In general our beliefs in $a|\Gamma$ may depend on our beliefs in some other statements $b|\Gamma$.  Thus, whatever our beliefs in the individual statements $a|\Gamma$ and $b|\Gamma$ are, they must constrain our beliefs in $a\wedge b|\Gamma$ and $a\vee b|\Gamma$ in a way which is consistent with the interconnected web.  But our beliefs in $a\wedge b|\Gamma$ shouldn't in general depend on the entire web $\Xi(\tilde{\mathcal{A}})$ but only some subset\footnote{Given our practicality requirement, one could argue that, much in the same way that the space of propositions can be constructed from a set of elementary propositions using the binary connectives, the entire web of beliefs should be able to be constructed on the basis of the elementary degrees of belief $\xi(a|\Gamma)$.}.  Exactly which subset one chooses determines the representation.   

In the least parsimonious representation, the degree of belief $\xi(a\vee b|\Gamma)$ is necessarily a function of every other degree of belief in the web,
\begin{align}
\xi(a\vee b|\Gamma) \xrightarrow{\mathrm{consistency}}& F_{\vee}\left[\frac{}{}\xi(a|\Gamma),\xi(b|\Gamma),\xi(c|\Gamma),\xi(a\wedge c|\Gamma),\dots\right]\nonumber\\
\stackrel{\mathrm{def}}{=}& F_{\vee}[\Xi(\tilde{\mathcal{A}})],
\end{align}    
which is a maximal representation.  While it is certainly possible to define such a function, the principle of parsimony, as well as practicality, tells us that this representation is undesirable.  We should choose a representation which is as simple as possible, but not too simple as to not be of universal applicability.  Certainly we expect that the functions which quantify $\xi(a\wedge b|\Gamma)$ and $\xi(a\vee b|\Gamma)$ should depend on $\xi(a|\Gamma)$ and $\xi(b|\Gamma)$, but is that all?  If the two statements $a,b\in\tilde{\mathcal{A}}$ are not independent, i.e. if $a|b \neq a$ and $b|a \neq b$, then the information about their dependence is seemingly lost if we choose either $\xi(a\wedge b) = f(\xi(a),\xi(b))$ or $\xi(a \vee b) = g(\xi(a),\xi(b))$.  Thus, the simple choice,
\begin{equation}
F_{\vee}[\Xi(\tilde{\mathcal{A}})] \xrightarrow{\mathrm{parsimony}} f_{\vee}\left(\frac{}{}\xi(a|\Gamma),\xi(b|\Gamma)\right),\label{nope}
\end{equation} 
is too restrictive.

In order to be rational, we should at least assume that the dependence of $a|\Gamma$ on $b|\Gamma$, and vice versa, will have an impact on our judgement about either $a\vee b|\Gamma$ or $a \wedge b|\Gamma$.  Take for example the extreme case when $b = \neg a$.  Had we used the representation in (\ref{nope}), it would seemingly not account for situations of this type since one would only make a judgement based on the degrees of belief $\xi(a|\Gamma)$ and $\xi(\neg a|\Gamma)$.  What other degrees of belief could we include in (\ref{nope}) that would account for these specific situations?  Since the degrees of belief $\xi(a|\Gamma)$ form the atomic elements of the theory, it is reasonable to suggest that the extra information must be contained in the conditionals $[a|\Gamma]|b$ and $[b|\Gamma]|a$, i.e. the degrees of belief $\xi(a|\Gamma\wedge b)$ and $\xi(b|\Gamma\wedge a)$.  These conditional degrees of belief capture the inter-dependency of $a|\Gamma$ and $b|\Gamma$ in the simplest way possible.  Thus, we have at least the four degrees of belief, $\xi(a|\Gamma), \xi(b|\Gamma), \xi(a|\Gamma\wedge b)$ and $\xi(b|\Gamma\wedge a)$ which all convey meaningful information about $\xi(a\vee b|\Gamma)$ and $\xi(a \wedge b|\Gamma)$.

We will find that these four degrees of belief are enough to specify a consistent representation for $\xi(a\vee b|\Gamma)$.
\begin{equation}
F_{\vee}[\Xi(\tilde{\mathcal{A}})] \xrightarrow[\mathrm{parsimony}]{\mathrm{consistency}} f_{\vee}\left(\frac{}{}\xi(a|\Gamma),\xi(b|\Gamma),\xi(a|\Gamma\wedge b),\xi(b|\Gamma\wedge a)\right).
\end{equation}
More formally, we have the axiom
\begin{axiom}[The disjunction connective]\label{disjunction}
	The degree of belief assigned to the statement $a \vee b|\Gamma \in \tilde{\mathcal{A}}_{\Gamma}$ must be represented by a function of the degrees of belief in the elementary propositions $a|\Gamma$, $b|\Gamma$, $a|\Gamma\wedge b$ and $b|\Gamma \wedge a$,
	\begin{equation}
	\forall a|\Gamma,b|\Gamma \in \tilde{\mathcal{A}}: \xi(a\vee b|\Gamma) = f_{\vee}\left(\frac{}{}\xi(a|\Gamma),\xi(b|\Gamma),\xi(a|\Gamma\wedge b),\xi(b|\Gamma\wedge a)\right).\label{orfunction}
	\end{equation}
\end{axiom}
Any follower of foundational probability theory would claim that we are positioning ourselves to stumble upon the sum and product rules by ``happening'' to pick the right constraints.  On the contrary, the goal here is to find a consistent, simple, yet sufficiently general, representative for $\xi(a\vee b|\Gamma)$ in terms of the elementary degrees of belief $\xi(a|\Gamma), \xi(b|\Gamma), \xi(a|\Gamma\wedge b)$ and $\xi(b|\Gamma\wedge a)$, that preserves as much of the algebraic structure on $\tilde{\mathcal{A}}$ as possible.  The axiom (\ref{orfunction}) is a really a statement of desire for that particular representative, i.e. this is the functional relationship we wish to determine.  In fact, the constraints used throughout our derivation lead to an infinite number of possible solutions, a subset of which will have the form of the sum rule.  As we will see, there are other solutions to (\ref{orfunction}) which do not look like a sum rule at all.  

\subsection{The conjunction connective}
We must also determine a representation for the degree of belief in a conjunction of two elements in $\tilde{\mathcal{A}}$.  Originally, Cox made the following suggestion (inspired by Venn) for the representation of the conjunction,  
\begin{quotation}
	``The probability on the given evidence that both of two inferences are true is determined by their separate probabilities, one on the given evidence, the other on this evidence with the additional assumption that the first inference is true.'' (Cox 1961)
\end{quotation}
While we could heed his advice and declare that the conjunction must necessarily be of the form,
\begin{equation}
\xi(a\wedge b|\Gamma)\xrightarrow{\mathrm{Cox}}f_{\wedge}\left(\xi(a|\Gamma),\xi(b|\Gamma\wedge a)\right),
\end{equation}
we will argue this form from a more modest assumption.  Following the logic that we used to determine the representation of the disjunction, we can apply exactly the same criteria for the conjunction.  Thus we have,
\begin{axiom}[The conjunction connective]\label{conjunction}
	The degree of belief assigned to the statement $a \wedge b|\Gamma \in \tilde{\mathcal{A}}$ must be represented by a function of the degrees of belief in elementary propositions $a|\Gamma$,$b|\Gamma$,$a|\Gamma\wedge b$ and $b|\Gamma \wedge a$,
	\begin{equation}
	\forall a|\Gamma,b|\Gamma \in \tilde{\mathcal{A}}: \xi(a\wedge b|\Gamma) = f_{\wedge}\left(\frac{}{}\xi(a|\Gamma),\xi(b|\Gamma),\xi(a|\Gamma\wedge b),\xi(b|\Gamma\wedge a)\right).\label{andfunction}
	\end{equation}
\end{axiom}
The axioms \hyperref[negation]{1.4}, \hyperref[disjunction]{1.5} and \hyperref[conjunction]{1.6} can be appended to our rationale $\mathcal{R}$.  According to axiom \hyperref[consistencyconnections]{1.3}, an additional requirement will be that the definitions in (\ref{negationfunction}), (\ref{orfunction}) and (\ref{andfunction}) be compatible with respect to the algebra of propositions.  Due to (\ref{orand}), and \hyperref[consistencyconnections]{consistency}, we should also have,
\begin{align}
\xi(a\wedge b|\Gamma) &= \xi\left(\neg(\neg a \vee \neg b)|\Gamma\right) = f_{\neg}\left(\xi(\neg a \vee \neg b|\Gamma)\right)\nonumber\\
&= f_{\neg}\left(f_{\vee}\left(\frac{}{}\xi(\neg a|\Gamma),\xi(\neg b|\Gamma),\xi(\neg a|\Gamma\wedge\neg b),\xi(\neg b|\Gamma\wedge\neg a)\right)\right)\nonumber\\
&= f_{\neg}\left(f_{\vee}\left(\frac{}{}f_{\neg}(\xi(a|\Gamma)),f_{\neg}(\xi(b|\Gamma)),f_{\neg}(\xi(a|\Gamma\wedge b)),f_{\neg}(\xi(b|\Gamma\wedge\neg a))\right)\right).\label{andorrelation1}
\end{align}
Likewise for $\xi(a \vee b|\Gamma)$ we have,
\begin{align}
\xi(a\vee b|\Gamma) &= \xi\left(\neg(\neg a \wedge \neg b)|\Gamma\right) = f_{\neg}\left(\xi(\neg a \wedge \neg b|\Gamma)\right)\nonumber\\
&= f_{\neg}\left(f_{\wedge}\left(\frac{}{}\xi(\neg a|\Gamma),\xi(\neg b|\Gamma),\xi(\neg a|\Gamma\wedge\neg b),\xi(\neg b|\Gamma\wedge\neg a)\right)\right)\nonumber\\
&= f_{\neg}\left(f_{\wedge}\left(\frac{}{}f_{\neg}(\xi(a|\Gamma)),f_{\neg}(\xi(b|\Gamma)),f_{\neg}(\xi(a|\Gamma\wedge\neg b)),f_{\neg}(\xi(b|\Gamma\wedge\neg a))\right)\right).\label{andorrelation2}
\end{align}
We will find that the class of functions that satisfy (\ref{negationfunction}), (\ref{orfunction}) and (\ref{andfunction}) will necessarily satisfy the above relations (\ref{andorrelation1}) and (\ref{andorrelation2}).
\paragraph{Simplification of $f_{\wedge}$ ---}
The functional form of $f_{\wedge}$ can actually be widdled down from four to two arguments, as was shown in \cite{CatichaBook}.  In general, $f_{\wedge}$ could take $\sum_{n=1}^{4}\begin{pmatrix}4 \\ n\end{pmatrix} = 15$ different combinations of inputs, but only one is consistent with what we expect.  First, we can eliminate the functions which only depend on a single argument (e.g. $\xi(a\wedge b|\Gamma) = f_{\wedge}(\xi(a|\Gamma))$) since this violates common sense.  Four more combinations can be eliminated through commutativity, $a|\Gamma\wedge b|\Gamma = b|\Gamma \wedge a|\Gamma$.  Following \cite{CatichaBook}, the seven remaining functions are the following,
\begin{align}
\xi^{(1)}(a\wedge b|\Gamma) &= f_{\wedge}^{(1)}\left(\frac{}{}\xi(a|\Gamma),\xi(b|\Gamma)\right)\label{fwedge1}\\
\xi^{(2)}(a\wedge b|\Gamma) &= f_{\wedge}^{(2)}\left(\frac{}{}\xi(a|\Gamma),\xi(a|\Gamma\wedge b)\right)\label{fwedge2}\\
\xi^{(3)}(a\wedge b|\Gamma) &= f_{\wedge}^{(3)}\left(\frac{}{}\xi(a|\Gamma),\xi(b|\Gamma \wedge a)\right)\label{fwedge3}\\
\xi^{(4)}(a\wedge b|\Gamma) &= f_{\wedge}^{(4)}\left(\frac{}{}\xi(a|\Gamma\wedge b),\xi(b|\Gamma\wedge a)\right)\label{fwedge4}\\
\xi^{(5)}(a\wedge b|\Gamma) &= f_{\wedge}^{(5)}\left(\frac{}{}\xi(a|\Gamma),\xi(b|\Gamma),\xi(a|\Gamma\wedge b)\right)\label{fwedge5}\\
\xi^{(6)}(a\wedge b|\Gamma) &= f_{\wedge}^{(6)}\left(\frac{}{}\xi(a|\Gamma),\xi(a|\Gamma\wedge b),\xi(b|\Gamma\wedge a)\right)\label{fwedge6}\\
\xi^{(7)}(a\wedge b|\Gamma) &= f_{\wedge}^{(7)}\left(\frac{}{}\xi(a|\Gamma),\xi(b|\Gamma),\xi(a|\Gamma\wedge b),\xi(b|\Gamma\wedge a)\right)\label{fwedge7}
\end{align}
The first function (\ref{fwedge1}) can be rejected for the same reasons as (\ref{nope}), i.e. since it is independent of any correlations between $a$ and $b$.  For example, if $b = \neg a$, then $a\wedge \neg a = \fals$, however the function $\xi^{(1)}(a\wedge\neg a|\Gamma) = f_{\wedge}^{(1)}(\xi(a|\Gamma),\xi(\neg a|\Gamma)) = \xi_{\fals}$ will arbitrarily depend on $\xi(a|\Gamma)$ and $\xi(\neg a|\Gamma)$.  The second function (\ref{fwedge2}) can also be thrown out by examining the special case of $a = \tru$, for which $b\wedge\tru = b$.  The function $f_{\wedge}^{(2)}$ however becomes a constant, $\xi^{(2)}(\tru\wedge b|\Gamma) = \xi^{(2)}(b|\Gamma) = f_{\wedge}^{(2)}(\xi_{\tru},\xi_{\tru})$, which is unacceptable.    The fourth function (\ref{fwedge4}) is easily discarded since for the special case $b = a$, we have $\xi^{(4)}(a\wedge a|\Gamma) = \xi^{(4)}(a|\Gamma) = f_{\wedge}^{(4)}(\xi_{\tru},\xi_{\tru})$.

Functions $f_{\wedge}^{(5)}$ (\ref{fwedge5}), $f_{\wedge}^{(6)}$ (\ref{fwedge6}) and $f_{\wedge}^{(7)}$ (\ref{fwedge7}) can all be ruled out using associativity\footnote{For details on the arguments see \cite{CatichaBook} pg. 24.}, $a\wedge(b\wedge c) \Leftrightarrow (a\wedge b)\wedge c$, which shows that they are identical to one of the other four.  The winning function is $f_{\wedge}^{(3)}$ (\ref{fwedge3}), which we will simply rename as $f_{\wedge}$\footnote{This property is sometimes called \textit{hypothetical conditioning} \cite{Horvitz}.},
\begin{equation}
\xi(a\wedge b|\Gamma) \xrightarrow{Caticha} f_{\wedge}\left(\frac{}{}\xi(a|\Gamma),\xi(b|\Gamma\wedge a)\right).
\end{equation}
This constrains the reciprocal relations defined in (\ref{andorrelation1}) and (\ref{andorrelation2}) so that,
\begin{align}
\xi(a\vee b|\Gamma) &= f_{\neg}\left(\frac{}{}f_{\wedge}\left(\frac{}{}f_{\neg}(\xi(a|\Gamma)),f_{\neg}(\xi(b|\Gamma\wedge a))\right)\right).
\end{align}
Thus, the dependence of $\xi(a \vee b|\Gamma)$, and hence of $f_{\vee}$, on $\xi(b|\Gamma)$ and $\xi(a|\Gamma\wedge b)$ in (\ref{orfunction}) must somehow manifest through the composition of the negation map, $f_{\neg}$ with $f_{\wedge}$.

Due to the commutativity of $\wedge$, we have the following commutative diagram with respect to the representation $f_{\wedge}$,
\begin{figure}[H]
	\centering
	\begin{tikzcd}
	\tilde{\mathcal{A}}_{\Gamma}\times\tilde{\mathcal{A}}_{\Gamma\times \pi_1}\arrow[r,"{(\xi,\xi)}"]&\Xi\times\Xi\arrow[dr,"f_{\wedge}"]&\\
	\tilde{\mathcal{A}}_{\Gamma}\times\tilde{\mathcal{A}}_{\Gamma}\arrow[u,"{(\pi_2,|\pi_1)}"]\arrow[d,"{(\pi_1,|\pi_2)}"']\arrow[r,"\wedge"]&\tilde{\mathcal{A}}_{\Gamma}\arrow[r,"\xi"]&\Xi\\
	\tilde{\mathcal{A}}_{\Gamma}\times\tilde{\mathcal{A}}_{\Gamma\times\pi_2}\arrow[r,"{(\xi,\xi)}"']&\Xi\times\Xi\arrow[ur,"f_{\wedge}"']&
	\end{tikzcd}
	\caption{Commutative diagram for the representation $f_{\wedge}$.  The projections $\pi_1$ and $\pi_2$ select the first and second element respectively.  The space $\tilde{\mathcal{A}}_{\Gamma\times \pi_i}$ is the $\Gamma\times\pi_i$ conditioned subspace.}
\end{figure}
It is now up to us to determine the class of functions $[f_{\neg}]$, $[f_{\wedge}]$ and $[f_{\vee}]$ which satisfy axioms \hyperref[consistencyconnections]{1.3}, \hyperref[negation]{1.4}, \hyperref[disjunction]{1.5} and \hyperref[conjunction]{1.6}.  We will first focus on the disjunction representation $f_{\vee}$.  In order to motivate the derivation of $f_{\vee}$, we will first examine some basic properties of the disjunction.

\paragraph{A natural order on $\Xi$ ---}
Since the disjunction preserves order in $\tilde{\mathcal{A}}_{\vee}$, we expect that it should also preserve order under the map $\xi$.  Since $\Xi$ is a number field, the partial order $\preceq$ on $\Xi$ is simply the \textit{natural order} $\leq$. If $\xi(a)$ and $\xi(b)$ represent our degrees of belief in $a$ and $b$, then the statement $\xi(a) \leq \xi(b)$ means that our degree of belief in $b$ is at least the same as in $a$.  If we consider another statement $c$, which is mutually exclusive from $a$ or $b$, in disjunction with $a$ or $b$, then the order of $\xi(a)$ and $\xi(b)$ should not change,
\begin{axiom}[Disjunction preserves order]\label{transitive}
	Let $(\tilde{\mathcal{A}}_{\vee},\vee,\preceq)$ be a partially ordered set with respect to $\vee$.  Let $\Xi(\tilde{\mathcal{A}}_{\vee}) \subset C^0(\tilde{\mathcal{A}}_{\vee})$ represent our degrees of belief in the statements of $\tilde{\mathcal{A}}_{\vee}$.  Since disjunction in $\tilde{\mathcal{A}}_{\vee}$ preserves order in $\tilde{\mathcal{A}}_{\vee}$, then we impose that the map $\xi:\tilde{\mathcal{A}}_{\vee}\rightarrow\langle\xi_{\fals},\xi_{\tru}\rangle$ preserves natural order, 
	\begin{equation}
	\forall a|\Gamma,b|\Gamma,c|\Gamma\in \tilde{\mathcal{A}}_{\vee} : \xi(a|\Gamma) \leq \xi(b|\Gamma) \Rightarrow \xi(a\vee c|\Gamma) \leq \xi(b \vee c|\Gamma).
	\end{equation}
\end{axiom}
This axiom guarantees that the function $f_{\vee}$ be strictly increasing in both arguments whenever they are mutually exclusive.  Below is a representation of the distributive lattice for three mutually exclusive events along with the corresponding natural ordering in $\Xi$,
\begin{figure}[H]
	\centering
	\begin{tikzpicture}[scale=.7]
	\node (true) at (-5,4) {$a\vee b\vee c$};
	\node (ab) at (-7,2) {$a\vee b$};
	\node (ac) at (-5,2) {$a\vee c$};
	\node (bc) at (-3,2) {$b \vee c$};
	\node (a) at (-7,0) {$a$};
	\node (b) at (-5,0) {$b$};
	\node (c) at (-3,0) {$c$};
	\node (false) at (-5,-2) {$\emptyset$};
	\draw (false) -- (a);
	\draw (false) -- (b);
	\draw (false) -- (c);
	\draw (a) -- (ab);
	\draw (a) -- (ac);
	\draw (b) -- (ab);
	\draw (b) -- (bc);
	\draw (c) -- (ac);
	\draw (c) -- (bc);
	\draw (ac) -- (true);
	\draw (ab) -- (true);
	\draw (bc) -- (true);
	\node (true2) at (5,4) {$\xi_{\tru}$};
	\node (ab2) at (3,2) {$\xi(a\vee b)$};
	\node (ac2) at (5,2) {$\xi(a\vee c)$};
	\node (bc2) at (7,2) {$\xi(b \vee c)$};
	\node (a2) at (3,0) {$\xi(a)$};
	\node (b2) at (5,0) {$\xi(b)$};
	\node (c2) at (7,0) {$\xi(c)$};
	\node (false2) at (5,-2) {$\xi_{\fals}$};
	\draw (false2) -- (a2);
	\draw (false2) -- (b2);
	\draw (false2) -- (c2);
	\draw (a2) -- (ab2);
	\draw (a2) -- (ac2);
	\draw (b2) -- (ab2);
	\draw (b2) -- (bc2);
	\draw (c2) -- (ac2);
	\draw (c2) -- (bc2);
	\draw (ac2) -- (true2);
	\draw (ab2) -- (true2);
	\draw (bc2) -- (true2);
	\end{tikzpicture}
	\caption{Hasse diagrams for the distributive lattice $\mathfrak{L}_{\vee}$ of $\vee$ over a set of three mutually exclusive and exhaustive propositions $\{a,b,c\}\in\tilde{\mathcal{A}}_{\vee}$ and the corresponding lattice $\mathfrak{L}_{\xi}$ for the degrees of belief $\Xi(\{a,b,c\})$.}
\end{figure}
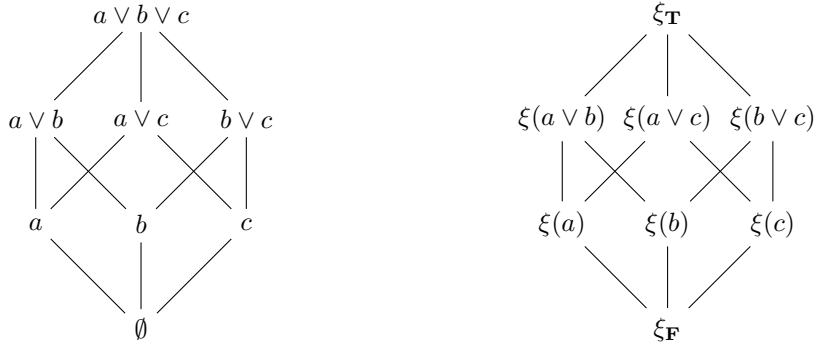
The lattice $\mathfrak{L}_{\xi}$ forms a semi-group, where $\xi_{\fals}$ is the identity.  Such semi-groups where $\xi(a\vee b)$ obeys an associative law are called \textit{Acz\'{e}lian} \cite{Ling}.  The use of the partially ordered set structure, as well as lattice structure, of Boolean algebras has been used to derive probability rules in the past \cite{Knuth1,Knuth2,KnuthSkilling,KlainRota}.

We take the approach given in \cite{CatichaBook} in which the rules of inference are determined by finding the representations $f_{\wedge}$ and $f_{\vee}$.  All other rules can be derived, including the negation, from knowing the general functional form of these as well as imposing transitivity.  In other words, we only need to adopt axioms (\ref{disjunction}) and (\ref{conjunction}) in order to derive (\ref{negation}).  

\subsection{Representation of the disjunction $f_{\vee}$}\label{disjun}
Before specifying the functional form of $f_{\vee}$ for generic statements, we will first determine its form for the special case when the two arguments $a$ and $b$ are mutually exclusive, i.e. when $a|b \Leftrightarrow b|a \Leftrightarrow\fals$.  This leads to
\begin{equation}
\xi(a\vee b|\Gamma) = f_{\vee}\left(\xi(a|\Gamma),\xi(b|\Gamma),\xi_{\fals},\xi_{\fals}\right) = f_{\vee}(\xi(a|\Gamma),\xi(b|\Gamma)),
\end{equation}
which reduces $f_{\vee}$ to a function of two arguments.  
To constrain $f_{\vee}$ further, we appeal to the associativity properties of $\vee$.  Given three mutually exclusive statements $a|\Gamma, b|\Gamma$ and $c|\Gamma$, we have 
\begin{align}
\xi(a\vee b\vee c|\Gamma) &= f_{\vee}\left(\frac{}{}f_{\vee}\left(\frac{}{}\xi(a|\Gamma),\xi(b|\Gamma)\right),\xi(c|\Gamma)\right)\nonumber\\
&= f_{\vee}\left(\frac{}{}\xi(a|\Gamma),f_{\vee}\left(\frac{}{}\xi(b|\Gamma),\xi(c|\Gamma)\right)\right).\label{associativeor}
\end{align}  
From axiom (\ref{transitive}) the function $f_{\vee}$ in (\ref{associativeor}) must be strictly increasing in both arguments.  Acz\'{e}l showed that by assuming that the function $f_{\vee}$ is continuous and strictly increasing in both arguments \cite{Aczel}, the general solution to (\ref{associativeor}) in terms of an arbitrary invertable function $\phi$ is given by the following theorem\footnote{While the function $f_{\wedge}$ also obeys an associative law, it is strictly \textit{decreasing} in both of its arguments.},
\begin{theorem}[Acz\'{e}l]\label{aczel}
	Let $\mathrm{im}(\Xi) = \langle\xi_{\fals},\xi_{\tru}\rangle \subseteq \mathbb{R}$ be an interval on $\mathbb{R}$.  Any function $f_{\vee}:\Xi\times\Xi\rightarrow\Xi$ is continuous, strictly increasing in both arguments and associative if and only if there exists a continuous and strictly monotonic function $\phi:\Xi\rightarrow \mathbb{R}$ such that,
	\begin{equation}
	\forall \xi(a|\Gamma),\xi(b|\Gamma) \in \Xi(\tilde{\mathcal{A}}_{\vee}) : f_{\vee}\left(\xi(a|\Gamma),\xi(b|\Gamma)\right) = \phi^{-1}\left(\frac{}{}\phi(\xi(a|\Gamma)) + \phi(\xi(b|\Gamma))\right).\label{solutionor}
	\end{equation}
\end{theorem}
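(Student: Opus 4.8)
The plan is to treat the two directions separately: the reverse (sufficiency) direction is immediate, while the forward (necessity) direction carries the whole content of the theorem, which is Acz\'el's classical result on the associativity equation. For sufficiency, suppose $f_{\vee}(x,y) = \phi^{-1}\left(\phi(x)+\phi(y)\right)$ for some continuous strictly monotonic $\phi$. Then continuity of $f_{\vee}$ follows from composing the continuous maps $\phi$, $\phi^{-1}$ and $+$; strict monotonicity in each argument follows because $\phi$ is strictly monotonic and $t \mapsto t + \phi(y)$ is strictly increasing; and associativity is inherited from that of $+$, since injectivity of $\phi$ together with $\phi\left(f_{\vee}(f_{\vee}(x,y),z)\right) = \phi(x)+\phi(y)+\phi(z) = \phi\left(f_{\vee}(x,f_{\vee}(y,z))\right)$ gives the claim. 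No real work is required here.

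For necessity I would first record the structural consequences of the hypotheses. Strict monotonicity in each argument makes $f_{\vee}$ cancellative, so by the associativity identity (\ref{associativeor}) the pair $(\mathrm{im}(\Xi), f_{\vee})$ is a cancellative semigroup, and the natural order is compatible with the operation: $\xi(a|\Gamma) < \xi(b|\Gamma)$ forces $f_{\vee}(\xi(a|\Gamma),\xi(c|\Gamma)) < f_{\vee}(\xi(b|\Gamma),\xi(c|\Gamma))$ by Axiom (\ref{transitive}). Combined with continuity, this exhibits the interval $\langle\xi_{\fals},\xi_{\tru}\rangle$ as a continuously ordered, cancellative, associative magma, i.e. a one-dimensional topological semigroup. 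The task is then to produce the additive generator $\phi$ that trivializes this structure.

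The construction of $\phi$ I would carry out by embedding this semigroup additively into $(\mathbb{R},+)$. Concretely, fix an interior reference point and define its integer ``powers'' $x^{(n)}$ by repeated application of $f_{\vee}$; strict monotonicity and continuity guarantee that $u \mapsto f_{\vee}(u,u)$ is a continuous strictly increasing bijection onto its range, so that unique dyadic ``roots'' exist, yielding a strictly monotone assignment $q \mapsto \gamma(q)$ of dyadic rationals with $f_{\vee}(\gamma(p),\gamma(q)) = \gamma(p+q)$. Continuity then extends $\gamma$ to a continuous strictly monotone one-parameter family satisfying $f_{\vee}(\gamma(s),\gamma(t)) = \gamma(s+t)$, and I set $\phi = \gamma^{-1}$. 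Equivalently one may invoke a H\"older-type theorem, using an Archimedean property to show that every element is comparable to finitely many powers of a fixed one, whence the ordered semigroup is order-isomorphic to an interval of $(\mathbb{R},+)$. Either route delivers a continuous strictly monotonic $\phi$ with $\phi(f_{\vee}(x,y)) = \phi(x)+\phi(y)$, which rearranges to (\ref{solutionor}).

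The main obstacle is precisely this last construction: verifying that the Archimedean property genuinely holds (no element dominates all finite powers of another), that the dyadic roots exist across the whole interval rather than on a proper subinterval, and --- the genuinely delicate point --- that the additive identity survives the continuous extension on the \emph{entire} domain and not merely on the dense set of dyadic powers. This is exactly where continuity together with strict monotonicity are indispensable, and it is also where commutativity of $f_{\vee}$, which is not assumed but must emerge since $\phi(x)+\phi(y)$ is symmetric, gets forced by the ordered Archimedean structure. Throughout, the boundary behaviour at $\xi_{\fals}$ and $\xi_{\tru}$, governed by whether $\langle\xi_{\fals},\xi_{\tru}\rangle$ is open or closed, must be tracked carefully.
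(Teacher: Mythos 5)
Your proposal is correct and follows essentially the same route as the paper's Appendix~B proof: the paper likewise constructs the additive generator by forming iterated powers $\varphi_m(a)$ under $f_{\vee}$, taking $n$-th roots to define $\psi(m/n)=\varphi_n^{-1}[\varphi_m(a)]$ on the positive rationals, extending by monotonicity and continuity to the reals so that $\psi(x+y)=f_{\vee}(\psi(x),\psi(y))$, and setting $\phi=\psi^{-1}$ --- your dyadic-root variant and the optional appeal to a H\"older-type theorem are cosmetic differences, and your careful attention to the boundary behaviour at $\xi_{\fals}$ and $\xi_{\tru}$ mirrors the paper's case analysis of the interval endpoints. The only substantive addition is that you explicitly dispose of the (trivial) sufficiency direction, which the paper's appendix omits.
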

A proof of (\ref{solutionor}) is given in \hyperref[proofaczel]{Appendix A}.  Applying the arbitrary function $\phi$ to both sides,
\begin{equation}
\phi\left(\xi(a\vee b|\Gamma)\right) = \phi(\xi(a|\Gamma)) + \phi(\xi(b|\Gamma)).\label{answertwo}
\end{equation}
Since $\phi$ is strictly monotonic, it simply provides a regraduation of the degrees of belief $\xi$.  Any legitimate inference which satisfies the associativity constraint can be represented by the equation above.  Thus, one can then adopt the simplest representative by choosing $\phi$ equal to the identity, $\phi = \mathrm{Id}_{\Xi}$, so that,
\begin{equation}
\xi^{(\mathrm{Id}_{\Xi})}(a\vee b|\Gamma) = \xi(a|\Gamma) + \xi(b|\Gamma),
\end{equation}
which makes $f_{\vee}$ linear in its first two arguments.  Of course this choice gives the impression that we are really after the sum rule, however we could have also picked the monotonic function $\phi = \log$, which gives,
\begin{equation}
\xi^{(\log)}(a\vee b|\Gamma) = \xi(a|\Gamma)\xi(b|\Gamma).
\end{equation}
This representation also constrains $f_{\vee}$ to be linear in its first two arguments\footnote{Any scaling of $\phi = \log$ to $\phi = \alpha\log$ where $\alpha \in \mathbb{R}$, results in
	\begin{equation}
	\xi^{(\alpha\log)}(a\vee b|\Gamma)^{\alpha} = \xi(a|\Gamma)^{\alpha}\xi(b|\Gamma)^{\alpha},
	\end{equation}
	which is the general result obtained in \cite{Cox2} for the product rule, but which is no longer linear.}. 

Before settling on a choice of $\phi$, it will be useful to list a set of special cases.  First, recalling $a\vee\fals \Leftrightarrow a$ (\ref{unitalor}) we have,
\begin{align}
\xi(a \vee \fals|\Gamma) &= f_{\vee}\left(\frac{}{}\xi(a|\Gamma),\xi(\fals|\Gamma),\xi_{\fals},\xi_{\fals}\right)\nonumber\\
&= f_{\vee}(\xi(a|\Gamma),\xi_{\fals}) = \xi(a|\Gamma).
\end{align}
Due to commutativity of $\vee$ we also have that,
\begin{equation}
f_{\vee}(\xi(a|\Gamma),\xi_{\fals}) = f_{\vee}(\xi_{\fals},\xi(a|\Gamma)) = \xi(a|\Gamma),\label{orcommutative}
\end{equation}
Inserting (\ref{orcommutative}) into (\ref{answertwo}) we find,
\begin{equation}
\phi(\xi(a\vee \fals|\Gamma)) = \phi(\xi(a|\Gamma)) + \phi(\xi_{\fals}),
\end{equation}
however $\xi(a\vee\fals|\Gamma) = \xi(a|\Gamma)$, hence we must have,
\begin{equation}
\phi(\xi_\fals) = 0.\label{phizero}
\end{equation}
Thus, while the function $\phi$ may be arbitrary, its evaluation of $\xi_{\fals}$ is not\footnote{This can also be traced back to the definition of $\psi$ (\ref{psizero}) which explicitly defined $\psi(0) = \phi^{-1}(0) = \xi_{\fals}$.}.   Thus, for any choice of $\phi$, the value of $\xi_{\fals}$ is constrained so that (\ref{phizero}) holds.  This can also be demonstrated from the identity $\fals\vee\tru = \tru$,
\begin{equation}
\phi(\xi(\fals\vee\tru|\Gamma)) = \phi(\xi_\fals) + \phi(\xi_\tru) = \phi(\xi_{\tru}).
\end{equation}
\paragraph{A convenient regraduation ---}
Without loss of generality, one can always recast the function $\xi$ as the composition, $\xi' = \phi \circ \xi$.  Since $\xi$ is continuous, as well as $\phi$, then the composition of maps $\phi \circ \xi$ is also continuous.  Likewise, since $\phi$ is monotonic, then the transitivity property (\ref{transitive}) is also preserved under the composition of $\xi$ with $\phi$.  Thus, we can define the \textit{regraduated} degree of belief,
\begin{equation}
\xi'(a\vee b|\Gamma) \stackrel{\mathrm{def}}{=} \phi(\xi(a\vee b|\Gamma)),\label{orregraduate}
\end{equation}
so that we get explicitly the sum rule,
\begin{equation}
\xi'(a\vee b|\Gamma) = \xi'(a|\Gamma) + \xi'(b|\Gamma).
\end{equation}
With this choice, we have left open the functional form of $\xi'$ as it applies to statements in $\tilde{\mathcal{A}}$.  The sum above does not inform one as to what function $\xi'$, or equivalently $\phi$, in $C^0(\tilde{\mathcal{A}})$ that one should use as a degree of belief.  It is strictly a rule for manipulating degrees of belief $\xi'$.  If one accepts the axioms \hyperref[disjunction]{1.5}, \hyperref[transitive]{1.7} and \hyperref[realnumbers]{1.1}, then the general solution must be (\ref{orregraduate}).  The space $\Xi'$ then, is simply the result of a homeomorphism $\phi:\Xi\rightarrow\Xi'$, so that $\Xi' \subset C^0(\tilde{\mathcal{A}})$ and
\begin{equation}
\Xi' \cong_{\mathrm{homeo}}\Xi,
\end{equation}   
From here we can identify $\phi \circ f_{\vee} = f_{\vee}'$, where $f_{\vee}'$ is simply addition on $\Xi'$.  We then have the following commutative diagram,  
\begin{figure}[H]
	\centering
	\begin{tikzcd}
	\tilde{\mathcal{A}}_{\vee}\times\tilde{\mathcal{A}}_{\vee}\arrow[r,"\vee"]\arrow[d,"{(\xi',\xi')}"']&\tilde{\mathcal{A}}_{\vee}\arrow[d,"\xi'"]\\
	\Xi'\times\Xi'\arrow[r,"f_{\vee}'"']&\Xi'
	\end{tikzcd}
	\caption{Commutative diagram for the sum rule of two mutually exclusive propositions.}
\end{figure}
For mutually exclusive propositions, the diagram above commutes, 
\begin{equation}
f_{\vee}'\circ(\xi',\xi') = \xi'\circ \vee
\end{equation}

\paragraph{The negation function $f_{\neg}$ ---}
Using the fact that $a\wedge \neg a|\Gamma \Leftrightarrow \fals$, and $a\vee \neg a|\Gamma \Leftrightarrow \tru$, we also have
\begin{equation}
\forall a|\Gamma \in \tilde{\mathcal{A}} : \xi'(a\vee \neg a|\Gamma) = \xi'(a|\Gamma) + \xi'(\neg a|\Gamma) = \xi_{\tru},
\end{equation}   
which, according to (\ref{negation}), means that for any degree of belief $\xi'(a|\Gamma)$,
\begin{equation}
\xi'(\neg a|\Gamma) = f_{\neg}'(\xi'(a|\Gamma)) = \xi'_{\tru} - \xi'(a|\Gamma).
\end{equation}
At this point it will be beneficial to drop the primes from the degrees of belief $\xi'$ for brevity, although one should not confuse the $\xi$'s from here on out with the $\xi$'s from before the regraduation.

\paragraph{The general sum rule ---}
To determine the general sum rule for arbitrary statements, $a|\Gamma$ and $b|\Gamma$, we can use the distributive property of conjunction.  Since any proposition can be written as the disjunction of two mutually exclusive statements,
\begin{equation}
a|\Gamma \Leftrightarrow (a|\Gamma \wedge b|\Gamma)\vee(a|\Gamma\wedge\neg b|\Gamma),
\end{equation}
we have that,
\begin{align}
a|\Gamma \vee b|\Gamma &\Leftrightarrow \left[\frac{}{}(a|\Gamma\wedge b|\Gamma)\vee (a|\Gamma\wedge \neg b|\Gamma)\right]\vee\left[\frac{}{}(a|\Gamma\wedge b|\Gamma)\vee (\neg a|\Gamma \wedge b|\Gamma)\right]\nonumber\\
&\Leftrightarrow \left[\frac{}{}(a|\Gamma\wedge b|\Gamma)\vee (\neg a|\Gamma \wedge b|\Gamma) \vee (a|\Gamma \wedge \neg b|\Gamma)\right]\nonumber\\
&\Leftrightarrow a|\Gamma\vee[\neg a|\Gamma\wedge b|\Gamma].
\end{align}
Since $a|\Gamma$ and $[\neg a|\Gamma \wedge b|\Gamma]$ are mutually exclusive, we have,
\begin{equation}
\xi(a\vee b|\Gamma) = \xi(a\vee(\neg a\wedge b)|\Gamma) = \xi(a|\Gamma) + \xi(\neg a\wedge b|\Gamma)
\end{equation}
Adding and subtracting $\xi(a \wedge b|\Gamma)$ to both sides yields,
\begin{align}
\xi(a\vee b|\Gamma) &= \xi(a|\Gamma) + \xi(\neg a\wedge b|\Gamma) + \xi(a\wedge b|\Gamma) - \xi(a \wedge b|\Gamma)\nonumber\\
&= \xi(a|\Gamma) + \left[\frac{}{}\xi(a\wedge b|\Gamma) + \xi(\neg a\wedge b|\Gamma)\right] - \xi(a \wedge b|\Gamma).
\end{align}
Then, using the fact that $(a|\Gamma\wedge b|\Gamma)$ and $(\neg a|\Gamma \wedge b|\Gamma)$ are mutually exclusive we get,
\begin{align}
\xi(a\vee b|\Gamma) &= \xi(a|\Gamma) + \xi((a\wedge b)\vee(\neg a\wedge b)|\Gamma) - \xi(a\wedge b|\Gamma)\nonumber\\
&= \xi(a|\Gamma) + \xi(b|\Gamma) - \xi(a\wedge b|\Gamma),\label{generalsum}
\end{align}
which is the general sum rule.  The general sum rule must also be associative, which one can easily check by evaluating the disjunction in two ways.  First,
\begin{align}
\forall a|\Gamma,b|\Gamma,c|\Gamma \in \tilde{\mathcal{A}} : \xi(a\vee(b\vee c)|\Gamma) &= \xi(a|\Gamma) + \xi(b\vee c|\Gamma) - \xi(a\wedge(b\vee c)|\Gamma)\nonumber\\
&= \xi(a|\Gamma) + \xi(b|\Gamma) + \xi(c|\Gamma) - \xi(b\wedge c|\Gamma)\nonumber\\
&- \xi(a\wedge b|\Gamma) - \xi(a\wedge c|\Gamma) + \xi(a\wedge b\wedge c|\Gamma),
\end{align}
where we used the distributive property for $\xi(a\wedge(b\vee c)|\Gamma) = \xi([a\wedge b]\vee[a\wedge c]|\Gamma)$.  We also have that,
\begin{align}
\forall a|\Gamma,b|\Gamma,c|\Gamma \in \tilde{\mathcal{A}} : \xi((a\vee b)\vee c|\Gamma) &= \xi(a\vee b|\Gamma) + \xi(c|\Gamma) - \xi((a\vee b)\wedge c)|\Gamma)\nonumber\\
&= \xi(a|\Gamma) + \xi(b|\Gamma) + \xi(c|\Gamma) - \xi(a\wedge b|\Gamma)\nonumber\\
&- \xi(a\wedge c|\Gamma) - \xi(b\wedge c|\Gamma) + \xi(a\wedge b\wedge c|\Gamma),
\end{align}
so that in general,
\begin{equation}
\xi(a\vee(b\vee c)|\Gamma) = \xi((a\vee b)\vee c|\Gamma),
\end{equation}
as expected.

\subsection{Representation of the conjunction $f_{\wedge}$}
There are several ways in which we could determine the functional form of $f_{\wedge}$.  No matter which way we choose however, the result must be consistent with what we already know about the form of $f_{\vee}$ and $f_{\neg}$, since the three are related through (\ref{andorrelation1}) and (\ref{andorrelation2}).  In Cox's original derivation \cite{Cox2}, he used the associativity constraint applied to $f_{\wedge}$, which of course leads to the same solution as in (\ref{answertwo}).  It is perfectly reasonable to take this approach, however one must constrain the choice of the function $\phi$ so as not to conflict with the choice for $f_{\vee}$.  In other words, one cannot independently determine that the functional form of $f_{\wedge}$ is a sum, otherwise they violate (\ref{andorrelation1}) and (\ref{andorrelation2}).  We will discuss more about these results in Section \hyperref[rep2]{1.5}.

In order to be consistent with the chosen regraduation of (\ref{answertwo}), we will use the distributive property of $\wedge$ and $\vee$ (\ref{distributiveone}).  This will incorporate directly the already chosen regraduation of $\phi$.  To see this, consider the statement,
\begin{equation}
\forall a|\Gamma,b|\Gamma,c|\Gamma \in \tilde{\mathcal{A}} : a|\Gamma\wedge[b|\Gamma\vee c|\Gamma] \Leftrightarrow [a|\Gamma\wedge b|\Gamma]\vee[a|\Gamma\wedge c|\Gamma].
\end{equation}
Written using the properties (\ref{distributiveone}) we have,
\begin{equation}
\forall a|\Gamma,b|\Gamma,c|\Gamma \in \tilde{\mathcal{A}}_{\Gamma} : a\wedge (b\vee c)|\Gamma \Leftrightarrow [a\wedge b]\vee[a\wedge c]|\Gamma.
\end{equation}
Now consider the degree of belief in the conjunction,
\begin{equation}
\xi(a\wedge(b\vee c)|\Gamma) = f_{\wedge}\left(\frac{}{}\xi(a|\Gamma),\xi(b\vee c|\Gamma\wedge a)\right).
\end{equation}
Due to the distributivity property this is also equal to,
\begin{align}
f_{\wedge}\left(\frac{}{}\xi(a|\Gamma),\xi(b\vee c|\Gamma\wedge a)\right) &= f_{\wedge}\left(\frac{}{}\xi(a|\Gamma),\xi(b|\Gamma\wedge a) + \xi(c|\Gamma\wedge a)\right)\nonumber\\
&= f_{\wedge}\left(\frac{}{}\xi(a|\Gamma),\xi(b|\Gamma\wedge a)\right)\nonumber\\
&+ f_{\wedge}\left(\frac{}{}\xi(a|\Gamma),\xi(c|\Gamma\wedge a)\right).\label{distributiveand}
\end{align}
This is a functional equation for the three quantities $\xi(a|\Gamma), \xi(b|\Gamma\wedge a)$ and $\xi(c|\Gamma \wedge a)$.  This functional equation, which is a multi-variate version of Cauchy's functional equation \cite{Aczel}, demonstrates that $f_{\wedge}$ must be linear in its second argument,
\begin{theorem}[Cauchy's functional equation]\label{Cauchy}
	Let $f_{\wedge}:\Xi\times\Xi\rightarrow\Xi$ be an additive function in its second argument (\ref{distributiveand}).  Then, $f_{\wedge}$ must be linear in its second argument.  The general solution is,
	\begin{equation}
	\forall \xi(a|\Gamma),\xi(b|\Gamma\wedge a) \in \Xi : f_{\wedge}(\xi(a|\Gamma),\xi(b|\Gamma\wedge a)) = \lambda g_{\wedge}(\xi(a|\Gamma))\xi(b|\Gamma\wedge a),\label{gfunction}
	\end{equation}
	where $\lambda \in \mathbb{R}$.
\end{theorem}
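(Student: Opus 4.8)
The plan is to recognize (\ref{distributiveand}) as the additive Cauchy functional equation in the second slot and to solve it by the standard rational-scaling argument, upgraded to all reals by the regularity already imposed on $\Xi$. First I would freeze the first argument: set $x \stackrel{\mathrm{def}}{=} \xi(a|\Gamma)$ and define the single-variable map $h_x(u) \stackrel{\mathrm{def}}{=} f_{\wedge}(x,u)$ for $u$ ranging over the image interval $\langle\xi_{\fals},\xi_{\tru}\rangle$. Writing $u = \xi(b|\Gamma\wedge a)$ and $v = \xi(c|\Gamma\wedge a)$, equation (\ref{distributiveand}) becomes exactly
\begin{equation}
h_x(u+v) = h_x(u) + h_x(v),
\end{equation}
the Cauchy equation for $h_x$, with $x$ carried along as a passive parameter.

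Next I would extract linearity on a dense set. Iterating additivity gives $h_x(nu) = n\,h_x(u)$ for every positive integer $n$, and splitting $u = n(u/n)$ gives $h_x(u/n) = h_x(u)/n$; together these yield $h_x(qu) = q\,h_x(u)$ for every positive rational $q$ whose arguments remain inside the domain. Hence $h_x$ is $\mathbb{Q}$-linear wherever it is defined, and is therefore determined by a single slope once a reference value is fixed.

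The decisive step is to promote $\mathbb{Q}$-linearity to $\mathbb{R}$-linearity, and this is exactly where I expect the main obstacle to lie: without a regularity hypothesis, Cauchy's equation admits pathological non-measurable solutions built from a Hamel basis, so the linear form is not forced by additivity alone. Here I would invoke the continuity inherited from $\Xi \subseteq C^0(\tilde{\mathcal{A}})$ (equivalently, the strict monotonicity of $f_{\wedge}$ in its second argument guaranteed by the conjunction axiom), which forces $h_x$ to agree with its $\mathbb{Q}$-linear values throughout the interval by density. This pins down $h_x(u) = c(x)\,u$ for a constant $c(x)$ depending only on the frozen first slot. A secondary technical care is the bounded-domain subtlety: because the arguments live in $\langle\xi_{\fals},\xi_{\tru}\rangle$ rather than all of $\mathbb{R}$, one must check that additivity is only ever applied to triples whose values and partial sums stay inside the domain, and then appeal to the interval version of the Cauchy theorem.

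Finally I would restore the dependence on the first argument. The slope $c(x)$ is a function of $x = \xi(a|\Gamma)$ alone; naming this dependence $g_{\wedge}$ and factoring out an overall normalization $\lambda \in \mathbb{R}$ gives $c(\xi(a|\Gamma)) = \lambda\,g_{\wedge}(\xi(a|\Gamma))$, that is
\begin{equation}
f_{\wedge}(\xi(a|\Gamma),\xi(b|\Gamma\wedge a)) = \lambda\, g_{\wedge}(\xi(a|\Gamma))\, \xi(b|\Gamma\wedge a),
\end{equation}
which is precisely (\ref{gfunction}).
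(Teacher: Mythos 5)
Your proposal is correct and follows essentially the same route as the paper's proof: iterate additivity to get $\mathbb{Q}$-homogeneity in the second slot, then use the continuity of $\Xi$ to pass to all reals and read off the slope as a function of the frozen first argument. Your explicit remarks about Hamel-basis pathologies and the bounded-domain subtlety are careful additions the paper glosses over, but they do not change the structure of the argument.
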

A proof for (\ref{gfunction}) is given in \hyperref[cauchyproof]{Appendix B}.  From here we can consider the commutative property of $\wedge$, $a\wedge b = b \wedge a$, so that 
\begin{equation}
f_{\wedge}\left(\frac{}{}\xi(a|\Gamma),\xi(b|\Gamma\wedge a)\right) = f_{\wedge}\left(\frac{}{}\xi(b|\Gamma),\xi(a|\Gamma\wedge b)\right),
\end{equation}
which using (\ref{gfunction}) is written,
\begin{equation}
g_{\wedge}(\xi(a|\Gamma))\xi(b|\Gamma\wedge a) = g_{\wedge}(\xi(b|\Gamma))\xi(a|\Gamma\wedge b).\label{gtogether}
\end{equation}
If we consider the special case for when $a|\Gamma$ and $b|\Gamma$ are independent so that
\begin{equation}
\xi(a|\Gamma\wedge b) = \xi(a|\Gamma) \quad \mathrm{and}\quad \xi(b|\Gamma\wedge a) = \xi(b|\Gamma),
\end{equation}
then we have for (\ref{gtogether}),
\begin{equation}
g_{\wedge}(\xi(a|\Gamma))\xi(b|\Gamma) = g_{\wedge}(\xi(b|\Gamma))\xi(a|\Gamma).
\end{equation}
Thus the function $g_{\wedge}$ must also be linear so that in general we have,
\begin{equation}
f_{\wedge}\left(\frac{}{}\xi(a|\Gamma),\xi(b|\Gamma\wedge a)\right) = \lambda'\xi(a|\Gamma)\xi(b|\Gamma\wedge a).
\end{equation}
The arbitrary constant $\lambda'$ can be set by regraduating $\xi$ again using the property $\xi(a\wedge \tru|\Gamma) = \xi(a|\Gamma)$,
\begin{equation}
\xi(a\wedge \tru|\Gamma) = \lambda\xi(a|\Gamma)\xi_{\tru} = \xi(a|\Gamma).
\end{equation}
Thus we must have $\lambda = \xi_{\tru}^{-1}$,
\begin{equation}
\xi(a\wedge b|\Gamma) = \frac{1}{\xi_{\tru}}\xi(a|\Gamma)\xi(b|\Gamma\wedge a).
\end{equation}
Multiplying through by $\xi_{\tru}^{-1}$,
\begin{equation}
\frac{\xi(a\wedge b|\Gamma)}{\xi_{\tru}} = \frac{\xi(a|\Gamma)}{\xi_{\tru}}\frac{\xi(b|\Gamma\wedge a)}{\xi_{\tru}}.
\end{equation}
Thus, we can define $\xi' = \xi/\xi_{\tru}$ so that we recover the product rule for conjunction,
\begin{equation}
\xi'(a\wedge b|\Gamma) = \xi'(a|\Gamma)\xi'(b|\Gamma\wedge a).
\end{equation}

\subsection{Representations using $C^{\infty}(\tilde{\mathcal{A}})$}\label{rep2}

Let us assume instead that the space of functions $\Xi$ belongs to the other extreme, $\Xi \subset C^{\infty}(\tilde{\mathcal{A}})$.  Using the tools of calculus we can derive the same sum and product rules from the previous sections.  The derivation here follows along the same lines as the one first achieved by Cox \cite{Cox} and later improved by others \cite{Jaynes,CatichaBook}.  We begin again with the derivation of the sum rule for mutually exclusive propositions,
\begin{align}
\xi(a\vee b\vee c|\Gamma) &= f_{\vee}\left(\frac{}{}f_{\vee}\left(\frac{}{}\xi(a|\Gamma),\xi(b|\Gamma)\right),\xi(c|\Gamma)\right)\nonumber\\
&= f_{\vee}\left(\frac{}{}\xi(a|\Gamma),f_{\vee}\left(\frac{}{}\xi(b|\Gamma),\xi(c|\Gamma)\right)\right).
\end{align}  
To simplify the derivation, we adopt the notation,
\begin{equation}
\xi(a|\Gamma) \stackrel{\mathrm{def}}{=} a, \quad \xi(b|\Gamma) \stackrel{\mathrm{def}}{=} b, \quad \mathrm{and} \quad \xi(c|\Gamma) \stackrel{\mathrm{def}}{=} c.
\end{equation}
Then, the associativity equation becomes,
\begin{equation}
f_{\vee}(f_{\vee}(a,b),c) = f_{\vee}(a,f_{\vee}(b,c)).
\end{equation}
To probe the properties of this equation, we will take various derivatives up to order two, although we assume that the functions $\xi$ are smooth.  Since there are four different versions of the function $f_{\vee}$ above, we will introduce an additional notation to simplify future equations,
\begin{equation}
f_{\vee}(a,b) \stackrel{\mathrm{def}}{=} \mathfrak{ab} \quad \mathrm{and} \quad f_{\vee}(b,c) \stackrel{\mathrm{def}}{=} \mathfrak{bc}.
\end{equation}
Then the associativity equation simplifies further,
\begin{equation}
f_{\vee}(\mathfrak{ab},c) = f_{\vee}(a,\mathfrak{bc}).
\end{equation}
Taking the derivative with respect to $a, b$ and $c$ we find,
\begin{align}
\partial_{\mathfrak{ab}}f_{\vee}(\mathfrak{ab},c)\partial_af_{\vee}(a,b) &= \partial_af_{\vee}(a,\mathfrak{bc}),\label{first}\\
\partial_{\mathfrak{ab}}f_{\vee}(\mathfrak{ab},c)\partial_bf_{\vee}(a,b) &= \partial_{\mathfrak{bc}}f_{\vee}(a,\mathfrak{bc})\partial_bf_{\vee}(b,c),\label{second}\\
\partial_cf_{\vee}(\mathfrak{ab},c) &= \partial_{\mathfrak{bc}}f_{\vee}(a,\mathfrak{bc})\partial_cf_{\vee}(b,c).\label{third}
\end{align}
One observation is that the derivatives of $f_{\vee}$ with respect to either the first or second argument is always positive.  The reason is due to axiom (\ref{transitive}), which tells us that
\begin{equation}
\partial_af_{\vee}(a,b) \geq 0 \quad \mathrm{and} \quad \partial_bf_{\vee}(a,b) \geq 0.\label{positive}
\end{equation}
Dividing the second equation (\ref{second}) by the first (\ref{first}) we find,
\begin{equation}
\frac{\partial_bf_{\vee}(a,b)}{\partial_af_{\vee}(a,b)} = \frac{\partial_{\mathfrak{bc}}f_{\vee}(a,\mathfrak{bc})}{\partial_af_{\vee}(a,\mathfrak{bc})}\partial_bf_{\vee}(b,c).\label{firstpart}
\end{equation}
Likewise, dividing the second equation by the third (\ref{third}) gives,
\begin{equation}
\partial_bf_{\vee}(a,b)\frac{\partial_{\mathfrak{ab}}f_{\vee}(\mathfrak{ab},c)}{\partial_cf_{\vee}(\mathfrak{ab},c)} = \frac{\partial_bf_{\vee}(b,c)}{\partial_cf_{\vee}(b,c)}.
\end{equation}
We can introduce another convenient notation to simplify the ratios in the following equations.  Consider the definition,
\begin{equation}
K(a,b) \stackrel{\mathrm{def}}{=} \frac{\partial_b f_{\vee}(a,b)}{\partial_a f_{\vee}(a,b)},
\end{equation}
so that the ratios,
\begin{equation}
\frac{\partial_bf_{\vee}(a,b)}{\partial_af_{\vee}(a,b)} = K(a,b), \quad \frac{\partial_{\mathfrak{bc}}f_{\vee}(a,\mathfrak{bc})}{\partial_af_{\vee}(a,\mathfrak{bc})} = K(a,\mathfrak{bc}), \quad \frac{\partial_cf_{\vee}(b,c)}{\partial_bf_{\vee}(b,c)} = K(b,c).
\end{equation}
The right hand side of (\ref{firstpart}) contains a term proportional to the derivative of $f_{\vee}(b,c)$ with respect to $b$.  If we multiply both sides of (\ref{firstpart}) by $K(b,c)$, the right hand side of (\ref{firstpart}) will then be proportional to the derivative of $f_{\vee}(b,c)$, but now with respect to $c$,
\begin{align}
K(a,b) &= K(a,\mathfrak{bc})\partial_bf_{\vee}(b,c),\label{firstsecondpart}\\
K(a,b)K(b,c) &= K(a,\mathfrak{bc})\partial_cf_{\vee}(b,c).\label{secondpart}
\end{align}
One can easily verify that taking the derivative of (\ref{secondpart}) with respect to $b$ is equivalent to taking the derivative of (\ref{firstsecondpart}) with respect to $c$,
\begin{equation}
\frac{\partial}{\partial b}\left[K(a,\mathfrak{bc})\partial_cf_{\vee}(b,c)\right] = \frac{\partial}{\partial c}\left[K(a,\mathfrak{bc})\partial_bf_{\vee}(b,c)\right]\label{derives}
\end{equation}
But, the right hand side of (\ref{derives}) is zero due to the left hand side of (\ref{firstsecondpart}),
\begin{equation}
\frac{\partial}{\partial c}\left[K(a,b)\right] = 0.
\end{equation}
Thus, the derivative of the left hand side of (\ref{secondpart}) with respect to $b$ must also be zero,
\begin{equation}
\frac{\partial}{\partial b}\left[K(a,b)K(b,c)\right] = 0,
\end{equation}
which leads to,
\begin{equation}
\frac{1}{K(a,b)}\frac{\partial K(a,b)}{\partial b} = -\frac{1}{K(b,c)}\frac{\partial K(b,c)}{\partial b}.\label{equals}
\end{equation}
Since the left hand side is independent of $c$ while the right hand side is independent of $a$, both sides must only depend on $b$, i.e.,
\begin{equation}
\frac{1}{K(a,b)}\frac{\partial K(a,b)}{\partial b} = k(b), \quad \mathrm{and} \quad \frac{1}{K(b,c)}\frac{\partial K(b,c)}{\partial b} = -k(b).\label{thirdpart}
\end{equation}
Since both $\partial_bf_{\vee}(a,b)$ and $\partial_af_{\vee}(a,b)$ are positive (\ref{positive}), we can integrate both equations in (\ref{thirdpart}) to get,
\begin{align}
K(a,b) &= K(a,0)\exp\left[\int_0^bdb'\,k(b')\right],\label{intone}\\
K(b,c) &= K(0,c)\exp\left[-\int_0^bdb'\, k(b')\right].\label{inttwo}
\end{align}
Defining the term,
\begin{equation}
\mathcal{K}(b) = \exp\left[-\int_0^bdb'\,k(b')\right],\label{intdef1}
\end{equation}
we find from (\ref{intone}) and (\ref{inttwo}) that,
\begin{equation}
K(a,b) = K(a,0)\mathcal{K}(b)^{-1} = K(0,b)\mathcal{K}(a).\label{arg}
\end{equation}
So, $K(a,b)$ can be expressed either with respect to its first argument $K(a,0)$ or its second $K(0,b)$.  Rearranging (\ref{arg}) we get,
\begin{equation}
K(a,0)\mathcal{K}(a)^{-1} = K(0,b)\mathcal{K}(b) = \mathfrak{K},
\end{equation}
which are equal to a constant $\mathfrak{K}$ since the left hand side is independent of $b$ and the right hand side is independent of $a$.  We can then rewrite (\ref{arg}) as,
\begin{equation}
K(a,b) = \mathfrak{K}\frac{\mathcal{K}(a)}{\mathcal{K}(b)}.
\end{equation}
Substituting this expression back into (\ref{firstsecondpart}) and (\ref{secondpart}) we find,
\begin{equation}
\partial_bf_{\vee}(b,c) = \frac{\mathcal{K}(\mathfrak{bc})}{\mathcal{K}(b)} \quad \mathrm{and}\quad \partial_cf_{\vee}(b,c) = \mathfrak{K}\frac{\mathcal{K}(\mathfrak{bc})}{\mathcal{K}(c)}.\label{stuff1}
\end{equation}
Then, using the definition $f_{\vee}(b,c) = \mathfrak{bc}$, and (\ref{stuff1}) we find,
\begin{align}
df_{\vee}(b,c) &= \partial_bf_{\vee}(b,c)db + \partial_cf_{\vee}(b,c)dc\nonumber\\
&= \frac{\mathcal{K}(\mathfrak{bc})}{\mathcal{K}(b)}db + \mathfrak{K}\frac{\mathcal{K}(\mathfrak{bc})}{\mathcal{K}(c)}dc,
\end{align}
which is rearranged to give,
\begin{equation}
\frac{df_{\vee}(b,c)}{\mathcal{K}(\mathfrak{bc})} = \frac{db}{\mathcal{K}(b)} + \mathfrak{K}\frac{dc}{\mathcal{K}(c)}.\label{almost}
\end{equation}
Using (\ref{intdef1}) we can define the integral,
\begin{equation}
\phi(a) \stackrel{\mathrm{def}}{=} \int_0^ada'\, \mathcal{K}(a')^{-1} = \int_0^ada'\,\exp\left[\int_0^{a'}da''\,k(a'')\right],\label{intdef2}
\end{equation}
which is necessarily positive.  Then, the solution to (\ref{almost}) is,
\begin{equation}
\phi(f_{\vee}(b,c)) = \phi(b) + \mathfrak{K}\phi(c) + \alpha,
\end{equation} 
where $\alpha$ is an integration constant.  Since $\vee$ is commutative, i.e. $f_{\vee}(b,c) = f_{\vee}(c,b)$, the constant $\mathfrak{K} = 1$.  We then have the solution for $f_{\vee}(b,c)$,
\begin{equation}
f_{\vee}(b,c) = \phi^{-1}\left(\frac{}{}\phi(b) + \phi(c) + \alpha\right).\label{orsolution}
\end{equation}
In terms of the degrees of belief $\xi(a|\Gamma)$, the above becomes,
\begin{equation}
\phi(\xi(a\vee b|\Gamma)) = \phi(\xi(a|\Gamma)) + \phi(\xi(b|\Gamma)) + \alpha.
\end{equation}
We can always regraduate our degrees of belief so that $\xi(a|\Gamma) \rightarrow \xi'(a|\Gamma) = \phi(\xi(a|\Gamma)) + \alpha$.  Then the mutually exclusive sum rule becomes,
\begin{equation}
\xi'(a\vee b|\Gamma) = \xi'(a|\Gamma) + \xi'(b|\Gamma),
\end{equation}
which is the familiar result from Section \hyperref[disjun]{1.3}.  Application of the identity $a\vee \fals = a$ suggests that $\xi'_{\fals} = 0$, so that,
\begin{equation}
\phi(\xi_{\fals}) = -\alpha.\label{alphafalse}
\end{equation} 
\paragraph{The distributivity constraint ---}
To determine the product rule, we can make similar arguments concerning derivatives.  The approach in \cite{CatichaBook} makes use of the distributivity constraint (\ref{distributivetwo}), while in \cite{Cox} we again only need associativity.  The argument from \cite{CatichaBook} is as follows.  Consider the distributivity condition for $\wedge$ and $\vee$,
\begin{equation}
\xi(a\wedge(b\vee c)|\Gamma) = \xi\left((a\wedge b|\Gamma)\vee (a\wedge c|\Gamma)\right).
\end{equation}
Using the definition of $f_{\wedge}$ and $f_{\vee}$ when $b$ and $c$ are mutually exclusive we have,
\begin{align}
f_{\wedge}\left(\xi(a|\Gamma),\xi(b\vee c|\Gamma\wedge a)\right) &= f_{\wedge}\left(\xi(a|\Gamma),\xi(b|\Gamma\wedge a)\right)\nonumber\\
&+ f_{\wedge}\left(\xi(a|\Gamma),\xi(c|\Gamma\wedge a)\right).\label{firstand}
\end{align}
We can simplify the above by identifying,
\begin{equation}
\xi(a|\Gamma) \stackrel{\mathrm{def}}{=} a, \quad \xi(b|\Gamma\wedge a) \stackrel{\mathrm{def}}{=} b, \quad \mathrm{and}\quad \xi(c|\Gamma\wedge a) \stackrel{\mathrm{def}}{=} c,
\end{equation}
so that, $\xi(b\vee c|\Gamma\wedge a) = b+c$.  Substituting these definitions into (\ref{firstand}) gives,
\begin{equation}
f_{\wedge}(a,b+c) = f_{\wedge}(a,b) + f_{\wedge}(a,c).\label{secondand}
\end{equation}
Since the functions on the right hand side are independent of $c$ and $b$ resepectively, differentiating with respect to $b$ and $c$ we find,
\begin{equation}
\frac{\partial^2 f_{\wedge}(a,b+c)}{\partial b \partial c} = 0.
\end{equation}
Likewise, letting $b + c = d$, we find,
\begin{equation}
\frac{\partial^2f_{\wedge}(a,d)}{\partial d^2} = 0,
\end{equation}
which shows that $f_{\wedge}$ is linear in its second argument.  The above is easily integrated to find,
\begin{equation}
f_{\wedge}(a,b) = f_1(a)b + f_2(a).
\end{equation}
Using this solution in (\ref{secondand}) shows that the term $f_2(a) = 0$, thus we have,
\begin{equation}
f_{\wedge}(a,b) = f_1(a)b.\label{thirdand}
\end{equation}
Finally, consider $\xi(a|d) = \xi(a\wedge d|d)$ to evaluate,
\begin{equation}
\xi(a|d) = \xi(a\wedge d|d) = f_{\wedge}(\xi(a|d),\xi(d|a\wedge d)) = f_{\wedge}(\xi(a|d),\xi_{\tru}),
\end{equation}
which according to (\ref{thirdand}) is,
\begin{equation}
\xi(a|d) = f_1(a|d)\xi_{\tru}.
\end{equation}
We can then just regraduate the function $f_1(a|d)$ by dividing $\xi(a|d)$ by $\xi_{\tru}$ so that,
\begin{equation}
\frac{\xi(a\wedge b|\Gamma)}{\xi_{\tru}} = \frac{\xi(a|\Gamma)}{\xi_{\tru}}\frac{\xi(b|\Gamma \wedge a)}{\xi_{\tru}}.\label{andsolutioncaticha}
\end{equation}
\paragraph{The associativity constraint revisited ---}
The derivation of the conjunction relation in \cite{Cox} uses the associativity constraint instead of distributivity,
\begin{equation}
f_{\wedge}(a,f_{\wedge}(b,c)) = f_{\wedge}(f_{\wedge}(a,b),c).\label{andassociative}
\end{equation}
In fact, the derivation is exactly the same as for the disjunction, up through step (\ref{almost}),
\begin{equation}
\frac{df_{\wedge}(b,c)}{\mathcal{K}(\mathfrak{bc})} = \frac{db}{\mathcal{K}(b)} + \mathfrak{K}\frac{dc}{\mathcal{K}(c)}.\label{almost2}
\end{equation}
The difference is in the definition of the solution. Define the function,
\begin{equation}
\psi(a) \stackrel{\mathrm{def}}{=} e^{\phi(a)} = \exp\left[\int_0^ada'\,\mathcal{K}(a')^{-1}\right].\label{psidef}
\end{equation}
Then, the solution to (\ref{almost2}) is,
\begin{equation}
\alpha \psi(f_{\wedge}(a,b)) = \psi(a)\psi(b)^{\mathfrak{K}}.
\end{equation}
Inserting this into (\ref{andassociative}) demonstrates that again $\mathfrak{K} = 1$,
\begin{equation}
\alpha\psi(f_{\wedge}(\xi(a|\Gamma),\xi(b|\Gamma\wedge a))) = \psi(\xi(a|\Gamma))\psi(\xi(b|\Gamma\wedge a)).\label{andsolution}
\end{equation}
To determine $\alpha$, we appeal to the special case of $\xi(a\wedge a|\Gamma) = \xi(a|\Gamma)$ so that (\ref{andsolution}) becomes,
\begin{align}
\alpha\psi(\xi(a\wedge a|\Gamma)) &= \psi(\xi(a|\Gamma))\psi(\xi(a|\Gamma\wedge a))\nonumber\\
&= \psi(\xi(a|\Gamma))\psi(\xi_{\tru}),
\end{align}
so that,
\begin{equation}
\alpha = \psi(\xi_{\tru}),
\end{equation}
and therefore,
\begin{equation}
\psi(\xi_{\tru})\psi(\xi(a\wedge b|\Gamma)) = \psi(\xi(a|\Gamma))\psi(\xi(b|\Gamma\wedge a)).\label{andsolution2}
\end{equation}
One can then choose a regraduation so that $\xi'(a|\Gamma) = \psi(\xi(a|\Gamma))/\psi(\xi_{\tru})$, which then leads to the recognizable form,
\begin{equation}
\xi'(a\wedge b|\Gamma) = \xi'(a|\Gamma)\xi'(b|\Gamma\wedge a).\label{andsolution3}
\end{equation}
One may recognize that the distinction between the solutions (\ref{orsolution}) and (\ref{andsolution}) only differs by application of the exponential in (\ref{psidef}), which is monotonic and hence only applies a rescaling.  Nowhere in either derivation did we distinguish any defining characteristics of $\wedge$ and $\vee$.  Therefore, it seems that we cheated in picking (\ref{intdef2}) over (\ref{psidef}) in the derivation of $f_{\vee}(a,b)$ by peeking ahead at the answer we wanted, when the other solution is just as good.  The key is to understand that one cannot use the same arguments for independently deriving the functional form of $f_{\vee}$ and $f_{\wedge}$, since they are linked through the negation $f_{\neg}$.  If we had instead chose (\ref{psidef}) for the solution of $f_{\vee}(a,b)$, we would have arrived at the following,
\begin{equation}
\alpha\psi(\xi(a \vee b|\Gamma)) = \psi(\xi(a|\Gamma))\psi(\xi(b|\Gamma)).
\end{equation}
Then using the special case, $\xi(a\vee \fals|\Gamma) = \xi(a|\Gamma)$, we find that instead the constant $\alpha = \psi(\xi_{\fals})$,
\begin{equation}
\psi(\xi_{\fals})\psi(\xi(a\vee b|\Gamma)) = \psi(\xi(a|\Gamma))\psi(\xi(b|\Gamma)).\label{alphafalse2}
\end{equation}
Applying the log to both sides yields,
\begin{equation}
\phi(\xi_{\fals}) + \phi(\xi(a\vee b|\Gamma)) = \phi(\xi(a|\Gamma)) + \phi(\xi(b|\Gamma)).
\end{equation}
Using the result from (\ref{alphafalse}), the regraduated $\phi(\xi(a|\Gamma))$ is defined as,
\begin{equation}
\xi'(a|\Gamma) = \phi(\xi(a|\Gamma)) - \phi(\xi_{\fals}),
\end{equation}
which gives the familiar result from (\ref{orsolution}).  The regraduated form of (\ref{alphafalse2}) on the other hand is,
\begin{equation}
\frac{\psi(\xi(a\vee b|\Gamma))}{\psi(\xi_{\fals})} =  \frac{\psi(\xi(a|\Gamma))}{\psi(\xi_{\fals})}\frac{\psi(\xi(b|\Gamma))}{\psi(\xi_{\fals})}.
\end{equation}
Call $\xi''(a|\Gamma) = \psi(\xi(a|\Gamma))/\psi(\xi_{\fals})$.  Then we have,
\begin{equation}
\xi''(a\vee b|\Gamma) = \xi''(a|\Gamma)\xi''(b|\Gamma).
\end{equation}
Going back to the form (\ref{alphafalse2}), if we appeal to the special case $a\vee \neg a = \tru$, we find,
\begin{equation}
f_{\neg}(\xi''(a|\Gamma)) = \xi''_{\tru}\xi''(a|\Gamma)^{-1}
\end{equation}
Using the same arguments that lead to (\ref{generalsum}), we find for the general sum rule,
\begin{equation}
\xi''(a\vee b|\Gamma) = \frac{\xi''(a|\Gamma)\xi''(b|\Gamma)}{\xi''(a\wedge b|\Gamma)},
\end{equation}
and thus the function $f_{\wedge}(a,b)$ is defined as,
\begin{equation}
\xi''(a\wedge b|\Gamma) = \frac{\xi''(a|\Gamma)\xi''(b|\Gamma)}{\xi''(a\vee b|\Gamma)}.
\end{equation}
This gives the sum rule in terms of a product and a corresponding product rule in terms of a ratio.  If one exponentiates the solution for the product rule from (\ref{andsolutioncaticha}) we find,
\begin{equation}
\xi''(a\wedge b|\Gamma) = \xi''(a|\Gamma)^{\log\xi''(b|\Gamma \wedge a)},
\end{equation}
which is certainly unappealing and not intuitive, however when $b = \neg a$, we recover $\xi''(a\wedge b|\Gamma) = \xi''_{\fals}$ as expected.  Since $\phi(\xi_{\fals}) = 0$, the value $\psi(\xi_{\fals}) = \exp[\phi(\xi_{\fals})] = 1$.  Thus in this logarithmic scaling, total disbelief is given the value $\xi''_{\fals} = 1$.  Complete certainty on the other hand corresponds to $\xi''_{\tru} = \psi(\xi_{\tru}) = \exp[\phi(\xi_{\tru})]$, which has not yet been specified.  Given that the image of $\phi$ is now restricted to the positive reals, $\mathrm{im}(\phi) = \mathbb{R}_+$, one can always regraduate the solutions so that $\phi(\xi_{\tru}) = 1$ which sends the image to,
\begin{equation}
\mathrm{im}(\phi) = [0,1].
\end{equation}
In this scale, $\mathrm{im}(\psi) = [1,e]$.  

Both solution (\ref{andsolution}) and (\ref{orsolution}) are simply the result of the associativity constraint, which both $\wedge$ and $\vee$ satisfy.  However, once one determines the form of either $\{f_{\vee},f_{\neg}\}$ or $\{f_{\wedge},f_{\neg}\}$, the remaining function is constrained through the relationships in (\ref{andorrelation1}) and (\ref{andorrelation2}) as well as with the definitions of the lower and upper bounds given from (\ref{unitaland}) and (\ref{unitalor}).

\paragraph{Remarks on the results ---}
We have identified that the rules for manipulating degrees of belief are the rules for probabilities.  Hence, we make the identification $\xi \rightarrow P$ and $\xi_{\tru}\rightarrow P_{\tru} = 1$ so that the sum and product rules for all $a|\Gamma, b|\Gamma \in \tilde{\mathcal{A}}$ are written in the standard notation,
\begin{align}
P(a\vee b|\Gamma) &= P(a|\Gamma) + P(b|\Gamma) - P(ab|\Gamma)\label{sumrule}\\
P(ab|\Gamma) &= P(a)P(b|\Gamma a) = P(b)P(a|\Gamma b),\label{productrule}
\end{align}
where for ease of notation we have adopted that conjunction is written as a juxtaposition, $a\wedge b \rightarrow ab$.  The negation is written,
\begin{equation}
P(\neg a|\Gamma) = 1 - P(a|\Gamma).
\end{equation}
The space of probabilities is written $\Xi \rightarrow \mathbf{P}$, which is the set of maps from $\tilde{\mathcal{A}}$ to the interval $[0,1]$ which are consistent with the properties of $f_{\vee},f_{\wedge}$ and $f_{\neg}$.  Our inductive inference framework now consists of the web of beliefs $\mathbf{P}(\tilde{\mathcal{A}})$ constrained according to the rationale $\mathcal{R}$ which contains the representations of $f_{\neg}, f_{\wedge}$ and $f_{\vee}$.

Several authors have argued that the Cox derivation given above cannot apply generally to all universes of discourse.  Specifically, Halpern \cite{Halpern} claims that it fails for sets $\tilde{\mathcal{A}}$ which are finite.  His argument is that, for special cases in which the space $\Xi$ is discrete, one cannot assume that $\Xi = [\xi_{\fals},\xi_{\tru}] \subseteq \mathbb{R}$ is an interval.  Thus, the results from Cox and more importantly from Acz\'{e}l do not hold.  This is certainly true since the space $\Xi$ must be continuous according to the equivalence in theorem (\ref{aczel}) to hold, however there is no requirement that discrete spaces $\tilde{\mathcal{A}}$ must be represented by discrete degrees of belief $\Xi$.  Assuming that $\Xi$ is continuous does not preclude it from representing degrees of belief for discrete spaces $\tilde{\mathcal{A}}$.  In other words, continuity in $\Xi$ does not require that $\Xi(\tilde{\mathcal{A}})$ be surjective in $\Xi$.

Another argument, that the results of Cox do not have universal applicability, comes from Colyvan \cite{Colyvan}.  The author correctly points out the subtleties of calling Cox's results an extension of propositional logic, specifically the fact that some statements of interest are not propositions at all, like the situation of incomplete arguments from chapter one.  He also suggests that Cox's results cannot be general, since they rely on preservation of the law of excluded middle, which some logics abandon.  Neither of these objections are problems for us since we have already dealt with the problem of incomplete information in the definition of $\tilde{\mathcal{A}}$.  Colyvan's objection only goes to show the uselessness of the intuitionist program, since the rules of probability are already known to work in the real world.    

\paragraph{Consistency between $f_{\vee}$, $f_{\wedge}$ and $f_{\neg}$ ---}
It is easy to check that the relationships between the connectives $\vee$, $\wedge$ and $\neg$ are satisfied by the regraduated solutions (\ref{sumrule}) and (\ref{productrule}).  
\begin{align}
P(a \vee b) &= P(\neg(\neg a \neg b))\nonumber\\
&= 1 - P(\neg a\neg b) = 1 - P(\neg a)P(\neg b|\neg a)\nonumber\\
&= 1 - [P(\neg a)P(\neg b|\neg a) + P(a)P(\neg b|a)] + P(a)P(\neg b|a)\nonumber\\
&= 1 - P(\neg b) + P(a)(1 - P(b|a))\nonumber\\
&= P(a) + P(b) - P(a)P(b|a).
\end{align}
Likewise we have,
\begin{align}
P(ab) &= P(\neg(\neg a \vee \neg b))\nonumber\\
&= 1 - P(\neg a \vee \neg b) = 1 - [P(\neg a) + P(\neg b) - P(\neg a \neg b)]\nonumber\\
&= 1 - P(\neg a) - P(\neg b) + [P(\neg a)P(\neg b|\neg a) + P(a)P(\neg b|a)] - P(a)P(\neg b|a)\nonumber\\
&= P(a)(1 - P(\neg b|a)) = P(ab).
\end{align}
From here one can form the degree of belief for any $n$-ary proposition from the fact that $\wedge$, $\vee$ and $\neg$ form a functionally complete set.  

\paragraph{Modal degrees of belief? ---}
Now that we've determined a set of representations for the connectives $\Omega$, can we do the same for the modal operators?  Essentially we would like to determine if some representation of the form $f_{\square}$ and $f_{\diamondsuit}$ can make sense in our system.  The existence of such functions must be compatible with the other representations $\{f_{\Omega}\}$ and must obey the simple axioms,
\begin{align}
P(\square a) = f_{\square}(P(a)) &= f_{\neg}(f_{\diamondsuit}(f_{\neg}(P(a))))\nonumber\\
&= 1 - f_{\diamondsuit}(1 - P(a)),
\end{align}
and the dual,
\begin{align}
P(\diamondsuit a) = f_{\diamondsuit}(P(a)) &= f_{\neg}(f_{\square}(f_{\neg}(P(a))))\nonumber\\
&= 1 - f_{\square}(1 - P(a)).
\end{align}
Depending on the axioms that one includes in $\mathcal{I}_{\square,\diamondsuit}$, one could potentially constrain the functional form of $f_{\square}$ and $f_{\diamondsuit}$ further.  For example, if one adopts axioms $\mathbf{NC}$ and $\mathbf{CN}$ so that $\square (a \wedge b) \Leftrightarrow (\square a)\wedge (\square b)$, then the form of $f_{\square}$ is constrained to obey,
\begin{align}
P(\square(a \wedge b)) &= f_{\square}(P(a \wedge b))\nonumber\\
&= f_{\square}(P(a))f_{\square}(P(b|\square a))
\end{align}  
We will not pursue these ideas any further in this thesis and instead suggest that it be considered for future work.

\paragraph{Degrees of beliefs for predicates? ---}
An obvious question at this point is wether we can assign a meaningful degree of belief to some predicate $A(x)$.  Certainly whenever $x$ is bound by quantification the term $A(x)$ just becomes a standard statement, e.g. something like $P(\exists xA(x))$ is well defined.  But what about the statement $P(A(x))$ where $x$ remains free?  Such a degree of belief is not well defined, and hence the notation $P(A(x))$ is not meaningful.  What we can do however, is adopt a different notation for predicates which agrees with the standard meaning in probability.  

Whenever an argument is a predicate in which at least one of the variables remains free, we write a lowercase $p$, so that $p(A(x))$ refers to a \textit{probability distribution function} over the predicate $A(x)$.  Often times we take a shorthand notation and simply write the distribution as $p(x)$, leaving out the fact that $x$ refers to a predicate $A(x)$.  It is often the case that this notation can obscure meaning.  When we write something like $p(x)$, we are not really writing a probability but are instead referring to a probability valued function over the variable $x$\footnote{The most common example of this situation is whenever $x$ is some subset of $\mathbb{R}$, i.e. $x \in X \subseteq \mathbb{R}$.  For example, $x$ could refer to the position of a particle in one dimension.  Then, the predicate $A(x)$ is identified as $A(x) = $``the particle is at position $x$.''  Because the universe of discourse is typically mentioned ahead of time, there is seldom any confusion when we write $p(x)$ to refer to the predicate $p(A(x))$.}.

Whenever a value of $x$ is specified, so that $x = x_0$, we identify,
\begin{equation}
p(x=x_0) = P(x_0) = P(A(x_0)).
\end{equation}
When $x$ is continuous, then the function $p(x)$ is called a \textit{probability density function}, indeed because it transforms as a density.  The probability associated with some continuous interval $\Delta x \subset X$ is given by the integral,
\begin{equation}
P(\Delta x) = \int_{\Delta x}dx\, p(x),
\end{equation} 
where $p(x)dx$ is the probability assigned to the infinitesimal volume $dx$.  This is nothing more than a sum of all the probabilities $P(a)$ where $a \in \Delta x$.  For predicates of several variables, we write the free variables separated by a comma so that $p(A(x,y))=p(x,y)$.  Since the sum and product rule are valid for any statement, they can be written in terms of distributions,
\begin{align}
p(x,y) &= p(x)p(y|x) = p(y)p(x|y),\\
p(x\vee y) &= p(x) + p(y) - p(x,y).
\end{align}

\subsection{Inductive inference rules $\mathcal{Z}$}\label{section36}
%
%
%
We will briefly survey some important inference rules.  Inductive inference rules can either be direct consequences of the axioms, or exist as rules of thumb.  We will mainly focus on the former case and leave the latter to another chapter.
\paragraph{Bayes theorem ---}
One of the most important results of probability theory is that of Bayes' theorem \cite{Bayes}, which was arguably first identified by Laplace.  Consider the commutativity of $\wedge$ so that the product rule can be written in two equivalent ways,
\begin{equation}
p(a b|\Gamma) = p(a|\Gamma)p(b|\Gamma a) = p(b|\Gamma)p(a|\Gamma b).
\end{equation}
Thus, it must be the case that,
\begin{equation}
p(a|\Gamma b) = \frac{p(a|\Gamma)p(b|\Gamma  a)}{p(b|\Gamma)} \quad \mathrm{and} \quad p(b|\Gamma  a) = \frac{p(b|\Gamma)p(a|\Gamma  b)}{p(a|\Gamma)}.\label{Bayes}
\end{equation}
This is an incredibly powerful statement, for several reasons.  First, this demonstrates that the web of beliefs $\mathbf{P}(\tilde{\mathcal{A}})$ is not just constrained to be consistent with $\wedge,\vee$ and $\neg$, but that once one determines $p(a|\Gamma)$ and $p(b|\Gamma)$, the conditional degrees of belief $p(a|\Gamma b)$ and $p(b|\Gamma a)$ cannot be assigned independently.  Because of this, one is able to make inferences about statements $a|\Gamma b$ when $b|\Gamma a$ is known and vice versa.

One can write the conditional $p(a|\Gamma b)$ as a function $f_{\phi}$ of the other degrees of belief,
\begin{equation}
p(a|\Gamma b) \stackrel{\mathrm{def}}{=} f_{\phi}\left(\frac{}{}p(a|\Gamma),p(b|\Gamma),p(b|\Gamma a)\right),
\end{equation}  
or equivalently in terms of the conjunction,
\begin{equation}
p(a|\Gamma b) \stackrel{\mathrm{def}}{=} f_{\phi}'\left(\frac{}{}p(ab|\Gamma),p(a|\Gamma)\right),
\end{equation}
so that either $f_{\phi}$ or $f_{\phi}'$ represent the application of Bayes theorem.

\paragraph{Marginalization ---}
Sometimes one would like to conduct a joint inference on $a|\Gamma$ and $b|\Gamma$, when not much is known about one of the statements.  Consider that we have the joint distribution $p(ab|\Gamma)$, but we have little information about $b|\Gamma$. A convenient procedure then, is to \textit{marginalize} over $b|\Gamma$ by considering instead the probability $p(a[b\vee \neg b]|\Gamma)$,
\begin{align}
p(a|\Gamma) = p(a[b\vee \neg b]|\Gamma) &= p(ab|\Gamma) + p(a\neg b|\Gamma)\nonumber\\
&= \sum_b p(a|\Gamma)p(b|\Gamma a).\label{totalprob} 
\end{align}
The distribution $p(a|\Gamma)$ is called the \textit{marginal} and equation (\ref{totalprob}) is often called the \textit{law of total probability}.  When the universe of discourse $X\times Y$ is continuous, the marginal is an integral,
\begin{equation}
p(x) = \int dy\, p(x,y) = \int dy\, p(y)p(x|y).
\end{equation}

\paragraph{Expected values ---}
An extremely useful concept is that of the \textit{expected value}.  For a discrete universe of discourse $A$, it is simply the weighted average over all $a \in A$,
\begin{equation}
\langle a\rangle \stackrel{\mathrm{def}}{=} \sum_{a\in A} a P(a).
\end{equation}
The expected value is sometimes written $\mathbb{E}[a] = \langle a \rangle \stackrel{\mathrm{def}}{=} \mu_{a}$, which is also called the \textit{mean}.  Typically when we use the words \textit{expected value} we are referring to the weighted average over each element of the subject matter, however the words expected value can apply more generally for any power of $a$,
\begin{equation}
\langle a^n\rangle = \sum_{a\in X}a^nP(a).
\end{equation}
More general still, one can define the expected value for any function of $a$,
\begin{equation}
\langle f(a) \rangle = \sum_{a\in A} f(a)P(a).
\end{equation}
For continuous universes of discourse $X$ the sum gets replaced by an integral,
\begin{equation}
\langle f(x) \rangle = \int_{X} dx\, f(x)p(x).
\end{equation}

\paragraph{Bayes' rule ---}\label{bayesrules}
Everything we have discussed so far represents a calculus for manipulating beliefs when no new information is present.  Now we ask the question of how to \textit{update} our beliefs once new information becomes available in the form of data.  We can formulate this problem in terms of deciding how one should assign a new web of beliefs $\mathbf{P}(\tilde{\mathcal{A}})$ given some new information about $\tilde{\mathcal{A}}$.  We will represent a generic change in the web of beliefs by some function $\star:\mathbf{Q}\rightarrow \mathbf{P}$ so that $\mathbf{Q}(\tilde{\mathcal{A}})\xrightarrow{\star}\mathbf{P}(\tilde{\mathcal{A}})$.

Consider that we have some prior information about the relationship between two statements $a|\Gamma$ and $b|\Gamma$, whose joint distribution is given by,
\begin{equation}
q(ab|\Gamma) = q(a|\Gamma)q(b|\Gamma a) \in \mathbf{Q},
\end{equation} 
which we call the \textit{prior}.  Given new information about $b|\Gamma$, we would like to determine a suitable \textit{posterior} $p(a|\Gamma) \in \mathbf{P}$.  It is important to emphasize at this point that the universe of discourse does not consist of just the statements $a|\Gamma$ or $b|\Gamma$, but of both.  This will be important later when we construct a generalization to Bayes' rule.  

Consider now that we have collected new information about $b|\Gamma$, such that the value of $b|\Gamma$ is known, i.e. $b|\Gamma \xrightarrow{\star}b'|\Gamma$ so that the posterior web of beliefs is constrained to satisfy,
\begin{equation}
q(b|\Gamma) \xrightarrow{\star} p(b|\Gamma) = \sum_{a}p(ab|\Gamma) = \delta_{bb'}.
\end{equation}
This is not enough to constrain the form of $p(a|\Gamma) \in \mathbf{P}(\tilde{\mathcal{A}})$, since many posteriors will satisfy this rule and be internally consistent with respect to the rationale.  In order to constrain the form of $p(a|\Gamma)$, one needs to invoke some new principle \cite{CatichaBook} inspired by parsimony,
\begin{principle}[Principle of Minimal Updating]
	The web of beliefs $\mathbf{Q}(\tilde{\mathcal{A}})$ should only be revised to the extent required by the new information.
\end{principle}
Like we have said, such a constraint is not necessary for an updating scheme to be consistent, but is merely imposed to satisfy some ethical desire for how we \textit{should} conduct inference \cite{CatichaBook}.  In the case of the joint posterior distribution,
\begin{equation}
p(ab|\Gamma) = p(a|\Gamma)p(b|\Gamma a) = p(a|\Gamma b)\delta_{bb'},
\end{equation}  
the conditional $p(a|\Gamma b)$ is at this point still arbitrary.  If we invoke the principle of minimal updating (PMU), then nothing requires us to change the distribution $p(a|\Gamma b)$ since the change is already invoked in $\delta_{bb'}$.  Thus we identify,
\begin{equation}
p(a|\Gamma b) = q(a|\Gamma b),
\end{equation}  
so that the web of posterior beliefs $\mathbf{P}(\tilde{\mathcal{A}})$ is given by,
\begin{equation}
p(ab'|\Gamma) = q(a|\Gamma b')\delta_{bb'}
\end{equation}
The posterior distribution $p(a|\Gamma)$ is then given by marginalizing over $b|\Gamma$,
\begin{equation}
p(a|\Gamma) = \sum_{b}q(a|\Gamma b')\delta_{bb'} = q(a|\Gamma b').
\end{equation}
This identifies the updated posterior distribution $p(a|\Gamma)$ as the \textit{prior conditional distribution} $q(a|\Gamma b')$, which is called \textit{Bayes' rule} and is often written as,
\begin{equation}
q(a|\Gamma) \xrightarrow{\star} p(a|\Gamma) = q(a|\Gamma)\frac{q(b'|\Gamma a)}{q(b'|\Gamma)}.\label{bayesrule}
\end{equation}
Typically the conditional distribution $q(b'|\Gamma a)$ is called the \textit{likelihood} and the factor,
\begin{equation}
q(b'|\Gamma) = \sum_{a}q(ab'|\Gamma),
\end{equation}
is called the \textit{evidence}.  Some may point out that (\ref{bayesrule}) is simply the same thing as Bayes' theorem in (\ref{Bayes}), however the theorem refers only to a consequence of the product rule in $\mathbf{Q}(\tilde{\mathcal{A}})$.  The action of \textit{assigning} to the degree of belief $p(a|\Gamma)$ the conditional $q(a|\Gamma b)$ is something entirely different since we changing the web $\mathbf{Q}(\tilde{\mathcal{A}}) \xrightarrow{\star}\mathbf{P}(\tilde{\mathcal{A}})$.

\subsection{Other approaches}
There exist a plethora of other approaches for constructing the rules of probability in the literature.  We have mentioned several already that have been inspired by Cox, such as Caticha \cite{CatichaBook,Ariel1,Ariel2}, Fine \cite{Fine}, Jaynes \cite{Jaynes,Jaynes1}, Knuth \cite{Knuth1,Knuth2}, Paris \cite{Paris}, Skilling \cite{KnuthSkilling}, Smith and Erickson \cite{SmithErickson}, Tribus \cite{Tribus}, Van Horn \cite{VanHorn,VanHorn2} and Vanslette \cite{VansletteThesis}.  We will briefly discuss some of the other popular approaches, such as the \textit{subjective} and \textit{frequentist} approaches and their similarities, as well as their differences, with the approach in this thesis.

\paragraph{De Finetti ---}
\epigraph{My thesis, paradoxically, and a little provocatively, but nonetheless genuinely, is simply this: PROBABILITY DOES NOT EXIST. The abandonment of superstitious beliefs about the existence of Phlogiston, the Cosmic Ether, Absolute Space and Time,..., or Fairies and Witches, was an essential step along the road to scientific thinking. Probability, too, if regarded as something endowed with some kind of objective existence, is no less a misleading misconception, an illusory attempt to exteriorize or materialize our true probabilistic beliefs.}{\textit{Bruno de Finetti}, 1970}
Bruno de Finetti championed the idea that probabilities are inherently subjective and do not \textit{exist} in any objective sense.  This idea is the overarching theme of what is called the \textit{subjective approach} to probability theory\footnote{Perhaps the main difficulty with de Finetti's approach is the confusion over the dichotomies  \textit{subjective vs. objective} and \textit{ontic vs. epistemic}.  The problem is that often times when one says ``subjective'' (personal) what they really mean is ``epistemic'' (having to do with knowledge).  While we take the view that probabilities are inherently epistemic, they can still be either subjective or objective depending on the subject matter.  For example, as we will see in Chapters six, seven and eight, in quantum mechanics probabilities play a central role and are entirely epistemic quantities, however they are not subjective.  The probabilities which appear in QM are objective in the sense that they are constrained to agree with the results of experiment.}.  De Finetti's idea was that probabilities are assignments of personal belief which represent how one would \textit{bet} on a particular outcome of some event given their current information.  In this way, probabilities are intimately tied to \textit{actions}, which is quite different from what we have developed in this chapter.  In our treatment, the probability calculus determines what one ought to \textit{believe}, but does not suggest any method for determining what one ought to \textit{do} with those beliefs.  Those ideas belong to a different subject altogether, such as \textit{decision theory} or \textit{game theory}.

De Finetti's contemporaries, Ramsey \cite{Ramsey} and Savage \cite{Savage}, developed a similar approach in which the specific scheme of betting with money was replaced with a more general concept of \textit{utility}.  This idea later culminated in utility theories developed by von Neumann and Morgenstern \cite{vonNeumannMorgenstern}.  In this scheme, one is required to specify a \textit{utility function} $u:X\rightarrow\mathbb{R}$, where $X$ is the subject matter.  Depending on the treatment, the utility function is constrained to obey a set of axioms \cite{vonNeumannMorgenstern}.  The central quantity of interest is the expected utility,
\begin{equation}
\langle u(x) \rangle = \sum_{x\in X}u(x)p(x).\label{exputil}
\end{equation}        
As was shown by von Neumann and Morgenstern \cite{vonNeumannMorgenstern}, by assuming that the function $u(x)$ obeys certain properties, such as transitivity and continuity, then a rational agent will make decisions which maximize (\ref{exputil}).

The problem with utilitarian approaches is that they single out universes of discourse which only apply to the behavior of the rational agent.  Utilitarianism is certainly useful for the stock market and even for the design of physical experiments, but it is not useful for physics as a whole.  Even so, the idea that probabilities are inherently subjective was an important step for bringing inference into the 20th century.  One use of the de Finetti approach in physics is \textit{quantum Bayesianism}, or \textit{QBism} \cite{Fuchs}.  

\paragraph{Kolmogorov ---}
Perhaps the most popular approach to probability theory is the one developed by Kolmogorov \cite{Kolmogorov} in which probabilities are \textit{measures} associated to some set $X$.  The formalism was developed as a means for assigning probabilities, or measures, to volumes in spaces such as $\mathbb{R}^n$.  A measure is a function $\mu$ from a $\sigma$-algebra over $X$ to the extended real numbers $\mathbb{R}\cup[-\infty,\infty]$.  A $\sigma$-algebra for $X$ is a subset of the power set $\Sigma \subseteq \mathscr{P}(X)$ which has the following properties,
\begin{axioms}[$\sigma$-algebra]\label{sigmaalgebra}
	Let $X$ be a set.  A collection of subsets $\Sigma \subseteq \mathscr{P}(X)$ is called a $\sigma$-algebra if,
	\begin{enumerate}
		\item The entire set $X$ and the empty set $\emptyset$ are in $\Sigma$,
		\begin{equation}
		X,\emptyset \in \Sigma.
		\end{equation}
		\item $\Sigma$ is closed under complementation,
		\begin{equation}
		\forall A \in \Sigma : X\backslash A \in \Sigma.\label{sigma2}
		\end{equation}
		\item $\Sigma$ is closed under countable unions.
		\begin{equation}
		\forall n \in \mathbb{N} : \forall A_1,\dots,A_n \in \Sigma : \bigcup_{k=1}^nA_k \in \Sigma.\label{sigma3}
		\end{equation}
	\end{enumerate}
\end{axioms}
Due to De Morgans laws axioms (\ref{sigma2}) and (\ref{sigma3}) are equivalent to,
\begin{equation}
\forall n \in \mathbb{N} : \forall A_1,\dots,A_n \in \Sigma : \bigcap_{k=1}^nA_k \in \Sigma.
\end{equation}
These axioms demonstrate that a $\sigma$-algebra is essentially an algebra of sets with the additional requirement that countable unions be in $\Sigma$.  In special cases, $\sigma$-algebras $(\Sigma,\cup,\cap,\backslash)$ are isomorphic to the Boolean algebra $(\mathcal{A},\vee,\wedge,\neg)$ whenever $\mathcal{A}$ is finite\footnote{There are some subtleties with the differences between $\sigma$-algebras and Boolean algebras.  Due to Stone's representation theorem, Boolean algebras are isomorphic to some field of sets, which may only require closure under \textit{finite} unions, i.e. given a finite sequence of subsets $A_i,\dots,A_n$ of the algebra $X$, we have $\forall n < \infty : \bigcup_{i=1}^n A_i \in X$.  Any $\sigma$-algebra however, is \textit{completed} with respect to the union, so that it is closed under a countable number of unions (axiom (\ref{sigma3})).  The reason for axiom (\ref{sigma3}) actually comes from a different direction that is related to the Banach-Tarski paradox \cite{BanachTarski}.  In order to assign a meaningful volume to $n$-dimensional spaces, Banach and Tarski showed that one of four assumptions must be true, either
	\begin{enumerate}
		\item The volume of a set could change under rotation.
		\item One must abandon the axiom of choice.
		\item The volume of the union of two disjoint sets is not necessarily equal to the sum of their volumes.
		\item There are some sets which are \textit{non-measureable}.
	\end{enumerate}  
	Obviously solutions (1) and (3) are undesirable.  The opted for solution has been number (4), for which we need to develop the notion of a \textit{measurable set}.  Axiom (\ref{sigma3}) is explicitly included to avoid the problems of the Banach-Tarski paradox.}.  If the Boolean algebra $(\mathcal{A},\wedge,\vee,\neg)$ is complete and the set $\mathcal{A}$ is countable, then the properties of $(\mathcal{A},\wedge,\vee,\neg)$ are \textit{stronger} than the axioms in \hyperref[sigmaalgebra]{1.1}.  In this sense, a $\sigma$-algebra is a sort of in-between for these two situations.

A space $X$ together with a sigma algebra $\Sigma$ is called a \textit{measurable space} $\langle X,\Sigma\rangle$.  A measure $\mu$ on $\Sigma$ is then defined as,
\begin{axioms}
	Let $\Sigma$ be a $\sigma$-algebra over $X$.  A function $\mu:\Sigma\rightarrow\mathbb{R}\cup[-\infty,\infty]$ is called a measure if,
	\begin{enumerate}
		\item The function $\mu$ is non-negative,
		\begin{equation}
		\forall A \in \Sigma : \mu(A) \geq 0.
		\end{equation}
		\item The empty set has zero measure,
		\begin{equation}
		\mu(\emptyset) = 0.
		\end{equation}
		\item For all countable collections $\{A_k\}_{k=1}^{n}$, the function $\mu$ is countably additive over disjoint subsets,
		\begin{equation}
		\forall n \in \mathbb{N} : \forall \{A_k\}_{k=1}^{n} : \forall k \neq \ell A_k\cap A_{\ell} = \emptyset : \mu\left(\bigcup_{k=1}^{n}A_k\right) = \sum_{k=1}^n\mu(A_k).
		\end{equation}
	\end{enumerate}
\end{axioms}
Finally, whenever $\mu(X) = 1$ we say that $\mu$ is a \textit{probability measure} over $\Sigma$.  While the axioms for $\sigma$-algebras and measures are reasonable, they have the problem of lacking an interpretation.  The set $X$ could be anything, such as an interval $[a,b] \subset \mathbb{R}$.  By assigning a probability measure to subsets of $[c,d] \subset [a,b]$, what exactly is the meaning of $\mu([c,d])$?  What is $\mu([c,d])$ the probability of?  By lacking an interpretation, this approach can lead to confusion.  

There is also the additional problem of not having a definition of a conditional probability a priori.  It has to be put in by hand after the fact by defining,
\begin{equation}
\mu(a|b) \stackrel{\mathrm{def}}{=}\frac{\mu(ab)}{\mu(b)}.
\end{equation}
What's worse about this situation is that the Kolmogorov approach is ill-equipped to handle an updating procedure.  Because there is no interpretation as to what the measure $\mu(a|b)$ is supposed to represent, the procedure in \hyperref[bayesrules]{Section 1.6} is not well defined.   

In our approach we fix the problem of the interpretation from the beginning.  The subject matter is not some generic set $X$, but the set of statements $\tilde{\mathcal{A}}$.  Some may argue that this is less \textit{general} than Kolmogorov because we do not allow the subject matter to be arbitrary sets $X$, but so be it.  The problem with Kolmogorov is that it is \textit{too} general and completely lacks an interpretation, which has the potential to allow fallacious reasoning.  

\paragraph{Frequentism ---}
\epigraph{In the long run we shall all be dead.}{\textit{Keynes}, 1923}
There are many ``interpretations'' of probability theory which claim that probabilities are frequencies, or \textit{expectations of chance} given some large ensemble of repeated events.  The appeal of this interpretation is that one can determine frequencies of random events by simply counting a number of positive occurrences in the ensemble.  Such an assignment of probability is independent of any agent, be they ideally rational or not, and hence one feels that they have constructed some objective notion of probability.

There are several philosophical and practical objections to this idea.  The biggest problem is that frequentism relies on there being a large number of repeated trials, or that the situation in question be repeatable in principle.  Thus, inferences about which there are no such ensembles cannot be defined in the frequentist interpretation.  A question such as, ``what is the probability that there is life on Mars?'' is not even a legitimate question, since there is only one Mars and not an ensemble.  To combat these issues a frequentist may simply agree, that such a question is nonsensical, and that probabilities are \textit{only} defined in the presence of repeatable events.  Such an admission greatly reduces the power of any inference framework, which is undesirable.  On the contrary, they may attempt to justify such questions on the basis of the possible worlds semantics so that one constructs a \textit{virtual ensemble}, but this destroys any notion of objectivity that one may have gained from the interpretation in the first place.

The frequentist interpretation also tends to be rather vague, despite its claim of being objective.  One source of vagueness is the meaning of the word \textit{random}.  The popular choice for defining random variables is to define a \textit{probability space} $P$, which is a triple $P = \{\Omega, \mathcal{F}, \mu\}$, where $\Omega$ is a set of mutually exclusive and exhaustive outcomes and $\mathcal{F} \subseteq \mathscr{P}(\Omega)$ is a collection of subsets of $\Omega$ called the \textit{event space}.  The event space $\mathcal{F}$ is a \hyperref[sigmaalgebra]{$\sigma$-algebra} generated by $\Omega$ and hence is equivalent to a Boolean algebra in special cases.  One then defines a random variable $X:(\Omega,\Sigma)\rightarrow (E,T)$ as a \textit{measureable function}\footnote{A measureable function $f:X\rightarrow Y$ between two measurable spaces $(X,\Sigma)$ and $(Y,T)$ preserves the structure of the underlying $\sigma$-algebras, i.e. the preimage of any $B\in T$ is in $\Sigma$.} from the measurable space $(\Omega,\Sigma)$ to some other measureable space $(E,T)$.  Then, the measure associated to whether the function $X$ assigns values to some subset $S \subseteq E$ in the target is defined by,
\begin{equation}
\mu\left(X \in S\right) \stackrel{\mathrm{def}}{=} \mu\left(\left\{\omega \in \Omega\,\middle|\, X(\omega) \in S\right\}\right).
\end{equation} 
In this sense, $X$ is just a function which groups the elements of $\Omega$ into particular subsets.  An analogous construction in $(\tilde{\mathcal{A}},\wedge,\vee,\neg)$ would be a homomorphism $\mathcal{X}:\tilde{\mathcal{A}}\rightarrow\tilde{\mathcal{B}}$ to another Boolean algebra $(\tilde{\mathcal{B}},\wedge,\vee,\neg)$ such that for every element $b \in \tilde{\mathcal{B}}$, the preimage $\mathcal{X}^{-1}(b) \in \tilde{\mathcal{A}}$.  While all of this is perfectly well defined, it has already obscured the subject matter enough to make the effort seem not worthwhile.    

One of the biggest proponents of the frequentist view was Fisher \cite{Fisher}, who originally started out as a Bayesian but was unable to deal with the problem of the assignment of the prior \cite{Jaynes2}.  Fisher then became obsessed with trying to remove all subjectivity from inductive inference, which is practically impossible.  No matter what one does, there will always be some subjective component left over in any inference where one has incomplete information.  Savage refers to Fisher's attempt as ``a bold attempt to make the Bayesian omelet without breaking the Bayesian egg'' \cite{Savage1}.

\paragraph{Quantum probabilities? ---}
There have been attempts at arguing that classical inductive inference is incomplete, since quantum theory seems to demonstrate a violation of the sum and product rules.  As was shown in \cite{CatichaBook}, these arguments are simply a misunderstanding of the problem.  We will demonstrate the results of the argument here.

The prototypical example for why classical probability theory is incompatible with quantum mechanics will make use of the superposition principle and the role of interference effects.  Consider the gedanken two slit experiment in which a particle $q$ is prepared at a source $s$ with some velocity in the direction of two slits $a$ and $b$.  The particle then has some probability of going through either slit $a$ or slit $b$ and then landing somewhere on the screen on the other side at location $x$, 
\begin{figure}[H]
	\centering
	\includegraphics[width=.6\linewidth]{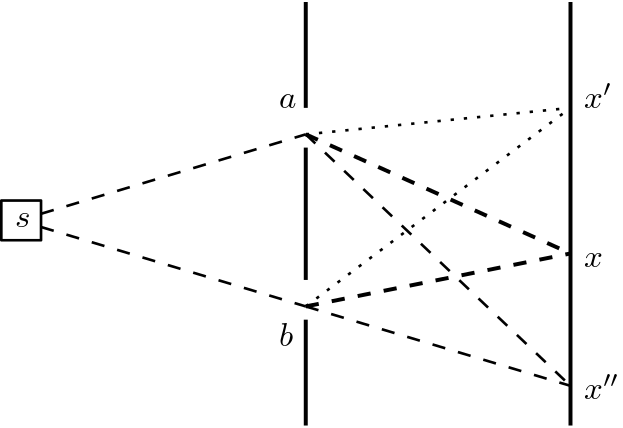}
	\caption{A two slit experiment where a particle is prepared at source $s$ and then passes through either slit $a$ or slit $b$ and arrives at some location on the screen.}
\end{figure}
In quantum mechanics, the states that describe the particle going through slit $a$ and $b$ are represented by the wave functions, $\psi_a$ and $\psi_b$ respectively.  The state that describes the particle arriving at $x$ is then given by the superposition $\psi_{ab} \propto \psi_a + \psi_b$, where the probability of detection at $x$ is,
\begin{equation}
p_{ab}(x) = |\psi_{ab}|^2 \propto |\psi_a + \psi_b|^2 = |\psi_a|^2 + |\psi_b|^2 + 2\mathrm{Re}[\psi_a^*\psi_b].
\end{equation}
The term $|\psi_a|^2$ corresponds to the probability of detection at $x$ when the particle passes through slit $a$ \textit{and} when slit $b$ is closed.  Likewise $|\psi_b|^2$ is the probability that the particle passes through slit $b$ when slit $a$ is closed.  Since the particle must pass through either slit $a$ or slit $b$, it looks like quantum mechanics violates the sum rule since for these mutually exclusive events we find,
\begin{equation}
p_{ab}(x) \neq |\psi_a|^2 + |\psi_b|^2.\label{wrongq}
\end{equation}
The problem, which was originally suggested by Feynman \cite{Feynman}, is actually more subtle.  A counter argument was given by Koopman \cite{Koopman}, as well as in Ballentine \cite{Ballentine} and Caticha \cite{CatichaBook}.  Following Caticha, consider first that we enumerate the various statements in the problem as follows,
\begin{itemize}
	\item $s = $``particle $q$ is generated at the source $s$.
	\item $a = $``slit $a$ is open'' and $\neg a = $``slit $a$ is closed.''
	\item $b = $``slit $b$ is open'' and $\neg b = $``slit $b$ is closed.''
	\item $\alpha = $``particle $q$ goes through slit $a$.''
	\item $\beta = $``particle $q$ goes through slit $b$.''
	\item $x = $``particle $q$ is detected at $x$.''
\end{itemize}
The statement of interest is then, $x | s a b$, or,
\begin{quotation}``The particle $q$ is detected at $x$, given that it was generated at source $s$ and slit $a$ and slit $b$ are both open.''
\end{quotation}
A conjunction of interest will be $[\alpha \vee \beta] x|s a  b$, whose probability can be determined as,
\begin{equation}
p([\alpha \vee \beta] x|s a  b) = p(\alpha  x|s a  b) + p(\beta  x|s a  b) - p(\alpha  \beta  x|s a  b).
\end{equation}
Since the particle cannot go through both slits at the same time, the conjunction vanishes,
\begin{equation}
p(\alpha\beta x|s a  b) = 0.
\end{equation}
Thus we have,
\begin{equation}
p([\alpha \vee \beta] x|s a  b) = p(\alpha  x|s a  b) + p(\beta  x|s a  b).
\end{equation}
Likewise, we can also write $p([\alpha\vee\beta] x|s a  b)$ using the product rule,
\begin{equation}
p([\alpha\vee\beta] x|s a  b) = p(x|s a  b)p([\alpha \vee \beta]|s  a  b  x).
\end{equation}
However, since the particle must go through either slit $a$ or $b$ if it was detected at $x$, then the second term $p([\alpha \vee \beta]|s  a  b  x) = $.  Thus we have,
\begin{equation}
p(x|s a  b) = p(\alpha  x|s a  b) + p(\beta  x|s a  b).\label{rightq}
\end{equation}
Finally, the corresponding equation for (\ref{wrongq}) is,
\begin{equation}
p(x|s a  b) \neq p(\alpha x|s a  \neg b) + p(\beta x|s \neg a  b).
\end{equation}
But, as we have shown in (\ref{rightq}), this discrepancy is perfectly consistent with the results of probability theory, hence there is no contradiction.  However, one may be tempted to use the product rule in (\ref{rightq}) so that we have,
\begin{equation}
p(x|s a  b) = p(\alpha|s a  b)p(x|s a  b  \alpha) + p(\beta|s a  b)p(s a  b  \beta).\label{rightq2}
\end{equation}  
Our classical intuition tells us that, if the particle passes through slit $a$, then the fact that slit $b$ is open makes no difference and hence one can make the substitution,
\begin{equation}
p(x|s a  b  \alpha) \rightarrow p(x|s a  \neg b  \alpha),
\end{equation}
and likewise for the second term in (\ref{rightq2}).  Thus, one would be led to the inference,
\begin{align}
p(x|s a  b) &= p(\alpha|s a  \neg b)p(x|s a  \neg b  \alpha) + p(\beta|s \neg a  b)p(s \neg a  b  \beta)\nonumber\\
&= p(\alpha x|s a  \neg b) + p(\beta x|s \neg a  b),\label{wrong2}
\end{align}
which is incorrect.  This demonstrates that our classical intuition about whether distant objects can influence quantum systems is wrong, quantum mechanics can be non-local.

\subsection{Summary}

In this chapter we constructed a theory of inference by designing the rules for probabilities as degrees of rational belief.  By constraining representations (\hyperref[section321]{Section 3.2.1}) of elements of the Boolean algebra over extended proposition space (\hyperref[section26]{Section 2.6}) to adhere to certain criteria, we arrived at the sum and product rules for probabilities.  We introduced the \textit{principle of minimal updating} (\hyperref[section36]{Section 3.6}) in order to derive Bayes' rule.  In the next chapter we will generalize Bayes' rule to situations in which one wishes to update their beliefs in the presence of constraints.

\section{Entropic inference}\label{chapter4}
\epigraph{The theory of probability, if it is to be useful at all, demands a theory for updating probabilities.}{\textit{Ariel Caticha}, 2020}
At the end of the previous chapter, we had constructed an inductive inference for dealing with states of partial knowledge and a calculus for manipulating them in a way consistent with some rationale, $\mathcal{R}$.  This answered the first of the two questions posed in the Introduction, the second of which -- \textit{how does one update their beliefs in the presence of new information} -- we will address in this chapter.  As the quotation above suggets, any inference framework would be remiss if it didn't also contain a method for updating ones beliefs whenever new information becomes available.  

The rationale for inductive inference was developed to express several desirable properties including \textit{Universal Applicability}, \textit{Consistency}, \textit{Practicality} and \textit{Parsimony}.  We proceed in much the same way by invoking a similar rationale, $\mathcal{R}_{S}$, for the purposes of updating.  We have already seen some instances of this with the \textit{Principle of Minimal Updating} which was imposed in the derivation of Bayes' rule at the end of chapter two.  As systematic as Bayes' rule is, it only provides a way for updating beliefs when new information comes in the form of data.  This raises the central question of this chapter: How should we update our beliefs when new information does not come in the form of data?  Before we can address that question, we must first clarify what is meant after all by \textit{information}.

The tool for updating that we will construct is commonly referred to as the \textit{Maximum Entropy method} (ME) \cite{CatichaBook} which has its roots in \textit{MaxEnt} \cite{Jaynes}.  Jaynes developed MaxEnt for the purposes of assigning priors from arbitrary constraints \cite{Jaynes1,Jaynes2}.  While this helps to identify constraints as being a \textit{type of information} different from data, MaxEnt is ultimately not a tool for updating, hence it cannot take into account arbitrary priors \cite{Jaynes3}.  On the other hand, Bayes' rule allows for arbitrary priors but does not allow for arbitrary constraints.  The goal of constructing the ME method will be to unify both Bayes' rule and MaxEnt into a complete method for inductive inference -- which we call \textit{entropic inference} -- so that one can update arbitrary priors on the basis of arbitrary constraints.  This is done by generalizing the PMU.      

In most circles, the words \textit{entropy} and \textit{information} are invoked in the context of \textit{Information Theory} \cite{CoverThomas_Book}.  Originally developed by Shannon \cite{Shannon_Book}, Information Theory is mainly concerned with sending bits of information across communication channels and analysing the degrading effects of noise.  Through the use of various axioms, Shannon constructed his entropy as a means of quantifying an \textit{amount} of information that is missing from a probability distribution.  While certainly useful in the context of communications theory, the Shannon entropy -- which was later adopted in the context of MaxEnt -- can only refer to discrete variables; for continuous variables it is undefined.  One can get around this problem by introducing an invariant measure into the definition so that the entropy takes the form of the \textit{Kullback-Leibler divergence}\footnote{The KL divergence was actually first written down by Gibbs \cite{Gibbs}.} (KL).  Despite this, the Shannon entropy was constructed using axioms which have nothing to do with updating beliefs.  As pointed out by Caticha \cite{CatichaBook}, this presents a serious problem since one could have introduced different axioms which would lead to different entropies (such as the Renyi \cite{Renyi} or Tsallis \cite{Tsallis1} types).  Given all of the possibilities, which functional should we choose?

The revolution in approaches to using entropy -- and the answer to the previous question -- begins with a paper by Shore and Johnson \cite{ShoreJohnson} in which they derive the functional form of the relative entropy for the purposes of updating.  Their approach was to define a set of axioms that constrain the updating procedure, which they then use to determine the functional form of the entropy via \textit{eliminative induction}.  This approach of axiomitizing the updating process, rather than the measure itself, gives incredible clarity to the subject of inference -- while previous justifications for using entropy relied on discussing the properties of the functional, Shore and Johnson instead construct the entropy as the tool for incorporating new information.  This point is made clear in the introduction to their paper,

\begin{quotation}
	``Most [approaches to entropy] are based on a formal description of what is required of an information measure; none are based on a formal description of what is required of a method for taking information into account.'' (Shore and Johnson, 1980)
\end{quotation}

While Shore and Johnson's approach provided a systematic way of constructing an entropy functional for the purposes of updating, they had not addressed the question of why one should use a variational principle in the first place, i.e. \textit{why the max in MaxEnt}.  The explanation for this came in Skilling's paper ``The Axioms of Maximum Entropy'' \cite{Skilling} in which he makes explicit the desire for a tool which \textit{ranks} posterior distributions with respect to some \textit{order of preference}.  In \cite{Skilling} he states\footnote{In his paper \cite{Skilling}, Skilling was mainly concerned with reconstructing images with incomplete data.},
\begin{quotation}
	``We aim to provide an image [posterior distribution] which is the `best' according to an agreed criterion. This involves setting up a ranking procedure which determines which of two images is `better'. To avoid circularity, and to ensure that there is always some image which is not `bettered' by any other, we impose the transitivity requirement,
	(f better than g) and (g better than h) $\Rightarrow$ (f better than h).'' (Skilling, 1988)
\end{quotation}  
Once we specify that what we desire is a tool for ranking probabilities -- which in turn is the relative entropy -- then maximizing it to find the best posterior is completely natural, intuitive and no longer requires any ad hoc justification.  

Another set of advancements for entropic methods of inference comes from Caticha \cite{CatichaBook}.  His contributions to the subject include answering questions such as, \textit{what constitutes information} and \textit{are entropic and Bayesian methods compatible}, both of which we will discuss in this chapter.  Despite showing that Bayes' rule is a special case of the ME method, his work on the subject \cite{Caticha_Giffin_1,Caticha_Giffin_2} is perhaps still largely unappreciated.  Before he and his student A. Giffin published their results, it was largely believed that Bayesian and entropic methods referred to two different and unconnected (or even incompatible) ideas. 

Caticha's original paper \cite{Caticha_Entropy_3} introduced the \textit{Principle of Minimal Updating} (PMU) which we had used for constraining Bayes' rule in the previous chapter.  The idea, which is central to the Bayesian philosophy, allowed Caticha to design the relative entropy using only three design criteria, whereas previous approaches (such as Shore and Johnson, Skilling) required at least four.  As we already mentioned, the design derivation in this thesis follows closely the work of Caticha by adopting the PMU and the same design criteria found in \cite{CatichaBook,Caticha_Entropy_3,Caticha_Entropy_4,Caticha_Entropy_1,Caticha_Entropy_2}.

Apart from answering the question of how we update from a prior to a posterior, we also wish to construct a tool for the purposes of \textit{ranking} joint distributions with respect to their \textit{correlations}.  Correlations are a central component to any study involving inference, so the desire for quantitative measures is felt universally.  Such measures are plentiful in information theory, however here our goal is to design one according to first principles.  A design derivation for correlation functionals had not been attempted before K. Vanslette and myself published our results on the matter in the journal Entropy \cite{CarraraVanslette}.  We will follow the design derivation of that paper in this thesis.

As we will see, the \textit{type} of correlation functional $I[\mathbf{P}(\tilde{\mathcal{A}})]$ one arrives at depends on a choice of the \textit{splits} within the proposition space $\tilde{\mathcal{A}}$, and thus the functional we seek is $I[\mathbf{P}(\tilde{\mathcal{A}})]\rightarrow I[\mathbf{P}(\tilde{\mathcal{A}}),\tilde{\mathcal{A}}]$.  For example, if one has a proposition space $\tilde{\mathcal{A}} = \tilde{\mathcal{A}}_1\times\dots\times\tilde{\mathcal{A}}_N$, consisting of $N$ variables, then one must specify \textit{which} correlations the functional $I[\mathbf{P}(\tilde{\mathcal{A}}),\tilde{\mathcal{A}}]$ should quantify.  Do we wish to quantify how the variable $\tilde{\mathcal{A}}_1$ is correlated with the other $N-1$ variables?  Or do we want to study the correlations \textit{between} all of the variables?  In our design derivation, each of these questions represent the extremal cases of the family of quantifiers $I[\mathbf{P}(\tilde{\mathcal{A}}),\tilde{\mathcal{A}}]$, the former being a \textit{bi-partite} correlation (or mutual information) functional and the latter being a \textit{total} correlation functional.

In the main design derivation we will focus on the the case of \textit{total correlation} which is designed to quantify the correlations between every variable subspace $\tilde{\mathcal{A}}_i$ in a set of variables $\tilde{\mathcal{A}} = \tilde{\mathcal{A}}_1\times\dots\times\tilde{\mathcal{A}}_N$.  We suggest a set of design criteria (DC) for the purpose of designing such a tool. These DC are guided by the \textit{Principle of Constant Correlations} (PCC), which states that ``the amount of correlations in $\mathbf{P}(\tilde{\mathcal{A}})$ should not change unless required by the transformation, $(\mathbf{P}(\tilde{\mathcal{A}}),\tilde{\mathcal{A}})\stackrel{*}{\rightarrow}(\mathbf{P}(\tilde{\mathcal{A}})',\tilde{\mathcal{A}}')$.'' This implies our design derivation requires us to study equivalence classes of $[\mathbf{P}(\tilde{\mathcal{A}})]$ within statistical manifolds $\mathbf{\Delta}$ under the various transformations of distributions $\mathbf{P}(\tilde{\mathcal{A}})$ that are typically performed in inference tasks. We will find, according to our design criteria, that the global quantifier of correlations we desire in this special case is equivalent to the total correlation \cite{Watanabe}.

Once one arrives at the TC as the solution to the design problem, one can then form special cases such as the \textit{mutual information} \cite{CoverThomas_Book} or, as we will call them, any \textit{$n$-partite information} NPI which measures the correlations shared between generic $n$-partitions of the proposition space.  The NPI can also be derived using the same principles as the TC except with one modification, as we will discuss in \hyperref[npartite]{Section V}.

The special case of NPI when $n=2$ is the \textit{bipartite} (or mutual) information, which quantifies the \textit{amount} of correlations present between two subsets of some proposition space $\tilde{\mathcal{A}}$.  Mutual information (MI) as a measure of correlation has a long history, beginning with Shannon's seminal work on communication theory \cite{Shannon_Book} in which he first defines it.  While Shannon provided arguments for the functional form of his entropy \cite{Shannon_Book}, he did not provide an axiomatization of MI.

\paragraph{Information?}
\epigraph{Information is what information does, it changes your mind.}{\textit{Ariel Caticha}, 2020}
Information can mean many different things depending on the context in which it is used.  For example, the subject of \textit{Information Theory} considers information to be something that is transmitted from a \textit{source} to some \textit{destination} through a \textit{channel} and this information can be measured, i.e. we can quantify an \textit{amount of information}.  Information theory however has little to do with the beliefs of rational agents, much less with updating them.  Hence, this view of information is unsatisfactory for our purposes.   

We also often hear that information is ``physical,'' that physical systems ``carry'' information.  This view was likely inspired by information theory in the interests of understanding the limitations of computation.  Questions such as, \textit{what is the minimum amount of energy required to store a single bit of information}, or, \textit{how much energy is expended when we throw away information}, suggest that a piece of information must, in the very least, be represented by something physical.  This idea was pushed further by Landauer who first argued it in his 1991 article ``Information is physical'' \cite{Landauer1} that has inspired many others in the field.   He later wrote \cite{Landauer2},	
\begin{quotation}
	``Information is not an abstract entity but exists only through a physical representation, thus tying it to all the restrictions and possibilities of our real physical universe.'' (Landauer, 1999) 
\end{quotation}
Essentially, all information must somehow be coupled to the physical world in order to be transmitted from one agent to another, whether it be words inscribed on rock or the electromagnetic waves sent through an ethernet cable.  Therefore, the distinction between some deeper understanding of what constitutes information and its physical representation become unimportant -- one simply adopts the pragmatic view that information is physical itself so that no further abstraction is needed.

John Wheeler suggested a similar idea in his paper ``Information, Physics, Quantum: The Search for Links,'' in which he coined the term ``it from bit'' \cite{Wheeler}.  The idea of ``it from bit'' is that every constituent of existence can be described by a series of yes/no questions, such as \textit{is the position of the particle `x'}.  By answering either ``yes'' or ``no'' to such a question, one is associating a single classical bit of information to the description of a physical object.  Consequently, every physical entity is defined by a set of unique answers of ``yes'' and ``no.''  While this topic is certainly interesting, the idea that information is physical turns out to be too restrictive for the purposes of inference.

Here we adopt the pragmatic view that information is an epistemic quantity.  In this sense, information is what causes us to change our beliefs -- it is what we seek when we ask a question.  Indeed, since the inductive inference framework was developed on the basis of reasoning with \textit{incomplete} information, it is only natural to seek more complete information, with the goal of improving our ability to reason.  Following Caticha \cite{CatichaBook}, we state the defining characteristics of information, 
\begin{quotation}
	``Information is what affects and therefore constraints rational beliefs.  Information is what forces a change in beliefs.'' (Caticha, 2020)	
\end{quotation}	
In this epistemic definition, there is no notion of an \textit{amount} of information that can be measured.  New information is simply a set of constraints on our beliefs and the constraints specify what we should believe given that we were ideally rational.  

There is an instructive analogy of our definition of information with that of physics, which was presented in \cite{CatichaBook}.  In classical mechanics, a system is defined by its \textit{quantity of motion}, its momentum.  The systems state of motion changes whenever it is acted upon by forces -- an initial state $\vec{p}_i$ changes to some final state $\vec{p}_f$ under the influence of some external force $\vec{F}_{\mathrm{ext.}}$.  In inference a state is defined by \textit{degrees of belief}, a set of probabilities $\mathbf{P}(\tilde{\mathcal{A}})$.  Information then, is the \textit{force} which causes a change from some initial set of beliefs, $\mathbf{Q}(\tilde{\mathcal{A}})$, to a final set of beliefs, $\mathbf{P}(\tilde{\mathcal{A}})$.  In this sense, updating beliefs is a form of dynamics.  A generalization to this dynamics is discussed in chapter four, and the application of its generalization to physics as an \textit{Entropic Dynamics} is discussed in part two of this thesis.

\paragraph{Correlations? ---}
The term correlation may be defined colloquially as being a relation between two or more ``things". 
While we have a sense of what correlations are, how do we quantify this notion more precisely?  If correlations have to do with ``things'' in the real world, are correlations themselves ``real?''  Can correlations be ``physical?''  In the context of inference, correlations are broadly defined as being statistical relationships between propositions.  In this thesis we adopt the view that whatever correlations may be, their effect is to influence our beliefs about the natural world.  In this sense, correlations are just a particular type of information.  More specifically,\\

\textit{Correlations are the pieces of information that constrain conditional beliefs}.\\

The design for correlation quantifiers can be motivated by first examining the more intuitive extreme cases.  Perhaps the most intuitive situation occurs when two systems are actually independent -- the joint system contains no correlations whatsoever.  We would expect in this case that our joint beliefs would factor, $p(a,b) = p(a)p(b)$, since knowing either the value of $a$ or $b$ tells us nothing about the other.  The other extreme happens when $a$ and $b$ are completely determined by each other -- the joint system is maximally correlated.  In this case, we would expect our joint beliefs to behave as $p(a,b) = p(a)\delta(b - f_a(a)) = p(b)p(a - f_b(b))$ where the delta functions signify that $b$ is some function of $a$, $b = f_a(a)$, and that this function is invertible, $f_a = f_b^{-1}$.  While we understand both of these edge cases, it is not entirely clear how one quantifies anything that happens in the middle.  If we are given two joint distributions $p_1(a,b) = p_1(a)p_1(b|a)$ and $p_2(a,b) = p_2(a)p_2(b|a)$, how can we decide if one represents a more correlated state than the other?

Quantifying correlations as part of entropic inference is fundamentally different from the procedure of updating.  While they will both involve objects which take the form of relative entropies, the correlation quantifiers are not designed as tools for updating.  They serve the weaker function of simply providing a \textit{ranking} of joint distributions with respect to their correlations.  There is no variational principle which guides the construction of the correlation quantifiers, and hence more details are needed in order to construct them.

In Section 2 we will outline the basic principles for constructing a tool for updating, as well as a tool for ranking correlations.  In Section 3 we will enumerate various design criteria inspired by these principles and then go on to demonstrate the solution to the updating problem.  We will also breifly discuss the special case of Bayes' rule from the ME method.  Section 4 will be similar to Section 3 in that we will specify design criteria for the purposes of deriving a set of correlation quantifiers.  We conclude with Section 5 which discusses various results surrounding \textit{sufficiency} that were developed in \cite{CarraraVanslette}.

\subsection{The design of entropic inference}
Now that we have settled the question as to what constitutes information, we need to determine a systematic way of incorporating that new information into our beliefs -- we must answer the question of how does one update their beliefs in the presence of new information?  In the spirit of chapter two, we will define a set of reasonable criteria that our entropic inference must adhere to.  This defines a rationale for entropic inference, which we label $\mathcal{R}_S$.  

Since entropic inference is concerened with updating beliefs, we can take some suggestions for criteria of rationality from the rationale of inductive inference $\mathcal{R}$.  For one, since the design of inductive inference is agnostic to the underlying subject matter for which it is used, so too should be the method of updating.  The updating scheme must apply to every situation, otherwise how would one know which updating scheme to use?

\begin{principle}[Universality]\label{universality}
	The method for updating beliefs should be independent of the subject matter, it must have universal applicability.
\end{principle}

The criterion of universal applicability has profound implications.  If our method was required to behave differently if we were updating beliefs about the stock market as opposed to physics, then we would have broken the rationale set forth in the inductive inference framework, which is unacceptable.  This means that our beliefs in the inner workings of the natural world must be subject to the same dynamical rules as our beliefs in anything else.  

Influenced by the discussion of section ... in chapter two, we adopt a principle of parsimony.  Specifically, we adopt the view that past information is valuable and that one should not throw away what was once learned unless the new information renders it obsolete.  This view is expressed in a slightly altered version of the principle of minimal updating that was given in chapter two.  

\begin{principle}[Parsimony: Minimal Updating]
	Beliefs should be updated only to the extent required by the new information.
\end{principle}

The principle of minimal updating (PMU) automatically imposes a level of objectivity into the inference framework.  For example, consider the special case for when there is no new information available.  The PMU suggests that any ideally rational agent (Ira) should not change their beliefs.  To do the opposite would violate common sense -- rational agents should not change their minds arbitrarily.  Another objective component of the PMU derives from the same situation.  Because we are stipulating when an Ira is \textit{not} supposed to update their beliefs, the act of not updating is unique, i.e. there are many ways to change something, but only one way to keep it the same.  

Like with the rationale for inductive inference, we impose a principle of consistency.  This principle is inspired by one of the central aspects of scientific inquiry, namely independence.  A basic requirement for any scientific model to be successful is that all relevant variables are taken into account and that whatever was left out, the rest of the universe, does not matter.  In other words, in order to do science we must be able to understand parts of the universe without understanding its entirety.  The central idea is that we can focus on a particular system of interest while neglecting everything else because the two systems are \textit{statistically independent}, i.e. they are uncorrelated.  Following Caticha \cite{Caticha_towards}, we adopt the requirement that,
\begin{quotation}
	``Whenever two systems are a priori believed to
	be independent and we receive information about one it should not matter if
	the other is included in the analysis or not.'' (Caticha, 2014)
\end{quotation}
This idea can be stated in the form of a consistency requirement.
\begin{principle}[Consistency: Independence]\label{consistency}
	Whenever systems are known to be independent, it should not matter whether the analysis treats them jointly or separately.
\end{principle}
In other words, if there exist multiple ways to conduct the same inference, the results better agree.  This sentiment was declared in Shore and Johnson \cite{ShoreJohnson} whose axioms were based on a fundamental principle of consistency,

\begin{quotation}
	``Since the maximum entropy principle is asserted as a general method of inductive inference, it is reasonable to require that different ways of using it to take the same information into account should lead to consistent results.'' (Shore and Johnson, 1980)
\end{quotation} 

These three principles, \textit{Universality}, \textit{Parismony} and \textit{Consistency}, form the basis of our rationale $\mathcal{R}_S$ which we will use to constrain our design of entropic inference.  

\paragraph{Concerning correlations ---} To design our correlation quantifiers, we must also decide on a set of criteria in the form of a rationale, $\mathcal{R}_I$.  Certainly we would expect any tool for quantifying correlations to be of universal applicability, so that the \textit{Universality} principle (\ref{universality}) is also included in $\mathcal{R}_I$.  It is also reasonable to expect that the \textit{Consistency} principle be included in $\mathcal{R}_I$ since properties of independence play a crucial role in the study of correlations.  

Since correlation quantifiers are not tools for updating, their design does not warrant the PMU.  Instead of focusing on what happens when we update our beliefs, we can consider a broader class of transformations on probability distributions that was discussed in \cite{CarraraVanslette}.  These include \textit{coordinate transformations}, \textit{marginalization}, \textit{products} and entropic updating as a special case.  In certain special cases of these transformations, we expect correlations to remain unchanged, and hence we impose that they are not.  This notion can be summarized in the following principle which is similar to the PMU,  

\begin{principle}[Constant Correlations]
	The amount of correlations present in a state of knowledge should not change unless required by a particular transformation.
\end{principle}

Much like the PMU, the \textit{Principle of Constant Correlations} (PCC) imposes a level of objectivity in the design of correlation quantifiers since it instructs us when \textit{not} to make changes to our beliefs about correlations.  In one sense, it is more general than the PMU since it concerns transformations which are of a more general form than entropic updating.  On the other hand, correlations occupy only a subset of the kind of information that is present in any set of beliefs and in this way it refers to a special case.  

In conclusion, the rationale $\mathcal{R}_I$ is identical to $\mathcal{R}_S$ except for swapping the PMU with the PCC.  This is expected, since each tool is designed with a different purpose in mind.  

\paragraph{The universe of discourse ---}
The arena of any inference task consists of two ingredients, the first of which is the subject matter, or what is often called \textit{the universe of discourse}.  This refers to the actual propositions that one is interested in making inferences about.  Propositions tend to come in two classes, either \textit{discrete} or \textit{continuous}.  Discrete proposition spaces will be denoted by caligraphic uppercase latin letters, $\mathcal{X}$, and the individual propositions will be lowercase latin letters $x_i \in \mathcal{X}$ indexed by some variable $i = \{1,\dots,|\mathcal{X}|\}$, where $|\mathcal{X}|$ is the number of distinct propositions in $\mathcal{X}$.  

From this chapter onward we will generally be interested in universes of discourse which are represented by continuous spaces $\mathbf{X}$, such as the position of a particle in $\mathbb{R}^3$.  Thus, the spaces of statements $\tilde{\mathcal{A}}$ as well as the web of beliefs $\mathbf{P}(\tilde{\mathcal{A}})$ become uncountable sets.  In accordance with the notation of Chapter 2 Section ..., we will denote probability distributions over $\mathbf{X}$ as $p(x)$ and priors as $q(x)$.  For generic distributions we will use the Greek $\rho$ for posteriors and $\varphi$ for priors.

The second ingredient that one needs to define for general inference tasks is the space of models, or the space of probability distributions which one wishes to assign to the underlying proposition space.  These spaces can often be given the structure of a manifold, which in the literature is called a \textit{statistical manifold} \cite{CatichaBook}\footnote{We will discuss in more detail the structure of statistical manifolds in the next chapter, but for now it suffices to think of $\sm$ as simply a set of distributions.}.A statistical manifold $\sm$, is a manifold in which each point $p(x) \in \sm$ is an entire probability distribution, i.e. $\sm$ is a space of maps from subsets of $\mathbf{X}$ to the interval $[0,1]$, $p(x):\mathcal{P}(\mathbf{X})\rightarrow[0,1]$.  The notation $\mathcal{P}(\mathbf{X})$ denotes the \textit{power set} of $\mathbf{X}$, which is the set of all subsets of $\mathbf{X}$, and has cadinality equal to $|\mathcal{P}(\mathbf{X})| = 2^{|\mathbf{X}|}$.

The \textit{correlations} present in any distribution $p(x)$ necessarily depend on the conditional relationships between various propositions.  For instance, consider the binary case of just two proposition spaces $\mathbf{X}$ and $\mathbf{Y}$, so that the joint distribution factors,
\begin{equation}
p(x,y) = p(x)p(y|x) = p(y)p(x|y).\label{firstjoint}
\end{equation}
The correlations present in $p(x,y)$ will necessarily depend on the form of $p(x|y)$ and $p(y|x)$ since the conditional relationships tell us how one variable is statistically dependent on the other.  As we will see, the correlations defined in the above eq. are quantified by the \textit{mutual information}.  For situations of many variables however, the global correlations are defined by the \textit{total correlation}, which we will design first.  All other measures which break up the joint space into conditional distributions (including (\ref{firstjoint})) are special cases of the total correlation.  

\paragraph{A tool for updating ---}
We wish to design a tool for the purposes of updating an entire probability distribution $\varphi\xrightarrow{\star} \rho$ when new information comes in the form of constraints.  This will be done through the process of \textit{eliminative induction}, which was discussed in the Introduction of Part I.  The goal is to construct a tool that will allow us to systematically search for a preferred posterior distribution.  The procedure, which was first proposed by Skilling \cite{Skilling}, is quite simple: first, rank the possible posterior distributions according to some \textit{order of preference}, and then pick the one which ranks highest.  

While simple, the design specification suggested by Skilling is highly constraining.  For one, it implies that our preferences in posterior distributions should be transitive, i.e. if distribution $\rho_1$ is preferred to distribution $\rho_2$ and distribution $\rho_2$ is preferred to $\rho_3$ then $\rho_1$ must also be preferred to $\rho_3$.  Similar to the construction of probabilities as degrees of belief, we implement a ranking by assigning a real number to each posterior distribution $S[p]$ in such a way that if $\rho_1$ is preferred to $\rho_2$ then $S[\rho_1] > S[\rho_2]$.

As pointed out by Caticha \cite{CatichaBook,Caticha_Entropy_3}, the choice to rank distributions implies that the updating scheme will take the form of a variational principle.  This principle, which we call the \textit{Method of Maximum Entropy} (ME), allows us to determine the preferred posterior distribution $\rho$, by maximizing the entropy functional $S[\rho]$.  Since the ranking of posteriors must be done with respect to a prior $\varphi$, the desired entropy functional must depend also on the prior, $S[\rho] \rightarrow S[\rho,\varphi]$.  In this sense, the entropy functional ranks the posterior $\rho$ \textit{relative} to the prior $\varphi$, and hence $S[\rho,\varphi]$ is typically called the \textit{relative entropy}.
\begin{define}
	Let $\varphi$ be a web of prior beliefs and $\rho$ be a posterior web.  The order of preference given to $\rho$ is quantified by the real number,
	\begin{align}
	S:\Delta\times\Delta &\rightarrow \mathbb{R}\nonumber\\
	(\rho,\varphi) &\mapsto S[\rho,\varphi],
	\end{align}
	which is called the relative entropy.
\end{define}

To find the desired entropy functional, we use the process of eliminative induction by specifying certain \textit{design criteria}.  These design criteria (DC) specify the behavior of the ME method in certain special cases in which we know the answer.  Amazingly, by specifying a small set of DC one can arrive at a general functional form for the entropy.  This procedure has been exercized many times in the past \cite{ShoreJohnson,Skilling,Jaynes,CatichaBook,VansletteThesis} and each derivation has become increasingly simpler.  Shore and Johnson \cite{ShoreJohnson} and Skilling \cite{Skilling} each give four design critera.  Caticha \cite{CatichaBook} reduced the required DC to three, while Vanslette \cite{VansletteThesis,Vanslette} used only two\footnote{For a review on some of the various approaches to designing entropy functionals see Csiszar \cite{Csiszar}.}.  In the following sections we will mostly follow the procedures of Caticha and Vanslette, while also pointing out the more historical versions.

\paragraph{A tool for ranking correlations ---}
In order to better motivate the construction of correlation functionals, it will be instructive to consider what happens when we change coordinates through some kind of generic coordinate transformation.  A coordinate transformation $f:\mathbf{X}\rightarrow \mathbf{X}'$, is a special type of transformation of the proposition space $\mathbf{X}$ that respects certain properties.  It is essentially a continuous version of a reparameterization\footnote{A reparameterization is an isomorphism between discrete proposition spaces, $g:\mathcal{X}\rightarrow\mathcal{Y}$ which identifies for each proposition $x_i \in \mathcal{X}$, a unique proposition $y_i \in \mathcal{Y}$ so that the map $g$ is a bijection.}.  For one, each proposition $x \in \mathbf{X}$ must be identified with one and only one proposition $x'\in \mathbf{X}'$ and vice versa.  This means that coordinate transformations must be bijections on proposition space -- i.e. an isomorphism between sets.  The reason for this is simply by design, we would like to study the transformations that leave the proposition space invariant.  A general transformation of this type on $\sm$ which takes $\mathbf{X}$ to $\mathbf{X}' = f(\mathbf{X})$, is met with the following transformation of the densities,
\begin{equation}
p(x) \rightarrow p'(x') \quad \mathrm{where} \quad p(x)dx= p'(x')dx'.\label{i}
\end{equation}
Like we already mentioned, the coordinate transforming function $f:\mathbf{X}\rightarrow\mathbf{X}'$ must be a bijection in order for (\ref{i}) to hold, i.e. the map $f^{-1}:\mathbf{X}'\rightarrow\mathbf{X}$ is such that $f\circ f^{-1} = \mathrm{id}_{\mathbf{X}'}$ and $f^{-1}\circ f = \mathrm{id}_{\mathbf{X}}$.  While the densities $p(x)$ and $p'(x')$ are not necessarily equal, the probabilities defined in (\ref{i}) must be (according to the rules of probability theory, see the Appendix B of \cite{CarraraVanslette}). This indicates that $\rho\rightarrow\rho'=\rho$ is in the same location in the statistical manifold. That is, the global state of knowledge has not changed -- what has changed is the way in which the local information in $\rho$ has been expressed, which must be invertible in general.

For a coordinate transformation (\ref{i}) involving two variables, $x \in \mathbf{X}$ and $y \in \mathbf{Y}$, we also have,
\begin{equation}
p(x,y) \rightarrow p'(x',y') \quad \mathrm{where}\quad p(x,y)dxdy = p'(x',y')dx'dy'.
\end{equation}
A few general properties of these transformations are as follows:  First, 
the density $p(x,y)$ can be expressed in terms of the density $p'(x',y')$,
\begin{equation}
p(x,y) = p'(x',y')\gamma(x',y'),
\end{equation}
where $\gamma(x',y')$ is the Jacobian \cite{CatichaBook} that defines the transformation,
\begin{equation}
\gamma(x',y') = |\det\left(J(x',y')\right)|, \quad \mathrm{where}\quad J(x',y') = \begin{bmatrix}
\frac{\partial x'}{\partial x} & \frac{\partial x'}{\partial y} \\ \frac{\partial y'}{\partial x} & \frac{\partial y'}{\partial y}
\end{bmatrix}.
\end{equation}
For a finite number of variables $x = (x_1,\dots,x_N)$, general coordinate transformations $p(x_1,\dots,x_N) \rightarrow p'(x_1',\dots,x_N')$ are written,
\begin{equation}
p(x_1,\dots,x_N)\prod_{i=1}^Ndx_i = p'(x_1',\dots,x_N')\prod_{i=1}^Ndx_i',
\end{equation} 
and the Jacobian becomes,
\begin{equation}
J(x'_1,\dots,x'_N) = \begin{bmatrix}
\frac{\partial x_1'}{\partial x_1} &\cdots &\frac{\partial x_1'}{\partial x_N} \\ \vdots & \ddots & \vdots \\ \frac{\partial x_N'}{\partial x_1} & \cdots & \frac{\partial x_N'}{\partial x_N}
\end{bmatrix}.
\end{equation}
One can also express the density $p'(x')$ in terms of the original density $p(x)$ by using the inverse transform,
\begin{equation}
p'(x') = p(f^{-1}(x'))\gamma(x) = p(x)\gamma(x).
\end{equation}
\paragraph{Split Invariant Coordinate Transformations ---}
Consider a class of coordinate transformations that result in a diagonal Jacobian matrix, i.e.,
\begin{equation}
\gamma(x'_1,\dots,x'_N) = \prod_{i=1}^N\frac{\partial x'_i}{\partial x_i} = \prod_{i=1}^N\gamma(x_i').\label{diag}
\end{equation}
These transformations act within each of the variable spaces independently, and hence they are guaranteed to preserve the definition of the split between any $n$-partitions of the propositions -- and because they are coordinate transformations, they are invertible and do not change our state of knowledge, $(\rho,\mathbf{X})\rightarrow(\rho',\mathbf{X}') = (\rho,\mathbf{X})$.  We call such special types of transformations (\ref{diag}) \textit{split invariant coordinate transformations}.  The marginal distributions of $p(x)$ are preserved under split invariant coordinate transformations, 
\begin{equation}
\forall\, \mathbf{X}_i \subset \mathbf{X} : p(x_i)dx_i = p'(x_i')dx_i'.\label{marginal}
\end{equation}
If one allows generic coordinate transformations of the joint space, then the marginal distributions may depend on variables outside of their original split. Thus, if one redefines the split after a coordinate transformation to new variables $\mathbf{X}\rightarrow\mathbf{X}'$, the original problem statement changes as to what variables we are considering correlations \emph{between} and thus eq. (\ref{marginal}) no longer holds. This is apparent in the case of two variables $(x,y)$.  Let $x' = f_{x'}(x,y)$ so that
\begin{equation}
dx' = df_{x'} = \frac{\partial f_{x'}}{\partial x}dx + \frac{\partial f_{x'}}{\partial y}dy,
\end{equation}  
which depends on $y$.  Redefining the split after this coordinate transformation breaks the original independence since a distribution which originally factors, $p(x,y) = p(x)p(y)$, would be made to have conditional dependence in the new coordinates, i.e. if $x' = f_{x'}(x,y)$ and $y' = f_{y'}(x,y)$, then,
\begin{equation}
p(x,y) = p(x)p(y) \rightarrow p'(x',y') = p'(x')p'(y'|x').
\end{equation}
So, even though the above transformation satisfies (\ref{i}), this type of transformation may change the correlations in $p(x)$ by allowing for the potential redefinition of the split $\mathbf{X}\rightarrow\mathbf{X}'$.  Hence, when designing our functional, we identify split invariant coordinate transformations as those which preserve correlations. These restricted coordinate transformations help isolate a single functional form for our global correlation quantifier.

A split in the proposition space can be defined by an index set $K \in \mathbf{K}$ which is a set of sets of indicies corresponding to the groupings of subspaces in proposition space.  Each value in $K$ corresponds to an indexing of the powerset of subspaces of $\mathbf{X}$\footnote{For example, the space of three variables, $\mathbf{X} = \mathbf{X}_1\times\mathbf{X}_2\times\mathbf{X}_3$, has four possible subspace splits,
	\begin{align}
	\{\{\mathbf{X}_1\},\{\mathbf{X}_2\},\{\mathbf{X}_3\}\}, \quad \{\{\mathbf{X}_1\},\{\mathbf{X}_2,\mathbf{X}_3\}\}, \quad \{\{\mathbf{X}_1,\mathbf{X}_2\},\{\mathbf{X}_2\}\}, \quad \{\{\mathbf{X}_1,\mathbf{X}_3\},\{\mathbf{X}_2\}\},\nonumber
	\end{align}
	so that their is a corresponding space $\mathbf{K}$ with the following elements,
	\begin{equation}
	\{\{1\},\{2\},\{3\}\}, \quad \{\{1\},\{2,3\}\}, \quad \{\{1,2\},\{3\}\}, \quad \{\{1,3\},\{2\}\}.
	\end{equation}
}. A splitting of $\mathbf{X}$ is then given by the collection of subsets, $\{\mathbf{X}^{(i)}\}_{(k)} = \left\{\mathbf{X}^{(1)},\mathbf{X}^{(2)},\dots,\mathbf{X}^{(k)}\right\}$.  Given a particular split in the proposition space, a correlation quantifer can be defined as,
\begin{define}
	Let $\mathbf{X}$ be a joint proposition space and let $\mathbf{K}$ denote the set of possible splittings of $\mathbf{X}$.  The correlations present in $\rho$ with respect to a splitting $K \in \mathbf{K}$ is quantified by the real number,
	\begin{align}
	I:\sm\times \mathbf{K}&\rightarrow \mathbb{R}\nonumber\\
	(\rho,\{\mathbf{X}^{(i)}\}_{(k)}) &\mapsto I[\rho,\mathbf{X}^{(1)};\dots;\mathbf{X}^{(k)}].
	\end{align} 
\end{define}
Whatever the value of our desired quantifier $I[\rho]$ gives for a particular distribution $\rho$, we expect that if we change $\rho \xrightarrow{*} \rho' = \rho + \delta \rho$, that our quantifier also changes $I[\rho] \rightarrow I[\rho'] = I[\rho] + \delta I$, and that this change of $I[\rho]$ reflects the change in the correlations, i.e. if $\rho$ changes in a way that increases the correlations, then $I[\rho]$ should also increase.  Thus, our quantifier should be an increasing functional of the correlations, i.e. it should provide a \textit{ranking} of $\rho$'s.

\subsection{The relative entropy}\label{section42}
Before we begin, it is perhaps instructive to specify explicitly the design goal.
\begin{goal}[The ME method]
	Given a prior distribution $\varphi$, and a set of possible posterior distributions $\rho_1,\rho_2,\dots$, we seek to design a functional which ranks each posterior with respect to some order of preference.
\end{goal}
The design goal imposes that the functional $S[\rho,\varphi]$ be scalar valued so that one can rank various posteriors with respect to the prior.  The order of preference in the ranking is determined by setting the various design criteria.  A general functional form for $S[\rho,\varphi]$ can be written as
\begin{equation}
S[\rho,\varphi] = S[p(x),q(x);p(x'),q(x');\dots],\label{Sbare}
\end{equation}  
where $S[\rho,\varphi]$ depends on each value of the probabilities $p(x)$ and $q(x)$ for every $x \in \mathbf{X}$.  To constrain the functional form of (\ref{Sbare}) further, we impose various design criteria (DC).  There are several ways to proceed by choosing different DC.  Whichever approach one takes \cite{ShoreJohnson,Skilling,Jaynes,CatichaBook,VansletteThesis}, there are some defining characteristics of the entropy functional that are common among all of them.  These criteria have to do with the behavior of $S[\rho,\varphi]$ when certain sectors of the underlying space $\mathbf{X}$ are independent.  Independence in $\mathbf{X}$ comes in two flavors, \textit{subdomain} and \textit{subsystem} independence.  The former refers to subsets of $\mathbf{X}$ while the latter to subspaces.  In Vanslette \cite{VansletteThesis}, these two criteria on independence are enough to specify $S[\rho,\varphi]$ for the purposes of ranking.  The DC are imposed on the basis of the PMU.  While simple, the PMU is incredibly constraining.  By stating when one is \textit{not} supposed to change their beliefs it is operationally unique, since there are infinitely many choices for how one \textit{ought} to change them.  Consequently, the PMU imposes a level of objectivity into $S[p,q]$.

The DC we adopt in this thesis are the same as those given by Caticha in \cite{CatichaBook,Caticha_towards,Caticha_Giffin_1,Caticha_Giffin_2,Caticha_Entropy_1}.  The first DC concerns local, subdomain, updates to the probability distribution $p(x)$.
\begin{criteria}[Subdomain Independence]
	Local information has local effects.
\end{criteria}  

Assume that new information tells us we should update our beliefs about a particular subdomain $\mathcal{D} \subset \mathbf{X}$, but does not give us any information about the complement, $\mathcal{D}^c = \mathbf{X} \backslash \mathcal{D}$.  According to the PMU, we should not update our beliefs about probabilities which concern the complement domain, hence we design the inference procedure so that $q(x|\mathcal{D}^c)$ is not updated, i.e.
\begin{equation}
p(x|\mathcal{D}^c) = q(x|\mathcal{D}^c).\label{dc11}
\end{equation}  
Eq. (\ref{dc11}) is an example of the objectivity that is demanded by the PMU.  Since no new information about $\mathcal{D}^c$ is given, the prior distribution takes presidence.  This point is summed up nicely in \cite{CatichaBook},

\begin{quotation}
	``We emphasize: the point is not that we make the unwarranted assumption
	that keeping $q(x|\mathcal{D}^c)$ unchanged is guaranteed to lead to correct inferences. It
	need not; induction is risky. The point is, rather, that in the absence of any
	evidence to the contrary there is no reason to change our minds and the prior
	information takes priority.'' (Caticha, 2020, Section 6.2.3)
\end{quotation}
This criteria appears in all four of the design derivations discussed throughout\footnote{In Shore and Johnson's approach to relative entropy \cite{ShoreJohnson}, axiom four is analogous to DC 1, which states on page 27 ``IV. \textit{Subset Independence}: It should not matter whether one treats an independent subset of system states in terms of a separate conditional density or in terms of the full system density.''  In Skilling's approach \cite{Skilling} DC 1 appears as axiom one which, like Shore and Johnson's axioms, is called \textit{Subset Independence} and is justified with the following statement on page 175, ``Information about one domain should not affect the reconstruction in a different domain, provided there is no constraint directly linking the domains.''  In Caticha \cite{Caticha_Entropy_1} the axiom called \textit{Locality} and is written on page four as ``\textbf{Criterion 1: Locality}. \textit{Local information has local effects}.''  Finally, in Vanslette's work \cite{Vanslette,VansletteThesis}, the subset independence criteria is stated on page three as follows, ``Subdomain Independence:  When information is received about one set of propositions, it should not effect or change the state of knowledge (probability distribution) of the other propositions (else information was also received about them too).''}. DC1 constrains the functional form of the relative entropy to be additive in non-overlapping subdomains,
\begin{equation}
S[\rho,\varphi] \xrightarrow{\mathbf{DC1}} \int dx\, F\left(p(x),q(x),x\right),\label{SDC1}
\end{equation}  
where $F(p(x),q(x),x)$ is some undetermined function of the probabilities and possibly the coordinates.

\begin{criteria}[Coordinate invariance]
	The system of coordinates carries no information.
\end{criteria}
Because we can choose whatever labels we want for the points $x \in \mathbf{X}$, any particular choice cannot have an impact on the ranking of the probability distributions.  While in some situations certain coordinates may be preferred over others, it is not an assumption that we should make unless explicitly stated.  As a result of DC2 the functional form of the entropy becomes,
\begin{equation}
S[\rho,\varphi] \xrightarrow{\mathbf{DC2}} \int dx\, q(x)\Phi\left(\frac{p(x)}{q(x)}\right).\label{dc2a}
\end{equation}    
DC2 has replaced the function $F$, which had three arguments, by a function $\Phi$ which only has one.  To see the effects of DC2, consider a generic coordinate transformation in which $\mathbf{X} \rightarrow \mathbf{X}' = f(\mathbf{X})$ so that,
\begin{equation}
p(x) \rightarrow p'(x') = p(x)\gamma(x') \quad \mathrm{and} \quad q(x) \rightarrow q'(x') = q(x)\gamma(x').
\end{equation}
This particular transformation of the densities $p(x)$ and $q(x)$, and the coordinates $\mathbf{X}$, will always leave integrals of the form (\ref{SDC1}) invariant,
\begin{align}
S[\rho,\varphi] &= \int dx\, F(p(x),q(x),x)\nonumber\\
&= \int dx'\, \gamma(x')F\left(p'(x')\gamma(x')^{-1},q'(x')\gamma(x')^{-1},f(x)\right).\label{dc1b}
\end{align}
The above expression is valid for any change of variables, and hence it imposes no constraints on the functional form of $F$.  Imposing DC2 amounts to saying that we could have started with the coordinates $x' = f(x)$ and it would have made no difference, i.e. the ranking provided by
\begin{equation}
S[\rho',\varphi'] = \int dx'\, F(p'(x'),q'(x'),x'),
\end{equation}
is the same as the one given in (\ref{dc1b}).

\begin{criteria}[Subsystem Independence]
	When two systems are a priori believed to be independent and we receive
	independent information about each then it should not matter if one is
	included in the analysis of the other or not (and vice versa).
\end{criteria}
Like DC1, DC3 imposes the PMU by saying what not to do, rather than what one should do.  In particular, one should not introduce correlations between independent subsystems unless the constraints suggest otherwise\footnote{In Shore and Johnson's approach \cite{ShoreJohnson}, axiom three concerns subsystem independence and is stated on page 27 as ``III. \textit{System Independence}: It should not matter whether one accounts for independent information about independent systems separately in terms of different densities or together in terms of a joint density.''  In Skillings approach \cite{Skilling}, the axiom concerning subsystem independence is given by axiom three on page 179 and provides the following comment on page 180 about its consequences ``This is the crucial axiom, which reduces S to the entropic form. The basic point is that when we seek an uncorrelated image from marginal data in two (or more) dimensions, we need to multiply the marginal distributions. On the other hand, the variational equation tells us to add constraints through their Lagrange multipliers. Hence the gradient $\delta S/\delta f$ must be the logarithm.''  In Caticha's design derivation \cite{Caticha_Entropy_1}, axiom three concerns subsystem independence and is written on page 5 as ``\textbf{Criterion 3: Independence}.  \textit{When systems are known to be independent it should not matter whether they are treated separately or jointly.}'' Finally, in Vanslette \cite{Vanslette,VansletteThesis} on page 3 we have ``Subsystem Independence:  When two systems are a priori believed to be independent and we only receive information about one, then the state of knowledge of the other system remains unchanged.''}.  Consider that $\mathbf{X}$ forms a composite system, $\mathbf{X} = \mathbf{X}_1\times\mathbf{X}_2$, and that our initial information suggests that the two systems be independent.  This amounts to the prior being the product $q(x_1,x_2) = q(x_1)q(x_2)$.  Now suppose we obtain information about $\mathbf{X}_1$ such that the prior $q(x_1)$ is by itself updated to the posterior $p(x_1)$.  Suppose that we also obtain information that the prior $q(x_2)$ is by itself updated to $p(x_2)$.  DC3 imposes that the joint prior $q(x_1)q(x_2)$ be updated to the joint posterior $p(x_1)p(x_2)$, since in the absence of information to the contrary, inferences about one subsystem do not affect inferences about the other.

Imposing DC3 leads to the final functional form of the relative entropy,
\begin{equation}
S[\rho,\varphi] \xrightarrow{\mathbf{DC3}} -\int dx\, p(x)\log\frac{p(x)}{q(x)},\label{dc3a}
\end{equation}  
up to terms which do not affect the updating scheme.  As we have expressed before, the point of DC3 is not to impose some type of consistency requirement between the prior and some new information which perhaps tells us that the variables remain independent.  On the contrary, the role of DC3 is to impose that in the absence of new information about correlations, the prior must necessarily take presidence.  This point is summed up nicely in \cite{CatichaBook},

\begin{quotation}
	``We emphasize that the point is not that when we have no evidence for correlations we draw the conclusion that the systems must necessarily be independent. Induction involves risk; the systems might in actual fact be correlated. The point is rather that if the joint prior reflected independence and the new evidence is silent on the matter of correlations, then the only evidence we actually have -- namely, the prior -- takes precedence and there is no reason to change our minds. As before, a feature of the probability distribution -- in this case, independence -- will not be updated unless the evidence requires it.'' (Caticha, 2020)
\end{quotation}

\paragraph{Summary of the results ---}
The functional in (\ref{dc3a}) is constructed to serve the purposes of the design goal, namely, to allow one to update a prior distribution $q(x)$ to a posterior distribution $p(x)$ when new information comes in the form of constraints $\mathcal{C}$.  In this sense, we have extended the method of MaxEnt to include arbitrary priors.  It is also the case that the functional in (\ref{dc3a}) requires no further interpretation -- it does not represent heat, disorder or an amount of information -- it is simply a tool for updating.  Again, to quote Caticha \cite{CatichaBook}, ``we do not need to know what entropy means; we only need to know how to use it.''  

If one wishes to compare with other approaches to entropic inference, the work by Csisz\'{a}r \cite{Csiszar} provides a nice summary of the various axioms used by many authors (including Azc\'{e}l \cite{Azcel}, Shore and Johnson \cite{ShoreJohnson} and Jaynes \cite{Jaynes1}) in their definitions of information theoretic measures\footnote{A list is given on page 3 of \cite{Csiszar} which includes the following for conditions on an entropy function $H(P)$; (1) Positivity ($H(P) \geq 0$), (2) Expansibility (``expansion'' of $P$ by a new component equal to $0$ does not change $H(P)$, i.e. embedding in a space in which the probabilities of the new propositions are zero), (3) Symmetry ($H(P)$ is invariant under permutation of the probabilities), (4) Continuity ($H(P)$ is a continuous function of $P$), (5) Additivity ($H(P\times Q) = H(P) + H(Q)$), (6) Subadditivity ($H(X,Y) \leq H(X) + H(Y)$), (7) Strong additivity ($H(X,Y) = H(X) + H(Y|X)$), (8) Recursivity ($H(p_1,\dots,p_n) = H(p_1+p_2,p_3,\dots,p_n) + (p_1+p_2)H(\frac{p_1}{p_1+p_2},\frac{p_2}{p_1+p_2})$) and (9) Sum property ($H(P) = \sum_{i=1}^ng(p_i)$ for some function $g$).}.  One could associate the design criteria in this work to some of the common axioms enumerated in \cite{Csiszar}, although some of them will appear as consequences of imposing a specific design criterion, rather than as an ansatz.  For example, the \textit{strong additivity} condition is the result of imposing DC1 and DC3.  \textit{Continuity} of $S[\rho,\varphi]$ with respect to $\rho$ is imposed through the design goal, and \textit{symmetry} is a consequence of DC1.  In summary, \textbf{Design Goal}$\rightarrow$\textit{continuity}, DC1$\rightarrow$\textit{symmetry}, (DC1 + DC2)$\rightarrow$\textit{strong additivity}.  As was shown by Shannon \cite{Shannon_Book} and others \cite{Csiszar,Azcel}, various combinations of these axioms, as well as the ones mentioned in footnote 4, are enough to characterize entropic measures.

\subsection{Proof of the main result}
We begin by considering first a discrete universe of discourse $\mathcal{X} = \{x_1,\dots,x_N\}$ consisting of $N$ variables and $|\mathcal{X}|$ possible values $x_i \in \mathcal{X}$ where $i = \{1,\dots,|\mathcal{X}|\}$.  We wish to update a prior distribution $Q(x_i)$ to a posterior $P(x_i)$ when new information becomes available.  At this point, in accordance with (\ref{Sbare}), the entropy functional $S[P,Q]$ takes the general form,
\begin{equation}
S[P,Q] = S[P(x_1),Q(x_1);P(x_2),Q(x_2);\dots;P(x_N),Q(x_N)].\label{disc}
\end{equation}

\paragraph{DC1 Implementation ---}
Consider that the space $\mathcal{X}$ can be broken up into two disjoint domains of interest, $\mathcal{D} \subseteq \mathcal{X}$ and its complement $\mathcal{D}^c = \mathcal{X}\backslash \mathcal{D}$, so that $\mathcal{D} \cap \mathcal{D}^c = \emptyset$ and $\mathcal{D} \cup \mathcal{D}^c = \mathcal{X}$.  Consider also that we obtain some information that refers strictly to the subdomain $\mathcal{D}$, but not to its complement.  By DC1 we impose that we will \textit{not} update our beliefs relative to the complement domain so that the posterior probabilities $P(x_i|\mathcal{D}^c)$ must follow,
\begin{equation}
P(x_i|\mathcal{D}^c) = Q(x_i|\mathcal{D}^c).
\end{equation}
To determine the consequences of this situation, we impose a set of local constraints.  Consider the constraint on $\mathcal{D}$,
\begin{equation}
\sum_{x_i \in \mathcal{D}} P(x_i)A(x_i) = A_{\mathcal{D}},\label{proof1}
\end{equation}
where $A_{\mathcal{D}}$ is a constant.  DC1 imposes that the constraint (\ref{proof1}) shall have no influence on the conditional probabilities $p(x_i|\mathcal{D}^c)$, however it could introduce a scaling of the probabilities $P(x_i)$ over $\mathcal{D}^c$.  To deal with this, consider that we also impose constraints on the normalization of each subdomain
\begin{equation}
P(\mathcal{D}) = \sum_{x_i \in \mathcal{D}} P(x_i) \qquad \mathrm{and} \qquad P(\mathcal{D}^c) = \sum_{x_i \in \mathcal{D}^c}P(x_i),\label{dc1norm}
\end{equation}
such that $P(\mathcal{D}) + P(\mathcal{D}^c) = 1$.  To search the space of possible posterior distributions $P(x_i)$ which satisfy (\ref{dc1norm}) and (\ref{proof1}), we maximize the entropy functional subject to the constraints in (\ref{dc1norm}) and (\ref{proof1}) using the Lagrange multiplier method,
\begin{align}
\delta\left[\vphantom{\sum_{x_i \in \mathcal{D}}}S[P,Q]\right. &- \lambda_{\mathcal{D}}\left(\sum_{x_i \in \mathcal{D}}P(x_i) - P(\mathcal{D})\right) - \lambda_{\mathcal{D}^c}\left(\sum_{x_i \in \mathcal{D}^c}P(x_i) - P(\mathcal{D}^c)\right)\nonumber\\
&\left.- \mu_{\mathcal{D}}\left(\sum_{x_i \in \mathcal{D}}P(x_i)A(x_i) - \langle A\rangle_{\mathcal{D}}\right)\right] = 0,
\end{align}
which gives the set of $N+3$ differential equations -- one for each $P(x_i)$ and the Lagrange multipliers,
\begin{align}
\frac{\delta S[P,Q]}{\delta P(x_i)} &= \lambda_{\mathcal{D}^c}, \quad \forall\, x_i \in \mathcal{D}^c\\
\frac{\delta S[P,Q]}{\delta P(x_i)} &= \lambda_{\mathcal{D}} + \mu_{\mathcal{D}}A(x_i), \quad \forall\, x_i \in \mathcal{D}.
\end{align}
Given the general form of $S$ in (\ref{disc}), the derivative $\delta S/\delta P(x_i)$ could in principle depend on all $2N$ variables,
\begin{equation}
\frac{\delta S[P,Q]}{\delta P(x_i)} = f_i\left(P(x_1),Q(x_1);P(x_2),Q(x_2);\dots;P(x_N),Q(x_N)\right).
\end{equation}
However, this would violate DC1 since any arbitrary change to the constraint over $\mathcal{D}$ would necessarily influence the probabilities in $\mathcal{D}^c$.  In order to prevent the influence of the constraints on the probabilities in the subdomain $\mathcal{D}^c$, we must impose that the variations in a particular subdomain must only be influenced by the probabilies in that subdomain,
\begin{equation}
\frac{\delta S[P,Q]}{\delta P(x_i|\mathcal{D})} = f_i\left(P(x_1|\mathcal{D}),\dots,P(x_N|\mathcal{D});Q(x_1),\dots,Q(x_N)\right).\label{proof2}
\end{equation}
Since the above (\ref{proof2}) must hold for arbitrary partitions of $\mathcal{X}$, then in the most restrictive case, $\delta S/\delta P(x_i)$ can only depend on the individual $P(x_i)$ and possibly all the $Q(x_i)$,
\begin{equation}
\frac{\delta S[P,Q]}{\delta P(x_i)} = f_i\left(P(x_i);Q(x_1),Q(x_2),\dots,Q(x_N)\right).\label{proof3}
\end{equation}
We can restrict the form of (\ref{proof3}) even further by examining what happens when we make an arbitrary change to the prior outside of $\mathcal{D}$.  Consider that we change $Q(x_j)$ to $Q(x_j) + \delta Q(x_j)$ with $x_j \in \mathcal{D}^c$.  This would suggest that (\ref{proof3}) becomes,
\begin{equation}
\frac{\delta S[P,Q]}{\delta P(x_i)} = f_i\left(P(x_i);Q(x_1),\dots,Q(x_j) + \delta Q(x_j),\dots,Q(x_N)\right).\label{proof4}
\end{equation}
This demonstrates that $P(x_i)$ with $x_i \in \mathcal{D}$ could be influenced by information in $\mathcal{D}^c$, unless $\delta S/\delta P(x_i)$ is independent of all prior information in $\mathcal{D}^c$
\begin{equation}
\frac{\delta S[P,Q]}{\delta P(x_i)} = f_i\left(P(x_i);Q(x_1|\mathcal{D}),Q(x_2|\mathcal{D}),\dots,Q(x_N|\mathcal{D})\right).\label{proof5}
\end{equation}
Since again (\ref{proof5}) must hold for any arbitrary partition of $\mathcal{X}$, we must have
\begin{equation}
\frac{\delta S[P,Q]}{\delta P(x_i)} = f_i(P(x_i),Q(x_i)), \quad \forall\, x_i \in \mathcal{X}.\label{proof6}
\end{equation} 
Integrating over each $x_i \in \mathcal{X}$ we find,
\begin{equation}
S[P,Q] \xrightarrow{\mathbf{DC1}} \sum_{i=1}^{|\mathcal{X}|}F_i\left(P(x_i),Q(x_i)\right),
\end{equation}
for some undetermined function $F$ which could in principle depend on the point $x_i \in \mathcal{X}$.  Taking the continuum limit, $\mathcal{X} \rightarrow \mathbf{X}$, amounts to sending $F_i(P(x_i),Q(x_i))$ to $F(p(x),q(x),x)$ and replacing the sum by an integral,
\begin{equation}
S[\rho,\varphi] = \int dx\, F(p(x),q(x),x),
\end{equation}
which is eq. (\ref{SDC1}).

\paragraph{DC2 Implementation ---}  In order to impose coordinate invariance, we can write (\ref{SDC1}) in a form which explicitly incorporates the density $q(x)$ as a volume measure,
\begin{align}
S[\rho,\varphi] &= \int dx\, q(x)\frac{1}{q(x)}F\left(\frac{p(x)}{q(x)}q(x),q(x),x\right)\nonumber\\
&= \int dx\, q(x)\Phi\left(\frac{p(x)}{q(x)},q(x),x\right),
\end{align}
where we introduced the function $\Phi$,
\begin{equation}
\Phi\left(\frac{p(x)}{q(x)},q(x),x\right) \stackrel{\mathrm{def}}{=} \frac{1}{q(x)}F\left(\frac{p(x)}{q(x)}q(x),q(x),x\right).
\end{equation}
We can restrict the functional form of $\Phi$, and equivalently $F$, by appealing to special cases.  Consider the coordinate transformation,
\begin{equation}
x \rightarrow x' = f(x) \quad \Rightarrow \quad p(x)dx = p'(x')dx' \quad \mathrm{and}\quad q(x)dx = q'(dx').
\end{equation}
This amounts to sending $\Phi$ to 
\begin{equation}
\Phi\left(\frac{p'(x')}{q'(x')},q'(x'),x'\right) = \Phi\left(\frac{p(x)}{q(x)},q(x)\gamma(x'),x'\right).
\end{equation}
Consider the special case in which the Jacobian $\gamma(x') = 1$, which gives,
\begin{equation}
\Phi\left(\frac{p'(x')}{q'(x')},q'(x'),x'\right) = \Phi\left(\frac{p(x)}{q(x)},q(x),x'\right).
\end{equation}
To impose DC2 is to require that the right hand side be equivalent to,
\begin{equation}
\Phi\left(\frac{p(x)}{q(x)},q(x),x'\right) \stackrel{\mathbf{DC2}}{=}\Phi\left(\frac{p(x)}{q(x)},q(x),x\right)
\end{equation}
with the original coordinate.  Thus, it must be the case that $\Phi$, and also $F$, be independent of the coordinates $\mathbf{X}$,
\begin{equation}
\Phi \xrightarrow{\mathbf{DC2}} \Phi\left(\frac{p(x)}{q(x)},q(x)\right).
\end{equation}
We can restrict even further by considering a coordinate transformation with an arbitrary Jacobian,
\begin{equation}
\Phi\left(\frac{p'(x')}{q'(x')},q'(x')\right) = \Phi\left(\frac{p(x)}{q(x)},q(x)\gamma(x')\right).
\end{equation}
Again, to impose DC2 amounts to forcing $\Phi\left(\frac{p(x)}{q(x)},q(x)\gamma(x')\right)$ to be equivalent to $\Phi\left(\frac{p(x)}{q(x)},q(x)\right)$ and hence $\Phi$ must also be independent of its second argument,
\begin{equation}
\Phi \xrightarrow{\mathbf{DC2}} \Phi\left(\frac{p(x)}{q(x)}\right).
\end{equation}
As a result of DC2 we have arrived at (\ref{dc2a}),
\begin{equation}
S[\rho,\varphi] \xrightarrow{\mathbf{DC2}} \int dx\, q(x)\Phi\left(\frac{p(x)}{q(x)}\right).\label{dc2result}
\end{equation}

\paragraph{DC3 Implementation ---}
Consider the situation in which the universe of discource consists of two subspaces, $\mathbf{X} = \mathbf{X}_1\times\mathbf{X}_2$, which are constrained to be independent, $q(x_1,x_2) \xrightarrow{I_q} q(x_1)q(x_2)$.  Next, consider that new information is given for each subspace individually such that $q(x_1)\rightarrow p(x_1)$ and $q(x_2)\rightarrow p(x_2)$.  Naturally, if the systems were treated individually we would find the posteriors $p(x_1)$ and $p(x_2)$ from the individual priors.  According to DC3 on the otherhand, if we were to treat the systems jointly then we should find the joint posterior $q(x_1)q(x_2)\rightarrow p(x_1)p(x_2)$, since no new information has introducted correlations.

At this point we take some inspiration from the derivation by Vanslette \cite{Vanslette,VansletteThesis} and examine several special cases.    
\paragraph{Case 1: Updating one subsystem ---}
Consider the special case in which we constrain the distribution over $\mathbf{X}_1$ to take the known value,
\begin{flalign}
\mathbf{C1}&&\int dx_2\, p(x_1,x_2) = p(x_1).&&\label{c1}
\end{flalign}
The eq. (\ref{c1}) refers to an infinite number of constraints, one for each value of $x_1 \in \mathbf{X}_1$.  Hence, it requires the introduction of a Lagrange multiplier that is a function of $\mathbf{X}_1$ such that by maximizing the entropy,
\begin{equation}
\delta\left[S[\rho,\varphi] - \int dx_1\, \lambda_1(x_1)\left(\int dx_2\, p(x_1,x_2) - p(x_1)\right)\right] = 0,
\end{equation}
we are left with the following differential equation,
\begin{equation}
\frac{\partial \Phi}{\partial p(x_1,x_2)} = \lambda_1(x_1),\label{proof8}
\end{equation}
where we used the fact that for fixed $q(x)$, $\delta S = \int dx\, \partial\Phi/\partial p(x) \delta p(x)$.  The left hand side can be recast in a form similar to (\ref{proof6}),
\begin{equation}
\frac{\delta S[\rho,\varphi]}{\delta p(x_1,x_2)} = \frac{\partial \Phi}{\partial p(x_1,x_2)} = \phi(p(x_1,x_2),q(x_1,x_2)),
\end{equation}
where $\phi$ is a function of the posterior and the prior.  To examine the functional form of $\phi$ with respect to (\ref{proof8}) we first make use of the information in the prior, $I_q$, -- namely, that the prior is independent,
\begin{equation}
\phi \xrightarrow{I_q} \phi(p(x_1,x_2),q(x_1)q(x_2)).
\end{equation}
Now, since the constraint \textbf{C1} in (\ref{c1}) does not introduce any correlations between $\mathbf{X}_1$ and $\mathbf{X}_2$, then by DC3 we impose that the posterior must be a product also,
\begin{equation}
\phi \xrightarrow{\mathbf{DC3}} \phi(p(x_1)p(x_2),q(x_1)q(x_2)) = \lambda_1(x_1).\label{proof9}
\end{equation}
Besides being silent about any correlations, the constraint \textbf{C1} in (\ref{c1}) is also silent about any new information concerning subsystem two, and hence by the PMU the distribution over $\mathbf{X}_2$ is not updated,
\begin{equation}
q(x_2) \xrightarrow{\mathbf{PMU}} p(x_2) = q(x_2).
\end{equation}
This gives for the function $\phi$,
\begin{equation}
\phi \xrightarrow{\mathbf{PMU}} \phi(p(x_1)q(x_2),q(x_1)q(x_2)) = \lambda_1(x_1).\label{proof10}
\end{equation}
We can simplify this expression some more by recognizing that the right hand side is independent of $\mathbf{X}_2$.  This should suggest that the left hand side is also independent of $\mathbf{X}_2$, which we've already determined from DC2.  Going back to DC1 however we can argue for something more -- that the left hand side of (\ref{proof10}) is independent of $q(x_2)$ entirely.  To see this, we are reminded that (\ref{proof10}) represents a single equation for each value of $x_1$ and $x_2$.  This means that for a given $x_1 \in \mathbf{X}_1$, both sides of (\ref{proof10}) must be independent of arbitrary variations with respect to $x_2$.  These include variations with respect to the prior $q(x_2)$, which, with all other quantities held fixed leads to,
\begin{equation}
\delta \phi = \frac{\partial \phi}{\partial q(x_2)}\frac{\partial q(x_2)}{\partial x_2}\delta x_2 = 0,
\end{equation}
which means that for arbitrary $\delta x_2$ we must have $\partial \phi/\partial q(x_2) = 0$.  Thus, (\ref{proof10}) must be independent of $q(x_2)$,
\begin{equation}
\phi \xrightarrow{\mathbf{DC1}}\phi(p(x_1),q(x_1)) = \lambda_1(x_1).\label{proof11}
\end{equation}
In the special case where no constraints are implemented, eq. (\ref{proof11}) reduces to,
\begin{equation}
\phi(p(x_1),q(x_1)) \xrightarrow{\lambda=0} \phi(q(x_1),q(x_1)) = 0,\label{noconstraints}
\end{equation}
where the right hand side is zero since there are no constraints and hence no Lagrange multipliers.  Through a similar set of arguments involving a constraint of the form,
\begin{flalign}
\mathbf{C2}&&\int dx_1\, p(x_1,x_2) = p(x_2).&&\label{c2},
\end{flalign}
one can arrive at an equation analagous to (\ref{proof11}),
\begin{equation}
\phi \xrightarrow{\mathbf{C2}} \phi(p(x_2),q(x_2)) = \lambda_2(x_2).\label{proof12}
\end{equation}

Since the right hand side of (\ref{proof9}) is independent of $x_2$, then 
\paragraph{Case 2: Updating both independent subsystems ---}  Let us now consider the special case in which we update both subsystems according to constraints \textbf{C1} and \textbf{C2},
\begin{flalign}
\mathbf{C1,C2}&&\int dx_1\, p(x_1,x_2) = p(x_2) \quad \mathrm{and} \quad \int dx_2\,p(x_1,x_2) = p(x_1).&&\label{proof7}
\end{flalign}
This again constitutes an infinite number of constraints, one for each $x_1$ and $x_2$, and so requires an infinite number of Lagrange multipliers, which we label as $\lambda_1(x_1)$ and $\lambda_2(x_2)$.  Maximizing the relative entropy with respect to the constraints (\ref{proof7}),
\begin{align}
\delta\left[\vphantom{\int}S[\rho,\varphi] \right.&- \int dx_1\, \lambda_1(x_1)\left(\int dx_2\, p(x_1,x_2) - p(x_1)\right)\nonumber\\
&\left. - \int dx_2\, \lambda_2(x_2)\left(\int dx_1\, p(x_1,x_2) - p(x_2)\right)\right] = 0,
\end{align}
leads to the differential equation,
\begin{equation}
\frac{\partial \Phi}{\partial p(x_1,x_2)} = \lambda_1(x_1) + \lambda_2(x_2).\label{diffeq}
\end{equation}
Recasting the derivative of $\Phi$ as the function $\phi$ we get the functional equation,
\begin{equation}
\phi(p(x_1,x_2),q(x_1,x_2)) = \lambda_1(x_1) + \lambda_2(x_2).
\end{equation}
We will examine this equation in the same way as with case one and two.  First, using the fact that the prior contains no correlations between $\mathbf{X}_1$ and $\mathbf{X}_2$ we have,
\begin{equation}
\phi \xrightarrow{I_q} \phi(p(x_1,x_2),q(x_1)q(x_2)).
\end{equation}
The constraints \textbf{C1} and \textbf{C2} are also silent about and new information concerning correlations between $\mathbf{X}_1$ and $\mathbf{X}_2$ and hence, by DC3 we impose that the posterior must also be independent,
\begin{equation}
\phi\xrightarrow{\mathbf{DC3}} \phi(p(x_1)p(x_2),q(x_1)q(x_2)) = \lambda_1(x_1) + \lambda_2(x_2).\label{proof13}
\end{equation}
Unlike with (\ref{proof11}) and (\ref{proof12}), both sides of this equation are dependent on variations with respect to the distributions over both $\mathbf{X}_1$ and $\mathbf{X}_2$ so that the left hand side is general.  We can however identify some limiting cases to constrain the right hand side to a more manageable form.  First, consider rewriting (\ref{proof13}) in terms of $\lambda_1(x_1)$,
\begin{equation}
\lambda_1(x_1) = \phi(p(x_1)p(x_2),q(x_1)q(x_2)) - \lambda_2(x_2).\label{proof14}
\end{equation}
Since the left hand side of (\ref{proof14}) is independent of $x_2$, then the right hand side must be as well.  If we again consider arbitrary variations with respect to $x_2$ in (\ref{proof14}) we find,
\begin{equation}
\frac{\partial\phi}{\partial p(x_2)}\frac{\partial p(x_2)}{\partial x_2} + \frac{\partial \phi}{\partial q(x_2)}\frac{\partial q(x_2)}{\partial x_2} = \frac{\partial \lambda_2}{\partial x_2},
\end{equation}
which means that for arbitrary variations of $x_2$, the individual variations of $p(x_2)$, $q(x_2)$ and the Lagrange multiplier $\lambda_2(x_2)$ must cancel out.  Whatever the relationship between $\phi(p(x_1)p(x_2),q(x_1)q(x_2))$ and $\lambda_2(x_2)$ is, it must have the effect of removing dependence not only of $x_2$ but of the distributions $p(x_2)$ and $q(x_2)$ as well.  To see this, consider as in (\ref{proof4}) that we were to pick a different prior for $x_2$ which differed by some amount $q(x_2) \rightarrow q'(x_2) = q(x_2) + \delta q_2$.  We would then have that (\ref{proof14}) becomes,
\begin{align}
\lambda_1(x_1) \xrightarrow{\delta q_2}\lambda_1'(x_1) &= \phi(p(x_1)p(x_2),q'(x_1)q(x_2)) - \lambda_2(x_2)\nonumber\\
&= \lambda_1(x_1) + \sum_{n=0}^{\infty}\frac{\partial^n\phi}{\partial q_2^n}(\delta q_2)^n \neq \lambda_1(x_1).
\end{align}
Thus the right hand side of (\ref{proof14}) must be independent of $q(x_2)$.  For similar reasons it must also be independent of $p(x_2)$.  Thus, $\lambda_1(x_1)$ must be equal to some unknown function of $p(x_1)$ and $q(x_1)$ only,
\begin{equation}
\lambda_1(x_1) \rightarrow \psi_1(p(x_1),q(x_1)).
\end{equation}
A similar argument can be made for the subsystem $\mathbf{X}_2$ such that,
\begin{equation}
\lambda_2(x_2) \rightarrow \psi_2(p(x_2),q(x_2)).
\end{equation}
This gives (\ref{proof13}) in the form of a functional equation for $p(x_1)$, $p(x_2)$, $q(x_1)$ and $q(x_2)$,
\begin{equation}
\phi(p(x_1)p(x_2),q(x_1)q(x_2)) = \psi_1(p(x_1),q(x_1)) + \psi_2(p(x_2),q(x_2)).
\end{equation}
Which is a form of \textit{Pexider's equation} \cite{Pexider1,Pexider2,Aczel} which is itself a special case of \textit{Cauchy's equation}.  To construct a solution, first consider the rescaling of the probabilities via the logarithm,
\begin{equation}
p(x) \rightarrow \ell_p(x) = \log p(x) \quad \mathrm{and} \quad q(x) \rightarrow \ell_q(x) = \log q(x).
\end{equation}
Then, by scaling $\phi$, $\psi_1$ and $\psi_2$ via the exponential function we find,
\begin{align}
\phi'(\ell_p(x_1) + \ell_p(x_2),\ell_q(x_1) + \ell_q(x_2)) &= \psi_1'(\ell_p(x_1),\ell_q(x_1))\nonumber\\
&+ \psi_2'(\ell_p(x_2),\ell_q(x_2)),\label{proof15}
\end{align}
where $\phi' = \phi \circ \exp$, $\psi_1' = \psi_1\circ \exp$ and $\psi_2' = \psi_2\circ \exp$.  This brings us to the following theorem from Pexider \cite{Pexider1,Pexider2,Aczel},
\begin{theorem}[Pexider's functional equation]\label{pexider}
	The most general solution to the many-variable Pexider's equation,
	\begin{equation}
	f(x_1 + y_1,x_2 + y_2,\dots,x_n + y_n) = g(x_1,x_2,\dots,x_n) + h(y_1,y_2,\dots,y_n),\label{pexider1}
	\end{equation}
	is,
	\begin{equation}
	f(t) = \xi(t) + a + b, \quad g(t) = \xi(t) + a \quad \mathrm{and} \quad h(t) = \xi(t) + b,\label{pexider2}
	\end{equation}
	where $\xi(t)$ is an arbitrary solution of Cauchy's equation and $a$ and $b$ are constants.
\end{theorem}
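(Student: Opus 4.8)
The plan is to reduce the $n$-variable Pexider equation (\ref{pexider1}) to the $n$-variable Cauchy additivity equation by the standard substitution argument, after which the forms (\ref{pexider2}) can be read off directly. Throughout I write $\mathbf{x} = (x_1,\dots,x_n)$ and $\mathbf{y} = (y_1,\dots,y_n)$, so that the hypothesis reads $f(\mathbf{x}+\mathbf{y}) = g(\mathbf{x}) + h(\mathbf{y})$, where $f,g,h$ are taken on a domain that contains the origin $\mathbf{0}$ and is closed under addition. I would first dispose of sufficiency, which is immediate: if $\xi$ solves $\xi(\mathbf{x}+\mathbf{y}) = \xi(\mathbf{x}) + \xi(\mathbf{y})$ and we set $f = \xi + a + b$, $g = \xi + a$, $h = \xi + b$, then $g(\mathbf{x}) + h(\mathbf{y}) = \xi(\mathbf{x}) + \xi(\mathbf{y}) + a + b = \xi(\mathbf{x}+\mathbf{y}) + a + b = f(\mathbf{x}+\mathbf{y})$, so (\ref{pexider1}) holds. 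The substance is in the converse.

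For necessity I would substitute $\mathbf{y} = \mathbf{0}$ into the hypothesis to obtain $f(\mathbf{x}) = g(\mathbf{x}) + h(\mathbf{0})$, and then $\mathbf{x} = \mathbf{0}$ to obtain $f(\mathbf{y}) = g(\mathbf{0}) + h(\mathbf{y})$. Defining the constants $a := g(\mathbf{0})$ and $b := h(\mathbf{0})$, these relations become $g(\mathbf{x}) = f(\mathbf{x}) - b$ and $h(\mathbf{y}) = f(\mathbf{y}) - a$. Re-inserting both into the hypothesis collapses it to $f(\mathbf{x}+\mathbf{y}) = f(\mathbf{x}) + f(\mathbf{y}) - a - b$. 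Setting $\xi := f - a - b$ then gives $\xi(\mathbf{x}+\mathbf{y}) = f(\mathbf{x}) + f(\mathbf{y}) - 2a - 2b = \xi(\mathbf{x}) + \xi(\mathbf{y})$, so $\xi$ is a solution of Cauchy's equation and, because $f$ was arbitrary, an arbitrary such solution. Back-substituting $f = \xi + a + b$ into the expressions for $g$ and $h$ yields $g = \xi + a$ and $h = \xi + b$, which are exactly (\ref{pexider2}). The simultaneous substitution $\mathbf{x}=\mathbf{y}=\mathbf{0}$ gives the consistency relation $f(\mathbf{0}) = a + b$, which together with $\xi(\mathbf{0}) = 0$ confirms that there is no hidden constraint linking $a$, $b$ and $\xi$.

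The only genuine obstacle is the domain hypothesis, since the substitutions $\mathbf{x}=\mathbf{0}$, $\mathbf{y}=\mathbf{0}$ require $\mathbf{0}$ to lie in the common domain and the key step defining $\xi$ requires the domain to be closed under the additions performed. I would verify this for the application in (\ref{proof15}): there the arguments are sums of log-probabilities, which range over the nonpositive reals (a set closed under addition), and the origin corresponds to the reference value $\log 1 = 0$, so both requirements are met. No regularity assumption on $f$ is needed to reach (\ref{pexider2}), because the theorem only reduces Pexider's equation to Cauchy's; the further step of forcing $\xi$ to be linear would be separate and would invoke the continuity already guaranteed by the design goal.
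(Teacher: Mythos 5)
Your proof is correct and follows essentially the same route as the paper's: substitute $\mathbf{y}=\mathbf{0}$ and $\mathbf{x}=\mathbf{0}$ to express $g$ and $h$ as $f$ shifted by constants, collapse the hypothesis to $f(\mathbf{x}+\mathbf{y}) = f(\mathbf{x})+f(\mathbf{y})-a-b$, and set $\xi = f - a - b$ to land on Cauchy's equation. The only additions beyond the paper's argument are the (trivial) sufficiency check and the remarks on the domain containing $\mathbf{0}$ and being closed under addition, both of which are harmless and, if anything, tighten the exposition.
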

A proof of \hyperref[pexider]{theorem 3.1} is given in the \hyperref[proofa1]{appendix}.  From (\ref{pexider1}) and (\ref{pexider2}) it follows that (\ref{proof15}) can be written as,
\begin{align}
\xi(\ell_p(x_1) + \ell_p(x_2),\ell_q(x_1) + \ell_q(x_2)) &= \xi(\ell_p(x_1),\ell_q(x_1))\nonumber\\
&+ \xi(\ell_p(x_2),\ell_q(x_2)),\label{proof16}
\end{align}
which is Cauchy's basic equation in two variables \cite{Cauchy}.  The general solution for a many-variable version of the basic Cauchy equation is given from the following theorem,
\begin{theorem}[Acz\'{e}l]\label{azcel2}
	The general solution of the functional equation,
	\begin{equation}
	f(x_1 + y_1,x_2 + y_2,\dots,x_n + y_n) = f(x_1,x_2,\dots,x_n) + f(y_1,y_2,\dots,y_n),
	\end{equation}
	with $f(0,\dots,0,x_k,0,\dots,0)$ continuous for $k = \{1,2,\dots,n\}$ is
	\begin{equation}
	f(x_1,x_2,\dots,x_n) = c_1x_1 + c_2x_2 + \dots + c_nx_n,
	\end{equation}
	where $c_k \in \mathbb{R}$.
\end{theorem}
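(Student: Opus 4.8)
The plan is to recognize the functional equation as the statement that $f$ is an \emph{additive} map on $\mathbb{R}^n$, that is, $f(\mathbf{u}+\mathbf{v}) = f(\mathbf{u}) + f(\mathbf{v})$ for all $\mathbf{u},\mathbf{v}\in\mathbb{R}^n$ (here $\mathbf{u}=(x_1,\dots,x_n)$ and $\mathbf{v}=(y_1,\dots,y_n)$), and then to decompose this additivity along the coordinate axes so that the $n$-variable problem collapses to $n$ independent single-variable Cauchy equations. First I would define the restriction of $f$ to the $k$-th axis,
\begin{equation}
f_k(t) \stackrel{\mathrm{def}}{=} f(0,\dots,0,t,0,\dots,0),
\end{equation}
with $t$ occupying the $k$-th slot. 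Writing any vector as the sum of its axis components and applying additivity $n-1$ times yields the separation
\begin{equation}
f(x_1,\dots,x_n) = \sum_{k=1}^n f_k(x_k).
\end{equation}

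Next I would show that each $f_k$ itself solves the one-dimensional Cauchy equation. Taking $\mathbf{u}$ and $\mathbf{v}$ both supported on the $k$-th axis, with entries $s$ and $t$ respectively, the hypothesis gives
\begin{equation}
f_k(s+t) = f_k(s) + f_k(t).
\end{equation}
By the assumption in the statement, $f_k$ is continuous. The classical result on the single-variable additive equation then applies: additivity forces $f_k(qt) = q\,f_k(t)$ for every rational $q$, so that $f_k(q) = c_k q$ on $\mathbb{Q}$ with $c_k \stackrel{\mathrm{def}}{=} f_k(1)$, and continuity extends this identity to all of $\mathbb{R}$ by density of the rationals, giving $f_k(t) = c_k t$.

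Substituting back into the separation yields $f(x_1,\dots,x_n) = \sum_{k=1}^n c_k x_k$, which is the claimed general solution. The main obstacle is precisely the single-variable step: absent the continuity hypothesis, the additive Cauchy equation admits pathological solutions (non-measurable, with graph dense in the plane) constructed from a Hamel basis of $\mathbb{R}$ over $\mathbb{Q}$, so the whole force of the argument rests on invoking continuity of the axis restrictions $f_k$ to eliminate these. Everything else is routine bookkeeping with the additive decomposition, and the reduction is legitimate because the hypothesis $f(0,\dots,x_k,\dots,0)$ continuous is exactly the continuity of each $f_k$ that the single-variable theorem requires.
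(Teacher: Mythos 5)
Your proof is correct, and it is essentially the standard (and intended) argument: decompose $f$ along the coordinate axes into $f(x_1,\dots,x_n)=\sum_k f_k(x_k)$, observe that each axis restriction $f_k$ satisfies the one-variable additive Cauchy equation, and then use the rational-scaling-plus-continuity argument to conclude $f_k(t)=c_k t$. Note that the appendix the paper points to for this theorem does not actually appear in the source, but the single-variable step you invoke is exactly the argument the paper does carry out in its proof of Theorem~\ref{Cauchy} (rational multiples first, then extension to $\mathbb{R}$ by taking limits of rational sequences), and your hypothesis bookkeeping --- that continuity of $f(0,\dots,0,x_k,0,\dots,0)$ is precisely continuity of $f_k$ --- is the right observation to make the reduction airtight.
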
    
A proof of \hyperref[azcel2]{theorem 3.2} is also given in the \hyperref[proofa2]{appendix}.  From here, the solution for (\ref{proof16}) is given by,
\begin{align}
\xi(\ell_p(x_1) + \ell_p(x_2),\ell_q(x_1) + \ell_q(x_2)) &= c_p(\ell_p(x_1) + \ell_p(x_2))\nonumber\\
&+ c_q(\ell_q(x_1) + \ell_q(x_2)).
\end{align}
Switching back to the original variables and substituting $\xi$ for the original $\phi$ we have,
\begin{align}
\phi(p(x_1)p(x_2),q(x_1)q(x_2)) &= c_p(\ell_p(x_1) + \ell_p(x_2))\nonumber\\
&+ c_q(\ell_q(x_1) + \ell_q(x_2)) + c_{\xi}\nonumber\\
&= c_p\log(p(x_1)p(x_2))\nonumber\\
&+ c_q\log(q(x_1)q(x_2)) + c_{\xi},
\end{align}
where $c_{\xi}$ is an arbitrary constant.  To fix $c_{\xi}$ and the other two constants, $c_p$ and $c_q$, consider the special case for when there are no constraints so that $p(x_1)p(x_2) = q(x_1)q(x_2)$.  Then we must have according to (\ref{noconstraints}),
\begin{equation}
\phi(q(x_1)q(x_2),q(x_1)q(x_2)) = (c_p + c_q)\log(q(x_1)q(x_2)) + c_{\xi} = 0,
\end{equation}   
which means that in general,
\begin{equation}
\frac{c_{\xi}}{(c_p + c_q)} = -\log(q(x_1)q(x_2)).
\end{equation} 
The left hand side however is a constant, independent of the argument of the logarithm on the right hand side.  The only reasonable solution is to set $c_{\xi} = 0$, which leads to,
\begin{equation}
\phi(q(x_1)q(x_2),q(x_1)q(x_2)) = (c_p + c_q)\log(q(x_1)q(x_2)) = 0,
\end{equation}
so that $c_q = -c_p$.  Thus, for general $p(x_1,x_2)$ and $q(x_1,x_2)$ we have 
\begin{equation}
\phi(p(x_1,x_2),q(x_1,x_2)) = |C|\log\left(\frac{p(x_1,x_2)}{q(x_1,x_2)}\right),
\end{equation}
where $|C|$ is an overall constant.

\paragraph{The general solution ---}
We can rewrite the expression for the differential equation (\ref{diffeq}) using the solution to the functional equation for an arbitrary prior $\varphi$ and posterior distribution $\rho$,
\begin{equation}
\frac{\delta S[\rho,\varphi]}{\delta \rho} = |C|\log\left(\frac{p(x)}{q(x)}\right).
\end{equation}
This equation can be integrated with respect to $p(x)$ to give,
\begin{equation}
S[\rho,\varphi] = |C|\int dx\, p(x)\log\left(\frac{p(x)}{q(x)}\right) + |B|\int dx\, p(x) + |A|, \label{gensol}
\end{equation}
where $|A|$ and $|B|$ are constants.  Since $S[p,q]$ is designed for ranking distributions $p(x)$ with respect to the prior $q(x)$, the overall constant $|A|$ can be set to zero, since it only introduces an overall scale which is independent of the probabilities.  Since a normalization constraint will always be imposed, the constant $|B|$ can simply be absorbed into whatever the normalization is set to.  Thus the term $|B|\int dx\, p(x)$ can also be dropped.  This leaves the relative entropy in the form,
\begin{equation}
S[\rho,\varphi] = |C|\int dx\, p(x)\log\left(\frac{p(x)}{q(x)}\right).
\end{equation}  
Comparison with eq. (\ref{dc2result}) shows that the function $\Phi$ is equivalent to\footnote{One can find this solution by simply differentiating (\ref{diffeq}) with respect to $x_1$ and $x_2$ which leads to the equation $y\Phi'''(y) + \Phi''(y) = 0$ where the primes denote the derivative with respect to $y$.  The solution to this differential equation is $\Phi(y) = ay\log y + by + c$, which is equivalent to (\ref{gensol}) \cite{CatichaBook}.},
\begin{equation}
\Phi\left(\frac{p(x)}{q(x)}\right) = \frac{p(x)}{q(x)}\log\left(\frac{p(x)}{q(x)}\right).
\end{equation}
The arbitrary constant $|C|$ can be set to $\pm 1$ for convenience depending on whether one wishes to \textit{maximize} or \textit{minimize} the relative entropy.  Since $x \log (ax)$ for $a > 0$ is a strictly convex function, it has a unique minimum.  This means that $-x\log(ax)$ is a strictly concave function with a unique maximum.  Thus, picking the convention to be that of maximizing the relative entropy, we choose $|C| = -1$ to get the final functional form,
\begin{equation}
S[\rho,\varphi] = -\int dx\, p(x)\log\left(\frac{p(x)}{q(x)}\right).
\end{equation} 

\paragraph{Bayes' rule as a special case ---}
It was first shown Williams \cite{Williams,Diaconis} that Bayes' rule can be derived as a special case of MaxEnt, however the result went unappreciated.  Caticha and Giffin \cite{Caticha_Giffin_1,Caticha_Giffin_2} demonstrated the same result from the perspective of the ME method by associating data with constraint information, which ultimately allows for many generalizations.  The problem is as follows: consider that you wish to conduct inference on a set of parameters $\theta \in \Theta$ so that you can update a prior distribution $q(\theta)$ to the appropriate posterior $p(\theta)$ when information comes in the form of data $x \in \mathbf{X}$.  The relation between $\Theta$ and $\mathbf{X}$ is determined through a likelihood function $q(x|\theta)$ which is also given as prior information.  

The key insight from \cite{Caticha_Giffin_1,Caticha_Giffin_2} is to recognize that before the data $x \in \mathbf{X}$ is specified, its value is unknown and hence a generic inference requires that the universe of discourse be the joint space $\Theta \times \mathbf{X}$ and not just the parameter space $\Theta$.  It is therefore the joint distribution $q(x,\theta)$ which needs to be updated and not simply $q(\theta)$.  With this in mind, consider that we now collect data which takes the value $x' \in \mathbf{X}$.  Then, the posterior $p(x)$ is constrained to reflect the known value,
\begin{equation}
p(x) = \int d\theta\, p(x,\theta) = \delta(x - x').\label{bayesconstraint}
\end{equation}
While this data constrains the form of $p(x,\theta)$, it is not sufficient to determine $p(\theta|x)$ since any choice will satisfy,
\begin{equation}
p(x,\theta) = p(x)p(\theta|x) = \delta(x - x')p(\theta|x).
\end{equation}
To determine the full posterior, we maximize the relative entropy with respect to the constraint (\ref{bayesconstraint})
\begin{align}
0 = \delta \left[\vphantom{\int}S[\rho,\varphi] \right.&- \alpha\left(\int dxd\theta\,p(x,\theta) - 1\right)\nonumber\\
 &\left.- \int dx\, \lambda(x)\left(\int d\theta\, p(x,\theta) - \delta(x - x')\right)\right],
\end{align}
which leads to the joint posterior
\begin{equation}
p(x,\theta) = q(x,\theta)\frac{e^{\lambda(x)}}{Z},
\end{equation}
where $Z = \int dxd\theta\, q(x,\theta)\exp(\lambda(x))$ is a normalization factor.  The Lagrange multiplier $\lambda(x)$ can be determined from (\ref{bayesconstraint}),
\begin{equation}
\int d\theta\, p(x,\theta) = \int d\theta\,q(x)q(\theta|x)\frac{e^{\lambda(x)}}{Z} = q(x)\frac{e^{\lambda(x)}}{Z} = \delta(x - x'),
\end{equation}
so that the joint posterior can be written
\begin{equation}
p(x,\theta) = q(x,\theta)\frac{\delta(x - x')}{q(x)} = q(\theta|x)\delta(x - x').
\end{equation}
Thus, the posterior $p(\theta)$ is,
\begin{equation}
p(\theta) = \int dx\, p(x,\theta) = q(\theta|x') = q(\theta)\frac{q(x'|\theta)}{q(x')},
\end{equation}
which is Bayes' rule.  Further generalizations to Bayes' rule are also discussed in \cite{Caticha_Giffin_1,Caticha_Giffin_2}.

\subsection{Summary}
In this chapter we designed an entropic inference for the purposes of updating our beliefs in the presence of new information that comes in the form of constraints.  By imposing certain design criteria which adhere to the \textit{principle of minimal updating}, we were able to use eliminative induction to find a general functional form of the relative entropy (\hyperref[section42]{Section 4.2}).  We also designed a tool for ranking the correlations present within a probability distribution called the \textit{total correlation}.  This was achieved by a similar set of design criteria used for designing the relative entropy with the exception of \textit{split coordinate invariance} (\hyperref[corollary431]{Corollary 4.3.1}).  The results of \hyperref[section43]{Section 4.3} produce an entire family of correlation quantifiers including the mutual information (\hyperref[section433]{Section 4.3.3}).  In the next section we will employ a practical use of the mutual information in machine learning.

\bibliography{bigbib.bib}

\begin{appendix}

\section{Rules for Deductive Inference}\label{logicappendix}

In this Appendix we will construct the simple system $\mathcal{D} = \left\{\mathcal{A},\Omega,\mathcal{I},\mathcal{Z},\mathcal{T}\right\}$, where $\mathcal{A}$ is a set of well-formed formulas called \hyperref[propositions]{propositions}, $\Omega$ is a collection of disjoint subsets of \hyperref[connectives]{logical connectives} (relations), $\mathcal{I}$ is a set of distinguished propositions called \hyperref[axioms]{axioms}, which include the three laws of thought, and $\mathcal{Z}$ is a set of \hyperref[inferencerules]{inference rules} concerning the logical connectives.  The object $\mathcal{T}$ contains any additional structures we choose to add to our framework.  To build the system $\mathcal{D}$, we will first discuss the construction of $\mathcal{A}$.

\subsection{The space of propositions $\mathcal{A}$}\label{propositions}
The universe of discourse, or \textit{well formed formulas} of $\Sigma^*$, of interest will be the space of propositions, which we label as $\mathcal{A}$.  An element of $\mathcal{A}$ is a single proposition\footnote{We recognize that often in foundational mathematics, the calculus of logic is developed before any notion of sets is constructed.  Here we take some liberties with this and assume that the definition of the epsilon relation $\in$ and the axioms of ZF set theory are understood in advance.}, $a \in \mathcal{A}$, which is defined as,
\begin{define}\label{proposition}
	A proposition, $a \in \mathcal{A}$, is a statement, or variable, which is either true or false (labeled $\mathbf{T}$ and $\mathbf{F}$ respectively).  The set $\mathcal{A}$ is defined as the countable collection,
	\begin{equation}
	\mathcal{A} \stackrel{\mathrm{def}}{=} \left\{\frac{}{}a\,\middle|\, (\nu(a)\Leftrightarrow \mathbf{T})\vee (\nu(a) \Leftrightarrow \mathbf{F})\right\}.
	\end{equation}
\end{define}
As pointed out by Quine \cite{Quine}, the ``meaning'' of propositions can be vague even if their ``truth'' value as it relates to the real world is well defined.  At this level of discourse, we only assume that whatever propositions are, they have a well defined truth value.  In this sense, all theorems are propositions in themselves, however statements which fail to be valid for reasons of incompleteness are not propositions by definition.

The definition \hyperref[proposition]{above} establishes the \textit{semantics} as consisting of a \textit{bivalent}, two-valued or \textit{Boolean} logic \cite{Bool,Bool2}.  This fact is often imposed as a \textit{principle of bivalence} \cite{Tomassi}.  While there are only two options, true or false, this principle is not necessarily the same as the \hyperref[lawsoflogic]{law of excluded middle}. One can choose to impose the law of excluded middle while excluding the principle of bivalence\footnote{For example, in a three valued logic where one has three possible truth values, $\mathbf{A},\mathbf{B}$ and $\mathbf{C}$, one no longer has a principle of bivalence however one can always still impose the law of excluded middle by demanding that all propositions \textit{must} take one of the three possible truth values, i.e. if a proposition is not $\mathbf{A}$ or $\mathbf{B}$ then it \textit{must} be $\mathbf{C}$.} or vice versa \cite{Tomassi}.  

The function $\nu$ is called a \textit{valuation}, which is a map from the set $\mathcal{A}$ to its truth value.
\begin{define}\label{valuation}
	A valuation $\nu:\mathcal{A} \rightarrow \left\{\mathbf{T},\mathbf{F}\right\}$ is a function which returns the truth value of a proposition.
\end{define}
As was discussed in the preface, the meaning of ``truth'' is still a subject of debate among philosophers.  Here we simply identify that true and false are two mutually exclusive properties of propositions, but we do not offer any explanation as to how any particular proposition obtains its ``truth'' value, which would require a metatheory.  Instead we take the truth values of propositions as given and only worry about developing the calculus for how to manipulate them.  

For any particular universe of discourse $\mathcal{A}$, we can collect all of the associated truth values for each $a \in \mathcal{A}$ into a set which is called a \textit{world}.  A world is simply an identification of which propositions $a \in \mathcal{A}$ are true, and which are false\footnote{There are several ways in which to define a world $w$ \cite{Kripke}.  One is the specification described here in which $w$ is a collection of propositions and their associated truth values.  Another possibility is to simply have $w$ contain only propositions which are true, i.e. if $a$ is true in $w$ then $a \in w$, and if $b$ is false in $w$ then $\neg b \in w$.}.  We require in the very least that any such world be internally consistent, i.e. that it not contain contradictions.  As we will see in conditional logic, in order to address propositions about the future one must extend classical logic to include what are typically called \textit{possible worlds}. 
\begin{define}\label{possibleworlds}
	A world $w$, is a collection of truth values \cite{Kripke} associated to an underlying set of propositions $\mathcal{A}$.  The collection of all possible worlds $w$ over the set of propositions $\mathcal{A}$ is called $W$.
\end{define}
For each $a \in \mathcal{A}$, a world $w \in W$ records a truth value, so that $|w| = |\mathcal{A}|$, i.e. the number of elements in $w$ is the same as in $\mathcal{A}$.  Instead of defining $w$ in terms of truth values, one could instead define it as the collection of propositions from $\mathcal{A}$ that are true in $w$.  Due to the law of excluded middle, this representation gives $|w| = |\mathcal{A}|/2$.  Some uses of possible worlds allow for worlds $w \in W$ to only have truth values for a subset of $\mathcal{A}$, however we will not consider that direction here.

Possible worlds introduce an element of subjectivity into logical inference, since different agents can each assign a possible world as a candidate for the \textit{actual world}.  The actual world, $\tilde{w} \in W$, is the one that corresponds to ultimate reality and hence contains in it solely objective facts.  Some philosophers take an extreme position on the existence of possible worlds.  David Lewis \cite{LewisBook2} in particular pushed for an agenda of \textit{modal realism}, in which each possible world is just as real as the actual world.  This is similar to positions held by many of the \textit{many worlds approach to quantum mechanics} crowd\footnote{Theories such as quantum mechanics present a significant challenge to traditional logic since they describe a world which is necessarily \textit{indeterministic}, i.e. the truth value of propositions about the future are not only unknown but also unknowable according to the theory.  Theories which attempt to address these issues fall under the category of \textit{quantum logic}, a subject first initiated by von Neumann \cite{vonNeumann2}.} \cite{DeWittGraham}.  We do not adopt either of these positions here, since there must exist some ultimate reality, and that ultimate reality is what should be associated to the \textit{actual world}.  

The definition of a world allows us to generalize the valuation function (\ref{valuation}) to one which takes a world as an argument,
\begin{define}
	Let $w$ be a possible world over the propositions $\mathcal{A}$.  The valuation of a statement $a \in \mathcal{A}$ with respect to the world $w$ is
	\begin{equation}
	v:\mathcal{A}\times W \rightarrow \left\{\mathbf{T},\mathbf{F}\right\}.\label{truthworld}
	\end{equation}
\end{define}
Semantically, if $a$ is true in $w$, then 
\begin{equation}
w \vDash a,
\end{equation}
or, ``$w$ proves $a$.''  Another notation is to write the semantic consequence with the subscript of the world in which it is true, $\vDash_w a$.  The use of possible worlds will become important when we discuss modal logic in \hyperref[modallogic]{Section 8}

A world forms an \textit{interconnected web of beliefs} of complete knowledge \cite{CatichaBook}, which, as a rule of thumb, must necessarily be internally consistent.  Thus, any consistent world must adhere to the laws of logic.  If a possible world $w \in W$ contains a contradiction, then by the \hyperref[explosion]{principle of explosion}, every proposition in it is necessarily true.  This world is sometimes labeled $\lambda \in W$.  

We can form an equivalence between subsets of $W$ and the space of propositions $\mathcal{A}$.  Consider labeling a subset of $W$ in which $a$ is true as\footnote{We say that the subset of worlds $W_a \subseteq W$ in which $a$ is true is \textit{generated} by the proposition $a$.},
\begin{equation}
W_a \stackrel{\mathrm{def}}{=} \left\{\frac{}{}w\,\middle|\, \nu(a,w)\right\}\subseteq W.\label{setofsubsetsworlds}
\end{equation}
We then identify the set of subsets of (\ref{setofsubsetsworlds}) as $\mathcal{W} = \left\{W_a \in 2^{W}\right\}$, which is isomorphic\footnote{One can show that each set $W_a \in \mathcal{W}$ uniquely corresponds to a proposition $a \in \mathcal{A}$ so that there exists some bijection $f:\mathcal{A}\rightarrow\mathcal{W}$.} to the space of propositions $\mathcal{A}$,
\begin{equation}
\mathcal{W} \cong_{\mathrm{set}} \mathcal{A}.
\end{equation}

\subsection{Logical connectives $\Omega$}\label{connectives}
One can define maps from one or many propositions to another.  These are called $n$-ary connectives, of which the unary and binary ones will be of importance.  The set of $n$-ary connectives can be written as the union of $n$ disjoint subsets,
\begin{equation}
\Omega = \bigcup_{i\geq 0}\Omega_i = \Omega_0\cup\Omega_1\cup\Omega_2\cup\dots\cup\Omega_k\cup\dots\cup\Omega_n.
\end{equation}
The sets $\Omega_1$ and $\Omega_2$ are the \textit{unary} and \textit{binary} connectives respectively.  Unary connectives are defined as,
\begin{define}
	A unary connective, $u\in\Omega_1:\mathcal{A}\rightarrow\mathcal{A}$, is a map from a single proposition to another, i.e. $a\mapsto u(a) \in \mathcal{A}$.
\end{define}
Since the number of possible truth values is two, there are $2^2 = 4$ unique unary connectives on proposition space.  They are called, \textit{tautology}, \textit{contradiction, identity} and \textit{negation}.  The \textit{tautology} ($\top$) and \textit{contradiction} ($\bot \Leftrightarrow \neg \top$)\footnote{Often times the tautology and contradiction connectives are defined as elements of $\Omega_0$.} connectives map every proposition to either true or false\footnote{In other words, no matter what the propositions $a$ is, $\forall\, a : \top a \Leftrightarrow \mathbf{T}$ and $\forall\, a : \bot a \Leftrightarrow \mathbf{F}$.}, regardless of their actual truth value.  The identity connective ($\mathrm{Id}$) simply returns the original proposition (i.e. $\forall\, a : \mathrm{Id}(a) \Leftrightarrow a$), and the negation map ($\neg$) returns the complement of the proposition (i.e. $\forall\, a : \neg(a) \Leftrightarrow \neg a$) so that $\nu(\neg a) \Leftrightarrow \neg\nu(a)$.  A convenient construction for determining the truth values of $n$-ary connectives are truth tables.  The following is the truth table for the unary connectives,
\begin{table}[H]
	\centering
	\begin{tabular}{|c|c|c|c|c|}
		\hline
		$a$ & $\top$ & $\bot$ & $\mathrm{Id}$ & $\neg$\\
		\hline
		\hline
		$\tru$ & $\tru$ & $\fals$ & $\tru$ & $\fals$\\
		$\fals$ & $\tru$ & $\fals$ & $\fals$ & $\tru$\\
		\hline
	\end{tabular}
	\caption{The four unary operators acting on a proposition $a$.}
	\label{unary}
\end{table}

The binary connectives are defined similarly,
\begin{define}
	A binary connective, $w\in\Omega_2:\mathcal{A}\times\mathcal{A}\rightarrow \mathcal{A}$, is a map from two propositions to another, i.e. $(a,b) \mapsto w(a,b) \in \mathcal{A}$.
\end{define}
Since for any pair of propositions, $a$ and $b$, there are four possible truth values, $(\tru,\tru),(\tru,\fals),(\fals,\fals)$ and $(\fals,\tru)$, there are $2^{2^2} = 2^4 = 16$ unique binary connectives.  These include the \textit{tautology} and \textit{contradiction} from the unary connectives, two identities ($\mathrm{Id}_a$) and ($\mathrm{Id}_b$) and the two negations $(\neg a)$ and $(\neg b)$.  In addition we have the \textit{and} ($\wedge$), or \textit{conjunction}, and \textit{or} ($\vee$), or \textit{disjunction} connectives, and their negations, ($ \neg \wedge \Leftrightarrow\uparrow$)\footnote{The negated and, or \textit{nand}, operator is known to be \textit{functionally complete} \cite{Peirce,Sheffer}, meaning it can be used to generate all other $n$-ary operators and is often used as the basis for logic gates in digital hardware.  The symbol $\uparrow$ is called the \textit{Sheffer stroke}, named after Sheffer who first showed the completeness of nand \cite{Sheffer}.  The negated \textit{or}, or \textit{nor}, connective can also be used to generate the entire set of connectives and is sometimes called the Pierce stroke after Charles S. Pierce.} and ($\neg \vee \Leftrightarrow\downarrow$).  There is also \textit{exclusive or} ($\dot{\vee}$), implication arrows, ($\Rightarrow$) and ($\Leftarrow$), their negations, ($\notright \Leftrightarrow \neg \Rightarrow$) and ($\notleft\Leftrightarrow \neg \Leftarrow$), and finally the equivalence ($\Leftrightarrow$), or ``if and only if.'' Their truth values are given in the following table,

\begin{table}[H]
	\makebox[\textwidth][c]{
		\begin{tabular}{|c|c||c|c|c|c|c|c|c|c|c|c|c|c|c|c|c|c|}
			\hline
			$a$ & $b$ & $\top$ & $\bot$ & $\mathrm{Id}_a$ & $\mathrm{Id}_b$ & $\neg a$& $\neg b$& $\wedge$& $\vee$& $\uparrow$& $\downarrow$&$\dot{\vee}$&$\Leftrightarrow$&$\Rightarrow$&$\Leftarrow$&$\notright$&$\notleft$\\
			\hline
			\hline
			$\tru$ & $\tru$ &$\tru$&$\fals$&$\tru$&$\tru$&$\fals$&$\fals$&$\tru$&$\tru$&$\fals$&$\fals$&$\fals$&$\tru$&$\tru$&$\tru$&$\fals$&$\fals$\\
			$\tru$ & $\fals$ &$\tru$&$\fals$&$\tru$&$\fals$&$\fals$&$\tru$&$\fals$&$\tru$&$\tru$&$\fals$&$\tru$&$\fals$&$\fals$&$\tru$&$\tru$&$\fals$\\
			$\fals$ & $\tru$ &$\tru$&$\fals$&$\fals$&$\tru$&$\tru$&$\fals$&$\fals$&$\tru$&$\tru$&$\fals$&$\tru$&$\fals$&$\tru$&$\fals$&$\fals$&$\tru$\\
			$\fals$ & $\fals$ &$\tru$&$\fals$&$\fals$&$\fals$&$\tru$&$\tru$&$\fals$&$\fals$&$\tru$&$\tru$&$\fals$&$\tru$&$\tru$&$\tru$&$\fals$&$\fals$\\
			\hline
	\end{tabular}}
	\caption{The $2^4 = 16$ binary operators acting on two propositions $a$ and $b$.}
	\label{binaryconnectives}
\end{table}

Often in applications of Boolean algebra, the disjunction is written as a plus sign, $a\vee b \rightarrow a$``$+$''$b$, conjunction is omitted, $a\wedge b \rightarrow ab$, and the equivalence connective is written as an equals sign, $a\Leftrightarrow b \rightarrow a$``$=$''$b$.  We shall not use this notation however, since it leads to confusion with the algebra over fields.  

If the statement $a \Leftrightarrow b$ is true, then the two propositions $a$ and $b$ are \textit{equivalent in their truth value}, but they are not necessarily ``equal'' as statements.  To say that two propositions are ``equal'' is perhaps misleading.  An example of this can best be seen in the following,
\begin{equation}
a \Leftrightarrow (a \Leftrightarrow \mathbf{T}).\label{aatrue}
\end{equation}
Translated into English, this means that any statement such as ``a is true,'' is \textit{equivalent} to the proposition ``a.'' This may seem strange, since they are different statements, but the key here is that they are only equivalent with respect to their ``truth'' values, thus the use of the equivalence connective $\Leftrightarrow$ in (\ref{aatrue}) rather than an equals sign.   
\paragraph{Functional completeness ---}The set of connectives $\{\wedge,\vee,\neg\}$ are \textit{functionally complete} \cite{Post}, which means that they act as generators for all other $n$-ary connectives on the bivalent valued propositions, i.e. they provide a \textit{basis} for the Boolean algebra of propositions.  This can be seen through the following set of definitions,
\paragraph{The connectives $\wedge$,$\vee$ and $\neg$ as generators}
\begin{enumerate}
	\item \textbf{Not and} - $\forall a,b \in \mathcal{A}: a \uparrow b \Leftrightarrow \neg(a \wedge b)$.
	\item \textbf{Not or} - $\forall a,b \in \mathcal{A}: a\downarrow b \Leftrightarrow \neg(a \vee b)$.
	\item \textbf{Implication} - $\forall a,b\in \mathcal{A}: a \Rightarrow b \Leftrightarrow \neg a \vee b$.
	\item \textbf{Exclusive or} - $\forall a,b \in \mathcal{A}: a\dot{\vee} b \Leftrightarrow (a\vee b)\wedge \neg(a \wedge b)$.
	\item \textbf{Equivalence} - $\forall a,b \in \mathcal{A}: (a \Leftrightarrow b) \Leftrightarrow \neg(a \vee b) \vee (a \wedge b)$.
\end{enumerate}
While the set $\{\wedge,\vee,\neg\}$ can generate all other $n$-ary connectives, it is not \textit{minimal}, since \textit{or} can be written using $\wedge$ and $\neg$, and similarly \textit{and} can written in terms of $\vee$ and $\neg$ as in De Morgan's laws from (\ref{table3}),
\begin{align}
\forall a,b \in \mathcal{A} : a \vee b &\Leftrightarrow \neg(\neg a \wedge \neg b),\label{orand}\\
\forall a,b \in \mathcal{A} : a \wedge b &\Leftrightarrow \neg(\neg a \vee \neg b).\label{andor} 
\end{align}
This establishes the correspondence,
\begin{figure}[H]
	\centering
	\begin{tikzcd}
	\mathcal{A}\times\mathcal{A} \arrow[r,"{(\neg,\neg)}", shift left=.5ex] \arrow[d,"\wedge"']& \mathcal{A}\times\mathcal{A} \arrow[d,"\vee"]\arrow[l,shift left=.5ex]\\
	\mathcal{A}\arrow[r,shift left=.5ex]& \mathcal{A}\arrow[l,"\neg",shift left=.5ex]
	\end{tikzcd}
	\caption{Commutative diagram between, $\neg \circ \wedge\Leftrightarrow \vee\circ(\neg,\neg)$, and, $\neg\circ \vee\Leftrightarrow\wedge\circ(\neg,\neg)$.}
\end{figure}
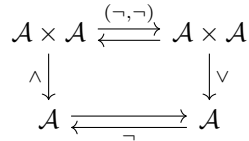
The most minimal functionally complete connectives are the \textit{not and} and \textit{not or} operators.  Typically, \textit{not and}, $\uparrow$, is used in the study of digital circuits, from which the three connectives $\{\wedge,\vee,\neg\}$ can be defined as follows,
\paragraph{The connective $\uparrow$ as a minimal generator}
\begin{enumerate}
	\item \textbf{Negation} - $\forall a \in \mathcal{A} : \neg a \Leftrightarrow a \uparrow a$.
	\item \textbf{And} - $\forall a,b\in \mathcal{A}: a\wedge b \Leftrightarrow \neg (a \uparrow b) \Leftrightarrow (a \uparrow b) \uparrow (a \uparrow b)$.
	\item \textbf{Or} - $\forall a,b \in \mathcal{A}: a \vee b \Leftrightarrow (a \uparrow a) \uparrow (b \uparrow b)$.
\end{enumerate}
Once one has these three generators, one can define the entire set of connectives, and hence $\uparrow$ is functionally complete.  A similar set of connectives holds for the \textit{not or} connective, $\downarrow$, which one gets by replacing the $\uparrow$ in the above list with $\downarrow$ and $\wedge$ with $\vee$.  Another popular basis is the \textit{ring basis}, which consists of the connectives $\{\wedge,\dot{\vee}\}$ \cite{Stone,GivantHalmos}.  The sets of functionally complete connectives were characterized by E. Post \cite{Post}.

\subsection{The algebra of propositions $\mathcal{I}$}\label{rulesofinference}
We will now turn our attention to constructing a set of axioms $\mathcal{I}$ for our logical system.  From the properties of the connectives (\ref{unary}) and (\ref{binaryconnectives}), one can write the laws of logic as a set of \textit{tautologies},
\begin{flalign}
\text{(Law of identity)}&& (a \Leftrightarrow a) &&\\
\text{(Law of non-contradiction)}&& \neg (a \wedge \neg a) &&\\
\text{(Law of excluded middle)}&& (a\vee \neg a)&&\label{laws}
\end{flalign}
In our formal system, the negation operator satisfies the useful \textit{double negative tautology}\footnote{The double negation equivalence is not accepted in intuitionist logic.},
\begin{equation}
\neg(\neg a) \Leftrightarrow a.
\end{equation}
Algebraically, the operators $\wedge$, $\vee$ and $\neg$ have several interesting properties.  These properties will be exploited in the construction of probability theory and so it will be useful to highlight them here.  

One can form a \textit{unital} algebra from the space of propositions, which is identified as a \textit{Boolean} algebra \cite{Bool}.  The \textit{or} relation $\vee$ acts as the sum on $\mathcal{A}$, and the \textit{and} relation, $\wedge$, acts as the product.  Absolute false, $\fals$, is the additive identity, while absolute truth, $\tru$, is the multiplicative identity.  The following list highlights the algebraic properties of $\wedge$ and $\vee$ which are considered as axioms in $\mathcal{I}$,

\paragraph{Algebraic properties of $\wedge$ and $\vee$ ---}
The conjunction and disjunction have the following algebraic properties,
\begin{enumerate}\label{algebraicproposition}
	\item \textbf{Commutativity of $\vee$} - The \textit{or} relation is commutative,
	\begin{equation}
	\forall a,b \in \mathcal{A} : a \vee b \Leftrightarrow b \vee a.\label{commutativeor}
	\end{equation}
	\item \textbf{Associativity of $\vee$} - The \textit{or} relation is associative,
	\begin{equation}
	\forall a,b,c \in \mathcal{A} : a\vee(b\vee c) \Leftrightarrow (a\vee b)\vee c \Leftrightarrow a\vee b \vee c.\label{associativeor}
	\end{equation}
	\item \textbf{Additive identity} - The \textit{or} relation has an additive identity, which is the absolute false $\fals$,
	\begin{equation}
	\forall a \in \mathcal{A} : a\vee \fals \Leftrightarrow a.\label{unitalor}
	\end{equation}
	\item \textbf{Multiplicative identity} - The \textit{and} relation has a multiplicative identity,which is the absolute true $\tru$,
	\begin{equation}
	\forall a \in \mathcal{A} : a \wedge \tru \Leftrightarrow a.\label{unitaland}
	\end{equation}
	\item \textbf{Commutativity of $\wedge$} - The \textit{and} relation is commutative,
	\begin{equation}
	\forall a,b \in \mathcal{A} : a \wedge b \Leftrightarrow b \wedge a.\label{commutativeand}
	\end{equation}
	\item \textbf{Associativity of $\wedge$} - The \textit{and} relation is associative,
	\begin{equation}
	\forall a,b,c \in \mathcal{A}: a\wedge (b \wedge c) \Leftrightarrow (a \wedge b)\wedge c \Leftrightarrow a \wedge b \wedge c.\label{associativeand}
	\end{equation}
	\item \textbf{Distributivity of $\vee$ over $\wedge$} - The \textit{or} relation distributes over the \textit{and} relation,
	\begin{equation}
	\forall a,b,c \in \mathcal{A} : a \vee (b \wedge c) \Leftrightarrow (a \vee b)\wedge (a \vee c).\label{distributiveone}
	\end{equation}
	\item \textbf{Distributivity of $\wedge$ over $\vee$} - The \textit{and} relation distributes over the \textit{or} relation,
	\begin{equation}
	\forall a,b,c \in \mathcal{A} : a\wedge (b\vee c) \Leftrightarrow (a \wedge b) \vee (a \wedge c).\label{distributivetwo}
	\end{equation}
\end{enumerate}

There are several other interesting algebraic relations between $\wedge$, $\vee$ and the use of the negation relation $\neg$, which are highlighted in the following table of axioms,
\begin{table}[H]
	\makebox[\textwidth][c]{
		\begin{tabular}{|c|c|c|}
			\hline
			Name & Proposition & Dual\\
			\hline
			\hline
			Idempotence &$a \wedge a \Leftrightarrow a$ & $a \vee a \Leftrightarrow a$\\ 
			De Morgan's Laws&$\neg (a \wedge b) \Leftrightarrow \neg a \vee \neg b$ & $\neg (a \vee b) \Leftrightarrow \neg a \wedge \neg b$\\
			Absorption&$(a \wedge b)\vee b \Leftrightarrow b$ & $(a \vee b)\wedge b \Leftrightarrow b$\\
			Exclusivity&$(a \dot{\vee} b) \Leftrightarrow (a \vee b)\wedge \neg(a \wedge b)$ & $(a \Leftrightarrow b) \Leftrightarrow (a \wedge b)\vee \neg(a \vee b)$\\
			\hline
	\end{tabular}}
	\caption{Various consequences of the algebra of propositions.  The left and right columns are dual in the sense of replacing all $\wedge$'s with $\vee$'s.  Cox also provides a nice summary of relations in \cite{Cox}.}
	\label{table3}
\end{table}
Together, the space of propositions $\mathcal{A}$ with the connectives $\{\wedge,\vee,\neg\}$ form the algebra of propositions $(\mathcal{A},\wedge,\vee,\neg)$ whose properties were listed in (\ref{algebraicproposition}).  Since the space of propositions $\mathcal{A}$ is isomorphic to $\mathcal{W}$ (\ref{setofsubsetsworlds}), the algebra of propositions can be brought into correspondence with the algebra of sets over $\mathcal{W}$,
\begin{equation}
(\mathcal{A},\wedge,\vee,\neg) \cong_{\mathrm{hom}} (\mathcal{W},\cap,\cup,\backslash \mathcal{W}),
\end{equation}
where $\cap$ is the intersection, $\cup$ is the union and $\backslash \mathcal{W}$ is the complement.  This identifies $\mathbf{F}$ with the empty set $\emptyset$ and $\mathbf{T}$ with the entire set $\mathcal{W}$\footnote{Given the set of worlds $W_a$ and $W_b$ in which $a$ and $b$ are true respectively, it is easy to show that,
	\begin{align}
	W_{a\wedge b} &= W_a\cap W_b\\
	W_{a\vee b} &= W_a\cup W_b.
	\end{align}   
	This result is only possible when one adopts the formalism of possible worlds described in \hyperref[propositions]{Section 2}.  If one allows for a definition of worlds that only contain subsets of $\mathcal{A}$, then the correspondence cannot be made since $\mathcal{W}$ and $\mathcal{A}$ will not be one-to-one as sets.}.  The homomorphism established here is an example of \textit{Stone's theorem} \cite{Stone}, which says that any Boolean algebra is isomorphic to a field of sets.  The field of sets in particular is a \textit{compact} and \textit{totally disconnected Hausdorff} space, which is called a \textit{Stone space}.  
\begin{theorem}[Stone's representation theorem]\label{stone}
	Every Boolean algebra is isomorphic to the dual algebra of its associated Stone space. 
\end{theorem}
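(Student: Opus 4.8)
The plan is to realize the ``dual algebra'' of the Stone space concretely as the Boolean algebra of clopen subsets of the space of ultrafilters, and then to exhibit a natural map from the given Boolean algebra into it and prove it is an isomorphism. Given an abstract Boolean algebra $(\mathcal{A},\wedge,\vee,\neg)$, I would first build its Stone space $S(\mathcal{A})$ as the set of all ultrafilters on $\mathcal{A}$ --- equivalently, the set of Boolean homomorphisms $h:\mathcal{A}\to\{\fals,\tru\}$, or the set of prime ideals --- and topologize it by declaring the sets $\hat{a} \stackrel{\mathrm{def}}{=} \{F \in S(\mathcal{A}) \mid a \in F\}$, for $a\in\mathcal{A}$, to be a basis. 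Each $\hat a$ is simultaneously closed and open, so these clopen sets generate the topology and, under $\cap,\cup$ and complementation, form a Boolean algebra which I take to be the dual algebra of $S(\mathcal{A})$.

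The next step is to verify that the \emph{Stone map} $\eta:\mathcal{A}\to\mathrm{Clopen}(S(\mathcal{A}))$, $a\mapsto\hat a$, is a homomorphism of Boolean algebras. This is a direct computation using the defining closure properties of ultrafilters: an ultrafilter contains $a\wedge b$ iff it contains both $a$ and $b$, contains $a\vee b$ iff it contains at least one, and contains exactly one of $a$ and $\neg a$. These translate immediately into
\begin{equation}
\widehat{a\wedge b} = \hat a\cap\hat b, \quad \widehat{a\vee b}=\hat a\cup\hat b, \quad \widehat{\neg a} = S(\mathcal{A})\setminus\hat a,
\end{equation}
and send $\fals\mapsto\emptyset$ and $\tru\mapsto S(\mathcal{A})$, so that $\eta$ respects all of the operations listed in (\ref{algebraicproposition}).

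For injectivity I would show that distinct elements are separated by some ultrafilter: if $a\neq b$ then, without loss of generality, $a\wedge\neg b\neq\fals$, and any proper filter containing $a\wedge\neg b$ extends to an ultrafilter $F$ with $a\in F$ but $b\notin F$, whence $\hat a\neq\hat b$. Surjectivity --- that every clopen subset of $S(\mathcal{A})$ equals some $\hat a$ --- then follows because a clopen set is compact and covered by basic opens $\hat{a_i}$, so by compactness it is a finite union $\hat{a_1}\cup\dots\cup\hat{a_n}=\widehat{a_1\vee\dots\vee a_n}$. Finally I would confirm that $S(\mathcal{A})$ really is a Stone space (compact, Hausdorff, totally disconnected): Hausdorffness and total disconnectedness come from the separating family of clopens $\hat a$, while compactness is equivalent to the statement that a family of basic closed sets with the finite intersection property has nonempty intersection, which is again the ultrafilter extension lemma.

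The main obstacle is precisely the existence of enough ultrafilters: both the injectivity step and the compactness of $S(\mathcal{A})$ rest on the \emph{Boolean prime ideal theorem} (equivalently the ultrafilter lemma), a genuine choice principle strictly weaker than the full axiom of choice but not provable in ZF alone. Everything else is algebraic bookkeeping or routine point-set topology; the conceptual weight of the theorem sits in guaranteeing that the abstract algebra admits enough two-valued homomorphisms to be faithfully represented as a field of sets.
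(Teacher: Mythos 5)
Your proposal is the standard and correct proof of Stone's theorem, and it is essentially complete: the Stone space as ultrafilters (equivalently two-valued homomorphisms), the basic clopens $\hat a$, the verification that the Stone map is a homomorphism, injectivity via extending a proper filter containing $a\wedge\neg b$ to an ultrafilter, and surjectivity onto the clopen algebra by compactness. The paper itself supplies no proof of this theorem --- it is stated with a citation only --- so there is nothing to match step for step. What the paper does offer, in the preceding discussion, is the informal correspondence $(\mathcal{A},\wedge,\vee,\neg)\cong(\mathcal{W},\cap,\cup,\backslash\mathcal{W})$ built from ``possible worlds,'' and this is your construction in disguise: a consistent world assigning a truth value to every proposition is precisely a Boolean homomorphism $\mathcal{A}\to\{\tru,\fals\}$, i.e.\ an ultrafilter, and the set $W_a$ of worlds in which $a$ is true is your basic clopen $\hat a$. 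Where you go beyond the paper is in supplying the topological half of the statement (that the clopens of a compact, totally disconnected Hausdorff space recover the algebra, with surjectivity coming from compactness) and, more importantly, in identifying the genuine logical cost: the paper's footnote asserts without argument that $a\mapsto W_a$ is a bijection, but separating distinct elements by a world is exactly the Boolean prime ideal theorem, which is not provable in ZF alone. Flagging that dependence is the one substantive point your writeup adds that the paper silently assumes.
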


\subsection{The material implication connective $\Rightarrow$}\label{materialimplication}
The material implication connective\footnote{Often the material implication connective is referred to as a conditional, however we will avoid this language here as to not cause confusion by what we mean by conditionals in \hyperref[conditionallogic]{Section 9}.} $\Rightarrow$ is often overlooked in Boolean algebra in favor of studying the \textit{and} and \textit{or} relations.  It is much more powerful however, since it leads to several important consequences in logical inference \cite{Wittgenstein,Frege}.  We will highlight some of those ideas here.  First, the material implication connective can be written in terms of \textit{or} (or \textit{and}) and negation,
\begin{equation}
\forall a,b \in \mathcal{A} : a \Rightarrow b \Leftrightarrow \neg a \vee b \Leftrightarrow \neg(a \wedge \neg b).\label{implicationandor}
\end{equation}
The semantics for $a\Rightarrow b$ is usually written as ``if $a$, then $b$.''  In general it obeys the following axioms,
\paragraph{Algebraic properties of $\Rightarrow$}
\begin{enumerate}\label{algebraicimplication}
	\item \textbf{Distributivity of $\Rightarrow$} - The material implication connective distributes over itself,
	\begin{equation}
	\forall a,b,c \in \mathcal{A}: \left\{\left[a \Rightarrow (b \Rightarrow c)\right] \Rightarrow \left[(a \Rightarrow b) \Rightarrow (a \Rightarrow c)\right]\right\}.\label{distributiveimplication}
	\end{equation}
	\item \textbf{Transitivity of $\Rightarrow$} - The material implication connective is transitive,
	\begin{equation}
	\forall a,b,c \in \mathcal{A}:\left\{\left[(a\Rightarrow b)\Rightarrow(b \Rightarrow c)\right] \Rightarrow (a\Rightarrow c)\right\}.\label{transitiveimplication}
	\end{equation}
	\item \textbf{Reflexivity of $\Rightarrow$} - The material implication connective is reflexive,
	\begin{equation}
	\forall a \in \mathcal{A} : (a\Rightarrow a).
	\end{equation}
	\item \textbf{Totality of $\Rightarrow$} - The material implication connective obeys the totality condition,
	\begin{equation}
	\forall a,b \in \mathcal{A} : \left\{(a\Rightarrow b)\vee(b\Rightarrow a)\right\}.
	\end{equation}
\end{enumerate}
Interestingly enough, the propositions $a\wedge (a\Rightarrow b)$, or $b\wedge(b\Rightarrow a)$, contain the equivalent information contained in the proposition $a\wedge b$,
\begin{align}
a \wedge (a \Rightarrow b) &\Leftrightarrow a \wedge (\neg a \vee b)\nonumber\\
&\Leftrightarrow (a \wedge \neg a)\vee (a \wedge b) \Leftrightarrow a\wedge b.
\end{align}
Below is the corresponding truth table for $a \wedge (a \Rightarrow b)$ and $b \wedge (b\Rightarrow a)$,
\begin{table}[H]
	\makebox[\textwidth][c]{
		\begin{tabular}{|c|c|c|c|c|c|c|}
			\hline
			$a$ & $b$ & $a\Rightarrow b$ & $b\Rightarrow a$ & $a\wedge (a\Rightarrow b)$ & $b\wedge (b\Rightarrow a)$ & $a\wedge b$\\
			\hline
			\hline
			$\tru$ & $\tru$ & $\tru$ & $\tru$ & $\tru$ &$\tru$& $\tru$\\
			$\tru$ & $\fals$ & $\fals$ & $\tru$ & $\fals$& $\fals$ & $\fals$\\
			$\fals$ & $\tru$ & $\tru$ & $\fals$ & $\fals$& $\fals$ & $\fals$\\
			$\fals$ & $\fals$ & $\fals$ & $\tru$ & $\fals$& $\fals$ & $\fals$\\
			\hline
	\end{tabular}}
	\caption{Conditional relations between propositions $a$ and $b$.}
\end{table}
A theorem which will be important later is the following,
\begin{theorem}\label{theorem1}
	Let $\mathcal{A}$ be the space of propositions.  For any propositions, $a,b,c \in \mathcal{A}$, the distributive property (\ref{distributiveimplication}) of the material implication connective (\ref{implicationandor}) can be written as, 
	\begin{align}
	\forall a,b,c \in \mathcal{A}: [a \Rightarrow (b \Rightarrow c)] &\Leftrightarrow (a \wedge b) \Rightarrow c.
	\end{align}
\end{theorem}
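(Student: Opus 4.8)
The plan is to prove the equivalence purely algebraically, by rewriting the left-hand side in terms of the functionally complete set $\{\wedge,\vee,\neg\}$ and then collapsing the result back with the definition of material implication (\ref{implicationandor}). Since the claim is a statement about truth-functional equivalence, I expect no genuine obstacle: this is the propositional \emph{exportation} law, and the whole content is a short chain of equivalences, each justified by an axiom already collected in $\mathcal{I}$.

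First I would expand the left-hand side using (\ref{implicationandor}), which reads $x \Rightarrow y \Leftrightarrow \neg x \vee y$, applied twice:
\begin{align}
a \Rightarrow (b \Rightarrow c) &\Leftrightarrow \neg a \vee (b \Rightarrow c)\nonumber\\
&\Leftrightarrow \neg a \vee (\neg b \vee c).
\end{align}
Next I would invoke the associativity of $\vee$ (\ref{associativeor}) to regroup the disjunction, and then De Morgan's law (Table \ref{table3}) to fold the disjunction of the two negations into a single negated conjunction:
\begin{align}
\neg a \vee (\neg b \vee c) &\Leftrightarrow (\neg a \vee \neg b) \vee c\nonumber\\
&\Leftrightarrow \neg(a \wedge b) \vee c.
\end{align}

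Finally I would apply (\ref{implicationandor}) once more, now reading $\neg(a \wedge b) \vee c$ as an implication with antecedent $a \wedge b$ and consequent $c$, which yields $(a \wedge b) \Rightarrow c$ and closes the chain. The one point of bookkeeping to watch is that every step is a genuine biconditional $\Leftrightarrow$ rather than a one-directional $\Rightarrow$, so that the composed chain actually establishes the equivalence asserted in the statement; this is automatic here, since the definition of implication, associativity, and De Morgan's law are each themselves equivalences. As a redundant cross-check one could instead tabulate all $2^3 = 8$ truth assignments of $(a,b,c)$ and verify that the columns for $a \Rightarrow (b \Rightarrow c)$ and $(a \wedge b) \Rightarrow c$ agree, but the algebraic route is cleaner and makes transparent exactly which axioms of $\mathcal{I}$ are doing the work.
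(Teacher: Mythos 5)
Your proposal is correct and follows exactly the same route as the paper's proof: expand both implications via (\ref{implicationandor}), regroup the disjunction by associativity, apply De Morgan to obtain $\neg(a\wedge b)\vee c$, and read that back as $(a\wedge b)\Rightarrow c$. The only cosmetic difference is the citation for the De Morgan step (you point to Table \ref{table3} where the paper cites (\ref{orand})), which changes nothing substantive.
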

\begin{proof}
	This is easily proved using (\ref{implicationandor}) and (\ref{orand}),
	\begin{align}
	[a \Rightarrow (b \Rightarrow c)] &\Leftrightarrow a \Rightarrow (\neg b \vee c)\nonumber\\
	&\Leftrightarrow \neg a \vee (\neg b \vee c)\nonumber\\
	&\Leftrightarrow (\neg a \vee \neg b)\vee c\nonumber\\
	&\Leftrightarrow \neg (a \wedge b) \vee c \Leftrightarrow [(a \wedge b)\Rightarrow c].
	\end{align}
\end{proof}
Given the commutative nature of the \textit{and} relation, the implications can also be rearranged,
\begin{equation}
\forall a,b,c\in\mathcal{A}: [a\Rightarrow (b\Rightarrow c)] \Leftrightarrow [b\Rightarrow (a\Rightarrow c)].
\end{equation}
From \hyperref[theorem1]{theorem 1} we can write a more general result,
\begin{theorem}\label{theorem2}
	Let $\mathcal{A}$ be the space of propositions.  Let $a_1,a_2,\dots,a_N$ be a sequence of propositions.  The nested implication over the sequence $\{a_i\}_{i=1}^N$ is equivalent to,
	\begin{equation}
	[a_1\Rightarrow(a_2\Rightarrow \dots \Rightarrow (a_{N-1}\Rightarrow a_N) \cdots )] \Leftrightarrow [(a_1\wedge\dots\wedge a_{N-1})\Rightarrow a_N].
	\end{equation}
\end{theorem}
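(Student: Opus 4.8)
The plan is to prove the equivalence by induction on the length $N$ of the sequence, using \hyperref[theorem1]{theorem 1} as the engine of the inductive step and the associativity of $\wedge$ (\ref{associativeand}) to collapse the accumulated conjunction. The base case is immediate: for $N=2$ the conjunction $a_1\wedge\dots\wedge a_{N-1}$ degenerates to the single proposition $a_1$, so the claimed equivalence reads $[a_1\Rightarrow a_2]\Leftrightarrow[a_1\Rightarrow a_2]$, which holds by reflexivity of $\Leftrightarrow$. One may equally take $N=3$ as the base case, in which case the statement is exactly \hyperref[theorem1]{theorem 1} with $a=a_1$, $b=a_2$, $c=a_3$.

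For the inductive step I would assume the equivalence for all sequences of length $N$ and establish it for length $N+1$. The key maneuver is to regard the innermost implication as a single composite proposition: set $b \stackrel{\mathrm{def}}{=} (a_N \Rightarrow a_{N+1}) \in \mathcal{A}$. Then the nested implication of length $N+1$ over $a_1,\dots,a_{N+1}$ is literally a nested implication of length $N$ over the sequence $a_1,\dots,a_{N-1},b$, and the inductive hypothesis applies directly to give
\begin{equation}
[a_1\Rightarrow(\dots\Rightarrow(a_N\Rightarrow a_{N+1})\cdots)] \Leftrightarrow [(a_1\wedge\dots\wedge a_{N-1})\Rightarrow(a_N\Rightarrow a_{N+1})].\nonumber
\end{equation}

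Now I would apply \hyperref[theorem1]{theorem 1} once more, this time with $a=(a_1\wedge\dots\wedge a_{N-1})$, $b=a_N$ and $c=a_{N+1}$, to rewrite the right-hand side as $[((a_1\wedge\dots\wedge a_{N-1})\wedge a_N)\Rightarrow a_{N+1}]$; associativity of the \textit{and} relation (\ref{associativeand}) then identifies the antecedent with $a_1\wedge\dots\wedge a_N$, completing the induction. The only real obstacle to watch is bookkeeping rather than logic: one must be careful that the inductive hypothesis is invoked on the sequence whose \emph{final} entry is the composite $b=(a_N\Rightarrow a_{N+1})$ (not on the first $N$ atoms), and that the chain of $\Leftrightarrow$'s is transitive so the intermediate equivalences may be composed. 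Since $\Leftrightarrow$ is an equivalence connective and \hyperref[theorem1]{theorem 1} is already proved, each step is a substitution of equivalents, and no truth-table computation beyond the one already carried out for \hyperref[theorem1]{theorem 1} is needed.
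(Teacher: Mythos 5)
Your proof is correct and is essentially the argument the paper gives: an induction whose engine is \hyperref[theorem1]{theorem 1} together with the associativity (and commutativity) of $\wedge$ to identify the accumulated antecedent with $a_1\wedge\dots\wedge a_N$. The only difference is organizational --- the paper collapses the nest from the innermost triple outward, applying $a\Rightarrow(b\Rightarrow c)\Leftrightarrow(a\wedge b)\Rightarrow c$ with $a$ a fresh atom and $b$ the growing conjunction at each step, whereas you invoke the inductive hypothesis on the outer shell (treating $a_N\Rightarrow a_{N+1}$ as a single proposition) and finish with one final application of \hyperref[theorem1]{theorem 1}; both reduce to the same iterated substitution of equivalents.
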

\begin{proof}
	From \hyperref[theorem1]{theorem 1} we know that the three inner most terms in the sequence satisfy,
	\begin{equation}
	a_{N-2}\Rightarrow (a_{N-1} \Rightarrow a_N) \Leftrightarrow (a_{N-2}\wedge a_{N-1})\Rightarrow a_{N}.\label{eq1}
	\end{equation}
	Consider the next implication, $a_{N-3}$, outside of these three.  Due to the above (\ref{eq1}), it is true that,
	\begin{equation}
	a_{N_3} \Rightarrow (a_{N-2}\Rightarrow (a_{N-1} \Rightarrow a_N)) \Leftrightarrow a_{N-3}\Rightarrow \left[(a_{N-2}\wedge a_{N-1})\Rightarrow a_{N}\right].
	\end{equation}
	The right hand side however has the same form, $a \Rightarrow (b \Rightarrow c)$, from (\ref{eq1}) since $a_{N-2}\wedge a_{N-1}$ is also a proposition.  Thus it must also be true that,
	\begin{align}
	a_{N_3} \Rightarrow (a_{N-2}\Rightarrow (a_{N-1} \Rightarrow a_N)) &\Leftrightarrow a_{N-3}\Rightarrow \left[(a_{N-2}\wedge a_{N-1})\Rightarrow a_{N}\right]\nonumber\\
	&\Leftrightarrow \left[a_{N-3}\wedge(a_{N-2}\wedge a_{N-1})\Rightarrow a_{N}\right].
	\end{align}
	Due to the associative (\ref{associativeand}) and commutative (\ref{commutativeand}) properties of $\wedge$, by induction we have,
	\begin{equation}
	[a_1\Rightarrow(a_2\Rightarrow \dots \Rightarrow (a_{N-1}\Rightarrow a_N) \cdots )] \Leftrightarrow [(a_1\wedge\dots\wedge a_{N-1})\Rightarrow a_N].
	\end{equation}
\end{proof}
McGee \cite{McGee} calls this property the \textit{import-export principle}.  The material implication defines a partial order on $\mathcal{A}$, which we will discuss in \hyperref[partialordersection]{Section 10.1}.

\paragraph{Principle of explosion ---} The material implication connective, $\Rightarrow$, leads to an important principle called \textit{the principle of explosion} \cite{LewisLangford}, or \textit{ex falso quodlibet}, which means ``from falsehood, anything can be proven.''  To see this, consider that the contradiction $(a \wedge \neg a)$ is true.  Then, using \textit{simplification}\footnote{\textit{Simplifcation} is the ability to pull individual propositions out of $a\wedge b$ and identify them as true whenever the proposition $a\wedge b$ is true itself.}, $a$ and $\neg a$ are also true on their own.  Then, given any statement $b$, the relation $(a \vee b)$ must be true, due to \textit{disjuntive introduction}\footnote{\textit{Disjuntive introduction} is the consequence that, given any true statement $a$, one can always add any other statement $b$ through disjunction, $a\vee b$, and the resulting proposition will also be true, regardless if $b$ is true or false}.  But, because $\neg a$ we have the \textit{disjuntive syllogism}\footnote{\textit{Disjunctive syllogism}, or \textit{modus tollendo ponens}, is a consequence of the axiom of excluded middle, which says that if one has a disjunction, $a \vee b$, and $\neg a$ is known to be true, then $b$ must be true, otherwise $a\vee b$ is false.},
\begin{equation}
\forall a,b \in \mathcal{A} : (a \vee b) \wedge \neg a \Leftrightarrow b.
\end{equation}  
Therefore, from the contradiction $(a\wedge\neg a)$, we've established that any $b$ is true, i.e. $(a\wedge \neg a)\Rightarrow b$.  The term \textit{explosion} signifies the fact that, once one contradiction is allowed in $\mathcal{A}$, anything can be proven to be true.

\subsection{Deductive inference rules $\mathcal{Z}$}\label{inferencerules}
In our simple system there are at most $24$ valid inference rules in $\mathcal{Z}$ which are distinct \cite{Aristotle}.  These are often called \textit{syllogisms} which constitute the basic form of a logical argument that were first organized by Aristotle.  There are $256$ distinct syllogisms in total, meaning that $232$ of them are \textit{formal fallacies}, which constitute invalid forms of reasoning.  Below is a table of some commonly used rules of inference,
\begin{table}[H]
	\makebox[\textwidth][c]{
		\begin{tabular}{|c|l|c|}
			\hline
			Connective &Inference Rule & Semantics ($\forall a,b,c \in \mathcal{A}$)\\
			\hline
			\hline
			\multirow{2}{*}{$\Rightarrow$} &Modus ponens & $a,a\Rightarrow b \vdash b$\\
			&Modus tollens & $\neg b, a\Rightarrow b \vdash \neg a$\\
			\hline
			\multirow{3}{*}{$\wedge$} &Conjunction introduction & $a,b \vdash a\wedge b$\\
			&Conjunction elimination & $a \vdash a \vee b$\\
			&Simplification & $a\wedge b,\vdash a$\\
			\hline
			\multirow{3}{*}{$\vee$} &Disjunction introduction & $a \vdash a \vee b$\\
			&Disjunction elimination & $a\Rightarrow c, b \Rightarrow c, a \vee b \vdash c$\\
			&Modus tollendo ponens & $a \vee b, \neg a \vdash b$\\
			\hline
			$\Rightarrow,\vee$&Constructive dilemma & $a\Rightarrow b,c\Rightarrow d,a\vee c \vdash b \vee d$\\
			\hline
	\end{tabular}}
	\caption{Common inference rules for propositional calculus}
\end{table}
These and other inference rules allow one to construct proofs for stated theorems.  Proofs are examples of logical arguments, which are a sequence of propositions $\Gamma = \{a_1,a_2,\dots,a_n\}$, such that each proposition $a_i \in \Gamma$ is either an axiom from $\mathcal{I}$, or it is a tautology, such as the laws (\ref{laws}), or it is an instance of \textit{modus ponens}.

\paragraph{Implication elimination ---}\label{modusponens} The material implication connective gives rise to a standard rule of inference called \textit{modus ponens} (more completely, \textit{modus ponendo ponens}), or ``mode that by affirming affirms'' \cite{Barwise}.  This is essentially the statement that if a proposition $a$ implies $b$, i.e. $a \Rightarrow b$, and $a$ is true, then $b$ must also be true.  A standard form of the argument is as follows,
\begin{quote}
	If Werner is a physicist, then he understands quantum mechanics.\\
	Werner is a physicist.\\
	Therefore, Werner understands quantum mechanics.
\end{quote}     
Identifying ``Werner is a physicst'' as ``$a$'' and ``Werner understands quantum mechanics'' as ``$b$'', the semantic version of the above is written,
\begin{flalign}
\mathbf{MP}&&\forall a,b \in \mathcal{A} : a \Rightarrow b, a\vdash b&&.
\end{flalign}

\paragraph{Reductio ad absurdum ---} From the properties of the implication arrow we can write the following theorem, which is called formally a \textit{contraposition} \cite{Barwise},
\begin{equation}
\forall a,b \in \mathcal{A} : (a \Rightarrow b) \Leftrightarrow (\neg b \Rightarrow \neg a).
\end{equation}
This situation allows us to prove propositions by contradiction, which is a special case of what is known as \textit{reductio ad absurdum}, or ``argument from absurdity.''  If we first assume that $a$ is true, then the statement $a\Rightarrow b$ is only true whenever $b$ is true as well.  Instead, if we assume that $b$ is false, and this leads to $\neg b \Rightarrow \neg a$ being false, then it must be the case that $b$ is actually true.

\paragraph{Sequents as tautologies ---}
Proofs, such as $\Gamma \vdash b$, can be written as tautologies by first writing the antecedent as a conjunction,
\begin{equation}
\Gamma \Leftrightarrow a_1\wedge a_2\wedge\dots\wedge a_n,
\end{equation}
which together with a theorem $b$ forms the statement $\Gamma \Rightarrow b$.  These statements can be collected into the set of theorems for a given proposition space $\mathcal{O}$.  

\paragraph{Summary of propositional calculus ---}
In this section we have briefly explored the structure $\mathcal{D} = \left\{\mathcal{A},\Omega,\mathcal{I},\mathcal{Z}\right\}$ that we sought to define in the introduction.  Additional structure such as the set of all possible worlds $W$ will be appended to a set $\mathcal{T}$ which can then be added to $\mathcal{D} = \left\{\mathcal{A},\Omega,\mathcal{I},\mathcal{Z},\mathcal{T}\right\}$.  Proofs are special propositions which exist as a collection of subsets of $\mathcal{A}$.

\subsection{First-order predicate logic}\label{predicatelogic}
While the algebra of propositions serves as a useful foundation, in physics we are often interested in the use of \textit{predicates}.  First-order predicate logic introduces the notion of a predicate, together with two \textit{quantors}, $\forall$ and $\exists$.  It was first formally developed by Frege in his \textit{Begriffsschrift} \cite{Frege}.  It was also independently developed by Pierce \cite{Pierce}.  The concepts of quantification are closely related with work that goes back to Aristotle in his \textit{De interpretatione} \cite{Aristotle}.  

A predicate is defined as,
\begin{define}\label{predicate}
	A predicate, $A(x)$, is a proposition valued function for some variable $x$.
\end{define}
In the broadest sense, the variable $x$ can be assigned to any element of the space of words $\Sigma^*$.  Likewise, a predicate is also an element of the space of words $A(x) \in \Sigma^*$, in which the variable $x$ appears as part of the word.  We can then identify the \textit{space of predicates} as some suitable subspace of the space of words, $\mathcal{P} \subset \Sigma^*$.

It is important to note that a predicate such as $A(x)$ is \textit{not} a proposition.  It only becomes a proposition once the argument $x$ has been specified, otherwise $x$ remains \textit{free} in $A(x)$ \cite{Tomassi}.  For example, the predicate $A(x)$ could be, ``$x$ is mortal.''  Depending on what we substitute for $x$ the predicate $A(x)$ will inherit a meaning.  If for $x$ we put ``Socrates,'' then $A(x)$ becomes the proposition ``Socrates is mortal.''  If however we put for $x$, ``the Heisenberg uncertainty principle,'' then $A(x)$ is nonsense.  Thus, there is a range of applicability of $A(x)$ with respect to $x$ \cite{LewisLangford}.

Besides inserting a singular value for $x$ in $A(x)$, there are several other ways in which one can construct a proposition from a predicate.  Two of them involve the use of the \textit{universal} and \textit{existential} quantors, $\forall$ and $\exists$, that were introduced in the algebraic axioms for Boolean algebra.  The universal quantor can be defined as,
\begin{define}\label{universal}
	Let $A(x)$ be a predicate for the variable $x$.  The universal quantor $\forall$ is defined as,
	\begin{equation}
	\forall : \Sigma^*\times\mathcal{P} \rightarrow \mathcal{A}.
	\end{equation}
	It does this by quantifying the variable $x$,
	\begin{equation}
	\forall x : A(x) \Leftrightarrow [A(a)\wedge A(b)\wedge A(c) \wedge\dots],\label{forallwedge}
	\end{equation}
	which reads, ``for all $x$, $A(x)$ is true,'' i.e. $A$ is true independent of $x$.  
\end{define}
Likewise, the existential quantor $\exists$ is defined as,
\begin{define}\label{existential}
	Let $A(x)$ be a predicate for the variable $x$.  The existential quantor $\exists$ is defined as,
	\begin{equation}
	\exists : \Sigma^*\times\mathcal{P}\rightarrow \mathcal{A}.
	\end{equation}
	We can quantify $x$ through the statement,
	\begin{equation}
	\exists x : A(x) \Leftrightarrow [A(a)\vee A(b) \vee A(c) \vee\dots],\label{existsvee}
	\end{equation}
	which reads, ``there exists at least one $x$, such that $A(x)$ is true.''
\end{define}
Like propositions, one can generate new predicates from old ones using the binary relations in \hyperref[binaryconnectives]{table 2}.  From these relations, the existential quantor can be defined in terms of the universal quantor as,
\begin{equation}
\exists x: A(x) \Leftrightarrow \neg\left(\forall x : \neg A(x)\right).
\end{equation}
The universal quantor is then,
\begin{equation}
\forall x: \neg A(x) \Leftrightarrow \neg\left(\exists x : A(x)\right).
\end{equation}
One can also define the uniqueness quantor $\exists!$, which when applied to a predicate like, $\exists !x : A(x)$ reads ``there exists one and only one $x$ such that $A(x)$ is true.''  The uniqueness quantor can be expressed as,
\begin{equation}
\exists !x: A(x) \Leftrightarrow \exists x [A(x) \wedge \neg \exists y : [A(y) \wedge y \neq x ]],
\end{equation}
i.e., ``there exists an $x$ such that $A(x)$ is true and there exists no $y$ such that $A(y)$ is true when $y$ is not equal to $x$.''

The formalism can easily be extended to predicates of several variables, $A(x_1,x_2,\dots,x_n)$.  For predicates of several variables, such as $A(x,y)$, one can generate a new predicate through the use of the quantors.  For example, the predicate $A(x)$ can be generated from $A(x,y)$ by specifying $y$ through either $\forall$ or $\exists$,
\begin{align}
A(x) \Leftrightarrow [\exists y : A(x,y)] \Leftrightarrow [A(x,a)\vee A(x,b) \vee A(x,c) \vee\dots].
\end{align}
Here the variable $y$ is called \textit{bound}, while the $x$ remains \textit{free}.  In practice one should be careful with the order of the use of quantors, since in general,
\begin{equation}
\forall x : \exists y : A(x,y) \not\Leftrightarrow \exists y : \forall x : A(x,y).
\end{equation}
The above propositions are necessarily different since,
\begin{align}
\forall x : \exists y : A(x,y) \Leftrightarrow & [A(a,a)\vee A(a,b)\vee A(a,c)\vee\dots]\nonumber\\
\wedge &\, [A(b,a)\vee A(b,b)\vee A(b,c)\vee\dots]\nonumber\\
\wedge &\, [A(c,a)\vee A(c,b)\vee A(c,c)\vee\dots]\nonumber\\
\wedge &\, \dots,
\end{align}  
where as,
\begin{align}
\exists y : \forall x : A(x,y) \Leftrightarrow & [A(a,a)\wedge A(b,a)\wedge A(c,a)\wedge\dots]\nonumber\\
\vee &\, [A(a,b)\wedge A(b,b)\wedge A(c,b)\wedge\dots]\nonumber\\
\vee &\, [A(a,c)\wedge A(b,c)\wedge A(c,c)\wedge\dots]\nonumber\\
\vee &\, \dots,
\end{align}  
There is however an implication given by switching the quantor,
\begin{equation}
\exists y : \forall x : A(x,y) \Rightarrow \forall y : \exists x : A(x,y).
\end{equation}
This can easily be proved by using the form of (\ref{forallwedge}) and (\ref{existsvee}).  There are also the useful axioms involving conjunction and disjunction,
\begin{flalign}
\mathbf{PC}&&\forall x : A(x) \wedge B(x) &\Leftrightarrow [\forall x : A(x)] \wedge [\forall x : B(x)]&&\label{predicateconj}\\
\mathbf{PD}&&[\forall x : A(x)] \vee [\forall x : B(x)] &\Rightarrow \forall x : A(x) \vee B(x).&&\label{predicatedis}
\end{flalign}
Given the definition of a predicate (\ref{predicate}), the quantors (\ref{universal}), (\ref{existential}), and any axioms such a $\mathbf{PC}$ and $\mathbf{PD}$, we establish what is called a \textit{first-order} predicate logic.  It is of first-order since it does not concern predicate valued predicates, which would be a predicate logic of \textit{second-order}.  One can construct higher-order predicate logics from first-order ones \cite{Russell}, but we will not discuss this further.

One can also adopt a set of new inference rules concerning the quantors.  Some examples of these are the following,
\begin{table}[H]
	\makebox[\textwidth][c]{
		\begin{tabular}{|c|l|c|}
			\hline
			Quantor &Inference Rule & Semantics ($\forall A(x),B(x),C(x) \in \mathcal{P}$)\\
			\hline
			\hline
			\multirow{2}{*}{$\forall$} &Universal generalization & $A(a/x) \Rightarrow \forall x : A(x)$\\
			&Universal instantiation & $\forall x : A(x) \Rightarrow A(a/x)$\\
			\hline
			\multirow{2}{*}{$\exists$} &Existential generalization & $A(a/x) \Rightarrow \exists x : A(x)$\\
			&Existential instantiation & $\exists x : A(x) \Rightarrow A(a/x)$\\
			\hline
	\end{tabular}}
	\caption{Common inference rules for first-order logic.}
	\label{predicateinferencerules}
\end{table}
The notation $A(a/x)$ means to replace the free variable $x$ everywhere in $A(x)$ with $a$.  There are restrictions on how one must analyze these formulas due to the behavior of free and bound variables\footnote{For example, if one has the predicate $A(x,y)$ over the two variables $x$ and $y$, one cannot use an inference rule such as $\exists x : A(x,y) \Rightarrow A(a/x,y)$, and replace $a$ with some function of $y$.  This will leave open the possibility for contradiction.  In general, one cannot substitute free variables with bound ones when using these rules.} 
\paragraph{The summary thus far ---}\label{predicatesummary}
In a sense, the space of predicates, $\mathcal{P} \subset \Sigma^*$, together with the quantifiers $\forall$ and $\exists$ is more general than the space of propositions $\mathcal{A}$.  Thus, we can organize our predicate logic as an extension of the system $\mathcal{D}$ by incorporating the space of predicates, $\mathcal{P} \subset \Sigma^*$, together with the quantors $\mathcal{Q} = \{\forall,\exists\}$.  The axioms such as $\mathbf{PC}$ and $\mathbf{PD}$ can be collected as a set $\mathcal{I}_{\forall,\exists}$, which can be appended to our list of propositional axioms $\mathcal{I}$ from \hyperref[rulesofinference]{Section 4} and \hyperref[materialimplication]{2.5} so that $\mathcal{I}\rightarrow \mathcal{I}_p = \mathcal{I}\cup\mathcal{I}_{\forall,\exists}$.  Likewise the set of inference rules from \hyperref[inferencerules]{Section 6} can be appended with the set $\mathcal{Z}_{\forall,\exists}$, which could contain rules from (\ref{predicateinferencerules}), so that we define $\mathcal{Z}\rightarrow\mathcal{Z}_p=\mathcal{Z}\cup\mathcal{Z}_{\forall,\exists}$.  Finally, any additional structure introduced with respect to the predicate calculus we label $\mathcal{T}_{\forall,\exists}$, which can then be appended to the structures in $\mathcal{D}$, $\mathcal{T}\rightarrow\mathcal{T}_p = \mathcal{T}\cup\mathcal{T}_{\forall,\exists}$.  We call this system $\mathcal{D}_p = \left\{\mathcal{P},\mathcal{Q},\mathcal{A},\Omega,\mathcal{I}_p,\mathcal{Z}_p,\mathcal{T}_p\right\}$.  

The semantics that one could employ with this framework is rather limited.  In the next few sections we will attempt to extend the system $\mathcal{D}$ in order to incorporate a more complex semantics.  This involves the incorporation of \textit{modalities} as well as \textit{conditional logic}.

\subsection{Modal logic}\label{modallogic}
In this section we will briefly survey some of the basic properties of modal logic.  Modal logic \cite{LewisLangford,Goldblatt,ChellasBook} dates back to Aristotle, who studied the semantics associated to words like ``necessary'' and ``possible.''  This can be formalized into an extension of classical logic by identifying a \textit{modality}, which is a term used in linguistics that typically refers to the \textit{intention} of the speaker.  

A modality is a phrase that, when applied to a statement, generates a new statement that addresses the \textit{mode of truth} of the original statement.  Depending on the modality one is using, the subject can range from temporal logic \cite{OhrstromHasle} to ethics and morality (Deontic logic \cite{vonWright}) or to the logic concerning beliefs (Doxastic logic \cite{Smullyan}).  Each of these programs has in common a new set of logical operators which have the character of the \textit{universal} and \textit{existential} quantors, (\ref{universal}) and (\ref{existential}), and are given by the symbols $\square$ and $\diamondsuit$.

While the inference framework should ultimately be agnostic with respect to what modalities one is using, we will develop the theory of conditionals using the \textit{alethic} approach in which $\square$ stands for ``necessary'' and $\diamondsuit$ stands for ``possibly.''  In particular, the \textit{necessary} and \textit{possibly} modalities generate statements such as, $\square p$, which reads ``necessarily, $p$,'' and, $\diamondsuit p$, which reads ``possibly, $p$.''

There are several axiomatic systems that one can impose concerning the modal operators.  Most of these systems were developed by Lewis and Langford in their book ``Symbolic Logic'' \cite{LewisLangford}.  The most basic of these axioms identifies a dual relationship between $\square$ and $\diamondsuit$,

\begin{define}\label{modals}
	Let $\mathcal{A}$ be a space of propositions.  The modal operators $\square$ and $\diamondsuit$ represent the modalities ``necessary'' and ``possibly,'' and are dual to each other, i.e.,
	\begin{align}
	\forall a \in \mathcal{A} : \square a \Leftrightarrow \neg \diamondsuit \neg a\\
	\forall a \in \mathcal{A} : \diamondsuit a \Leftrightarrow \neg \square \neg a.
	\end{align}
\end{define}
According to the definition (\ref{modals}), the statement, ``the particle will \textit{necessarily} go through slit A or slit B,'' is identical to the statement, ``it is not \textit{possible} that the particle will not go through slit A or slit B.''  Likewise, the statement, ``it is possible that the particle will go through slit A,'' is identical to the statement, ``it is not necessarily true that the particle will not go through slit A.''  Both of these equivalences seem intuitive, and are hardly controversial.  As we construct more complicated modal logics however, the statements will become less intuitive, and perhaps controversial.

Modal theory incorporates the \hyperref[possibleworlds]{possible worlds semantics} through what are called \textit{accessibility relations}. An accessibility relation is a relation between two worlds, $w$ and $w'$ that obeys the following axioms,
\begin{define}\label{accessibilityrelation}
	Let $\mathcal{A}$ be a countable set of propositions and let $W$ be the set of possible worlds over $\mathcal{A}$.  An accessibility relation $R\subseteq W\times W\rightarrow W$, is a binary relation between two worlds such that, if $\square a$ is true in world $w_1$ and $a$ is true in $w_2$, then we say that $w_2$ is accessible to $w_1$, i.e., 
	\begin{equation}
	\forall a \in \mathcal{A} : \forall w_1 : [\nu(\square a,w_1) \Leftrightarrow \forall w_2 : R(w_1,w_2) \Rightarrow \nu(a,w_2)].  
	\end{equation}
\end{define}
The proposition $R(w_1,w_2)$ defines the statement ``$w_2$ is accessible to $w_1$,'' and is sometimes written as $R(w_1,w_2)\Leftrightarrow w_1Rw_2$.  Given the \textit{actual world}, denoted by $\tilde{w}$, the statement that a proposition $a \in \mathcal{A}$ is necessarily true, $\square a$, means that it is true in all possible worlds.  This can be written as,
\begin{equation}
\forall a \in \mathcal{A} : \square a \Leftrightarrow [\forall w \in W : \tilde{w}Rw\Rightarrow \nu(a,w)].
\end{equation}
Likewise, we say that $a$ is possible, $\diamondsuit a$, if there exists a world $w \in W$ such that,
\begin{equation}
\forall a \in \mathcal{A} : \diamondsuit a \Leftrightarrow [\exists w \in W : \tilde{w}Rw \Rightarrow \nu(a,w)].
\end{equation}
From here, different modal logics can be formed by constructing additional axioms on the behavior of the universal and existential operators $\square$ and $\diamondsuit$.  Some are more controversial than others among logicians and we will introduce them in an order of their suggested controversy.  Among the different families of modal logics are the ones first developed by Lewis \cite{LewisLangford}. 

The simplest modal logic is typically labeled by $\mathbf{K}$, which is named after Saul Kripke \cite{Kripke}.  Together with the duality axiom (\ref{modals}), it adopts the following additional axioms to standard propositional logic,
\begin{axiom}[The logic $K$]\label{K}
	Let $\mathcal{A}$ be the space of propositions and let $\square$ and $\diamondsuit$ be the modal operators defined in (\ref{modals}).  We call a modal logic of type $K$ if it obeys the following distribution axiom,
	\begin{flalign}
	\mathbf{K}&&\forall a,b, \in \mathcal{A}: \square(a \Rightarrow b) \Rightarrow (\square a \Rightarrow \square b),&&\label{aK}
	\end{flalign}
	together with the necessitation rule,
	\begin{equation}
	\vdash A \vDash \square A.
	\end{equation}
\end{axiom} 
The necessitation rule reads, ``if $A$ is a theorem, then $\square A$ is also a theorem.''  The proposition $\square(a \Rightarrow b)$ is called the \textit{strict implication}.  An additional set of axioms that are typically adopted are given in the following list,
\begin{axioms}[The axioms $\mathbf{D}$,$\mathbf{M}$,$\mathbf{S}_4$ and $\mathbf{S}_5$]\label{modalaxioms}
	Let $\mathcal{A}$ be the space of propositions and let $\square$ and $\diamondsuit$ be the modal operators defined in (\ref{modals}).  The following axioms are represented by the letters in the left-hand column,
	\begin{flalign}
	\mathbf{D}&&\forall a \in \mathcal{A} : \square a \Rightarrow \diamondsuit a.&&\label{aD}\\
	\mathbf{M}&&\forall a \in \mathcal{A}: \square a \Rightarrow a.&&\label{aM}\\
	\mathbf{S}_4&&\forall a \in \mathcal{A} : \square a \Rightarrow \square\square a.&&\label{aS4}\\
	\mathbf{S}_5&&\forall a \in \mathcal{A}: \diamondsuit a \Rightarrow \square\diamondsuit a.&&\label{aS5}\\
	\mathbf{B}&&\forall a \in \mathcal{A}: a \Rightarrow \square\diamondsuit a.&&\label{aB}
	\end{flalign}
\end{axioms}
Axiom $\mathbf{D}$ is hardly controversial, since for any proposition $a$ it implies that ``if $a$ is necessarily true, then $a$ is possible.''  In ethical circles the universal and existential operators take the modalities ``obligatory'' and ``permissible,'' hence axiom (\ref{aD}) entails statements such as, ``if $a$ is obligatory, then $a$ is permissible.''  Axiom $\mathbf{M}$ implies\footnote{In Deontic logic axiom $\mathbf{M}$ is not accepted since it leads to an ``ought,is'' fallacy, i.e. ``if $a$ ought to be true, then it is true.''  Instead, axiom $\mathbf{D}$, which is much weaker, is chosen.}, ``if $a$ is necessarily true, then $a$ is true.'' The system $\mathbf{M}$ contains $\mathbf{D}$ as a subsystem due to the following corollary,
\begin{flalign}
\mathbf{M}'&&\forall a\in \mathcal{A} : a \Rightarrow \diamondsuit a.&&\label{aM2}
\end{flalign}
Due to (\ref{modals}) there is also an associated corollary to axiom $\mathbf{S}_4$,
\begin{flalign}
\mathbf{S}_4'&&\forall a \in \mathcal{A}: \diamondsuit\diamondsuit a \Rightarrow \diamondsuit a.&&\label{aS42}
\end{flalign}
The definitions (\ref{aS4}) and (\ref{aS42}) allow us to reduce any statement which contains a string of either $\square$'s or $\diamondsuit$'s to a single one.  Likewise, one can extend a single $\square$ or $\diamondsuit$ to a string of them. 

Axiom $\mathbf{S}_5$ (\ref{aS5}) is the subject of some controversy, since it allows one to form variants of the ontological argument of Plantinga\footnote{The ontological argument is one example of \textit{a priori} reasoning which, as Hume argued \cite{Hume}, is invalid and unscientific.  In science we cannot \textit{prove} that things exist, we can only demonstrate that theories are inconsistent.}.  There are plenty of reasons however to reject axiom $\mathbf{S}_5$.  One argument involves the use of axiom $\mathbf{B}$ (\ref{aB}).  Axiom $\mathbf{B}$ defines a sort of ``in-between'' for the axioms $\mathbf{S}_4$ and $\mathbf{S}_5$, since the system ${S}_5$ can be constructed by adding $\mathbf{B}$ to $\mathbf{S}_4$.  Axiom $\mathbf{B}$ reads, ``if $a$ is true, then $a$ is necessarily possible,'' which seems reasonable enough.  The corollary to axiom $\mathbf{B}$ however can be written,
\begin{flalign}
\mathbf{B}'&&\forall a \in \mathcal{A} : \diamondsuit\square a \Rightarrow a.&&\label{aB2}
\end{flalign}
The corollary (\ref{aB2}) states that, ``if $a$ is possibly necessary, then $a$ is true.''  This seems to invoke a fallacy, however it is simply the result of accepting axiom $\mathbf{B}$ (\ref{aB}) which seems rather reasonable.  The problem is that axiom $\mathbf{B}$ is really a bad way of writing the following,
\begin{flalign}
\mathbf{K}+\mathbf{M}&&\forall a \in \mathcal{A} : \square(a\Rightarrow \diamondsuit a).&&\label{aMK}
\end{flalign}
Together, axioms $\mathbf{K}$ (\ref{aK}) and $\mathbf{M}$ (\ref{aM}) imply the statement, ``it is necessarily true that if $a$ is true, then $a$ is possible.''  As pointed out in the Stanford encyclopedia of philosophy \cite{Stanford}, the confusion between axiom $\mathbf{B}$ and the above statement (\ref{aMK}) is due to the syntactical ambiguities used in the English language.  In order to avoid accepting $\mathbf{B}$, and ultimately $\mathbf{S}_5$, we should instead evaluate the merits of $\mathbf{B}$ using the corollary $\mathbf{B}'$, which is easily rejected.

It is also often useful to adopt analogous axioms to those of the predicate quantors over conjunction (\ref{predicateconj}) and disjunction (\ref{predicatedis}),
\begin{flalign}
\mathbf{NC}&& \forall a,b \in \mathcal{A} : \square(a \wedge b) \Rightarrow (\square a \wedge \square b),&&\label{NC}\\
\mathbf{CN}&& \forall a,b \in \mathcal{A} : (\square a \wedge \square b) \Rightarrow \square(a \wedge b),&&\label{NCp}\\
\mathbf{ND}&&\forall a,b \in \mathcal{A} : (\square a \vee \square b)\Rightarrow \square (a \vee b).&&\label{ND}
\end{flalign}
Adoption of $\mathbf{NC}$ and $\mathbf{CN}$ together imply the equivalence in (\ref{predicateconj}).  Some programs of modal logic however will adopt $\mathbf{NC}$ but not the converse $\mathbf{CN}$\footnote{For a history of first-order modal logics see \cite{Pacuit}.}.

Together the axioms (\ref{aK}), (\ref{aD}), (\ref{aM}), (\ref{aS4}) and (\ref{aS5}) determine the set of \textit{normal modal logics}.  Depending on the system adopted, the accessibility relation (\ref{accessibilityrelation}) accumulates certain properties.  These are the following,
\begin{flalign}
\text{(Reflexive)}&& \forall w \in W : wRw.&&\\
\text{(Symmetric)}&& \forall w,v \in W: wRv\Rightarrow vRw.&&\\
\text{(Transitive)}&& \forall w,v,u : (wRv\wedge vRu) \Rightarrow wRu.&&\\
\text{(Serial)}&& \forall w \in W: \exists v \in W : wRu.&&\\
\text{(Euclidean)}&& \forall w,v,u \in W: (wRv\wedge wRu) \Rightarrow vRu.&&
\end{flalign}

For one, if $\mathbf{M}$ is adopted, then $R$ becomes reflexive.  If $\mathbf{D}$ is adopted then $R$ becomes \textit{serial}.  In $\mathbf{S}_4$, $R$ is reflexive and transitive and in $\mathbf{S}_5$, $R$ is reflexive and Euclidean.

Typically there is only one additional rule of inference that is added to modal systems.  This is an extension of \textit{modus ponens} (\ref{modusponens}) which can be called a \textit{modal modus ponens} and has the following semantics,
\begin{equation}
\square(a \Rightarrow b), \square a \vdash \square b.\label{modalmodus}
\end{equation}

Like we did with first-order predicate logic, we can compile the set of basic properties of modal logic into our \hyperref[predicatesummary]{deductive system} in the following way.  First, we can collect the modal operators into a set $\Omega_{\square,\diamondsuit} = \{\square,\diamondsuit\}$ which can be added to the connectives $\Omega$ such that, $\Omega \rightarrow \Omega_m = \Omega \cup \{\square,\diamondsuit\}$.  Any set of modal axioms, such as the ones in (\ref{modalaxioms}), can be collected into a set $\mathcal{I_{\square,\diamondsuit}}$.  These can then be added to the axioms $\mathcal{I}_p$ so that, $\mathcal{I}_p\rightarrow\mathcal{I}_{m} = \mathcal{I}_p\cup\mathcal{I}_{\square,\diamondsuit}$.  We can also collect any rules of inference, including the modal modus ponens (\ref{modalmodus}), into a set $\mathcal{Z}_{\square,\diamondsuit}$ which can be added to $\mathcal{Z}_p$, $\mathcal{Z}_p \rightarrow \mathcal{Z}_m = \mathcal{Z}_p\cup\mathcal{Z}_{\square,\diamondsuit}$. 

Some modal systems introduce different structures in the form of \textit{frames}, \textit{models} and various maps such as the accessibility relation.  We will not attempt to classify all of these different approaches and instead refer to these additional structures as belonging to some set $\mathcal{T}_{\square,\diamondsuit}$ which gets appended to $\mathcal{T}$, $\mathcal{T}\rightarrow\mathcal{T}_m = \mathcal{T}\cup\mathcal{T}_{\square,\diamondsuit}$.  Then, our modal-predicate system is given by the tuple, 
\begin{equation}
\mathcal{D}_{m} = \{\mathcal{P},\mathcal{Q},\mathcal{A},\Omega_m,\mathcal{I}_{m},\mathcal{Z}_m,\mathcal{T}_m\}.
\end{equation}

There are some technical, and perhaps philosophical, difficulties with applying a predicate calculus to modal statements.  This was first raised as a point by Quine \cite{Quine} and has since been a continuing problem.  These problems can be compounded depending on the axioms that are accepted as part of $\mathcal{I}_{\forall,\exists}$ and $\mathcal{I}_{\square,\diamondsuit}$.  We will not attempt to survey the approaches to these issues here and instead assume that whatever structures are ultimately imposed in $\mathcal{T}$, that they be consistent with those axioms adopted in $\mathcal{I}_{\forall,\exists}$ and $\mathcal{I}_{\square,\diamondsuit}$.    

\paragraph{Which axioms in $\mathcal{I}_{\square,\diamondsuit}$ should we accept as part of our inference framework? ---}
The subject matter presented in this section is only a small introduction to the vast array of different approaches to incorporating modalities.  At this point, we leave the sets $\mathcal{I}_{\square,\diamondsuit}$ and $\mathcal{T}_{\square,\diamondsuit}$ open so that the inference framework can accommodate any additional axioms or structures one wishes to use.  In order to perhaps narrow the list of axioms, we will study the approaches for how to add general conditional statements to our system.

\subsection{Conditional logic}\label{conditionallogic}
Our goal in this section is to develop a modern theory of conditionals as it pertains to deductive inference.  Conditional statements are common in English, and typically come in three forms; \textit{indicative} (or \textit{matter-of-fact}), \textit{subjunctive} and \textit{counter-factual}.  An \textit{indicative} conditional is one which can be stated in terms of objective reality as it is.  Subjunctive conditionals express desires such as, ``if only it would $a$, then it would $b$.''  Of course the adverb ``would'' suggests that $a$ and $b$ refer to things in the future, which is already a problematic situation to deal with.  

Counterfactuals on the other hand are not statements which reflect actual reality \cite{LewisBook}.  They typically deal with hypothetical statements about the world as it could possibly be if certain objective facts were actually false.  For example,
\begin{quotation}
	``If Heisenberg hadn't hated wave mechanics, no one would believe the Copenhagen interpretation.''
\end{quotation}
It is not obvious how one would access the \textit{truth} of such a statement.

Attempts to construct a well defined theory of conditionals in deductive inference have mostly led to confusion \cite{Goodman,Lewis,Popper,Adams}.  It is still not a settled question in deductive logic.  There are several reasons for this, but the main one, and coincidentally the reason why such an idea cannot be fully defined in deductive inference, will be discussed at the end of this section.  Despite these difficulties, there are certain situations in which a special class of conditional statements can be well defined in a deductive setting, which we will attempt to classify. 

Most discussions of conditionals begin with a reference to F. Ramsey's ``test'' \cite{Ramsey}, where he specifies a criterion that should hold for any notion of a conditional, 
\begin{quote}
	``If two people are arguing ``if $p$ will $q$?'' and are both in doubt as to $p$, they are adding $p$ hypothetically to their stock of knowledge and arguing on that basis about $q$; so that in a sense ``if $p$,$q$'' and ``if $p$,$\neg q$'' are contradictories.''(Ramsey, 1929)
\end{quote}
This passage points out that the negation with respect to conditional statements, ``if $p$,$q$'', should only affect the consequent $q$, and not the antecedent $p$.  This is because the statement ``if $p$ will $q$?'' is already assuming that $p$ is true.  This condition is obvious, which is why Ramsey only mentioned it in passing.  Any system which attempts to construct a conditional must in the very least adhere to this requirement. 

One might assume that the material implication connective $\Rightarrow$ is the type of conditional which properly encapsulates matter-of-fact conditionals and counterfactuals and which eventually leads to conditional probabilities.  This however is incorrect, since it automatically violates the Ramsey test\footnote{For some examples see Bradley \cite{Bradley}.},
\begin{equation}
\forall a,b \in \mathcal{A} : \neg(a\Rightarrow b) \not\Leftrightarrow a \Rightarrow \neg b.
\end{equation}
It also causes inconsistencies with the sum and product rules developed for probabilities in the next chapter.  One can see this by examining the consequences of the equivalences in (\ref{implicationandor}).  Despite these problems, several authors have attempted to justify this idea, including \cite{deFinetti1,Adams,Lewis,Grice}, however the definitions only coincide in special cases.

In \cite{Stalnaker1,Stalnaker2,Stalnaker3}, R. Stalnaker et. al. develops an extension of modal logic to include a new connective (or relation) which he calls the \textit{corner}, and is written with a greater than symbol, i.e. $a > b$.  This new connective is meant to represent what we would intuitively understand as a conditional statement, ``assuming $a$ is true, what do we believe about the truth of $b$?'' Stalnaker gives a criterion for how one would access the truth of a proposition when it is conditioned on another,
\begin{quote}
	``First, add the antecedent (hypothetically) to your stock of beliefs; second, make whatever adjustments are required to maintain consistency (without modifying the hypothetical belief in the antecedent); finally, consider whether of not the consequent is true.'' (Stalnaker, 1970)
\end{quote}  
One may refer to the above construction as a \textit{principle of minimal distortion} of a world $w \in W$.  One procedure for evaluating $b > a$ is as follows.  First we define a model \cite{Chellas,JeffreyEdgington},
\begin{define}\label{frames}
	A model, $M = \left\langle W,f,P\right\rangle$, is a tuple containing the set of all possible worlds $W$, together with a function $f:\mathcal{A}\times \mathcal{P}(W) \rightarrow \mathcal{P}(W)$ which is called a selection function, and a mapping $P:\mathbb{N}\rightarrow \mathcal{P}(W)$.
\end{define}
The selection function defined above takes in a proposition $a \in\mathcal{A}$ and a possible world $w \in W$ and returns either the empty set $\emptyset$, if no worlds in $W$ are such that $a$ is true, or it returns the set $\{w\}$ if $a$ is true in $w$, or it returns a set of worlds which are \textit{altered versions} of $w$ in the most \textit{minimal} way possible.\footnote{As an example of a definition for the function $f$ when $a$ is not true in $w$, but is true in some worlds of $W$, consider the following function $\bar{\nu}_a:\mathcal{A}\times W\times W\rightarrow \{0,1\}$, defined in terms of (\ref{truthworld}),
	\begin{equation}
	\bar{\nu}_a(w,w') \stackrel{\mathrm{def}}{=} \left\{\begin{matrix}
	1 &\Leftrightarrow& \nu(a,w) \Leftrightarrow \nu(a,w')\\
	0 &\Leftrightarrow& \nu(a,w) \not\Leftrightarrow \nu(a,w')
	\end{matrix}\right..
	\end{equation}
	The above function compares the truth value of the proposition $a\in\mathcal{A}$ and returns $1$ if the truth value is the same in both worlds $w$ and $w'$, or returns zero if it is not.  We can compare the differences in all truth values between two worlds $w,w' \in W$ by summing over all propositions, $\bar{\nu}(w,w')\stackrel{\mathrm{def}}{=} \sum_{a\in\mathcal{A}}\bar{\nu}_a(w,w').$  Such a function maps to the natural numbers, $\bar{\nu}:W\times W\rightarrow \mathbb{N}$.  Now we can define the world $w_a$ as the one which minimizes $\bar{\nu}$ over the subset $W_a$,
	\begin{equation}
	w_a(w) \stackrel{\mathrm{def}}{=} \min_{w' \in W_a}\bar{\nu}(w,w').\label{minimizeworld}
	\end{equation}
	There may be more than one world $w \in W$ which satisfies this criteria.  We call the collection of worlds which satisfy (\ref{minimizeworld}) $W_a^w \subseteq W$, and are the result of the function $f(a,w)$ when $\nu(a,w)$ is false but there exist some worlds $w'\in W$ in which $\nu(a,w')$ is true.}  

Any sentence $\Gamma \subseteq \mathcal{A}$ which is true in the world $w \in W$ can be written $\vDash^{M}_w \Gamma$.  If $\Gamma$ is true in all possible worlds we write $\vDash^{M}\Gamma$, which is the same as the statement $\square \Gamma$.  The truth of material implication with respect to the world $w \in W$ can be written,
\begin{equation}
\vDash^{M}_w (a\Rightarrow b) \Leftrightarrow [(\vDash^{M}_w a)\Rightarrow (\vDash^{M}_w b)].
\end{equation}
This is a result we would expect, since the material implication is well-defined within any world.  For the conditional however, we find the following,
\begin{equation}
\vDash^{M}_w (a > b) \Leftrightarrow f(a,W_a) \subseteq W_b.\label{truthconditional}
\end{equation}
This demands that whatever worlds in which $a$ is true must be a subset of the worlds in which $b$ is true, otherwise we allow ex falso quodlibet.  

A conditional as defined by Stalnaker \cite{Stalnaker} is the following,
\begin{define}\label{conditionalrelation}
	Let $\mathcal{A}$ be a set of propositions.  For any propositions, $a$ and $b$ in $\mathcal{A}$, we define the conditional relation $a > b$, which reads ``$b$ given $a$.''  The conditional relation is defined in terms of the universal and existential modal operators as,
	\begin{align}
	\forall a \in \mathcal{A} : \square a &\Leftrightarrow \neg a > a\\
	\forall a\, \mathcal{A} : \diamondsuit a &\Leftrightarrow \neg(a > \neg a).
	\end{align}
\end{define}
The first condition may seem strange intuitively, however it is simply a proposition which states that $a$ is necessarily true, if it is still true in the face of its contrary.  This represents a state of complete certainty with respect to beliefs about the proposition $a$.  What's more important is the compliment relation, which says that $a$ is possible, if the conditional statement $\neg a|a$ is false.  For all situations in which the truth of $a$ is not absolutely certain, $a$ will always be possible.  The definitions in (\ref{conditionalrelation}) are easily seen to be consistent with (\ref{truthconditional}) since for $\square a$, $\neg a$ is not true in any world and hence $f(\neg a, W) = \emptyset$.  For $\diamondsuit a$, $a$ must be true in at least one possible world, so that $f(a,W) \neq \emptyset$.

In order to avoid confusion with the greater than symbol applied to numbers, we will write conditionals using the \textit{solidus} ``$|$'' so that the statement $a > b$ is written $b|a$.
Depending on the type of conditional logic one wishes to develop, one can adopt some subset of the following axioms,
\begin{axioms}
	The following list gives possible axioms for a system of conditional logic.  The names in parenthesis correspond to the convention given by Chellas \cite{Chellas}.  The ones which begin with $\mathbf{S}$ are defined by Stalnaker \cite{Stalnaker1},
	\begin{flalign}
	\mathbf{SN} &&  \forall a,b\in\mathcal{A} : \square(a \Rightarrow b) \Rightarrow (b | a)&&\\
	\mathbf{SP} &&  \forall a\in\mathcal{A} : \diamondsuit a \Rightarrow [b|a \Rightarrow \neg(\neg b|a)]&&\\
	\mathbf{SD} &&  \forall a,b,c\in\mathcal{A} : b\vee c|a \Rightarrow (b|a)\vee (c|a)&&\\
	\mathbf{SC} &&  \forall a,b,c \in \mathcal{A} : [(b|a)\wedge (a|b)] \Rightarrow [(c|a) \Rightarrow (c|b)]&&\\
	(\mathbf{RCEA}) && \forall a,b,c \in \mathcal{A} : [a \Leftrightarrow b] \Rightarrow [(c|a)\Leftrightarrow (c|b)]  &&\\
	(\mathbf{RCEC}) && \forall a,b,c \in \mathcal{A} : [a \Leftrightarrow b] \Rightarrow [(c|a)\Leftrightarrow (c|b)] &&\\
	(\mathbf{RCM}) &&  \forall a,b,c \in \mathcal{A} : (a \Rightarrow b)\Rightarrow [(a|c) \Rightarrow (b|c)]&&\\
	(\mathbf{RCR}) &&  \forall a,b,c,d \in \mathcal{A} : [(a\wedge b)\Rightarrow c] \Rightarrow [((a|d)\wedge(b|d))\Rightarrow (c|d)]&&\\
	(\mathbf{RCK}) && [(a_1\wedge\dots\wedge a_n)\Rightarrow b] \Rightarrow [(a_1|c)\wedge\dots\wedge(a_n|c)]\Rightarrow (b|c)\\
	(\mathbf{MP}) &&  \forall a,b\in \mathcal{A} : b|a \Rightarrow a\Rightarrow b&&\\
	(\mathbf{CM}) && \forall a,b,c\in\mathcal{A} : [a\wedge b|c]\Rightarrow [(a|c)\wedge(b|c)] &&\\
	(\mathbf{CC}) &&  \forall a,b,c \in \mathcal{A} : [(a|c)\wedge(b|c)] \Rightarrow [a\wedge b|c]&&
	\end{flalign}
\end{axioms}
Stalnaker defines his conditional logic \cite{Stalnaker1} through the axioms $\mathbf{MP}$ and $\mathbf{SN}+\mathbf{SP}+\mathbf{SD}+\mathbf{SC}$.
Chellas \cite{Chellas} showed that the class of conditional logics $CK$ defined by axioms $\mathbf{RCEA}$ and $\mathbf{RCK}$ are decideable and are determined by the class of standard frames (\ref{frames}).

One important take away of these axioms is the sequence of entailment with respect to the strict implication and the normal implication,
\begin{flalign}
\mathbf{SN}+\mathbf{MP}&&\forall a,b\in\mathcal{A} : \square(a \Rightarrow b) \Rightarrow (b|a) \Rightarrow (a \Rightarrow b)&&.
\end{flalign}
This demonstrates that our intuition about the implication arrow representing conditional statements is not totally off base, however they are not the same.  The conditional $b|a$ implies the implication $a\Rightarrow b$, however the antecedant $a$ in $b|a$ is always assumed to be true, whereas in the implication $a$ could be false in an example of \textit{ex falso quodlibet}.  

While there are plenty of axioms to choose from, the inference framework we adopt here will only need a few.  The framework should at least include axioms $\mathbf{SD}$, $\mathbf{RCR}$, $\mathbf{CM}$ and $\mathbf{CC}$.  These properties are all that we will need to construct our inductive inference of the next chapter. 

We can collect the solidus into our set of connectives, $\Omega_m \rightarrow \Omega_{c} = \Omega_m\cup\{|\}$.  In total we have, $\Omega_m = \Omega\cup\{\square,\diamondsuit\}\cup\{|\}$.  We can also collect any axioms concerning ``$|$'' into a set  $\mathcal{I}_{|}$, as well as any additional structure such as the model $M$ (\ref{frames}) into a set $\mathcal{T}_{|}$.  Our complete list of axioms is then, $\mathcal{I}_{c} = \mathcal{I}\cup \mathcal{I}_{\forall,\exists}\cup\mathcal{I}_{\square,\diamondsuit}\cup\mathcal{I}_{|}$.
\paragraph{Summary of our deductive system ---}
Our deductive system can be written as the tuple, $\mathcal{D}_{c} = \left\{\mathcal{P},\mathcal{Q},\mathcal{A},\Omega_{c},\mathcal{I}_{c},\mathcal{Z}_c,\mathcal{T}_{c}\right\}$ whose elements are the following,
\begin{enumerate}
	\item (\textbf{Predicates}) - The set $\mathcal{P}\subset\Sigma^*$ of predicates for a particular universe of discourse.
	\item (\textbf{Quantors}) - The set of quantification operators $\{\forall,\exists\}$ (\ref{universal}),(\ref{existential}).
	\item (\textbf{Propositions}) - The set of all \hyperref[propositions]{propositions} $\mathcal{A}$ for a particular universe of discourse.
	\item (\textbf{Connectives}) - The set of \hyperref[connectives]{connectives} $\Omega$ containing the set of generators $\{\wedge,\vee\,\neg\}$, the \hyperref[modallogic]{modal operators} $\{\square,\diamondsuit\}$ and the \hyperref[conditionallogic]{conditional operator} $|$.
	\item (\textbf{Axioms}) - The set of \hyperref[rulesofinference]{propositional axioms} $\mathcal{I}$ together with the axioms involving the modal operators $\mathcal{I}_{\square,\diamondsuit}$, the axioms concerning the quantors $\mathcal{I}_{\forall,\exists}$ and the axioms concerning the conditional operator $\mathcal{I}_{|}$.
	\item (\textbf{Inference Rules}) - The set of \hyperref[inferencerules]{inference rules} $\mathcal{Z}$ for propositional calculus, together with any inference rules adopted for predicate calculus, $\mathcal{Z}_{\forall,\exists}$, and inference rules for the modal operators $\mathcal{Z}_{\square,\diamondsuit}$ and the conditional operator $\mathcal{Z}_{|}$.
	\item (\textbf{Additional structure}) - The set $\mathcal{T}$ of any additional structures such as the set of possible worlds (\ref{possibleworlds}), frames, models (\ref{frames}), accessibility relations (\ref{accessibilityrelation}), etc.
\end{enumerate}

\section{Proof of \hyperref[aczel]{theorem 1.1}}\label{proofaczel}
The proof follows the one given in \cite{Aczel} with some similar notation.  Let us first simplify the notation of the elements of $\Xi$ by writing,
\begin{equation}
\xi(a|\Gamma) \stackrel{\mathrm{def}}{=} a, \quad \xi(b|\Gamma) \stackrel{\mathrm{def}}{=} b \quad \mathrm{and} \quad \xi(c|\Gamma) \stackrel{\mathrm{def}}{=} c,
\end{equation}
so that (\ref{associativeor}) is written,
\begin{equation}
f_{\vee}(a,f_{\vee}(b,c)) = f_{\vee}(f_{\vee}(a,b),c).
\end{equation}
To simplify further, we will sometimes let,
\begin{equation}
f_{\vee}(a,b) \stackrel{\mathrm{def}}{=} a \circ b,
\end{equation}
which then reduces (\ref{associativeor}) to,
\begin{equation}
(a \circ b) \circ c = a \circ (b \circ c).
\end{equation}
Since $f_{\vee}$ is a map to $\Xi$, $f_{\vee}:\Xi\times\Xi\rightarrow\Xi$, it suffices to determine its behavior in several different situations.  This includes when both arguments are the same, $f_{\vee}(\ell,\ell)$, which we will investigate first.  We will then analyze the limits of the interval $\langle \xi_{\fals},\xi_{\tru}\rangle$, to determine if it is open or closed on either end.  
\paragraph{Behavior of $f_{\vee}(\ell,\ell)$ ---}
There are three possible types of behavior of $f_{\vee}$ for each individual element $\ell$,
\begin{flalign}
\mathbf{G}&&f_{\vee}(\ell,\ell) > \ell,&&\label{G}\\
\mathbf{E}&&f_{\vee}(\ell,\ell) = \ell,&&\label{E}\\
\mathbf{L}&&f_{\vee}(\ell,\ell) < \ell.&&\label{L}
\end{flalign}
Each element $\ell \in \Xi$ will satisfy either of these equations for $f_{\vee}$, i.e. each element will either return itself in disjunction, or an element $\ell' \in \Xi$ which is a smaller or larger degree of belief than $\ell$.  For each of these situations, we have the following consequences,
\begin{flalign}
\mathbf{G}'&&\forall a \in \Xi : f_{\vee}(\ell,a) > a,&&\label{Gp}\\
\mathbf{E}'&&\forall a \in \Xi : f_{\vee}(\ell,a) = a,&&\label{Ep}\\
\mathbf{L}'&&\forall a \in \Xi : f_{\vee}(\ell,a) < a.\label{Lp}
\end{flalign}
For the disjunction function $f_{\vee}$, we necessarily have the situation $\mathbf{E}$ for every element in $\Xi$\footnote{This fact is quite subtle and one should use caution when analyzing $f_{\vee}$.  Since $f_{\vee}$ is defined as the \textit{representation} of $\xi(a\vee b|\Gamma)$, its behavior necessarily depends on the behavior of $\vee$.  For example, we have the identity $a\vee a = a$, thus it should be the case that for every $a \in \tilde{\mathcal{A}}$, $\xi(a\vee a) = \xi(a)$ and hence $f_{\vee}(\xi(a),\xi(a)) = \xi(a)$.  However, $a$ is not mutually exclusive from itself, so the result $f_{\vee}(\xi(a),\xi(a)) = \xi(a)$ only follows from the fact that the other two arguments, $\xi(a|a) = \xi_{\tru}$, are constants and hence $f_{\vee}(\xi(a),\xi(a),\xi(a|a),\xi(a|a)) \rightarrow f_{\vee}(\xi(a),\xi(a),\xi_{\tru},\xi_{\tru})$ reduces in form to the mutually exclusive case.}.  It will suffice however, to discuss the behavior of the other two situations.  The first, $\mathbf{G}'$ is easily shown from the associative property of $f_{\vee}$ and $\mathbf{G}$,
\begin{align}
\forall a \in \Xi : f_{\vee}(\ell,f_{\vee}(\ell,a)) &= f_{\vee}(f_{\vee}(\ell,\ell),a)\nonumber\\
&> f_{\vee}(\ell,a), 
\end{align}
so that $f_{\vee}(\ell,a) > a$\footnote{As shown in \cite{Aczel}, one can determine $\mathbf{G}$ and likewise $\mathbf{G}'$ if there exists at least one element $a_0 \in \Xi$ such that $f_{\vee}(\ell,a_0) > a_0$.  To see this, consider the associative property,
	\begin{equation}
	f_{\vee}(f_{\vee}(\ell,\ell),a_0) = f_{\vee}(\ell,f_{\vee}(\ell,a_0)) > a_0.\label{proofG}
	\end{equation}
	This implies that $f_{\vee}(\ell,\ell)>\ell$ which is $\mathbf{G}$.  Hence, a single $a_0$ with this property is enough to impose $\mathbf{G}$ and $\mathbf{G}'$.}.  Property $\mathbf{L}'$ follows by analogy.  The property $\mathbf{E}$ identifies $\ell = e$ as the identity element with respect to $f_{\vee}$.  Thus, $\mathbf{E}'$ can only be true for one element of $\Xi$,
\begin{flalign}
\mathbf{E}'&&\forall a \in \Xi : !\exists e \in \Xi : f_{\vee}(e,a) = a&&.
\end{flalign}  
\paragraph{Specifying the left limit $\xi_{\fals}$ ---}
Depending on the existence of elements which have either property $\mathbf{G}$ or $\mathbf{L}$, determines the type of interval $\langle \xi_{\fals},\xi_{\tru}\rangle$ is with respect to its end points.  Both end points will either be open or closed.  First we study the behavior of the left side of the interval, for which there are three possible behaviors,
\begin{enumerate}
	\item\textbf{Case 1} - The interval $[\xi_{\fals},\xi_{\tru}\rangle$ is closed on the left.  Since $f_{\vee}(\xi_{\fals},\xi_{\fals}) \in [\xi_{\fals},\xi_{\tru}\rangle$, the situation $\mathbf{L}$ is impossible for $\xi_{\fals}$.  Thus, we must either have,
	\begin{flalign}
	\mathbf{C}&&f_{\vee}(\xi_{\fals},\xi_{\fals}) = \xi_{\fals} \quad \mathrm{or} \quad f_{\vee}(\xi_{\fals},\xi_{\fals}) > \xi_{\fals}.&&\label{C}
	\end{flalign}
	\item \textbf{Case 2} - The interval $(\xi_{\fals},\xi_{\tru}\rangle$ is open on the left, however we have the left hand limit\footnote{The existence of the left hand limit of course depends on whether $f_{\vee}$ remains continuous and associative, however it must also allow for the reduction of $f_{\vee}(a,\ell) = f_{\vee}(c,\ell)$ to $a = c$.  This is easily seen by the continuity of $f_{\vee}$,
		\begin{align}
		f_{\vee}(a,\lim_{\ell\rightarrow \xi_{\fals}}f_{\vee}(\ell,b)) &= \lim_{\ell\rightarrow \xi_{\fals}}f_{\vee}(f_{\vee}(a,\ell),b) = f_{\vee}(\lim_{\ell\rightarrow \xi_{\fals}}f_{\vee}(a,\ell),b)\nonumber\\
		&= f_{\vee}(\lim_{\ell\rightarrow \xi_{\fals}}f_{\vee}(c,\ell),b) = \lim_{\ell\rightarrow \xi_{\fals}}f_{\vee}(c,f_{\vee}(\ell,b))\nonumber\\
		&= f_{\vee}(c,\lim_{\ell\rightarrow \xi_{\fals}}f_{\vee}(\ell,b)).
		\end{align}},
	\begin{flalign}
	\mathbf{OG}&&\forall a \in \Xi : \lim_{\ell\rightarrow \xi_{\fals}}f_{\vee}(\ell,a) &\geq a&&\label{OG1}\\
	&&&>\xi_{\fals}.&&\label{OG2}
	\end{flalign}
	\item \textbf{Case 3} - The interval $(\xi_{\fals},\xi_{\tru}\rangle$ is open on the left, however we have the left hand limit\footnote{In this case, the left hand limit $\xi_{\fals}$ plays the role of zero in multiplication.  This comes immediately from $\mathbf{OL}$ since,
		\begin{equation}
		\forall a \in \Xi : \lim_{a\rightarrow \xi_{\fals}}\left[\lim_{\ell\rightarrow\xi_{\fals}}f_{\vee}(\ell,a)\right] \leq \lim_{a\rightarrow \xi_{\fals}}a = \xi_{\fals},
		\end{equation}
		which means that,
		\begin{equation}
		\forall a,b \in \Xi : \lim_{a\rightarrow \xi_{\fals}}\lim_{b\rightarrow \xi_{\fals}}f_{\vee}(a,b) = \xi_{\fals}.\label{proofOL}
		\end{equation}
		It is easy to show that for any element $a \in \Xi$ we must then have,
		\begin{equation}
		\forall a \in \Xi : \lim_{\ell\rightarrow \xi_{\fals}}f_{\vee}(\ell,b) = \xi_{\fals}.
		\end{equation}
		If one allowed for example, $\lim_{\ell\rightarrow\xi_{\fals}}f_{\vee}(\ell,c) < a$, then we could conclude from $\mathbf{OL}$ and (\ref{proofOL}),
		\begin{align}
		a > \lim_{\ell\rightarrow\xi_{\fals}}f_{\vee}(\ell,a) &= \lim_{a\rightarrow \xi_{\fals}}f_{\vee}(\ell,\lim_{b\rightarrow \xi_{\fals}}f_{\vee}(b,c)) = \lim_{a\rightarrow \xi_{\fals}}\lim_{b\rightarrow \xi_{\fals}}f_{\vee}(\ell,f_{\vee}(b,c))\nonumber\\
		&=\lim_{a\rightarrow \xi_{\fals}}\lim_{b\rightarrow \xi_{\fals}}f_{\vee}(f_{\vee}(\ell,b),c) = \lim_{d\rightarrow\xi_{\fals}}f_{\vee}(d,c) = a,
		\end{align}
		which is obviously a contradiction.},
	\begin{flalign}
	\mathbf{OL}&&\forall a \in \Xi : \lim_{\ell\rightarrow \xi_{\fals}}f_{\vee}(\ell,a) < a.&&\label{OL}
	\end{flalign}
\end{enumerate}
In the first two cases, $\mathbf{C}$ and $\mathbf{OG}$, there are no inverse elements, however there can exist an identity.  In $\mathbf{OG}$ there is only an identity element if,
\begin{equation}
f_{\vee}(\xi_{\fals},\xi_{\fals}) = \xi_{\fals},\label{proofI}
\end{equation} 
so that $\xi_{\fals} = e$.  If this fails so that $f_{\vee}(\xi_{\fals},\xi_{\fals}) > \xi_{\fals}$, then by the corollary in (\ref{proofG}) we have for all $a \in \Xi$, $f_{\vee}(\xi_{\fals},a) > a$ and hence for all $b \in \Xi$, $f_{\vee}(a,b) > a$.  Thus there can be no identity for the system with $f_{\vee}(\xi_{\fals},\xi_{\fals}) > \xi_{\fals}$.  If on the other hand (\ref{proofI}) holds, then we have,
\begin{align}
\forall a,b \in \Xi : f_{\vee}(a,b) \geq f_{\vee}(\xi_{\fals},b) &= b\nonumber\\
&>\xi_{\fals},
\end{align}
which precludes the existence of inverse elements even for (\ref{proofI}).  For case three on the other hand, we can have inverses so that $f_{\vee}(a,a^{-1}) = e$.  This requires $e \neq \xi_{\fals}$ so that,
\begin{equation}
\exists a,c \in \Xi : a < e < c.
\end{equation}
It is obvious that this cannot be the case for (\ref{solutionor}) since we have the identity $\fals \vee \fals = \fals$, which invokes (\ref{proofI}) must be true.  Therefore, with respect to (\ref{solutionor}) there can be no inverses of any element $\xi(a|\Gamma) \in \Xi$ with respect to the disjunction $f_{\vee}$, however $\xi_{\fals}$ plays the role of the identity.  This also precludes case three, since we have the identity $a \vee \fals = a$, therefore the inequality $\lim_{\ell\rightarrow \xi_{\fals}}f_{\vee}(\ell,a) < a$ cannot be satisfied.  For the remainder of the proof, one can either adopt case one or two.

\paragraph{The recursive functions $\varphi_m$ and $\psi$ ---}
To continue, let us define the recursive functions $\varphi_m(a)$ where,
\begin{equation}
\forall a \in \Xi : \varphi_1(a) = a \quad \mathrm{and}\quad \varphi_{m+1}(a) = f_{\vee}(a,\varphi_m(a)),\label{phidef}
\end{equation}
where $m \in \mathbb{N}$.  This evaluates a set of $m$-nested disjunctions with respect to the element $a \in \Xi$. From the definition in (\ref{phidef}) and the associativity property (\ref{associativeor}), we have,
\begin{equation}
\forall a \in \Xi : f_{\vee}(\varphi_{m_1}(a),\varphi_{m_2}(a)) = f_{\vee}(\varphi_{m_2}(a),\varphi_{m_1}(a)) = \varphi_{m_1+m_2}(a),
\end{equation}
which using the $\circ$ notation is simply,
\begin{equation}
\forall a \in \Xi : \varphi_{m_1 + m_2}(a) = \varphi_{m_1}(a)\circ\varphi_{m_2}(a) = \varphi_{m_2}(a)\circ\varphi_{m_1}(a).
\end{equation}
Likewise we have,
\begin{equation}
\forall a \in \Xi: \varphi_m\left[\varphi_n(a)\right] = \varphi_n\left[\varphi_m(a)\right] = \varphi_{mn}(a),
\end{equation}
which means that $\varphi_m(a)$ plays the role of multiplication with respect to $f_{\vee}$.  If we have for each $a \in \Xi$, $f_{\vee}(a,a) > a$, then it follows that,
\begin{equation}
\varphi_{m+1}(a) > \varphi_m(a),
\end{equation}
so long as $a \neq e$.  Since we are excluding the possibility of $f_{\vee}(a,a) < a$, then we must have,
\begin{equation}
\varphi_{m+1}(a) \geq \varphi_m(a),\label{varphiincrease}
\end{equation}
in general.  Thus we can take the limit to the right part of the interval $[\xi_{\fals},\xi_{\tru}\rangle$ so that,
\begin{equation}
\lim_{a\rightarrow \xi_{\tru}}\varphi_m(a) = \xi_{\tru}.\label{phiproof2}
\end{equation}
Since in general we have that $f_{\vee}$ is increasing in both of its arguments,
\begin{equation}
\lim_{a\rightarrow \xi_{\tru}}f_{\vee}(a,b) \geq \lim_{a\rightarrow \xi_{\tru}} a = \xi_{\tru},
\end{equation}
which means that we must have,
\begin{equation}
\lim_{a\rightarrow \xi_{\tru}}f_{\vee}(a,b) = \lim_{a\rightarrow \xi_{\tru}}\lim_{b\rightarrow\xi_{\tru}}f_{\vee}(a,b) = \xi_{\tru}.
\end{equation}
This shows that (\ref{phiproof2}) must hold since,
\begin{align}
\forall a \in \Xi: \lim_{a\rightarrow \xi_{\tru}}\varphi_m(a) &= \lim_{a\rightarrow \xi_{\tru}}a\circ\varphi_{m-1}(a))\nonumber\\
&= \dots = \lim_{a\rightarrow \xi_{\tru}}a\circ\dots\circ \varphi_1(a) = \xi_{\tru}.
\end{align}
We also have for $\xi_{\fals}$,
\begin{equation}
\varphi_m(\xi_{\fals}) = \xi_{\fals} \circ \varphi_{m-1}(\xi_{\fals}) = \dots = \xi_{\fals}.\label{falsevar}
\end{equation}
Thus, due to (\ref{falsevar}) and (\ref{phidef}), for every $a \in \Xi$ the function $\varphi_m(a)$ has a unique inverse,
\begin{equation}
\forall a \in \Xi : \forall m \in \mathbb{N} : \exists b \in \Xi : b = \varphi_m(a) \Rightarrow a = \varphi^{-1}(b),
\end{equation}
where the inverse satisfies,
\begin{equation}
\varphi_m(a) \leq b < \xi_{\tru}.\label{varphiinvincrease}
\end{equation}
Thus the inverse function $b \in \Xi$ is continuous and increasing in $a$.  Now, let us associate to the function $a \in \Xi$ the function $\psi$ whose input is a rational number $q\in\mathbb{Q}$,
\begin{equation}
\psi\left(\frac{m}{n}\right) \stackrel{\mathrm{def}}{=}\varphi_n^{-1}\left[\varphi_m(a)\right],\label{phivarphi}
\end{equation}
where $m,n \in \mathbb{N}$\footnote{If one adopts the situation $\mathbf{G}$ (\ref{G}), then the definition (\ref{phivarphi}) only applies for $m > n$.}.  Since we have in particular property $\mathbf{E}$, $f_{\vee}(a,a) = a$ for all $a \in \Xi$, the action of $\varphi$ and its inverse $\varphi^{-1}$ commute so that the above can be rewritten as,
\begin{equation}
\psi\left(\frac{m}{n}\right) =\varphi_m\left[\varphi_n^{-1}(a)\right].
\end{equation}
Naturally, the value $m=1$ returns the original element,
\begin{equation}
\forall a \in \Xi : \psi(1) = a.
\end{equation}
Since $\Xi$ contains the identity $\xi_{\fals}$, we can further define,
\begin{equation}
\psi(0) \stackrel{\mathrm{def}}{=} \xi_{\fals}.\label{psizero}
\end{equation}
From (\ref{varphiincrease}) and (\ref{varphiinvincrease}) the function $\psi$ is strictly monotonically increasing,
\begin{equation}
\psi\left(\frac{m+1}{n}\right) = \varphi_{m+1}\left[\varphi_n^{-1}(a)\right] > \varphi_m\left[\varphi_n^{-1}(a)\right] = \psi\left(\frac{m}{n}\right).\label{varprop}
\end{equation}
\paragraph{Specifying the right limit $\xi_{\tru}$ ---}
It suffices to discuss now the right side of the interval $[\xi_{\fals},\xi_{\tru}\rangle$.  In \cite{Aczel} it is assumed that the end point $\xi_{\tru}$ is obtainable by an at most infinite limit,
\begin{equation}
\forall a \in \Xi : \lim_{m\rightarrow\infty}\psi(m) = \xi_{\tru}.\label{limittru}
\end{equation}
In particular, if this limit exists, the point $\xi_{\tru}$ cannot be within the interval.  Thus, the right hand side must be open, $\langle\xi_{\fals},\xi_{\tru})$.  One can easily show this by assuming the opposite.  If for some $a \neq \xi_{\fals}$ we have,
\begin{align}
\xi_{\tru} < f_{\vee}(a,\xi_{\tru}) &= f_{\vee}\left(a,\lim_{m\rightarrow\infty}\psi(m)\right) = \lim_{m\rightarrow \infty}f_{\vee}(a,\varphi(a))\nonumber\\
&= \lim_{m\rightarrow \infty}\varphi_{m+1}(a) = \lim_{m\rightarrow\infty}\psi(m+1) = \xi_{\tru}.
\end{align}
Thus we must have $f_{\vee}(a,\xi_{\tru}) = \xi_{\tru}$, which is certainly true since $\forall a \in \tilde{\mathcal{A}} : a\vee \tru = \tru$.  While this certainly demonstrates that $\xi_{\tru}$ cannot be in the interval $\langle \xi_{\fals},\xi_{\tru}\rangle$ if it exists as a limit point for repeated application of the disjunction, the limit in (\ref{limittru}) is only true for $a = \xi_{\tru}$.  Since $f_{\vee}(\xi_{\tru},a)$ is never greater than $a$ or $\xi_{\tru}$, the point $\xi_{\tru}$ can belong to the interval without generating a contradiction.  Thus we have that the interval can be closed on both sides without loss of generality, $\langle \xi_{\fals},\xi_{\tru}\rangle \rightarrow [\xi_{\fals},\xi_{\tru}]$.

\paragraph{The general solution ---}
Given the nature of (\ref{varprop})\footnote{As discussed in \cite{Aczel}, the equation (\ref{proofalmost}) also holds for the property $\mathbf{G}$, however the derivation is more difficult.}, we can show the following,
\begin{align}
f_{\vee}\left(\frac{}{}\psi\left(\frac{m_1}{n}\right),\psi\left(\frac{m_2}{n}\right)\right) &= \varphi_{m_1}\left[\varphi_{n}^{-1}(a)\right]\circ \varphi_{m_2}\left[\varphi_n^{-1}(a)\right]\nonumber\\
&= \varphi_{m_1+m_2}\left[\varphi_n^{-1}(a)\right] = \psi\left(\frac{m_1 + m_2}{n}\right).\label{proofalmost}
\end{align}
The function $\psi$ is necessarily monotonic for all positive rational values $q \in \mathbb{Q}^+$ and hence possesses left and right limits for any positive real number $x \in \mathbb{R}^+$\footnote{To see this, first consider that the converse is true via the limits being different,
	\begin{equation}
	r = \psi(x - 0) < \psi(x + 0) = s,\label{proofdefs}
	\end{equation}
	such that we have,
	\begin{equation}
	f_{\vee}(r,r) < f_{\vee}(r,s) < f_{\vee}(s,s), \quad \mathrm{or}\quad r < f_{\vee}(r,s) < s.
	\end{equation}
	Since $f_{\vee}(r,s)$ is increasing, and due to the strict inequalities above, we can always find an $\varepsilon$ such that $f_{\vee}(r,r) + \varepsilon < f_{\vee}(r,s) < f_{\vee}(s,s) - \varepsilon$.  Then, using continuity again we have the associated $\delta$'s to the $\varepsilon$ that define $|r' - r|<\delta$ and $|s'-s|<\delta$ so that,
	\begin{equation}
	f_{\vee}(r,s) - \varepsilon < f_{\vee}(r',s') < f_{\vee}(r,s) + \varepsilon.
	\end{equation}
	Using the definition of $r$ and $s$ in (\ref{proofdefs}), we could always define $r' = \psi(m_r)$ and $s' = \psi(m_s)$ in terms of the rational numbers $m_r$ and $m_s$ where $m_r < x < m_s$ and $\frac{m_r+m_s}{2} > x$.  However, this would suggest that,
	\begin{align}
	\varphi_2\left[\psi\left(\frac{m_r + m_s}{2}\right)\right] &= \varphi_2\left[\varphi_2^{-1}\left[\varphi_{m_r+m_s}(a)\right]\right] = \psi(m_r + m_s) = f_{\vee}(\psi(m_r),\psi(m_s))\nonumber\\
	&= f_{\vee}(r',s') < f_{\vee}(r,s) + \varepsilon < f_{\vee}(s,s) = \varphi_2\left[\psi(x+0)\right],
	\end{align}
	which is a contradiction since $\varphi$ is increasing.}  Thus, the function $\psi(x)$ for $x \in \mathbb{R}^+$ must be continuous.  Finally, from (\ref{phidef}) we have in the limit,
\begin{equation}
\forall x,y \in [\xi_{\fals},\xi_{\tru}] : \psi(x + y) = f_{\vee}(\psi(x),\psi(y)).
\end{equation}
Substituting $a = \psi(x)$ and $b = \psi(y)$ we find,
\begin{equation}
f_{\vee}(a,b) = \psi\left(\psi^{-1}(a) + \psi^{-1}(b)\right).\label{solution}
\end{equation}
Identifying $\phi = \psi^{-1}$ reduces (\ref{solution}) to (\ref{solutionor}).

\section{Proof of \hyperref[Cauchy]{theorem 1.2}}\label{cauchyproof}
We will first show the solution for rational numbers $q \in\mathbb{Q}$.  Consider that $\Xi$ maps to the rational numbers, $\xi:\tilde{\mathcal{A}}\rightarrow \mathbb{Q}$, so that the function $f_{\wedge}:\Xi\times\Xi\rightarrow\Xi$ is a map from $\mathbb{Q}^2$ to $\mathbb{Q}$.  For ease of notation, let
\begin{equation}
\xi(a|\Gamma) \stackrel{\mathrm{def}}{=} a, \quad \xi(b|\Gamma\wedge a) \stackrel{\mathrm{def}}{=} b \quad \mathrm{and}\quad \xi(c|\Gamma\wedge a) \stackrel{\mathrm{def}}{=} c,
\end{equation}
so that (\ref{distributiveand}) is written,
\begin{equation}
f_{\wedge}(a,b+c) = f_{\wedge}(a,b) + f_{\wedge}(a,c).
\end{equation}
Since $\Xi$ is bounded by $[0,\xi_{\tru}]$, it suffices to show that $f_{\wedge}$ is linear in its second argument for two cases, $(b+c) = \xi_{\fals} = 0$ and $(b+c) > 0$.  For the first case, consider $c = \xi_{\fals}$,
\begin{align}
f_{\wedge}(a,b+\xi_{\fals}) &= f_{\wedge}(a,b) + f_{\wedge}(a,\xi_{\fals})\nonumber\\
&= f_{\wedge}(a,b).
\end{align}
Thus we have that $f_{\wedge}(a,\xi_{\fals}) = f_{\wedge}(a,0) = 0$.  Likewise, consider the repeated application of the Cauchy equation so that,
\begin{equation}
f_{\wedge}(a,b+\dots + b) = f_{\wedge}(a,nb) = nf_{\wedge}(a,b),\label{proof1}
\end{equation}
where $n \in \mathbb{N}$.  Multiplying the right hand side of the above by $\frac{m}{n}$, where $m \in \mathbb{N}$, and substituting for $b$, $\frac{b}{n}$, we find,
\begin{equation}
mf_{\wedge}\left(a,\frac{b}{n}\right) = \frac{m}{n}f_{\wedge}(a,b).
\end{equation}
Using the definition in (\ref{proof1}), the left hand side of the above becomes,
\begin{equation}
f_{\wedge}\left(a,\frac{m}{n}b\right) = \frac{m}{n}f_{\wedge}(a,b).
\end{equation}
Since $\frac{m}{n}$ is rational, we have that,
\begin{equation}
f_{\wedge}(a,qb) = qf_{\wedge}(a,b),
\end{equation}
for $q \in \mathbb{Q}$.  Thus, for $b = 1$ we have,
\begin{equation}
f_{\wedge}(a,q) = qf_{\wedge}(a,1) = \lambda qg_{\wedge}(a),
\end{equation}  
where $\lambda \in \mathbb{R}$.  Now, since $\Xi$ is continuous, then the result will hold for any real number.  Assume that $b$ can be achieved as the limit of a sequence of rational numbers,
\begin{equation}
b = \lim_{k\rightarrow \infty}q_k.
\end{equation}    
Then, since $f_{\wedge}$ and $\Xi$ are continuous, the limit commutes between continuous maps,
\begin{align}
f_{\wedge}(a,b) = f_{\wedge}(a,\lim_{k\rightarrow\infty}q_k) &= \lim_{k\rightarrow\infty}f_{\wedge}(a,q_k)\nonumber\\
 &= \lambda\lim_{k\rightarrow\infty}q_kf_{\wedge}(a,1) = \lambda bf_{\wedge}(a,1).
\end{align}
Hence we have that,
\begin{equation}
\forall a,b \in \mathbb{R} : f_{\wedge}(a,b) = \lambda g_{\wedge}(a)b.
\end{equation}

\section{Proof of \hyperref[pexider]{theorem 3.1}}\label{proofa1}
\begin{proof}
	This proof follows the one given in Acz\'{e}l \cite{Aczel} (see pgs. 141-142).  We wish to show that the many-variable Pexider equation,
	\begin{equation}
	f(x_1 + y_1,x_2 + y_2,\dots,x_n + y_n) = g(x_1,x_2,\dots,x_n) + h(y_1,y_2,\dots,y_n),
	\end{equation}
	reduces to a Cauchy type equation.  First consider the special case when
	\begin{equation}
	y_1 = y_2 = \dots = y_n = 0,
	\end{equation}
	so that
	\begin{equation}
	f(x_1,x_2,\dots,x_n) = g(x_1,x_2,\dots,x_n) + h(0,0,\dots,0).
	\end{equation}
	The value $h(0,0,\dots,0) = a$ with $a$ a constant.  Likewise for
	\begin{equation}
	x_1 = x_2 = \dots = x_n = 0,
	\end{equation}
	we have,
	\begin{equation}
	f(y_1,y_2,\dots,y_n) = g(0,0,\dots,0) + h(y_1,y_2,\dots,y_n),
	\end{equation}
	where $g(0,0,\dots,0) = b$ with $b$ another constant.  Thus we have,
	\begin{align}
	g(x_1,x_2,\dots,x_n) &= f(x_1,x_2,\dots,x_n) - a,\\
	h(y_1,y_2,\dots,y_n) &= f(y_1,y_2,\dots,y_n) - b.
	\end{align}
	The many-variable Pexider equation can then be written as,
	\begin{align}
	f(x_1 + y_1,x_2 + y_2,\dots,x_n + y_n) &= f(x_1,x_2,\dots,x_n)\nonumber\\
	&+ f(y_1,y_2,\dots,y_n) - a - b.\label{pexider3}
	\end{align}
	Letting
	\begin{equation}
	\xi(t) = f(t) - a - b,
	\end{equation}
	eq. (\ref{pexider3}) becomes,
	\begin{equation}
	\xi(x_1+y_1,\dots,x_n+y_n) = \xi(x_1,\dots,x_n)  + \xi(y_1,\dots,y_n),
	\end{equation}
	which is the many-variable Cauchy equation.
\end{proof}

\end{appendix}

\end{document}